\newtheorem{lem}{Lemma}
\newtheorem{assumption}{Assumption}
\newtheorem{CGassumption}{CCC--GARCH Assumption}
\newtheorem{Vassumption}{VAR Assumption}
\newtheorem{prop}{Proposition}
\newtheorem{thm}{Theorem}
\newtheoremstyle{remark2}{1ex}{1ex}%
      {}
      {}
      {\bf}
      {.}
      {5pt}
      {\thmname{#1}\thmnumber{ #2}\thmnote{ \slshape{(#3)}}} 
\theoremstyle{remark2}
\newtheorem{rem}{Remark}
\newtheorem{example}{Example}
\newtheoremstyle{remark3}{1ex}{1ex}%
      {\it}
      {}
      {\bf}
      {.}
      {5pt}
      {\thmname{#1}\thmnumber{.#2}\thmnote{ \slshape{(#3)}}} 
\theoremstyle{remark3}
\newtheorem{lemV}{Lemma V}
\newtheorem{lemC}{Lemma C}
\newcommand{\1}{\mathds{1}}
\renewenvironment{proof}[1][\bfseries\proofname]{\par
   \pushQED{\qed}%
   \normalfont \topsep6\p@\@plus6\p@\relax
   \trivlist
   \item[\hskip\labelsep
     #1\@addpunct{:}]\ignorespaces
}{%
   \popQED\endtrivlist\@endpefalse
}
\newcommand{\Comments}{1}
\newcommand{\mynote}[2]{\ifnum\Comments=1\textcolor{#1}{#2}\fi}
\newcommand{\mytodo}[2]{\ifnum\Comments=1%
  \todo[linecolor=#1!80!black,backgroundcolor=#1,bordercolor=#1!80!black]{#2}\fi}
\renewcommand\appendixpagename{Appendix}
\newcommand{\CoVaR}{\operatorname{CoVaR}}
\newcommand{\VaR}{\operatorname{VaR}}
\newcommand{\full}{\operatorname{full}}
\newcommand{\inter}{\operatorname{int}}
\newcommand{\D}{\,\mathrm{d}}
\newcommand{\lex}{\operatorname{lex}}
\renewcommand{\E}{\mathbb{E}}
\renewcommand{\P}{\mathbb{P}}
\renewcommand{\a}{\alpha}
\renewcommand{\b}{\beta}
\newcommand{\Vv}{\mathsf{\mathbf{v}}}
\newcommand{\Cc}{\mathsf{\mathbf{c}}}
\newcommand{\init}{\boldsymbol{\mathcal{I}}_0}
\begin{document}

\baselineskip18pt
\renewcommand\floatpagefraction{.9}
\renewcommand\topfraction{.9}
\renewcommand\bottomfraction{.9}
\renewcommand\textfraction{.1}
\setcounter{totalnumber}{50}
\setcounter{topnumber}{50}
\setcounter{bottomnumber}{50}
\abovedisplayskip1.5ex plus1ex minus1ex
\belowdisplayskip1.5ex plus1ex minus1ex
\abovedisplayshortskip1.5ex plus1ex minus1ex
\belowdisplayshortskip1.5ex plus1ex minus1ex

\title{Dynamic CoVaR Modeling and Estimation\thanks{Both authors gratefully acknowledge support of the Deutsche Forschungsgemeinschaft (DFG, German Research Foundation) through grants 502572912 (first author), and 460479886 and 531866675 (second author).
		Replication material for the simulations and the empirical application in Section \ref{sec:EmpiricalApplication} is available on Github under \href{https://github.com/TimoDimi/replication_CoQR}{https://github.com/TimoDimi/replication\_CoQR}. 
		It draws on the corresponding open source package \texttt{SystemicRisk} \citep{SystemicRisk_package} implemented in \texttt{R} \citep{R2024}.}}

\author{
	Timo Dimitriadis\thanks{Alfred-Weber-Institute for Economics, Heidelberg University, Bergheimer Str.~58, D--69115 Heidelberg, Germany, and Heidelberg Institute for Theoretical Studies, \href{mailto:timo.dimitriadis@awi.uni-heidelberg.de}{timo.dimitriadis@awi.uni-heidelberg.de}.}
\and 
	Yannick Hoga\thanks{Faculty of Economics and Business Administration, University of Duisburg-Essen, Universit\"atsstra\ss e 12, D--45117 Essen, Germany, \href{mailto:yannick.hoga@vwl.uni-due.de}{yannick.hoga@vwl.uni-due.de}.}
}

\date{\today}
\maketitle

\begin{abstract}
	\noindent 
	The popular systemic risk measure CoVaR (conditional Value-at-Risk) and its variants are widely used in economics and finance.
	In this article, we propose joint dynamic forecasting models for the Value-at-Risk (VaR) and CoVaR.
	The CoVaR version we consider is defined as a large quantile of one variable (e.g., losses in the financial system) conditional on some other variable (e.g., losses in a bank's shares) being in distress.
	We introduce a two-step M-estimator for the model parameters drawing on recently proposed bivariate scoring functions for the pair (VaR, CoVaR).
	We prove consistency and asymptotic normality of our parameter estimator and analyze its finite-sample properties in simulations.
	Finally, we apply a specific subclass of our dynamic forecasting models, which we call \textit{CoCAViaR models}, to log-returns of large US banks.
	A formal forecast comparison shows that our CoCAViaR models generate CoVaR predictions which are superior to forecasts issued from current benchmark models.\\
	
	\noindent \textbf{Keywords:} CoVaR, Estimation, Forecasting, Modeling, Systemic Risk \\
	\noindent \textbf{JEL classification:} C14 (Semiparametric and Nonparametric Methods); C22 (Time-Series Models); C58 (Financial Econometrics); G17 (Financial Forecasting and Simulation); G32 (Financial Risk and Risk Management)
\end{abstract}


\section{Motivation}

\onehalfspacing

Since the introduction of the Value-at-Risk (VaR), risk forecasts have become a key input in financial decision making. For instance, VaR and Expected Shortfall (ES) forecasts are now routinely used for setting capital requirements of financial institutions under the Basel framework \citep{LS21}. 
Consequently, a huge literature on forecasting VaR and ES has emerged \citep[see, e.g.,][]{MF00,EM04,NR16,Mas17,PZC19, DimiHalbleib2021}. 
By definition, these measures are primarily designed to assess the risks faced by banks in isolation. Thus, these measures are well-suited to address microprudential objectives in banking regulation, that is, to limit the risk taking of individual institutions.

However, in the aftermath of the financial crisis of 2007--09, macroprudential objectives have gained importance on the regulatory agenda \citep{AER12,Aea17}. 
While also attempting to curb the risk taking of individual financial institutions, the macroprudential approach additionally takes into account the commonality of risk exposures among banks. 
To do so, a measure of interconnectedness of banks is now used under the Basel framework of the \citet[SCO40]{BCBSBF19} to determine the global systemically important banks (G-SIBs), which are subjected to higher capital requirements. 
This new focus has spurred the development of \textit{systemic} risk measures to accurately measure the interlinkages for which VaR and ES are unsuitable.
By now, a plethora of systemic risk measures is available \citep{GK11,CIM13,AB16,Aea17}. 

One of the most popular systemic risk measures is the conditional VaR (CoVaR) of \citet{AB16}. 
It is defined as a quantile of a  financial loss (e.g., of an entire market), given that a reference asset (e.g., a systemically important bank) is in distress, where the latter is taken to mean an exceedance of the reference asset's VaR.  
Note that this definition corresponds to the slightly redefined version of \citeauthor{AB16}'s \citeyearpar{AB16} CoVaR, which is due to \citet{GT13} and relies on the broader ``stress event'' of a VaR \textit{exceedance} \citep[instead of a VaR \textit{equality} of the reference asset, as originally proposed by][]{AB16}. 
Section~\ref{sec:CoVaR} outlines our reasons for doing so.

While forecasting models and asymptotic properties of forecasts are well explored for VaR and ES \citep{Cea07,GS08,WZ16,Hog18+}, the models for CoVaR forecasting are hitherto rather ad-hoc and little is known about the consistency of the forecasts.
For instance, DCC--GARCH models of \citet{Eng02}---which are popular due to their ability to accurately forecast conditional variance-covariance matrices \citep{LRV12,CM14}---may be used to generate CoVaR forecasts. 
However, as \citet[p.~620]{FZ16b} point out, ``[n]o formally established asymptotic results exist for the full estimation of the DCC [...] models''; see also \citet{DFL18}. 
This renders their application in forecasting systemic risk measures questionable. 
Moreover, there are additional problems associated with their use in systemic risk forecasting that stem from the non-uniqueness of the decomposition of the variance-covariance matrix (see Section~\ref{sec:CoVaRModels} and Appendix~\ref{CoVaR forecasting with multivariate GARCH models}). 
Recently, \citet{francq2025inference} and \citet{Hog25} derive asymptotic properties for forecasts of systemic risk, as measured by the CoVaR and the marginal expected shortfall (MES).
However, the multivariate GARCH-type models they consider for prediction do not allow for a dynamic conditional correlation structure.
In sum, there is a need for dynamic multivariate models that can deliver accurate systemic risk forecasts with strong theoretical underpinnings. 

This paper fills this gap. Specifically, the first main contribution of this paper is to introduce such dynamic models for the CoVaR, which we call \emph{CoCAViaR} models, as they nest the classical CAViaR models of \citet{EM04}.
In the spirit of \citet{EM04} and \citet{PZC19}, we model the pair (VaR, CoVaR) depending on past financial losses, lagged (VaR, CoVaR) model values, and (possibly) external covariates.
Our models are semiparametric in the sense that the quantities of interest (VaR and CoVaR) are modeled parametrically, yet no further (parametric or otherwise) assumptions are placed on the conditional distribution.

There are new challenges in modeling systemic risk vis-\`{a}-vis modeling univariate quantities such as VaR \citep{EM04,WhiteKimManganelli2015,Catania2022} and ES \citep{PZC19}. 
Modeling VaR or (VaR, ES) requires to specify the univariate dynamics only. Yet, since CoVaR measures the interlinkage of bivariate random variables, it becomes necessary to model the co-movements as well. 
We explore different models for doing so, and identify several competitive performers in our empirical application.

Our second main contribution is to propose an estimator for the model parameters and derive its large sample properties. The main technical hurdle to overcome is that---unlike VaR and (VaR, ES)---the pair (VaR, CoVaR) fails to be elicitable, such that no real-valued scoring function exists that is uniquely minimized by the true report \citep{FH24}. 
This renders standard M-estimation---adopted for VaR and ES models by \citet{EM04}, \citet{DimiBayer2019} and \citet{PZC19}---infeasible \citep[Theorem 1]{DFZ_CharMest}. 
Instead, we exploit the multi-objective elicitability of (VaR, CoVaR) \citep{FH24}. 
This property suggests a two-step M-estimator. In the first step, the score in the VaR component is minimized, and then the CoVaR score is minimized in the second step. 
We show that this leads to a consistent estimator.
In Appendices~\ref{Asymptotic Normality} and \ref{avar}, we also establish asymptotic normality of our two-step M-estimator and propose valid inference based on consistent estimation of the asymptotic variance-covariance matrix.
Our proofs show how to deal with non-smooth and discontinuous objective functions in the context of two-step M-estimation for dynamic models (based on past model values).
We speculate that our proof strategy may also be used elsewhere and, therefore, may be of independent interest.

In Appendix~\ref{sec:StartingValues}, we also show consistency of our estimator for parameters from (a fairly general class of) models that are based on misspecified initial values (which arise, e.g., when the true model is initialized in the infinite past).
While this issue is well-explored for estimators of GARCH-type models \citep{FZ10}, it is not treated for current semiparametric VaR and ES models that face the additional difficulty of non-smooth objective functions  \citep{EM04,WhiteKimManganelli2015,PZC19,Catania2022}.


Simulations confirm the good finite-sample properties of our two-step estimator. 
While the parameters of the CoCAViaR models can be estimated accurately in realistic settings, reliable inference for the model parameters requires large sample sizes or moderate probability levels.
However, as the main use of our CoCAViaR models is in forecasting, inference for the model parameters is of lesser importance than consistent estimates.



In the empirical application, we compare CoVaR forecasts issued from our CoCAViaR models with those from benchmark DCC--GARCH models. We do so for the four most systemically risky US banks according to the \cite{FSB22}, whose impact on a broader market index is assessed by CoVaR. Our various CoCAViaR model specifications tend to outperform the DCC--GARCH benchmarks, particularly in the CoVaR forecasts but also in the VaR component. Often, these differences in predictive ability are also statistically significant, as judged by the \citet{DM95}-type comparative backtest of \citet{FH24}. The superiority of our proposals may be explained by two reasons. First, our CoCAViaR specifications are specifically designed to model the quantities of interest (VaR and CoVaR), whereas multivariate GARCH processes focus on modeling the complete predictive distribution. Second, our estimation technique is tailored to provide an accurate description of the (VaR, CoVaR) evolution. In particular, the estimator is not too strongly influenced by center-of-the-distribution observations as, e.g., standard estimators of multivariate GARCH models.


The rest of the paper is structured as follows. 
Section~\ref{Modeling and Estimation} formally introduces the CoVaR, our modeling framework and the appertaining parameter estimator. 
Section~\ref{Asymptotic Properties} gives large sample results for our estimator.
We illustrate the finite-sample properties of our estimator in Section~\ref{sec:Simulations}. Section~\ref{sec:EmpiricalApplication} presents the empirical application and the final Section~\ref{sec:Conclusion} concludes. 
The main proofs as well as additional details are given in the Online Appendix (containing Sections~\ref{sec:thm1}--\ref{Computation of Risk Measure Forecasts}), where we also verify our main assumptions for CCC--GARCH models and vector autoregressive (VAR) models. 

\section{Modeling and Estimation}\label{Modeling and Estimation}

\subsection{The CoVaR}
\label{sec:CoVaR}

Throughout the paper, we consider a sample of size $n\in\mathbb{N}$ of the bivariate series $\big\{(X_t,Y_t)^\prime\big\}_{t\in\mathbb{N}}$, which is defined on the probability space $\big(\Omega, \mathcal{F}, \P\big)$. 
Specifically, $Y_t$ stands for the log-losses of interest (e.g., system-wide losses in the financial system) and $X_t$ are the log-losses of some reference position (e.g., the losses of a bank's shares).
Here, log-losses are simply the negated log-returns.
The information set that the forecaster is interested in conditioning on at time $(t-1)$ is $\mathcal{F}_{t-1}=\sigma\big((X_{t-1},Y_{t-1})^\prime,\mZ_{t-1},\ldots,(X_{1},Y_{1})^\prime,\mZ_{1},\init \big)$.
Here, the variable $\mZ_t$ contains some (possibly multivariate) exogenous covariates, and $\init$ represents known and (possibly multivariate) pre-sample information that is used for model initialization at time $t=1$ (see below for details).

For $\beta\in[0,1)$, we define $\VaR_{\beta}(F)=F^{\leftarrow}(\b)$ to be the $\beta$-quantile of the distribution $F$.
Then, the conditional VaR is simply $\VaR_{t,\b}=\VaR_{\b}(F_{X_t\mid\mathcal{F}_{t-1}})$, where $F_{X_t\mid\mathcal{F}_{t-1}}$ denotes the conditional distribution of $X_t$ given $\mathcal{F}_{t-1}$.
The stress event considered in the definition of the CoVaR is that the loss of the reference position exceeds its VaR, i.e., $\{X_t\geq\VaR_{t,\b}\}$.
With our orientation of $X_t$ denoting financial losses, we commonly consider values for $\beta$ close to one, such as $\b=0.95$.
For $\a\in(0,1)$ we define $\CoVaR_{t,\a|\b}=\CoVaR_{\a|\b}(F_{X_t,Y_t\mid\mathcal{F}_{t-1}})$, where $F_{X_t,Y_t\mid\mathcal{F}_{t-1}}$ denotes the distribution of $(X_t,Y_t)^\prime\mid\mathcal{F}_{t-1}$ and
\[
	\CoVaR_{\a|\b}(F_{X,Y})=\VaR_\a(F_{Y\mid X \geq \VaR_\b(F_{X})})
\]
for a joint distribution function $F_{X,Y}$ with marginals $F_{X}$ and $F_{Y}$. 
Again, we usually consider values of $\a$ close to one.
If $\a=\b$ we simply write $\CoVaR_{t,\a} = \CoVaR_{t,\a|\a}$, and for $\beta=0$ we simply have $\CoVaR_{t,\a|\b} = \VaR_{\a}(F_{Y_t\mid\mathcal{F}_{t-1}})$.

As pointed out in the Motivation, our CoVaR definition follows \citet{GT13} and deviates from the original one of \citet{AB16}.
The latter authors use $\{X=\VaR_{\b}(F_X)\}$ as the stress event instead of $\{X\geq\VaR_{\b}(F_X)\}$, such that $\CoVaR_{\a|\b}^{=}(F_{X,Y})=\VaR_{\a}(F_{Y\mid X=\VaR_{\b}(F_X)})$.
Our reasons for adopting the ``inequality version'' of the CoVaR are as follows.
First, on an intuitive level, conditioning on $\{X\geq\VaR_{\b}(F_X)\}$ instead of $\{X=\VaR_{\b}(F_X)\}$ better captures ``tail risks''.
Second, $\CoVaR_{\a|\b}(F_{X,Y})$ is dependence consistent in the sense of \citet{MS14}, while $\CoVaR_{\a|\b}^{=}(F_{X,Y})$ is not \citep{BDJ17}.
Third, a further advantage of the redefined CoVaR is its multi-objective elicitability shown by \citet{FH24}. 
This property allows CoVaR forecasts to be meaningfully evaluated via so-called backtests, which are, however, not available for \citeauthor{AB16}'s \citeyearpar{AB16} $\CoVaR_{\a|\b}^{=}(F_{X,Y})$.
Fourth, (dynamic models for) $\CoVaR_{\a|\b}^{=}(F_{X,Y})$ would be harder to estimate than $\CoVaR_{\a|\b}(F_{X,Y})$ without imposing further parametric assumptions.
This is because the $\CoVaR_{\a|\b}^{=}(F_{X,Y})$ is the $\a$-quantile of the distribution of $Y\mid X=\VaR_{\b}(F_X)$, which requires non-parametric methods with their usually slower convergence rates.
For all these reasons, we work with the definition of \citet{GT13}, which---next to the authors mentioned above---is also used by \citet{AFS13}, \citet{BC19}, \citet{NZ20} and \citet{CR22}.

\subsection{Dynamic Models for VaR and CoVaR}
\label{sec:CoVaRModels}

An attractive class of models for financial data---and, hence, also for CoVaR modeling---are multivariate GARCH processes, such as the DCC--GARCH of \citet{Eng02} or its corrected version by \citet{Aie13}.
These model the conditional covariance matrix $\mH_t := \Var\big((X_t,Y_t)^\prime\mid\mathcal{F}_{t-1}\big)$ dynamically and are regularly used for volatility forecasting.
However, in order to generate CoVaR forecasts, one requires the ``square-root matrix'' $\mSigma_t$ of $\mH_t$, satisfying $\mSigma_t\mSigma_t^\prime = \mH_t$.
For this matrix decomposition there exist infinitely many possibilities, such as the one based on the symmetric eigenvalue decomposition ($\mSigma_t^{s}$) or the lower triangular matrix of the Cholesky decomposition ($\mSigma_t^{l}$).
The problem is that for non-spherically distributed shocks, each possibility, while implying the \textit{same} variance-covariance dynamics $\mH_t$, may imply \textit{different} values for the CoVaR (forecasts).
Thus, in principle, a given single GARCH-type model for $\mH_t$ may be consistent with an infinite number of CoVaR forecasts (depending on the choice of $\mSigma_t$), thus creating an unsatisfactory ambiguity when applied to CoVaR forecasting. 
We provide additional details on this ambiguity in Appendix~\ref{CoVaR forecasting with multivariate GARCH models}, where we also show that this identification problem for the CoVaR can only arise for non-spherically distributed shocks.

This deficiency of multivariate GARCH models underlines the necessity to construct explicit CoVaR models as we do in the following.
In the spirit of \citet{EM04} and \citet{PZC19}, we consider semiparametric models for the pair (VaR, CoVaR) of the general form
\begin{align}
	\label{eqn:GeneralModel}
	\begin{pmatrix}
		v_t(\vtheta^{v})\\
		c_t(\vtheta^{c})
	\end{pmatrix}
	=\begin{pmatrix}
		v_t\big((X_{t-1},Y_{t-1})^\prime,\mZ_{t-1},\ldots,(X_{1},Y_{1})^\prime,\mZ_{1}, \init; \vtheta^{v}\big)\\
		c_t\big((X_{t-1},Y_{t-1})^\prime,\mZ_{t-1},\ldots,(X_{1},Y_{1})^\prime,\mZ_{1}, \init; \vtheta^{c}\big)
	\end{pmatrix}, \qquad t\in\mathbb{N},
\end{align}
where $\vtheta^v$ and $\vtheta^c$ are generic parameters from some parameter spaces $\mTheta^v\subset\mathbb{R}^{p}$ and $\mTheta^c\subset\mathbb{R}^q$, respectively.
	For $t=1$, equation \eqref{eqn:GeneralModel} is taken to mean that the model only depends on the initialization information $\init$ (introduced in Section~\ref{sec:CoVaR}) and (possibly) the respective parameter $\vtheta^v$ or $\vtheta^c$.
Throughout the paper, we assume that the underlying data-generating process is such that the model in \eqref{eqn:GeneralModel} is correctly specified for (VaR, CoVaR).
That is, there exist true parameters $\vtheta^v_0\in\mTheta^v$ and $\vtheta^c_0\in\mTheta^c$, such that 
\begin{align}
	\label{eqn:TrueModelParameters}
	\begin{pmatrix}
		\VaR_{t,\beta}\\
		\CoVaR_{t,\alpha\mid\beta}
	\end{pmatrix}
	= \begin{pmatrix}
		v_t(\vtheta_{0}^{v})\\
		c_t(\vtheta_{0}^{c})
	\end{pmatrix},\qquad t\in\mathbb{N},
\end{align}
almost surely (a.s.).

Before offering some comments on our general framework, we give a specific instance of a model satisfying \eqref{eqn:GeneralModel} in Example~\ref{ex:1} now.
A discussion of the conditions under which \eqref{eqn:TrueModelParameters} holds for the models of Example~\ref{ex:1} follows in Remark~\ref{rem:Init}.

\begin{example}
	\label{ex:1}
	The general formulation in \eqref{eqn:GeneralModel} allows for \emph{dynamic} models, where (VaR, CoVaR) may depend on their past model values.
	E.g., it nests ARMA-type models of the form
	\begin{align}
		\label{eqn:SAVCoCAViaRModelClass}
		\begin{pmatrix} v_t(\vtheta^v) \\ c_t(\vtheta^c) \end{pmatrix}
		= \vomega + 
		\mA \begin{pmatrix} |X_{t-1}| \\ |Y_{t-1}| \end{pmatrix} +
		\mB \begin{pmatrix} v_{t-1}(\vtheta^v) \\ c_{t-1}(\vtheta^c) \end{pmatrix}
	\end{align}
	with parameters $\vomega \in \mathbb{R}^2$, $\mA, \mB \in  \mathbb{R}^{2 \times 2}$ that can be collected in $\vtheta^v$ and $\vtheta^c$.
	We call model \eqref{eqn:SAVCoCAViaRModelClass} ``ARMA-type'' because an ARMA model $X_t=c+\phi X_{t-1} + \varepsilon_{t} + \theta\varepsilon_{t-1}$ with martingale difference errors $\varepsilon_t$ implies the dynamics $\mu_t=c+(\phi+\theta)Y_{t-1} + \theta\mu_{t-1}$ for the conditional mean $\mu_t=\E_{t-1}[Y_t]$. This dynamic mean recursion is, of course, very similar to the (VaR, CoVaR) recursion in \eqref{eqn:SAVCoCAViaRModelClass}.
	Borrowing terminology from \cite{EM04}, we call the models in \eqref{eqn:SAVCoCAViaRModelClass} \emph{Symmetric Absolute Value (SAV) CoCAViaR} models.
	Of course, our restriction in \eqref{eqn:GeneralModel} that $v_t(\cdot)$ ($c_t(\cdot)$) only depends on $\vtheta^v$ ($\vtheta^c$) implies that the matrix $\mB=(B_{ij})_{i,j=1,2}$ must be diagonal (i.e., $B_{12}=B_{21} = 0$) to facilitate two-step M-estimation.
\end{example}

A few comments on our general modeling approach given in \eqref{eqn:GeneralModel} and \eqref{eqn:TrueModelParameters} are in order.
First, while we assume the model in \eqref{eqn:GeneralModel} to be correctly specified, we make no assumption on its stationarity.
Indeed, because the information set $\mathcal{F}_{t-1}$ increases with $t$, $v_t(\vtheta^v)$ and $c_t(\vtheta^c)$ will in general be non-stationary, even for the true parameters (i.e., for $\vtheta^v=\vtheta_0^v$ and $\vtheta^c=\vtheta_0^c$).

Second, our model in \eqref{eqn:GeneralModel} is semiparametric in the sense that---while the model dynamics are governed by the parameters $\vtheta^v$ and $\vtheta^c$---we impose no additional assumptions on the conditional distribution of $(X_t,Y_t)^\prime\mid\mathcal{F}_{t-1}$.

Third, as the $v_t(\cdot)$ and $c_t(\cdot)$ functions may vary with time $t$, our modeling framework is sufficiently flexible to allow for the inclusion of lagged (VaR, CoVaR) as in Example~\ref{ex:1}.
This is important because including lags of model values often leads to a better predictive performance, as is the case for GARCH models, which improve upon ARCH models by including lags of volatility in the variance equation.

Fourth, the VaR model in \eqref{eqn:GeneralModel} generalizes classical (time series) quantile regressions based on lagged values of $X_t$, $Y_t$ and the covariate vector $\mZ_{t}$ \citep{EM04,OH16}. 
Also, the ``VAR for VaR'' models (of dimension two) of \cite{WhiteKimManganelli2015} are nested by choosing $\beta=0$ such that the CoVaR simply becomes the VaR of $Y_t$.

%

Fifth, the fact that the VaR model $v_t(\cdot)$ does not depend on the CoVaR parameter $\vtheta^c$ renders our two-step estimator (introduced in Section~\ref{sec:ParameterEstimation} below) feasible. 
Likewise, we assume that the CoVaR model $c_t(\cdot)$ does not include the VaR parameter $\vtheta^v$.
Our two-step estimator would remain feasible if we dropped this requirement.
This would, however, come at the cost of a further explosion of technicality in the proofs, without any offsetting improvements (in terms of predictive accuracy) of the resulting models; see the empirical application in Section~\ref{sec:EmpiricalApplication} for evidence on this.

Sixth, besides using the full history of lagged values of $X_t$, $Y_t$ and $\mZ_t$, our models must be initialized at time $t=1$.
For this,  we use the initialization information $\init$ that may be thought of as capturing the conditions at which the observable history started. 
E.g., when $X_t$ or $Y_t$ denote returns on JPMorgan Chase shares from 2001 onwards, then $\init$ may capture general market conditions on the first day the merged bank JPMorgan Chase started trading in 2001.
We give specific possibilities for the initialization of the SAV CoCAViaR models in the following example.

\setcounter{example}{0}
\begin{example}[continued]
Among others, the SAV CoCAViaR models in \eqref{eqn:SAVCoCAViaRModelClass} can be initialized in one of the following ways:
\begin{enumerate}[(a)] 
	\item 
	\label{item:InitConst}
	The initial information is $\init = \vk\in\mathbb{R}^2$, where $\vk$ is the known correct starting value of the model, i.e., $(v_1(\vtheta^v), c_1(\vtheta^c))^\prime = \vk$ for all $\vtheta^v\in\mTheta^v$ and $\vtheta^c\in\mTheta^c$. 
	Then, the recursion in \eqref{eqn:SAVCoCAViaRModelClass} holds for $t\geq2$.
	
	
	\item 
	\label{item:InitParameter}
	\sloppy
	The forecaster has no initial information, such that $\init$ is empty, and makes the modeling assumption that $(v_1(\vtheta^v), c_1(\vtheta^c))^\prime = \vomega$.
	Once again, the recursion in \eqref{eqn:SAVCoCAViaRModelClass} holds for $t\geq2$.

	\item 
	\label{item:InitInfinitePast}
	The forecaster is interested in conditioning on the \emph{infinite} past, such that $\init = (X_0,Y_0, X_{-1},Y_{-1},\ldots )^\prime$. 
	Hence, the recursion in \eqref{eqn:SAVCoCAViaRModelClass} holds for $t\in\mathbb{Z}$.
	This is the most complicated setting for practical parameter estimation.
	
\end{enumerate}
Of course, similar possible initializations as under \eqref{item:InitConst}--\eqref{item:InitInfinitePast} also apply to the general models of the form \eqref{eqn:GeneralModel}.
We discuss the implication of item \eqref{item:InitInfinitePast} on the properties of our estimator in more detail in Appendix~\ref{sec:StartingValues}.
Notice however that---as part of the modeling exercise---any assumption on how the (VaR, CoVaR) process originated ultimately remains an approximation to the true underlying dynamics. 
\end{example}

The SAV CoCAViaR models of Example~\ref{ex:1} are, among others, correctly specified (in the sense of \eqref{eqn:TrueModelParameters}) under a type of \citeauthor{Jea98}'s \citeyearpar{Jea98} extended constant conditional correlation (ECCC) GARCH process, which has been studied extensively in the literature \citep{HT04,CK10}:
\begin{equation}
	\label{eqn:ECCCmodel}
	\begin{pmatrix}X_t \\ Y_t\end{pmatrix} =
	\begin{pmatrix}\sigma_{X,t}\,\varepsilon_{X,t} \\ \sigma_{Y,t} \, \varepsilon_{Y,t}\end{pmatrix},\qquad
	\begin{pmatrix} \sigma_{X,t} \\ \sigma_{Y,t}  \end{pmatrix}
	= \widetilde{\vomega} + 
	\widetilde{\mA} \begin{pmatrix} |X_{t-1}| \\ |Y_{t-1}| \end{pmatrix} +
	\widetilde{\mB} \begin{pmatrix} \sigma_{X,t-1} \\ \sigma_{Y,t-1} \end{pmatrix},
\end{equation}
where $\widetilde{\vomega}\in\mathbb{R}^2$, $\widetilde{\mA}, \widetilde{\mB}\in\mathbb{R}^{2\times2}$.
In \eqref{eqn:ECCCmodel}, the $\mathcal{F}_{t-1}$-measurable conditional volatilities $\sigma_{X,t}$ and $\sigma_{Y,t}$ are independent of the independent and identically distributed (i.i.d.) innovations
$\big(\varepsilon_{X,t}, \varepsilon_{Y,t} \big)' \overset{\text{i.i.d.}}{\sim} F \big( \vzeros, \mSigma \big)$, where $F$ denotes a generic absolutely continuous, bivariate distribution with zero mean and covariance matrix $\mSigma=\begin{psmallmatrix}1 & \rho\\ \rho & 1 \end{psmallmatrix}$, $\rho\in(-1,1)$.
Equation \eqref{eqn:ECCCmodel} deviates from \citet{Jea98} by using \textit{absolute} (instead of \textit{squared}) returns as drivers of volatility, since these have more predictive content for volatility \citep{FG07}.
Model~\eqref{eqn:ECCCmodel} implies the (VaR, CoVaR) dynamics in \eqref{eqn:SAVCoCAViaRModelClass}, where the true parameters $\vomega_0, \mA_0, \mB_0$ of the model in \eqref{eqn:SAVCoCAViaRModelClass} arise as transformations of $\widetilde{\vomega}, \widetilde{\mA}, \widetilde{\mB}$.\footnote{
	Multiplying the rows of the volatility equation in \eqref{eqn:ECCCmodel} by the true VaR and CoVaR of the innovations respectively gives that
	$\vtheta_0^v  
	= \big( \omega_{1,0}, A_{11,0} , A_{12,0}, B_{11,0}, B_{12,0} \big)^\prime
	= \big( v_\varepsilon \widetilde{\omega}_1, v_\varepsilon \widetilde{A}_{11}, v_\varepsilon \widetilde{A}_{12}, \widetilde{B}_{11}, v_\varepsilon/ c_\varepsilon \widetilde{B}_{12} \big)^\prime$ and
	$\vtheta_0^c  
	= \big( \omega_{2,0}, A_{21,0} , A_{22,0}, B_{21,0}, B_{22,0} \big)^\prime
	= \big( c_\varepsilon \widetilde{\omega}_2, c_\varepsilon \widetilde{A}_{12}, c_\varepsilon \widetilde{A}_{22},  c_\varepsilon / v_\varepsilon \widetilde{B}_{21}, \widetilde{B}_{22} \big)^\prime$,
	where $v_\varepsilon$ is the $\beta$-VaR of $\varepsilon_{X,t}$ and  $c_\varepsilon$ the $\alpha|\beta$-CoVaR of the pair $(\varepsilon_{X,t}, \varepsilon_{Y,t})^\prime$, whose analytical form is given in \citet[Theorem~3.1~(b)]{MS14}.\label{FN1}
}




\begin{rem}\label{rem:Init}
The following initializations of the above ECCC--GARCH model in \eqref{eqn:ECCCmodel} lead to correctly specified (VaR, CoVaR) models in \eqref{eqn:SAVCoCAViaRModelClass} initialized according to items~\eqref{item:InitConst}--\eqref{item:InitInfinitePast} in (the continuation of) Example~\ref{ex:1}:
\begin{enumerate}[(a)] 

		\item The recursion in \eqref{eqn:ECCCmodel} holds for $t\geq2$ and $(X_1, Y_1)^\prime=(\sigma_{X,1} \varepsilon_{X,1},\sigma_{Y,1} \varepsilon_{Y,1})^\prime$ with known constant $(\sigma_{X,1},\sigma_{Y,1} )^\prime = \widetilde{\vk} \in\mathbb{R}^2$ and known $(v_{\varepsilon},c_{\varepsilon})^\prime$ from footnote~\ref{FN1}.
		Then, $(v_1(\vtheta^v), c_1(\vtheta^c))^\prime$ (i.e., the unconditional (VaR, CoVaR) model of $(X_1, Y_1)^\prime$) is given by some known $\vk\in\mathbb{R}^2$ and the model initialization for \eqref{eqn:SAVCoCAViaRModelClass} in Example~\ref{ex:1}~\eqref{item:InitConst} is correct.
		
		\item\label{it:CCC ii} The recursion in \eqref{eqn:ECCCmodel} holds for $t\geq2$ and $(X_1, Y_1)^\prime=(\sigma_{X,1} \varepsilon_{X,1}, \sigma_{Y,1} \varepsilon_{Y,1})^\prime$ with $(\sigma_{X,1}, \sigma_{Y,1})^\prime = \widetilde{\vomega} \in \mathbb{R}^2$. 
		Then, $(v_1(\vtheta^v), c_1(\vtheta^c))^\prime$ (i.e., the unconditional (VaR, CoVaR) model of $(X_1, Y_1)^\prime$) is given by $\vomega\in\mathbb{R}^2$ and the model initialization for \eqref{eqn:SAVCoCAViaRModelClass} in Example~\ref{ex:1}~\eqref{item:InitParameter} is correct.

		\item The recursion in \eqref{eqn:ECCCmodel} holds for all $t\in\mathbb{Z}$.
		Then, the recursion in \eqref{eqn:SAVCoCAViaRModelClass} also holds for all $t\in\mathbb{Z}$ as under Example~\ref{ex:1}~\eqref{item:InitInfinitePast}.
		
\end{enumerate}
In each of the above cases, the information set $\mathcal{F}_t$ depends on the chosen $\init$, and the true parameters $(\vtheta_0^v, \vtheta_0^c)$ of the SAV CoCAViaR models are as in footnote~\ref{FN1}.
\end{rem}

While ECCC--GARCH models are primarily designed to model the complete predictive distribution of $(X_t,Y_t)^\prime\mid\mathcal{F}_{t-1}$, they can also be used to predict VaR and CoVaR (via \eqref{eqn:SAVCoCAViaRModelClass} and the parameter transformation given in footnote~\ref{FN1}). 
When \eqref{eqn:ECCCmodel} describes the true data-generating process for $(X_t, Y_t)^\prime$, then there will be little difference between the (VaR, CoVaR) forecasts by our model in Example~\ref{ex:1} and those issued by the ECCC--GARCH model (assuming consistent parameter estimates in both cases). However, when \eqref{eqn:ECCCmodel} does not describe the underlying dynamics of $(X_t, Y_t)^\prime$, the two approaches to (VaR, CoVaR) modeling may yield widely different predictions for reasons discussed in more detail in Section~\ref{sec:ForecastingApplication2}.

\begin{rem}
We pointed out above that DCC--GARCH models may suffer from an identification problem for the CoVaR, because the ``square root'' $\mSigma_t$ of the conditional covariance matrix $\mH_t=\mSigma_{t}\mSigma_t^\prime$ is not uniquely defined.
In contrast, CCC-type GARCH models considered by---among others---\citet{Jea98} (and us in \eqref{eqn:ECCCmodel}) do not have this problem, since they directly model $\mSigma_t$.
Of course, $\mSigma_t$ is much more likely to be misspecified in CCC-type models, but the identification issue is circumvented.
\citet{francq2025inference} exploit this convenient feature of constant conditional correlations to derive asymptotic properties of CoVaR forecasts; see in particular their equation~(1).
Similarly, \citet{Hog25} proves asymptotic normality for the MES based on a CCC--GARCH-type filter and a MES estimator from extreme value theory.
\end{rem}

In our simulations in Section~\ref{sec:Simulations}, we use (a restricted version of) model \eqref{eqn:ECCCmodel} to generate data to assess how well our two-step M-estimator performs.
In our empirical application, we consider various SAV CoCAViaR specifications based on zero restrictions in $\mA$ and $\mB$, as well as generalizations to so-called ``asymmetric slope'' models based on the positive and negative components of $X_t$ and $Y_t$ instead of on their absolute values.



\subsection{Parameter Estimation}
\label{sec:ParameterEstimation}

We now introduce estimators of the unknown model parameters $\vtheta_{0}^{v}$ and $\vtheta_{0}^{c}$. As we consider M-estimation in this paper, we require a to-be-minimized objective (or also: scoring) function. However, as pointed out in the Motivation, there is no \textit{real-valued} scoring function associated with the pair (VaR, CoVaR). \citet[Theorem 1]{DFZ_CharMest} show that the existence of such a (consistent) scoring function is a necessary condition for consistent M-estimation of semiparametric models.

To overcome this drawback, \citet{FH24} propose a \textit{$\mathbb{R}^2$-valued} scoring function in the closely related context of forecast evaluation.
To be able to compare forecasts, $\mathbb{R}^2$ has to be equipped with an order, and \citet{FH24} show that the lexicographic order is suitable for that purpose. Specifically, they prove that (under some regularity conditions) the expectation of the $\mathbb{R}^2$-valued scoring function
\begin{equation}\label{eq:loss}
	\mS\Bigg(\begin{pmatrix}v\\ c\end{pmatrix}, \begin{pmatrix}x\\ y\end{pmatrix}\Bigg) = \begin{pmatrix}S^{\VaR}(v,x)\\ S^{\CoVaR}\big((v,c)^\prime, (x,y)^\prime\big) \end{pmatrix}=
	\begin{pmatrix}
		[\1_{\{x\leq v\}}-\beta][v-x]\\
		\1_{\{x>v\}}[\1_{\{y\leq c\}}-\alpha][c-y]
	\end{pmatrix}
\end{equation}
is minimized by the true VaR and CoVaR with respect to the lexicographic order. That is, for all $v,c\in\mathbb{R}$,
\[
	\E\Bigg[\mS\Bigg(\begin{pmatrix}\VaR_\beta(F_X)\\ \CoVaR_{\a\mid\b}(F_{X,Y})\end{pmatrix}, \begin{pmatrix}X\\ Y\end{pmatrix}\Bigg)\Bigg]\preceq_{\lex}\E\Bigg[\mS\Bigg(\begin{pmatrix}v\\ c\end{pmatrix}, \begin{pmatrix}X\\ Y\end{pmatrix}\Bigg)\Bigg],
\]
where $(x_1,x_2)^\prime\preceq_{\lex} (y_1,y_2)^\prime$ if $x_1<y_1$ or ($x_1=y_1$ and $x_2\leq y_2$). Note that $S^{\VaR}(\cdot,\cdot)$ in \eqref{eq:loss} is the standard tick-loss function known from quantile regression. Clearly, $S^{\CoVaR}(\cdot,\cdot)$ is similar in structure to $S^{\VaR}(\cdot,\cdot)$, with the only difference being the indicator $\1_{\{x>v\}}$ that restricts the evaluation to observations with VaR exceedances in the first component. 
In the related literature, these scoring functions are often also called ``loss functions'' \citep{Gne11}. To avoid confusion with financial ``losses'', we adhere to the term ``scoring function''.

The definition of the lexicographic order suggests the following two-step M-estimator of $\vtheta_{0}^{v}$ and $\vtheta_{0}^{c}$. In the first step, we estimate $\vtheta_0^v$ via
\begin{align}
	\label{eqn:MestVaR}
	\widehat{\vtheta}_{n}^{v} = \argmin_{\vtheta^{v}\in\mTheta^{v}}\frac{1}{n}\sum_{t=1}^{n}S^{\VaR}\big(v_t(\vtheta^{v}), X_t\big),
\end{align}
such that the parameters are chosen to minimize the average empirical score in the VaR component over the VaR parameter space $\mTheta^{v}$.
For the VaR model $v_t(\cdot)$, this is the quantile regression estimator of \citet{EM04}.

With the estimate $\widehat{\vtheta}_n^v$ at hand, the lexicographic order then suggests to minimize the average empirical score in the second component to estimate $\vtheta_0^c$ via
\begin{align}
	\label{eqn:MestCoVaR}
	\widehat{\vtheta}_{n}^{c} = \argmin_{\vtheta^{c}\in\mTheta^{c}}\frac{1}{n}\sum_{t=1}^{n}S^{\CoVaR}\Big(\big(v_t(\widehat{\vtheta}_{n}^{v}), c_t(\vtheta^{c})\big)^\prime, \big(X_t, Y_t\big)^\prime\Big).
\end{align}
For this two-step estimator to be feasible, the requirement that the VaR evolution does not depend on $\vtheta^c$ is essential.
Appendix~\ref{Asymptotic Normality} shows that the presence of $\widehat{\vtheta}_n^v$ in the second-stage minimization impacts the asymptotic variance of $\widehat{\vtheta}_{n}^{c}$. Of course, this is usually the case for two-step estimators \citep[Section 6]{NeweyMcFadden1994}. 
A two-step estimator similar in spirit to \eqref{eqn:MestVaR}--\eqref{eqn:MestCoVaR} is employed in \citet{DimiHogaMES2024} for their \textit{static} regressions under adverse conditions.

Note that the estimators $\widehat{\vtheta}_{n}^{v}$ and $\widehat{\vtheta}_{n}^{c}$ depend on the assumed initialization information in $\init$ through the model functions $v_t(\vtheta^{v})$ and $c_t(\vtheta^{c})$.
For instance, the parameter estimators $\widehat{\vtheta}_n^v$ and $\widehat{\vtheta}_n^c$ for the model in Example~\ref{ex:1} rely on initial values for $v_1(\vtheta^v)$ and $c_1(\vtheta^c)$, which depend on whether the model was initialized according to \eqref{item:InitConst} or \eqref{item:InitParameter}.
We mention that in case \eqref{item:InitInfinitePast}---when $\init$ contains the infinite past---the estimators cannot be computed as $v_1(\vtheta^v)$ and $c_1(\vtheta^c)$ depend on infinitely many lags. In this case, one has to rely on a truncated information set for estimation.

Next, Section~\ref{Consistency} shows consistency of the parameter estimators in the case where the full $\init$ can be used for estimation (such as \eqref{item:InitConst}--\eqref{item:InitParameter}).
The case where $\init$ needs to be truncated (such as \eqref{item:InitInfinitePast}) is treated in Appendix~\ref{sec:StartingValues}.

\section{Asymptotic Properties of the Estimators}\label{Asymptotic Properties}

\subsection{Consistency}\label{Consistency}

Here, we show consistency of our two-step M-estimators $\widehat{\vtheta}_{n}^{v}$ and $\widehat{\vtheta}_{n}^{c}$. 
To do so, we introduce several regularity conditions in Assumption~\ref{ass:cons}.
Throughout the paper, we let $K\in(0,\infty)$ denote some large universal constant not depending on $n$, $t$ or any other values.
Further, $\Vert\vx\Vert$ denotes the Euclidean norm when $\vx$ is a vector, and the Frobenius norm when $\vx$ is matrix-valued. 
The joint cumulative distribution function (c.d.f.) of $(X_{t}, Y_t)^\prime\mid\mathcal{F}_{t-1}$ is denoted by $F_t(\cdot,\cdot)$, and its Lebesgue density (which we assume exists) by $f_t(\cdot,\cdot)$. 
Similarly, $F_t^{W}(\cdot)$ ($f_t^{W}(\cdot)$) denotes the distribution (density) function of $W_t\mid\mathcal{F}_{t-1}$ for $W\in\{X,Y\}$. 
For a sufficiently smooth function $\mathbb{R}^{p}\ni\vtheta\mapsto f(\vtheta) \in \mathbb{R}$, we denote the $(p\times1)$-gradient by $\nabla f(\vtheta)$, its transpose by $\nabla' f(\vtheta)$ and the $(p\times p)$-Hessian by $\nabla^2 f(\vtheta)$. 

\begin{assumption}\label{ass:cons}
$ $ \vspace{-0.3cm}
\renewcommand{\theenumi}{(\roman{enumi})}
\begin{enumerate}
	\item\label{it:id} 
	The models $v_t(\cdot)$ and $c_t(\cdot)$ are correctly specified in the sense of \eqref{eqn:TrueModelParameters}. 
	
	\item\label{it:str stat} $\big\{(X_t, Y_t, \mZ_t^\prime)^\prime\big\}_{t\in\mathbb{N}}$ is strictly stationary.
	
	\item\label{it:cond dist} For all $t\in\mathbb{N}$, $F_t(\cdot,\cdot)$ belongs to a class of distributions on $\mathbb{R}^2$ that possesses a positive Lebesgue density for all $(x,y)^\prime\in\mathbb{R}^2$ such that $F_t(x,y)\in(0,1)$. 
	
	\item\label{it:Lipschitz cons} For all $t\in\mathbb{N}$, $\big|f_t^{X}(x)-f_t^{X}(x^\prime)\big|\leq K|x-x^\prime|$, $\big|f_t^{Y}(y)-f_t^{Y}(y^\prime)\big|\leq K|y-y^\prime|$, and $\big|\partial_2 F_t(x,y)-\partial_2 F_t(x,y^\prime)\big|\leq K|y-y^\prime|$ for all $x,x^\prime,y,y^\prime\in\mathbb{R}$.
	
	\item\label{it:cond dens} For all $t\in\mathbb{N}$ it holds that $f_t^{X}(\cdot)\leq K$.
	Moreover, there exist some $f_1,f_2>0$ such that for some constants $p_1,p_2>0$, $\P\Big\{f_t^{X}\big(v_t(\vtheta_0^v)\big)> f_1\Big\}>p_1$ and $\P\Big\{\int_{v_t(\vtheta_0^v)}^{\infty}f_t\big(x, c_t(\vtheta_0^c)\big)\D x> f_2\Big\}>p_2$ hold for all $t\in\mathbb{N}$.
	
	\item\label{it:compact} $\mTheta=\mTheta^{v}\times\mTheta^{c}$ is compact, where $\mTheta^{v}\subset\mathbb{R}^p$ and $\mTheta^{c}\subset\mathbb{R}^q$.
	
	\item\label{it:ULLN} $\Big\{\mS\Big(\big(v_t(\vtheta^v), c_t(\vtheta^c)\big)^\prime, (X_t, Y_t)^\prime\Big)\Big\}_{t\in\mathbb{N}}$ obeys the uniform law of large numbers (ULLN) in $\vtheta=(\vtheta^{v\prime},\vtheta^{c\prime})^\prime\in\mTheta$, i.e., as $n\to\infty$,
	\[
		\sup_{\vtheta\in\mTheta}\Bigg\Vert\frac{1}{n}\sum_{t=1}^{n}\bigg\{\mS\Big(\big(v_t(\vtheta^v), c_t(\vtheta^c)\big)^\prime, (X_t, Y_t)^\prime\Big)- \E\Big[\mS\Big(\big(v_t(\vtheta^v), c_t(\vtheta^c)\big)^\prime, (X_t, Y_t)^\prime\Big)\Big]\bigg\}\Bigg\Vert=o_{\P}(1).
	\]
	
	\item\label{it:unique id} 
	\sloppy
	For every $\xi>0$, there exists a $\tau>0$ such that (a) if $\big\Vert\vtheta^v-\vtheta_0^v\big\Vert\geq\xi$, then $\liminf_{n\to\infty}\frac{1}{n}\sum_{t=1}^{n}\P\big\{|v_t(\vtheta^v) - v_t(\vtheta_0^v)|>\tau\mid f_t^{X}\big(v_t(\vtheta_0^v)\big)> f_1\big\}>0$ and (b) if $\big\Vert\vtheta^c-\vtheta_0^c\big\Vert\geq\xi$, then $\liminf_{n\to\infty}\frac{1}{n}\sum_{t=1}^{n}\P\big\{|c_t(\vtheta^c) - c_t(\vtheta_0^c)|>\tau\mid \int_{v_t(\vtheta_0^v)}^{\infty}f_t\big(x, c_t(\vtheta_0^c)\big)\D x> f_2\big\}>0$, where $f_1$ and $f_2$ are from Assumption~\ref{ass:cons}~\ref{it:cond dens}.
	
	\item\label{it:smooth} For all $t\in\mathbb{N}$, $v_t(\vtheta^v)$ and $c_t(\vtheta^{c})$ are $\mathcal{F}_{t-1}$-measurable and a.s.\ continuous in $\vtheta^v$ and $\vtheta^{c}$.
	
	\item\label{it:diff c} For all $t\in\mathbb{N}$, $v_t(\vtheta^v)$ is a.s.~continuously differentiable on $\inter(\mTheta^v)$.
	
	\item\label{it:bound} There exists a neighborhood of $\vtheta_0^v$, such that $\big\Vert\nabla v_t(\vtheta^v)\big\Vert\leq V_1(\mathcal{F}_{t-1})$ for all elements $\vtheta^v$ of that neighborhood and all $t\in\mathbb{N}$. Furthermore, for all $t\in\mathbb{N}$ it holds that $|v_t(\vtheta^v)|\leq V(\mathcal{F}_{t-1})$ for all $\vtheta^{v}\in\mTheta^{v}$, and $|c_t(\vtheta^c)|\leq C(\mathcal{F}_{t-1})$ for all $\vtheta^{c}\in\mTheta^{c}$.
	
	\item\label{it:mom bounds cons} For all $t\in\mathbb{N}$ it holds that $\E\big[V(\mathcal{F}_{t-1})\big]\leq K$, $\E\big[V_1(\mathcal{F}_{t-1})\big]\leq K$, $\E\big[V_1(\mathcal{F}_{t-1})C(\mathcal{F}_{t-1})\big]\leq K$, $\E|X_t|\leq K$, $\E|Y_t|^{1+\iota}\leq K$ for some $\iota>0$.
\end{enumerate}
\end{assumption}

Since some of the conditions in Assumption~\ref{ass:cons} are high-level, we verify these for CCC--GARCH and VAR models in Appendices~\ref{sec:verification} and \ref{sec:verificationVAR}, respectively.
In Assumption~\ref{ass:cons}, item~\ref{it:id} ensures a correctly specified VaR and CoVaR model, including a correct model initialization at $t=1$ through $\mathcal{I}_0$, which can, e.g., be achieved by one of the possibilities \eqref{item:InitConst}--\eqref{item:InitInfinitePast} discussed in Example~\ref{ex:1} and Remark~\ref{rem:Init}.
We treat arbitrarily initialized models in Appendix~\ref{sec:StartingValues}.
Item~\ref{it:str stat} is a standard stationarity condition for the observables. 
Yet notice that the (VaR, CoVaR) process $\big(v_t(\cdot), c_t(\cdot)\big)^\prime$ does not have to be stationary and---in general---is non-stationary due to the expanding information set $\mathcal{F}_{t-1}$ for processes initialized in the finite past.
Item~\ref{it:cond dist} ensures---among other things---strict (multi-objective) consistency of the scoring function given in \eqref{eq:loss}; see \citet[Theorem~4.2]{FH24} for details.
Uniform Lipschitz conditions on derivatives of the joint conditional c.d.f.~are imposed in item~\ref{it:Lipschitz cons}. 
The next item~\ref{it:cond dens} provides certain boundedness conditions on the conditional probability density function of $X_t$.
Compactness of the parameter space in \ref{it:compact} is a standard requirement in extremum estimation; see \citet{NeweyMcFadden1994}. 
Assumption~\ref{ass:cons}~\ref{it:ULLN} is also a standard condition, imposed in similar form by \citet[Assumption~C6]{EM04}, \citet[Assumption~1~(A)]{PZC19} and \citet[Assumption~4]{Catania2022}. 
In Appendices~\ref{Ass1} and \ref{Ass1 VAR}, we show how this condition can be verified based on Theorem~21.9 in \citet{Dav94} for an underlying CCC--GARCH and a linear VAR process, respectively.
Item~\ref{it:unique id} for the VaR model is identical to condition (ID) in \citet{Wei91} and, loosely speaking, ensures that $v_t(\vtheta^v)$ differs sufficiently from $v_t(\vtheta_0^v)$ with positive (conditional) probability for any $\vtheta^v \not= \vtheta_0^v$.
This condition is required to establish the unique identifiability of the parameter $\vtheta_0^v$ in the sense of \citet[Definition~2.1]{Whi80}.
Condition (b) of item~\ref{it:unique id} is the analogous condition for the CoVaR model, required for the unique identifiability of $\vtheta_0^c$.
The final items~\ref{it:smooth}--\ref{it:mom bounds cons} are smoothness and moment conditions. 
We mention that the moment bounds in item \ref{it:mom bounds cons} on the observables, $\E|X_t|\leq K$ and $\E|Y_t|^{1+\iota}\leq K$, are sufficiently weak to be practically always satisfied in financial and economic applications.

Our first main theoretical result establishes the consistency of $\widehat{\vtheta}_{n}^{v}$ and $\widehat{\vtheta}_{n}^{c}$.

\begin{thm}\label{thm:cons}
Suppose Assumption~\ref{ass:cons} holds. Then, as $n\to\infty$, $\widehat{\vtheta}_{n}^{v}\overset{\P}{\longrightarrow}\vtheta_0^{v}$ and $\widehat{\vtheta}_{n}^{c}\overset{\P}{\longrightarrow}\vtheta_0^{c}$.
\end{thm}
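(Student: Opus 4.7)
The plan is to verify in turn, for each of $\widehat{\vtheta}_n^v$ and $\widehat{\vtheta}_n^c$, the hypotheses of a standard extremum-estimator consistency result (e.g., Theorem~2.1 in \citet{NeweyMcFadden1994}): uniform convergence over a compact parameter space to a continuous population criterion uniquely minimized at the truth. The first stage is a textbook quantile-regression argument. Assumption~\ref{ass:cons}\ref{it:ULLN} provides uniform convergence of $n^{-1}\sum_t S^{\VaR}(v_t(\vtheta^v), X_t)$ to $Q_0^v(\vtheta^v) := \E[S^{\VaR}(v_t(\vtheta^v), X_t)]$; the positive conditional density in \ref{it:cond dist} yields strict conditional consistency of $S^{\VaR}$ for $\VaR_{t,\beta}$, and the identification condition in \ref{it:id} then pins $\vtheta_0^v$ as the unique minimizer of $Q_0^v$; continuity of $Q_0^v$ follows from \ref{it:smooth} together with dominated convergence using the envelopes in \ref{it:bound}--\ref{it:mom bounds cons}, and compactness is \ref{it:compact}.

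For the second stage the novel issue is that the empirical criterion depends on $\widehat{\vtheta}_n^v$ through the \emph{discontinuous} indicator $\1_{\{X_t > v_t(\widehat{\vtheta}_n^v)\}}$. I would split
\[
\frac{1}{n}\sum_{t=1}^{n} S^{\CoVaR}\bigl((v_t(\widehat{\vtheta}_n^v), c_t(\vtheta^c))^\prime, (X_t, Y_t)^\prime\bigr) = Q_n^c(\vtheta_0^v, \vtheta^c) + R_n(\widehat{\vtheta}_n^v, \vtheta^c),
\]
where
\[
R_n(\vtheta^v, \vtheta^c) = \frac{1}{n}\sum_{t=1}^{n}\bigl[\1_{\{X_t > v_t(\vtheta^v)\}} - \1_{\{X_t > v_t(\vtheta_0^v)\}}\bigr]\bigl[\1_{\{Y_t \leq c_t(\vtheta^c)\}} - \alpha\bigr]\bigl[c_t(\vtheta^c) - Y_t\bigr].
\]
The first summand converges uniformly in $\vtheta^c$ to $Q_0^c(\vtheta^c) := \E[S^{\CoVaR}((v_t(\vtheta_0^v), c_t(\vtheta^c))^\prime, (X_t, Y_t)^\prime)]$ by \ref{it:ULLN}, and $Q_0^c$ is uniquely minimized at $\vtheta_0^c$ because, conditionally on $\mathcal{F}_{t-1}$, $\E[S^{\CoVaR}((\VaR_{t,\beta}, c)^\prime, (X_t, Y_t)^\prime) \mid \mathcal{F}_{t-1}]$ factors into $\Pr(X_t > \VaR_{t,\beta} \mid \mathcal{F}_{t-1})$ times the standard $\alpha$-quantile loss for $Y_t$ on $\{X_t > \VaR_{t,\beta}\}$, which is uniquely minimized at $c = \CoVaR_{t,\alpha|\beta}$ by \ref{it:cond dist}, and global uniqueness then follows from identification \ref{it:id}.

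The main obstacle is to show $\sup_{\vtheta^c \in \mTheta^c} |R_n(\widehat{\vtheta}_n^v, \vtheta^c)| = o_p(1)$ in spite of the discontinuous indicator. The idea is to work on $A_n(\delta) := \{\|\widehat{\vtheta}_n^v - \vtheta_0^v\| \leq \delta\}$, which has probability tending to one by the first stage and, for $\delta$ small, lies inside the neighborhood of $\vtheta_0^v$ appearing in \ref{it:bound}. On $A_n(\delta)$, the mean-value theorem combined with \ref{it:diff c} and \ref{it:bound} gives $|v_t(\widehat{\vtheta}_n^v) - v_t(\vtheta_0^v)| \leq V_1(\mathcal{F}_{t-1})\delta$, and using $|c_t(\vtheta^c)| \leq C(\mathcal{F}_{t-1})$ yields the $\vtheta^c$-free envelope
\[
\sup_{\vtheta^c \in \mTheta^c} \bigl|R_n(\widehat{\vtheta}_n^v, \vtheta^c)\bigr| \leq \frac{1}{n}\sum_{t=1}^{n} \1_{\{|X_t - v_t(\vtheta_0^v)| \leq V_1(\mathcal{F}_{t-1})\delta\}}\bigl(C(\mathcal{F}_{t-1}) + |Y_t|\bigr).
\]
Taking expectations and conditioning on $\mathcal{F}_{t-1}$, the density bound $f_t^X \leq K$ in \ref{it:cond dist} gives $\Pr(\cdot \mid \mathcal{F}_{t-1}) \leq 2KV_1(\mathcal{F}_{t-1})\delta$. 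The $C$-contribution is then $O(\delta)$ by $\E[V_1 C] \leq K$, while the $|Y_t|$-contribution is $O(\delta^{\iota/(1+\iota)})$ after Hölder's inequality with conjugate exponents $(1+\iota)/\iota$ and $1+\iota$, invoking $\E V_1 \leq K$ and $\E|Y_t|^{1+\iota} \leq K$ from \ref{it:mom bounds cons}. Letting $\delta \to 0$ after $n \to \infty$ and applying Markov's inequality delivers $\sup_{\vtheta^c}|R_n| = o_p(1)$. Combined with the uniform convergence of $Q_n^c(\vtheta_0^v, \cdot)$ to $Q_0^c$, the identification just established, and continuity plus compactness, the standard chain $Q_0^c(\widehat{\vtheta}_n^c) \leq Q_n^c(\widehat{\vtheta}_n^v, \vtheta_0^c) + o_p(1) = Q_0^c(\vtheta_0^c) + o_p(1)$ yields $\widehat{\vtheta}_n^c \overset{\P}{\longrightarrow} \vtheta_0^c$.
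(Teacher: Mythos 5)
Your proof is correct and rests on the same foundation as the paper's: both stages verify the conditions of Theorem~2.1 in \citet{NeweyMcFadden1994}, and the first-stage argument is identical. Where you differ is in how the second-stage uniform convergence is decomposed. The paper writes the deviation as $A_{1n}+B_{1n}$, i.e.\ (empirical criterion at $\widehat{\vtheta}_n^v$) minus (its expectation at $\widehat{\vtheta}_n^v$), plus the difference of \emph{expectations} at $\widehat{\vtheta}_n^v$ and $\vtheta_0^v$; the first piece is absorbed by the ULLN taken uniformly over the joint space $\mTheta^v\times\mTheta^c$, and only the second piece is attacked with the $\delta$-ball, the extremal points $\underline{\vtau},\overline{\vtau}$, the density bound, and H\"older. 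You instead add and subtract the \emph{empirical} criterion at $\vtheta_0^v$, so the ULLN is needed only at $\vtheta^v=\vtheta_0^v$ (uniformly in $\vtheta^c$), and the entire effect of first-stage estimation is pushed into the remainder $R_n$, which you dominate by a $\vtheta^c$-free empirical envelope on $\{\|\widehat{\vtheta}_n^v-\vtheta_0^v\|\le\delta\}$ and dispatch via Markov's inequality. The technical core is the same in both routes --- the mean-value bound $|v_t(\widehat{\vtheta}_n^v)-v_t(\vtheta_0^v)|\le V_1(\mathcal{F}_{t-1})\delta$, the bound $f_t^X\le K$, the moment condition $\E[V_1 C]\le K$ for the $C(\mathcal{F}_{t-1})$ contribution, and H\"older with exponents $(1+\iota)/\iota$ and $1+\iota$ for the $|Y_t|$ contribution, exactly mirroring the paper's $B_{11t}$ and $B_{12t}$ --- but your decomposition is arguably slightly more economical in that it demands less of the ULLN in the $\vtheta^v$ direction, at the cost of having to control an empirical average rather than an expectation, which your envelope-plus-Markov step handles cleanly.
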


\begin{proof}
See Appendix~\ref{sec:thm1}.
\end{proof}

The proof of Theorem~\ref{thm:cons} exploits an ``in probability'' version of Lemma~2.2 in \citet{Whi80} (also see \citet[Theorem 3.4]{Whi96} and the proof of Theorem 1 in \citet{Wei91}), which does not require stationarity of the true model values $v_t(\vtheta_0^v)$ and $c_t(\vtheta_0^c)$.
The first result that $\widehat{\vtheta}_{n}^{v}\overset{\P}{\longrightarrow}\vtheta_0^{v}$ is essentially a version of Theorem~1 in \citet{EM04}; the only difference being that our regularity conditions are more involved, since we also show that $\widehat{\vtheta}_{n}^{c}\overset{\P}{\longrightarrow}\vtheta_0^{c}$.
The proof of this latter result is complicated by the fact that the CoVaR score 
\[
	S^{\CoVaR}\big((v_t(\vtheta^v),c_t(\vtheta^c))^\prime, (X_t,Y_t)^\prime\big)=\1_{\{X_t>v_t(\vtheta^v)\}}\big[\1_{\{Y_t\leq c_t(\vtheta^c)\}}-\alpha\big]\big[c_t(\vtheta^c)-Y_t\big]
\]
is discontinuous in the VaR parameter $\vtheta^v$. 
This fact necessitates many of the regularity conditions imposed in Assumption~\ref{ass:cons}.

\subsection{Additional Asymptotic Results}

We provide additional asymptotic theory in the Online Appendix, which we only briefly outline here.
First, following \citet{EM04}, \citet{WhiteKimManganelli2015}, \citet{PZC19} and \citet{Catania2022}, Theorem~\ref{thm:cons} requires correct use of initialization information in the computation of the estimator.
When using arbitrary (and, hence, possibly incorrect) initial values, Theorem~\ref{prop:StartingValues} in Appendix~\ref{Asymptotic Normality} shows that consistency of the estimator is retained under some weak contraction condition on the models.
This condition is trivially satisfied for the models we use in the simulations and the application; see \eqref{eqn:SAVCoCAViaRModelClass} and \eqref{eqn:ASCoCAViaRModelClass}.

Second, we also establish asymptotic normality of $(\widehat{\vtheta}_n^v, \widehat{\vtheta}_n^c)$ in Theorem~\ref{thm:an} in Appendix~\ref{Asymptotic Normality} under the additional Assumption~\ref{ass:an}, which we verify for CCC--GARCH and VAR models in Appendices~\ref{sec:verification} and \ref{sec:verificationVAR}, respectively.

Third, we propose estimators of the asymptotic variance of the limiting distribution of $(\widehat{\vtheta}_n^v, \widehat{\vtheta}_n^c)$ and prove their consistency in Theorem~\ref{thm:avar} in Appendix~\ref{avar}.
This allows for feasible inference on the model parameters.

\section{Simulations}
\label{sec:Simulations}

\begin{sidewaystable}[p!]
	\centering
	\footnotesize
	\begin{tabular}{rr l rrrrr l rrrrr l rrrrr}
		\toprule
		\multicolumn{2}{l}{\textbf{VaR}} & &  \multicolumn{5}{c}{$\omega_1$} & & \multicolumn{5}{c}{$A_{11}$} & & \multicolumn{5}{c}{$B_{11}$}  \\
		\cmidrule{4-8}  	\cmidrule{10-14} 	 \cmidrule{16-20} 
		$\alpha,\beta$ & $n$ & &
		Bias & M Bias & $\hat \sigma_\text{emp}$ & $\hat \sigma_\text{asy}$ & CI  & &
		Bias & M Bias & $\hat \sigma_\text{emp}$ & $\hat \sigma_\text{asy}$ & CI  & &
		Bias & M Bias & $\hat \sigma_\text{emp}$ & $\hat \sigma_\text{asy}$ & CI  \\
		\midrule
		\multirow{4}{*}{0.90} 
		& 500 &  & 0.0157 & 0.0012 & 0.055 & 0.084 & 0.98 &  & 0.0114 & $-$0.0007 & 0.110 & 0.142 & 0.94 &  & $-$0.0523 & $-$0.0065 & 0.193 & 0.265 & 0.97 \\ 
		& 1000 &  & 0.0105 & 0.0008 & 0.041 & 0.073 & 0.99 &  & 0.0092 & 0.0025 & 0.078 & 0.100 & 0.95 &  & $-$0.0357 & $-$0.0067 & 0.145 & 0.222 & 0.98 \\ 
		& 2000 &  & 0.0064 & 0.0008 & 0.028 & 0.060 & 0.99 &  & 0.0045 & 0.0006 & 0.055 & 0.071 & 0.95 &  & $-$0.0208 & $-$0.0045 & 0.099 & 0.179 & 0.99 \\ 
		& 4000 &  & 0.0021 & $-$0.0002 & 0.017 & 0.045 & 1.00 &  & 0.0015 & $-$0.0003 & 0.039 & 0.051 & 0.96 &  & $-$0.0072 & $-$0.0016 & 0.062 & 0.133 & 0.99 \\ 
		\addlinespace                 
		\multirow{4}{*}{0.95} 
		& 500 &  & 0.0242 & 0.0013 & 0.083 & 0.103 & 0.98 &  & 0.0110 & $-$0.0026 & 0.147 & 0.192 & 0.93 &  & $-$0.0585 & $-$0.0054 & 0.211 & 0.250 & 0.96 \\ 
		& 1000 &  & 0.0154 & 0.0011 & 0.057 & 0.095 & 0.98 &  & 0.0122 & 0.0014 & 0.106 & 0.136 & 0.93 &  & $-$0.0388 & $-$0.0075 & 0.149 & 0.218 & 0.97 \\ 
		& 2000 &  & 0.0086 & 0.0013 & 0.037 & 0.080 & 0.99 &  & 0.0065 & 0.0029 & 0.076 & 0.097 & 0.94 &  & $-$0.0215 & $-$0.0074 & 0.099 & 0.179 & 0.98 \\ 
		& 4000 &  & 0.0043 & 0.0009 & 0.024 & 0.061 & 1.00 &  & 0.0020 & $-$0.0007 & 0.053 & 0.069 & 0.95 &  & $-$0.0098 & $-$0.0031 & 0.067 & 0.137 & 0.99 \\ 
		\midrule
		\multicolumn{2}{l}{\textbf{CoVaR}} & &  \multicolumn{5}{c}{$\omega_2$} & & \multicolumn{5}{c}{$A_{22}$} & & \multicolumn{5}{c}{$B_{22}$}  \\
		\cmidrule{4-8}  	\cmidrule{10-14} 	 \cmidrule{16-20} 
		$\alpha,\beta$ & $n$ & &
		Bias & M Bias & $\hat \sigma_\text{emp}$ & $\hat \sigma_\text{asy}$ & CI  & &
		Bias & M Bias & $\hat \sigma_\text{emp}$ & $\hat \sigma_\text{asy}$ & CI  & &
		Bias & M Bias & $\hat \sigma_\text{emp}$ & $\hat \sigma_\text{asy}$ & CI  \\
		\midrule
		\multirow{4}{*}{0.90}  
		& 500 &  & 0.1341 & 0.0156 & 0.223 & 0.171 & 0.85 &  & 0.0568 & 0.0197 & 0.453 & 0.486 & 0.90 &  & $-$0.4141 & $-$0.1090 & 0.628 & 0.521 & 0.81 \\ 
		& 1000 &  & 0.1136 & 0.0075 & 0.208 & 0.147 & 0.86 &  & 0.0539 & 0.0245 & 0.300 & 0.342 & 0.90 &  & $-$0.3467 & $-$0.0642 & 0.588 & 0.442 & 0.83 \\ 
		& 2000 &  & 0.0776 & 0.0035 & 0.173 & 0.129 & 0.89 &  & 0.0407 & 0.0181 & 0.203 & 0.246 & 0.91 &  & $-$0.2364 & $-$0.0354 & 0.492 & 0.380 & 0.88 \\ 
		& 4000 &  & 0.0353 & 0.0009 & 0.111 & 0.103 & 0.94 &  & 0.0294 & 0.0133 & 0.137 & 0.173 & 0.94 &  & $-$0.1117 & $-$0.0122 & 0.328 & 0.304 & 0.93 \\ 
		\addlinespace                                     
		\multirow{4}{*}{0.95} 
		& 500 &  & 0.2319 & 0.0889 & 0.329 & 0.293 & 0.82 &  & 0.1006 & 0.0278 & 0.976 & 0.973 & 0.88 &  & $-$0.5528 & $-$0.3042 & 0.663 & 0.651 & 0.77 \\ 
		& 1000 &  & 0.2148 & 0.0581 & 0.316 & 0.255 & 0.82 &  & 0.1094 & 0.0423 & 0.673 & 0.693 & 0.86 &  & $-$0.4896 & $-$0.2205 & 0.640 & 0.559 & 0.78 \\ 
		& 2000 &  & 0.1720 & 0.0218 & 0.295 & 0.223 & 0.84 &  & 0.0775 & 0.0365 & 0.449 & 0.507 & 0.87 &  & $-$0.3816 & $-$0.1059 & 0.596 & 0.487 & 0.82 \\ 
		& 4000 &  & 0.1234 & 0.0100 & 0.254 & 0.195 & 0.87 &  & 0.0707 & 0.0314 & 0.305 & 0.365 & 0.90 &  & $-$0.2759 & $-$0.0501 & 0.527 & 0.422 & 0.85 \\ 
		\bottomrule
	\end{tabular}
	\caption{Simulation results for the six parameter CoCAViaR model based on the CCC--GARCH model in \eqref{eqn:ECCCmodel} and $M=5000$ simulation replications.
			The columns ``Bias'' show the average bias and the columns ``M Bias'' the median bias of the parameter estimates. 
			The columns ``$\hat \sigma_\text{emp}$'' report the empirical standard deviation of the parameter estimates, ``$\hat \sigma_\text{asy}$'' the median of the estimated standard deviations, and the columns ``CI'' show the coverage rates of $95\%$-confidence intervals.}
	\label{tab:SimResultsCoCAViaR6p}
\end{sidewaystable}

Here, we consider estimation of a dynamic SAV CoCAViaR model given in \eqref{eqn:SAVCoCAViaRModelClass}.
For this, we simulate $\big\{(X_t, Y_t)^\prime\big\}_{t=1,\ldots,n}$ from the absolute value ECCC--GARCH model in \eqref{eqn:ECCCmodel} with diagonal $\widetilde{\mA}$ and $\widetilde{\mB}$, such that Assumptions~\ref{ass:cons} and \ref{ass:an} are met (see Appendix~\ref{sec:verification} for the verification).
We choose the parameter values $\widetilde{\vomega} = (0.04,\ 0.02)^\prime$, $\widetilde{\mA} = \begin{psmallmatrix} 0.1 & 0 \\ 0 & 0.15 \end{psmallmatrix}$,  $\widetilde{\mB} = \begin{psmallmatrix} 0.8 & 0 \\ 0 & 0.75 \end{psmallmatrix}$ and  let $F$ be the multivariate (marginally standardized) $t$-distribution with $\nu = 8$ degrees of freedom and a residual correlation of $\rho = 0.5$. 
We initialize the volatility process at $t=1$ with the starting values $\big(\sigma_{X,1}, \sigma_{Y,1}\big)^\prime = \widetilde{\vomega}$, hence complying with item \eqref{it:CCC ii} in Remark~\ref{rem:Init}.
 

The off-diagonal zero-restrictions in $\widetilde{\mA}$ and $\widetilde{\mB}$ result in (more or less) the classic CCC--GARCH model of \citet{Bol90}.
Recall that $\widetilde{B}_{12} = 0$ is essential for our two-step M-estimator, whereas $\widetilde{B}_{21} = 0$ just facilitates the derivation of the asymptotic theory. 
We estimate the SAV-diag CoCAViaR model, which arises for diagonal $\mA$ and $\mB$ in \eqref{eqn:SAVCoCAViaRModelClass}. 
(Table~\ref{tab:ModelsApplication2} below provides a complete nomenclature of CoCAViaR models considered in this paper.)
Therefore, the to-be-estimated parameters are $\vtheta^v = (\omega_1, A_{11}, B_{11})^\prime$ and $\vtheta^c = (\omega_2, A_{22}, B_{22})^\prime$, whose true values can be obtained from $\widetilde{\vomega}$, $\widetilde{\mA}$ and $\widetilde{\mB}$ as in footnote~\ref{FN1}.

Table~\ref{tab:SimResultsCoCAViaR6p} shows simulation results for the dynamic CoCAViaR model based on $M=5000$ replications for the choices $\alpha = \beta \in \{0.9,\ 0.95\}$ and sample sizes $n \in \{500,\ 1000,\ 2000,\ 4000\}$.
We correctly initialize the model recursion in the numerical estimation process by using the model parameters $\vomega$ as described in item \eqref{item:InitParameter} of Example~\ref{ex:1}.
We show in Online Supplement~\ref{sec:Initializations} that the results are almost unchanged when using an (incorrect) initialization with a constant value as discussed in item \eqref{item:InitConst} of Example~\ref{ex:1}.
The asymptotic variance-covariance matrices are estimated as detailed in Appendix~\ref{avar}; see in particular Remark~\ref{rem:asvar} of that appendix.
A formal description of the table columns is given in the table caption.

The columns reporting the (average and median) bias and the standard deviations confirm the consistency of the estimator from Theorem~\ref{thm:cons}. 
In general, the VaR parameters are estimated with smaller empirical bias than the CoVaR parameters, which is not surprising given that CoVaR is further out in the tail and, hence, subject to larger estimation uncertainty.
Furthermore, the average bias is often larger than the median bias, indicating that the empirical distributions of the parameter estimates are still subject to some skewness or outliers.
Notice that even for the largest sample size of $n=4000$ for our choice of $\alpha = \beta = 0.95$, the CoVaR model is essentially estimated as a $95\%$-quantile based on an effective sample size of only $\tilde n = (1-\beta) n = 200$ observations, which is an inherently difficult task.
We further see that sample sizes of around 2000 days are required to reliably estimate the models for these extreme levels. This is especially true for the CoVaR parameters.

Next, we assess the precision of the standard errors and coverage of the confidence intervals for the parameters derived from Theorems~\ref{thm:an} and \ref{thm:avar} (see Appendix~\ref{Asymptotic Normality} and \ref{avar}).
The results for the estimated standard deviations and the confidence interval coverage rates in Table~\ref{tab:SimResultsCoCAViaR6p} show that asymptotic variance-covariance estimation is a very difficult task for (Co)CAViaR models.
The empirical standard errors are somewhat overestimated for the VaR parameters in Table \ref{tab:SimResultsCoCAViaR6p}, whereas they are interestingly more accurate for the CoVaR parameters.
While the confidence intervals for the VaR parameters are rather conservative, the ones for CoVaR display some undercoverage for the extreme probability levels of $\alpha =  \beta = 0.95$, and exhibit almost correct coverage for $\alpha =  \beta = 0.9$.
Appendix~\ref{sec:SimBandwidthChoice} illustrates that simple adjustments of the bandwidth choices do not result in meaningful improvements, which demonstrates the need for future research on improving the estimation accuracy of the asymptotic variance-covariance matrix for dynamic (Co)CAViaR models.
We mention however that for us, the main purpose of these models lies in prediction (see the next section), where inference is of lesser importance.

\section{Applied Systemic Risk Forecasting for US G-SIBs}
\label{sec:EmpiricalApplication}

For the empirical forecasting application, we use daily close-to-close log-losses from January 4, 2000 until December 31, 2021 with a total of $n=5535$ trading days, obtained from the financial data provider Refinitiv.
We use data for the S\&P~500 index as our $Y_t$, and for $X_t$, we use Bank of America (BAC), Citigroup (C), Goldman Sachs (GS), JPMorgan Chase (JPM) and the S\&P~500 Financials (SPF) that represents the financial sector of the S\&P~500.  
The individual financial institutions are the four systemically most risky US banks according to the \cite{FSB22}. 
The information set of interest is $\mathcal{F}_{t-1}=\sigma(X_{t-1}, Y_{t-1}, \ldots, X_1, Y_1)$ (such that $\init$ is the empty set).
Then, the quantity to be forecasted, i.e., $\CoVaR_{t,\alpha\mid\beta}$, measures the spillover risk of the financial system/institution to the overall economy conditional on the current state of the market (embodied by $\mathcal{F}_{t-1}$).
We focus on the probability levels $\alpha = \beta = 0.95$ and estimate all models using a rolling window with estimation samples of length 3000 days. To reduce the computational burden, we only re-estimate all entertained models (introduced below) every 100 days.

In Section~\ref{sec:ForecastingApplication1}, we present predictions and parameter estimates for six different CoCAViaR models, while Section~\ref{sec:ForecastingApplication2} compares these specifications with six distinct DCC--GARCH CoVaR forecasts.

\subsection{CoVaR Forecasting with CoCAViaR Models}
\label{sec:ForecastingApplication1}

We consider six forecasting models from the CoCAViaR class in this section.
The first three candidate models are from the general class of SAV CoCAViaR models given in \eqref{eqn:SAVCoCAViaRModelClass}. The acronym \textit{SAV} indicates that the driving forces of the models are the absolute values of the log-losses, $|X_{t-1}|$ and $|Y_{t-1}|$.
The top panel of Table \ref{tab:ModelsApplication2} summarizes which covariates are included in each of the employed SAV CoCAViaR model specifications.
The suffix ``diag'' in the first model indicates that the off-diagonal elements of the parameter matrices $\mA$ and $\mB$ are set to zero.
Similarly, ``full'' indicates that the full specification of  \eqref{eqn:SAVCoCAViaRModelClass} is used (only restricting $B_{12} = 0$ to facilitate two-step M-estimation) and the suffix ``fullA'' indicates that the full matrix $\mA$ is considered, while $\mB$ is restricted to be a diagonal matrix.

Notice that the SAV-full model is not covered by our modeling framework \eqref{eqn:GeneralModel}, because the CoVaR model depends on lags of $v_t(\vtheta^v)$, such that $c_t(\vtheta^c,\vtheta^v)$ is a function of the VaR parameters $\vtheta^v$ as well. Nonetheless, we include this specification here to show that the gains in forecast accuracy of this extension may be non-existent in practice.

\begin{table}[tb]
	\centering
	\small
	\begin{tabular}{ll c ccc c ccc c cc}
		\toprule
		 & & & \multicolumn{10}{c}{Covariates} \\
		 \cmidrule{4-13}  
		 \multicolumn{2}{l}{CoCAViaR Model}  &   & $|X_{t-1}|$ & $X_{t-1}^+$ & $X_{t-1}^-$ && $|Y_{t-1}|$ & $Y_{t-1}^+$ & $Y_{t-1}^-$ & & $v_{t-1}(\cdot)$ & $ c_{t-1}(\cdot)$\\
		\midrule
		\multirow{2}{*}{SAV-diag} & $v_{t}(\cdot)$  & & $\bullet$& & & & &&& & $\bullet$  & \\
		& $c_{t}(\cdot)$  &  && & & &$\bullet$  &&& && $\bullet$  \\
		\midrule
		\multirow{2}{*}{SAV-fullA} & $v_{t}(\cdot)$  & & $\bullet$& & & & $\bullet$ &&& & $\bullet$  & \\
		& $c_{t}(\cdot)$  &  &$\bullet$& & & &$\bullet$  &&& && $\bullet$  \\
		\midrule
		\multirow{2}{*}{SAV-full} & $v_{t}(\cdot)$  & & $\bullet$& & & & $\bullet$ &&& & $\bullet$  & \\
		& $c_{t}(\cdot)$  &  &$\bullet$& & & &$\bullet$  &&& & $\bullet$ & $\bullet$  \\
		\midrule
		\midrule
		\multirow{2}{*}{AS-pos} & $v_{t}(\cdot)$  & & & $\bullet$&  &  &&$\bullet$ &&& $\bullet$  & \\
		& $c_{t}(\cdot)$  &  & & $\bullet$ & &  && $\bullet$& &&  & $\bullet$  \\
		\midrule
		\multirow{2}{*}{AS-signs} & $v_{t}(\cdot)$  & & & $\bullet$& $\bullet$ &  &&&&& $\bullet$  & \\
		& $c_{t}(\cdot)$  &  & & $\bullet$ & $\bullet$ &  && $\bullet$&$\bullet$&&  & $\bullet$  \\
		\midrule
		\multirow{2}{*}{AS-mixed} & $v_{t}(\cdot)$  & & & $\bullet$& $\bullet$ &  &$\bullet$ &&&& $\bullet$  & \\
		& $c_{t}(\cdot)$  &  & $\bullet$ & & &  && $\bullet$&$\bullet$&&  & $\bullet$  \\
		\bottomrule
	\end{tabular}
	\caption{In this table the symbol $\bullet$ indicates which covariates are used in the six CoCAViaR specifications employed in the empirical application. All models additionally contain an intercept.}
	\label{tab:ModelsApplication2}
\end{table}

Along the lines of \cite{EM04}, we extend the CoCAViaR model class by using signed values of $X_t$ and $Y_t$ to the \emph{Asymmetric Slope (AS) CoCAViaR} models,
\begin{align}
	\label{eqn:ASCoCAViaRModelClass}
	\begin{pmatrix} v_t(\vtheta^v) \\ c_t(\vtheta^c) \end{pmatrix}
	= \vomega + 
	\mA^+ \begin{pmatrix} X_{t-1}^+ \\ Y_{t-1}^+ \end{pmatrix} +
	\mA^- \begin{pmatrix} X_{t-1}^- \\ Y_{t-1}^- \end{pmatrix} +
	\mB \begin{pmatrix} v_{t-1}(\vtheta^v) \\ c_{t-1}(\vtheta^c) \end{pmatrix},
\end{align}
where $\vomega \in \mathbb{R}^2, \mA^+, \mA^-, \mB \in \mathbb{R}^{2 \times 2}$ are collected in the parameter vectors $\vtheta^v$ and $\vtheta^c$. 
Here, we define $x^+ = \max(x,0)$ and $x^- = -\min(x,0)$ for $x \in \mathbb{R}$.
Parameter equalities in $\mA^+$ and $\mA^-$ can be used to generate absolute values $|X_{t-1}|$ and $|Y_{t-1}|$ in \eqref{eqn:ASCoCAViaRModelClass}. 
Intuitively, the positive values of $X_{t-1}$ and $Y_{t-1}$ (i.e., financial losses in our orientation) are expected to contribute more to the future VaR and CoVaR than their negative values.
This is much like large losses often have more predictive content for volatility than equally large gains in the GJR--GARCH models of \citet{GJR93}.
The three model suffixes ``pos'', ``signs'' and ``mixed'' in the bottom panel of Table~\ref{tab:ModelsApplication2} imply that for ``pos'' only the positive components are included, for ``signs'' both positive and negative components are used, and ``mixed'' includes a mix of positive, negative and absolute value losses (see Table~\ref{tab:ModelsApplication2} for details).

\begin{table}[tb]
	\centering
	\scriptsize
	\begin{tabular}{ll c cccc c ccc c cc}
		\toprule
		& & & \multicolumn{11}{c}{Estimated Model Parameters} \\
		\cmidrule{4-14}  
		\multicolumn{2}{l}{CoCAViaR Model}  &   & 1 & $|X_{t-1}|$ & $X_{t-1}^+$ & $X_{t-1}^-$ && $|Y_{t-1}|$ & $Y_{t-1}^+$ & $Y_{t-1}^-$ & & $v_{t-1}(\cdot)$ & $ c_{t-1}(\cdot)$\\
		\midrule
		\multirow{4}{*}{SAV-diag} & \multirow{2}{*}{$v_t(\cdot)$} & & 0.034  & 0.121 & & & & &&& & 0.932  & \\
		&   & & (0.276)  & (0.074) & & & & &&& & (0.101)  & \\
		\cmidrule{2-14}  
		& \multirow{2}{*}{$c_t(\cdot)$}  &  & 0.060 && & & & 0.750 &&& && 0.834  \\
		&  &  & (0.826) && & & & (0.620)  &&& && (0.161)  \\
		\midrule
		\multirow{4}{*}{SAV-fullA} & \multirow{2}{*}{$v_t(\cdot)$} & & 0.024  & 0.114 & & & & 0.102 &&& & 0.915  & \\
		&   & & (0.291)  & (0.085) & & & & (0.172) &&& & (0.104)  & \\
		\cmidrule{2-14}  
		& \multirow{2}{*}{$c_t(\cdot)$}  &  & 0.202 & 0.112 & & & & 0.428 &&& && 0.798  \\
		&  &  & (1.045) &(0.237)& & & & (0.636)  &&& && (0.332)  \\
		\midrule
		\multirow{4}{*}{SAV-full} & \multirow{2}{*}{$v_t(\cdot)$} & & 0.024  & 0.114 & & & & 0.102 &&& & 0.915  & \\
		&   & & (0.291)  & (0.085) & & & & (0.172) &&& & (0.104)  & \\
		\cmidrule{2-14}  
		& \multirow{2}{*}{$c_t(\cdot)$}  &  & 1.486 & 0.151 & & & & $-0.433$ &&& & $1.506$ & $-0.882$  \\
		&  &  & (0.600) &(0.266)& & & & (0.206)  &&& &(0.427) & (0.170)  \\
		\midrule
		\midrule
		\multirow{4}{*}{AS-pos} & \multirow{2}{*}{$v_t(\cdot)$} & & 0.013  &  & 0.128 & & &  & 0.065 && & 0.958  & \\
		&   && (0.150)  &  & (0.074) & & &  & (0.185) && & (0.053)  & \\
		\cmidrule{2-14}  
		& \multirow{2}{*}{$c_t(\cdot)$}  &  & 0.087 &  & 0.073 & & & & 0.627 && && 0.892  \\
		&  &  & (0.361) && (0.379) & & & & (0.482)  && && (0.104)  \\
		\midrule
		\multirow{4}{*}{AS-signs} & \multirow{2}{*}{$v_t(\cdot)$} & & 0.046  &  & 0.198 & 0.088 & &  & && & 0.918  & \\
		&   && (0.320)  &  & (0.111) & (0.227) & &  & && & (0.118)  & \\
		\cmidrule{2-14}  
		& \multirow{2}{*}{$c_t(\cdot)$}  &  & 0.189 &  & $-0.003$ & 0.116 & & & 0.550 & $-0.255$ & && 0.893  \\
		&  &  & (0.861) && (0.438) & (0.102) & & & (0.152)  & (0.611)& && (0.118)  \\
		\midrule
		\multirow{4}{*}{AS-mixed} & \multirow{2}{*}{$v_t(\cdot)$}   & & 0.040 & &  0.187 & 0.089  &  & 0.038 &&&& 0.914  & \\
		&  & & (0.288)& &  (0.122) & (0.127)  &  & (0.213) &&&& (0.116)  & \\
		\cmidrule{2-14}  
		& \multirow{2}{*}{$c_t(\cdot)$}   & &   0.208 & 0.093 & & &  && 0.567  & $-0.248$ &&  & 0.868  \\
		&  & &   (0.446) & (0.113) & & &  && (0.514) & (0.534) &&  & (0.087)  \\
		\bottomrule
	\end{tabular}
	\caption{CoCAViaR model parameter estimates based on the first estimation window consisting of 3000 trading days starting on January 4, 2000 until December 5, 2011. Estimated standard errors are given in parentheses  below the estimates.}
	\label{tab:EstimatesModelsApplication}
\end{table}

Table~\ref{tab:EstimatesModelsApplication} reports the estimated parameters together with their standard errors for the six CoCAViaR model specifications. The results are based on the initial estimation window of 3000 observations. 
All CoCAViaR models in this section are estimated using the initialization with the model parameters $\vomega$ as described in item \eqref{item:InitParameter} of Example~\ref{ex:1}.
Except for the SAV-full model, the autoregressive coefficients are all between 0.798 and 0.958, and are highly significant. The other coefficients are all barely significant, but their direction is very reasonable. The ``cross-terms'' seem to be less important in general. 
In the SAV-full model, including the lagged $v_{t-1}(\cdot)$ and $c_{t-1}(\cdot)$ in the CoVaR model results in unintuitive parameter estimates, possibly resulting from a multicollinearity problem.
This is also reflected by the model's inferior forecasting performance; see Table~\ref{tab:ApplForecastingResults} below. 
Hence, the additional generality of the SAV-full model (where $c_t(\vtheta^c,\vtheta^v)$ also depends on $\vtheta^v$) does not seem to be relevant empirically. 
Consistent with the idea of \citet{GJR93} that losses have a larger impact on volatility than do equally large gains, we find that losses are more important than gains in predicting systemic risk in the AS models; see especially $Y_{t-1}^+$ and $Y_{t-1}^-$ in the CoVaR equation of the AS-mixed model.

\begin{figure}[tb]
	\centering
	\includegraphics[width=\linewidth]{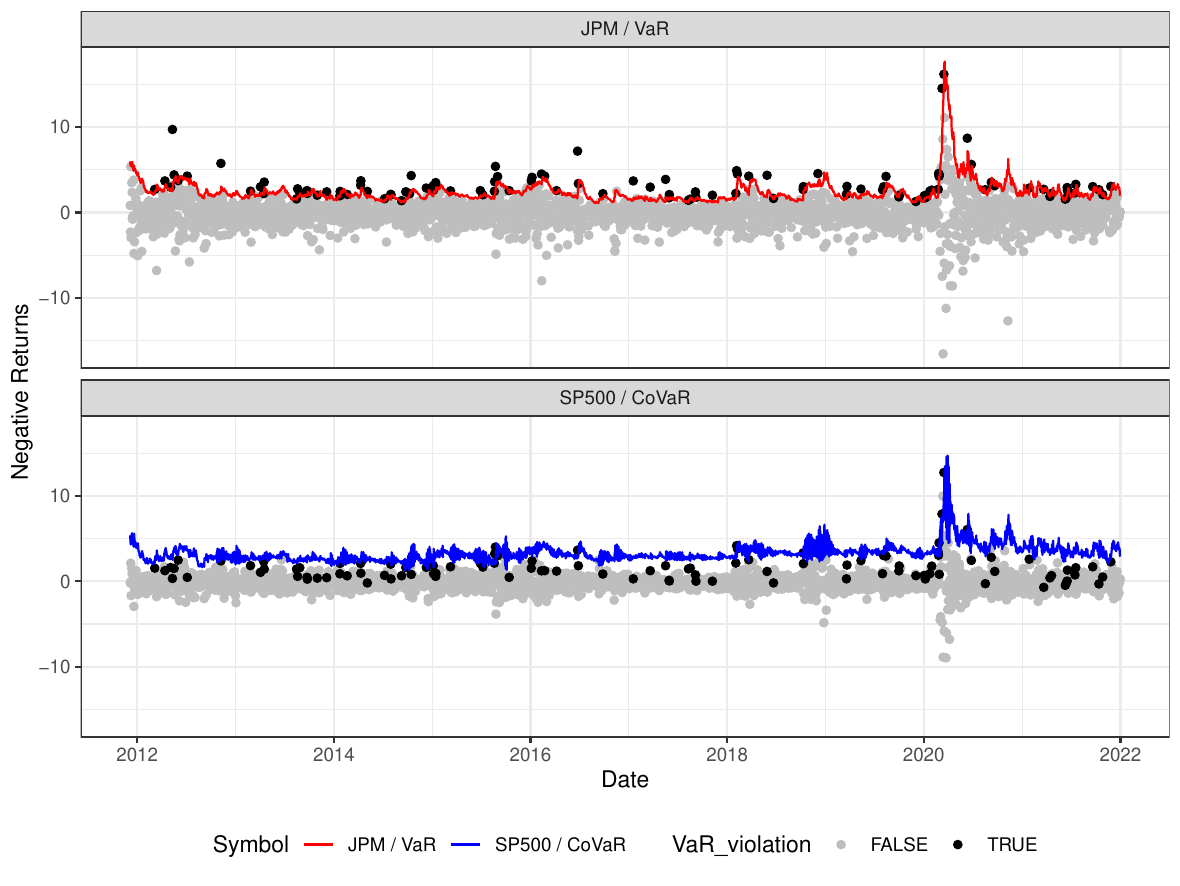}
	\caption{Out-of-sample VaR and CoVaR forecasts from the CoCAViaR-SAV-fullA model where JPMorgan Chase's log-losses are used for $X_t$ and the S\&P\,500 for $Y_t$. Log-losses on days with a VaR exceedance of JPMorgan Chase are displayed in black in both panels.}
	\label{fig:ForecastsPlot}
\end{figure}

Figure \ref{fig:ForecastsPlot} illustrates the rolling window forecasts from the SAV-fullA CoCAViaR model for the evaluation window ranging from December 6, 2011 until December 31, 2021.
The upper panel shows the log-losses of JPMorgan Chase---the systemically most important bank according to the \cite{FSB22}---together with its VaR forecasts. 
Losses exceeding the VaR forecasts are highlighted in black, which correspond to days with an (out-of-sample) stress event $\big\{X_t \ge \widehat \VaR_{t,\beta}\big\}$ in the definition of the CoVaR in Section \ref{sec:CoVaR}.
The lower panel shows the log-losses of the S\&P\,500 together with the model's CoVaR forecasts. There, the log-losses of days with a VaR exceedance (of JPMorgan Chase) are displayed in black, such that the CoVaR forecasts can be interpreted as $\alpha = 95\%$ quantile forecasts among those days with VaR exceedances.
We defer a more formal investigation of the (VaR, CoVaR) forecasts to the following section.

\subsection{Comparison with DCC--GARCH Models}
\label{sec:ForecastingApplication2}

As competitors for our six CoCAViaR models, we use six different DCC--GARCH specifications. DCC--GARCH models have attained benchmark status, because of their accurate variance-covariance matrix predictions \citep{LRV12,Laurent2013,CM14}. Particularly, we use three DCC--GARCH specifications, containing two standard DCC--GARCH(1,1) models with multivariate Gaussian and $t$-distributed innovations, respectively, and a DCC specification based on a univariate GJR--GARCH(1,1) model.
The models are abbreviated as ``DCC-n'', ``DCC-t'' and ``DCC-gjr'', respectively.
We estimate all DCC--GARCH models by maximum likelihood using the \texttt{rmgarch} package of \cite{Galanos2022} for the statistical software \texttt{R}.

As discussed in Section~\ref{sec:CoVaRModels}, different ``square roots'' $\mSigma_t$ of the conditional covariance matrix $\mH_t=\mSigma_t\mSigma_t^\prime$ may lead to different CoVaR forecasts of DCC--GARCH models (see also Appendix~\ref{CoVaR forecasting with multivariate GARCH models}).
Here, we consider VaR and CoVaR forecasts that are obtained by combining the above three DCC model specifications with a symmetric and a Cholesky decomposition (suffix ``sym'' respectively ``Chol'' after the model abbreviation) of the forecasted variance-covariance matrices, yielding a total of six sets of forecasts. 
For instance, the forecasts abbreviated ``DCC-t-Chol'' are computed from a DCC--GARCH with $t$-innovations based on the Cholesky decomposition of $\mH_t$.
We obtain the final VaR and CoVaR forecasts by multiplying $\mSigma_t$ with a nonparametric estimate of the VaR and CoVaR of the model residuals.\footnote{The forecast evaluation results in Table~\ref{tab:ApplForecastingResults} differ between the DCC-n-sym and the DCC-n-Chol models (despite the normal distribution being spherical; see Section~\ref{sec:CoVaRModels} and Appendix~\ref{Computation of Risk Measure Forecasts}) as the empirical distribution, which is used in the nonparametric estimate, is in general not spherical.}
We refer to Appendix~\ref{Computation of Risk Measure Forecasts} for details on the computation of VaR and CoVaR forecasts from DCC--GARCH models.

\begin{figure}[tb]
	\begin{center}
		\begin{subfigure}{0.49\linewidth}
			\centering
			\caption{CoCAViaR-SAV-fullA model}
			\includegraphics[width=\linewidth]{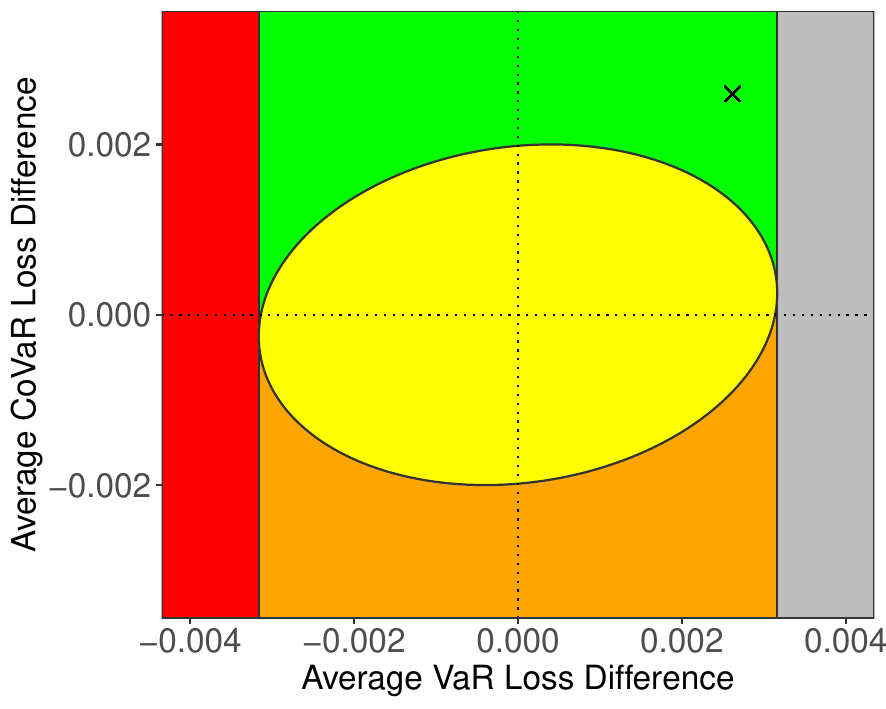}
			\label{fig:CoVaR_TrafficLight_XLF}
		\end{subfigure}
		\begin{subfigure}{0.49\linewidth}
			\centering
			\caption{CoCAViaR-AS-mixed model}
			\includegraphics[width=\linewidth]{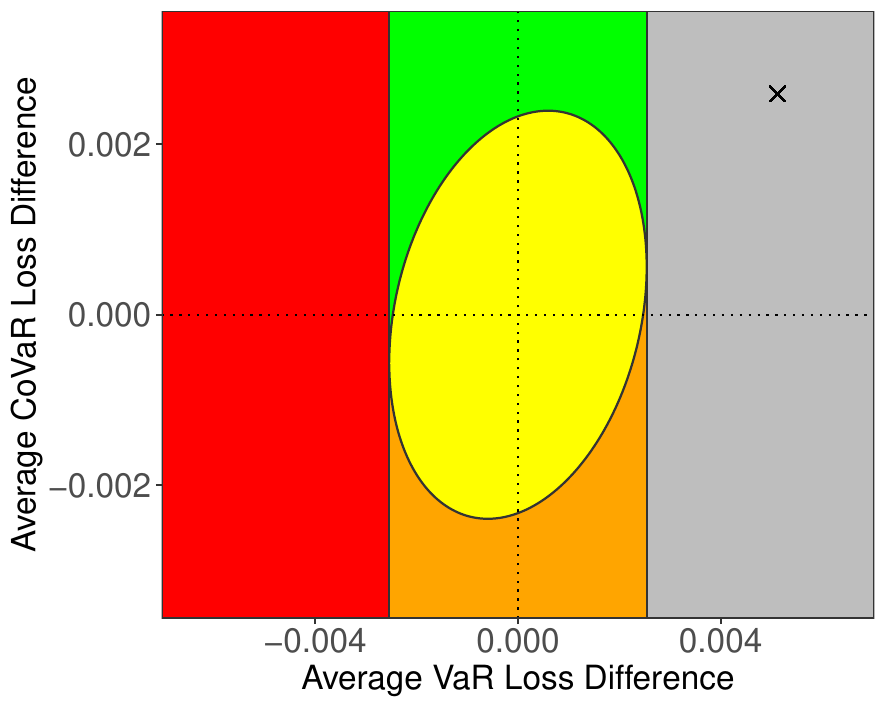}
			\label{fig:CoVaR_TrafficLight_JPM}
		\end{subfigure}
		\caption{This figure graphically illustrates the one and a half-sided forecast comparison tests for the VaR and CoVaR of \cite{FH24} based on their multi-objective scoring function in \eqref{eq:loss} for a significance level of $10\%$.   
		We use log-losses of JPMorgan Chase for $X_t$ and of the S\&P~500 for $Y_t$.
		The respective CoCAViaR models are given in the captions of the two plots and are compared against the baseline ``DCC-t-Chol'' model.}
		\label{fig:CoVaR_TrafficLight}
	\end{center}
\end{figure}

The multi-objective elicitability of (VaR, CoVaR) complicates inference on the predictive ability of the models, since the scoring function in \eqref{eq:loss} is \textit{bivariate} such that standard \citet{DM95} tests---based on \textit{scalar} scores---cannot be applied.
We follow \citet{FH24} in applying their ``one and a half-sided'' tests, which they illustrate via an extended traffic light system. 
Figure~\ref{fig:CoVaR_TrafficLight} exemplarily displays their extended traffic light approach for two models.
Doing so requires us to fix a baseline model, and we arbitrarily choose the ``DCC-t-Chol'' model for all evaluations.
Such a baseline model is necessary as classical extensions to \emph{multi-model} forecast comparison methods---such as the model confidence set of \cite{Hansen2011}---are not available in the case of bivariate (multi-objective) scoring functions.

Multi-objective elicitability implies that the scores of two competing sequences of CoVaR forecasts can only be compared if their underlying VaR forecasts perform equally well.
To obtain a reasonable finite-sample counterpart, \citet{FH24} interpret this as an insignificant score difference in the standard \cite{DM95} test for the VaR forecasts; that is, the null that the expected VaR score differences (calculated based on the first component in \eqref{eq:loss}) are equal to zero cannot be rejected.

The red zone in Figure \ref{fig:CoVaR_TrafficLight} indicates that the baseline VaR is significantly superior, and the comparison model is rejected without consideration of its CoVaR forecasts. 
The grey zone indicates the reverse, while the three remaining zones in the intermediary corridor imply insignificant score differences of the VaR forecasts.
Here, the orange zone implies that the baseline CoVaR is significantly superior (i.e., the CoVaR score differences based on the second component of \eqref{eq:loss} are smaller than zero in expectation), the green zone that the alternative model is superior and the yellow zone represents insignificant score differences.
As the baseline is the ``DCC-t-Chol'' model, a comparison with our CoCAViaR models should ideally yield results in the green zone (which indicates superior CoVaR forecasts and comparable VaR forecasts) or in the grey zone (which implies superior VaR forecasts) for the new models to have merit in practice.

\begin{table}[p!]
	\centering
	\scriptsize
	\resizebox{0.85\columnwidth}{!}{
	\begin{tabular}{lllrrrlrrrllr}
		\toprule
		&&& \multicolumn{3}{c}{VaR} && \multicolumn{3}{c}{CoVaR} &&\multicolumn{2}{c}{Inference} \\
		\cmidrule{4-6}  	\cmidrule{8-10} 	 \cmidrule{12-13} 
		$X_t$ & model &  & score & rank & hits &  & score & rank & hits &  & zone & $p$-value \\ 
		\midrule
		\multirow{12}{*}{BAC} & CoCAViaR-SAV-fullA &  & 2.074 & 6 & 4.3 &  & 5.405 & 1 & 7.3 &  & green & 0.01 \\ 
		& CoCAViaR-AS-mixed &  & 2.044 & 4 & 4.2 &  & 6.176 & 2 & 7.5 &  & grey & 0.00 \\ 
		& CoCAViaR-AS-signs &  & 2.042 & 2 & 4.6 &  & 6.253 & 3 & 5.2 &  & grey & 0.00 \\ 
		& CoCAViaR-AS-pos &  & 2.060 & 5 & 4.7 &  & 6.899 & 4 & 7.5 &  & green & 0.05 \\ 
		& CoCAViaR-SAV-diag &  & 2.089 & 12 & 4.5 &  & 7.071 & 5 & 8.7 &  & yellow & 0.14 \\ 
		& CoCAViaR-SAV-full &  & 2.074 & 6 & 4.3 &  & 7.472 & 6 & 8.3 &  & yellow & 0.35 \\ 
		& DCC-n-Chol &  & 2.082 & 9 & 4.5 &  & 8.655 & 7 & 19.5 &  & yellow & 0.28 \\ 
		& DCC-t-Chol &  & 2.087 & 10 & 4.8 &  & 8.815 & 8 & 17.2 &  &  &  \\ 
		& DCC-gjr-t-Chol &  & 2.042 & 3 & 4.7 &  & 9.165 & 9 & 15.8 &  & grey & 0.00 \\ 
		& DCC-n-sym &  & 2.081 & 8 & 4.5 &  & 9.213 & 10 & 19.5 &  & yellow & 0.33 \\ 
		& DCC-gjr-t-sym &  & 2.040 & 1 & 4.8 &  & 9.447 & 11 & 16.5 &  & grey & 0.00 \\ 
		& DCC-t-sym &  & 2.087 & 11 & 4.9 &  & 9.595 & 12 & 17.7 &  & yellow & 0.83 \\ 
		\midrule
		\multirow{12}{*}{C} & CoCAViaR-SAV-full &  & 2.143 & 10 & 4.8 &  & 6.230 & 1 & 5.0 &  & yellow & 0.13 \\ 
		& CoCAViaR-SAV-fullA &  & 2.143 & 10 & 4.8 &  & 6.286 & 2 & 4.1 &  & yellow & 0.22 \\ 
		& CoCAViaR-AS-mixed &  & 2.109 & 4 & 4.9 &  & 6.626 & 3 & 5.6 &  & green & 0.05 \\ 
		& CoCAViaR-AS-signs &  & 2.098 & 1 & 5.3 &  & 6.653 & 4 & 5.2 &  & grey & 0.00 \\ 
		& CoCAViaR-SAV-diag &  & 2.142 & 7 & 4.9 &  & 6.665 & 5 & 4.1 &  & yellow & 0.35 \\ 
		& CoCAViaR-AS-pos &  & 2.143 & 11 & 5.8 &  & 7.366 & 6 & 4.8 &  & yellow & 0.64 \\ 
		& DCC-n-sym &  & 2.140 & 6 & 4.9 &  & 7.632 & 7 & 9.6 &  & yellow & 0.29 \\ 
		& DCC-n-Chol &  & 2.138 & 5 & 5.0 &  & 7.793 & 8 & 11.0 &  & green & 0.05 \\ 
		& DCC-t-sym &  & 2.146 & 12 & 5.1 &  & 7.963 & 9 & 10.9 &  & yellow & 0.45 \\ 
		& DCC-t-Chol &  & 2.143 & 8 & 5.1 &  & 7.981 & 10 & 11.6 &  &  &  \\ 
		& DCC-gjr-t-Chol &  & 2.105 & 2 & 5.0 &  & 8.194 & 11 & 11.0 &  & grey & 0.04 \\ 
		& DCC-gjr-t-sym &  & 2.107 & 3 & 5.0 &  & 8.422 & 12 & 8.6 &  & grey & 0.03 \\ 
		\midrule
 		\multirow{12}{*}{GS} & CoCAViaR-SAV-fullA &  & 1.852 & 4 & 4.7 &  & 5.857 & 1 & 9.2 &  & green & 0.04 \\ 
		& CoCAViaR-AS-mixed &  & 1.832 & 2 & 4.7 &  & 5.971 & 2 & 4.2 &  & grey & 0.02 \\ 
		& CoCAViaR-AS-pos &  & 1.841 & 3 & 4.3 &  & 5.980 & 3 & 5.5 &  & grey & 0.02 \\ 
		& CoCAViaR-SAV-full &  & 1.852 & 4 & 4.7 &  & 6.355 & 4 & 10.1 &  & green & 0.07 \\ 
		& CoCAViaR-AS-signs &  & 1.832 & 1 & 5.0 &  & 6.453 & 5 & 4.0 &  & grey & 0.00 \\ 
		& CoCAViaR-SAV-diag &  & 1.870 & 8 & 4.9 &  & 7.382 & 6 & 6.5 &  & grey & 0.01 \\ 
		& DCC-n-sym &  & 1.891 & 9 & 4.3 &  & 7.607 & 7 & 10.0 &  & yellow & 0.16 \\ 
		& DCC-n-Chol &  & 1.895 & 11 & 4.3 &  & 7.696 & 8 & 13.0 &  & green & 0.00 \\ 
		& DCC-t-sym &  & 1.892 & 10 & 4.6 &  & 8.144 & 9 & 9.4 &  & yellow & 0.14 \\ 
		& DCC-t-Chol &  & 1.896 & 12 & 4.6 &  & 8.245 & 10 & 12.0 &  &  &  \\ 
		& DCC-gjr-t-Chol &  & 1.862 & 6 & 4.2 &  & 8.721 & 11 & 13.1 &  & grey & 0.01 \\ 
		& DCC-gjr-t-sym &  & 1.863 & 7 & 4.1 &  & 8.803 & 12 & 11.7 &  & grey & 0.02 \\ 
		\midrule
 		\multirow{12}{*}{JPM} & CoCAViaR-SAV-fullA &  & 1.715 & 6 & 4.1 &  & 5.697 & 1 & 12.5 &  & green & 0.01 \\ 
		& CoCAViaR-AS-mixed &  & 1.690 & 1 & 4.0 &  & 5.701 & 2 & 6.9 &  & grey & 0.00 \\ 
		& CoCAViaR-SAV-full &  & 1.715 & 6 & 4.1 &  & 5.926 & 3 & 10.6 &  & green & 0.02 \\ 
		& CoCAViaR-AS-signs &  & 1.692 & 2 & 4.1 &  & 6.004 & 4 & 6.7 &  & grey & 0.00 \\ 
		& CoCAViaR-AS-pos &  & 1.703 & 3 & 4.7 &  & 6.724 & 5 & 8.3 &  & grey & 0.03 \\ 
		& CoCAViaR-SAV-diag &  & 1.733 & 8 & 4.3 &  & 6.898 & 6 & 8.2 &  & yellow & 0.18 \\ 
		& DCC-n-sym &  & 1.743 & 11 & 3.7 &  & 7.196 & 7 & 17.2 &  & green & 0.03 \\ 
		& DCC-t-sym &  & 1.752 & 12 & 3.8 &  & 7.695 & 8 & 20.8 &  & red & 0.00 \\ 
		& DCC-n-Chol &  & 1.734 & 9 & 4.0 &  & 7.826 & 9 & 18.8 &  & grey & 0.01 \\ 
		& DCC-gjr-t-sym &  & 1.711 & 5 & 3.8 &  & 8.136 & 10 & 17.7 &  & grey & 0.06 \\ 
		& DCC-t-Chol &  & 1.741 & 10 & 4.1 &  & 8.290 & 11 & 21.4 &  &  &  \\ 
		& DCC-gjr-t-Chol &  & 1.706 & 4 & 3.9 &  & 8.762 & 12 & 19.2 &  & grey & 0.03 \\ 
		\midrule
 		\multirow{12}{*}{SPF} & CoCAViaR-SAV-fullA &  & 1.409 & 12 & 4.2 &  & 4.910 & 1 & 9.3 &  & green & 0.01 \\ 
		& CoCAViaR-AS-mixed &  & 1.379 & 4 & 4.5 &  & 5.025 & 2 & 5.3 &  & grey & 0.00 \\ 
		& CoCAViaR-AS-signs &  & 1.363 & 1 & 4.7 &  & 5.519 & 3 & 6.8 &  & grey & 0.00 \\ 
		& CoCAViaR-AS-pos &  & 1.398 & 5 & 4.7 &  & 5.725 & 4 & 5.9 &  & green & 0.05 \\ 
		& CoCAViaR-SAV-diag &  & 1.407 & 10 & 4.1 &  & 5.910 & 5 & 10.5 &  & green & 0.04 \\ 
		& DCC-n-sym &  & 1.400 & 7 & 4.5 &  & 7.306 & 6 & 16.7 &  & grey & 0.00 \\ 
		& DCC-gjr-t-sym &  & 1.369 & 3 & 4.3 &  & 7.621 & 7 & 14.5 &  & grey & 0.00 \\ 
		& DCC-t-sym &  & 1.406 & 9 & 4.5 &  & 7.829 & 8 & 18.3 &  & yellow & 0.25 \\ 
		& DCC-n-Chol &  & 1.399 & 6 & 4.4 &  & 7.884 & 9 & 20.5 &  & grey & 0.00 \\ 
		& DCC-t-Chol &  & 1.405 & 8 & 4.5 &  & 8.175 & 10 & 21.1 &  &  &  \\ 
		& DCC-gjr-t-Chol &  & 1.365 & 2 & 4.8 &  & 8.388 & 11 & 17.4 &  & grey & 0.00 \\ 
		& CoCAViaR-SAV-full &  & 1.409 & 12 & 4.2 &  & 8.832 & 12 & 10.3 &  & yellow & 0.68 \\ 
		\bottomrule
	\end{tabular}
}
	\caption{VaR and CoVAR forecasting results for $Y_t$ equaling S\&P~500 losses and various choices of $X_t$.
		Details on the table columns are given in the main text.}
	\label{tab:ApplForecastingResults}
\end{table}

Table~\ref{tab:ApplForecastingResults} presents results on the forecasting performance of all 12 models.
For the VaR, we report the average VaR score (multiplied by 10 for better readability) using the first component of \eqref{eq:loss}, its rank among the different models, and the ``hits'' as the percentage of days where the loss is larger than the VaR forecast, $X_t \ge \widehat \VaR_{t,\beta}$.
For the CoVaR, we report the average CoVaR score using the second component in \eqref{eq:loss} (multiplied by 1000 for better readability), the corresponding rank, and the CoVaR hits defined as the percentage of days where $Y_t \ge \widehat \CoVaR_{t,\alpha\mid\beta}$ among all days with a VaR hit, i.e., the $t$ where $X_t \ge \widehat \VaR_{t,\beta}$.
The VaR forecast should ideally be exceeded with probability $1-\beta=5\%$, and---on those days---the CoVaR forecast should be exceeded with probability $1-\alpha = 5\%$.
Hence, for reasonable VaR and CoVaR forecasts we expect the numbers in the two ``hits'' columns of Table~\ref{tab:ApplForecastingResults} to be close to 5.
The last two columns of Table~\ref{tab:ApplForecastingResults} report results for the previously described one and a half-sided test of \cite{FH24}. There we use ``DCC-t-Chol'' as the baseline model, and report the test's $p$-value together with the resulting zone of their extended traffic light system.
For each considered asset $X_t$, we sort the table rows (i.e., the models) according to their CoVaR score, as the CoVaR forecasts are of main interest in this section.

Among the CoCAViaR models, we find a better performance of the AS than the SAV models for the VaR, but a relatively comparable performance for CoVaR forecasting.
A reason for this may be that the additional VaR parameters in the AS models are estimated with more precision and, hence, the predictive content of the positive/negative parts emerges more clearly than for CoVaR, where the effective sample size is much reduced.
It is further noteworthy that the SAV-full CoCAViaR model, which includes the lagged VaR in the CoVaR equation, performs below average.
This might be caused by a high collinearity of the VaR and CoVaR, and also shows that the restriction $B_{12}=0$---which we have to impose for our two-step M-estimator---is likely to be unproblematic in practice.

Overall, we find a superior forecasting performance of the CoCAViaR models for all five employed assets for $X_t$ compared to the DCC models.
While the rankings of the VaR scores vary over the different assets, the superior forecasting performance of the CoCAViaR models is more substantial for the CoVaR.
This is supported by the CoVaR hits (corresponding to unconditional forecast calibration) being much closer to the nominal level of $5\%$ than for the DCC models, whose hit frequencies are almost all above $10\%$.
While none of the CoCAViaR models are significantly outperformed by being in the red or orange zone, many significantly outperform the baseline DCC model and are located in the green and grey zones. Also notice that among the CoCAViaR models, the SAV-full specification (not covered by our theoretical framework) performs below average. 
Therefore, our modeling framework \eqref{eqn:GeneralModel} seems to be sufficiently general to capture the main features of (systemic) risk evolution through time.

In comparing the (VaR, CoVaR) forecasts of our CoCAViaR models with those issued by the DCC--GARCH models, we have used the same scoring function \eqref{eq:loss} in the comparative backtest as we did for estimating the CoCAViaR models. 
Therefore, one may be concerned that using the same scoring function to estimate our CoCAViaR models and to evaluate the predictions, lends an unfair advantage in the forecast comparison to the CoCAViaR models. 
To alleviate such concerns, we also use a scoring function different from \eqref{eq:loss} in the comparative backtest of \citet{FH24} as a robustness check.
For brevity, the results are reported in Appendix~\ref{Forecast Comparison with Alternative Loss Function} and they show that the dominance of our CoCAViaR models still holds.

The superiority of our CoCAViaR models may be surprising, because the second-stage estimator is based on the reduced estimation sample where $\{X_t \geq \widehat{\VaR}_{t,\b} \}$.
Yet, doing so allows our models to focus on the salient features of the data in the tail, thus offsetting the drawback of the lower effective sample size.
This focus means, however, that we only model a very narrow---although for many purposes important---aspect of the conditional distribution of $(X_t,Y_t)^\prime\mid\mathcal{F}_{t-1}$ (namely the VaR and CoVaR). 
In contrast, multivariate GARCH models are capable of modeling the complete predictive distribution of $(X_t,Y_t)^\prime\mid\mathcal{F}_{t-1}$, but they may not excel at modeling each aspect of it equally well. 
This echoes \citet[p.~374]{EM04}, who conclude in the context of univariate VaR forecasting that ``[a]lthough GARCH might be a useful model for describing the evolution of volatility, the results in this article show that it might provide an unsatisfactory approximation when applied to tail estimation.''

\section{Conclusion}\label{sec:Conclusion}	

Our first main contribution is to propose a flexible, semiparametric approach for modeling (VaR, CoVaR) over time.
Since we only model (VaR, CoVaR) and not the full predictive distribution, we ``let the tails speak for themselves'' \citep{DuM83}. As we find in an empirical application on the systemic riskiness of US financial institutions, this yields models that improve upon benchmark DCC--GARCH processes in terms of predictive accuracy. 
As a second main contribution, we propose a two-step M-estimator for the model parameters and derive its large sample properties.
From an econometric perspective, our proofs are non-standard since the loss function that has to be used for the CoVaR is not only non-smooth, but also discontinuous in the VaR model (parameter).


We expect our modeling framework to have applications beyond the one considered here, for instance in predictive CoVaR regressions as in \citet{AB16}.
Furthermore, in macroeconomics, \citet{ABG19,Aea21} have recently drawn attention to tail risks and their interconnections by popularizing the Growth-at-Risk.
Hence, the models proposed in this paper could become relevant for studying interconnections of macroeconomic risks.
Much like \citet{BS21} compared the predictive accuracy of different Growth-at-Risk models, our models may be used as competitors in \textit{Co}-Growth-at-Risk comparisons. 

%
%
%
%
%

\pagebreak 
\appendix
\renewcommand\appendixpagename{Appendix}
\appendixpage

\section*{Notation}

We use the following notational conventions throughout this appendix. 
We denote by $C>0$ a large positive constant that may change from line to line. 
If not specified otherwise, all convergences are to be understood with respect to $n\to\infty$. We also write $\E_{t-1}[\cdot]=\E[\cdot\mid\mathcal{F}_{t-1}]$ and $\P_{t-1}\{\cdot\}=\P\{\cdot\mid\mathcal{F}_{t-1}\}$ for short. 
We exploit without further mention that the Frobenius norm is submultiplicative, i.e., that $\Vert\mA\mB\Vert\leq\Vert\mA\Vert\cdot\Vert\mB\Vert$ for conformable matrices $\mA$ and $\mB$. 
For a real-valued, differentiable function $f(\cdot)$, we denote the $j$-th element of the gradient $\nabla f(\cdot)$ by $\nabla_j f(\cdot)$.

We continue the labeling of theorems, examples, assumptions etc.~from the main paper.
E.g., Theorem~\ref{thm:cons} is the last theorem of the main paper and Theorem~\ref{prop:StartingValues} is the first theorem in this appendix.
Equations in this appendix carry a prefix consistent with the section numbering, such that, say, \eqref{eq:un id VaR} is the first equation in Section~\ref{sec:thm1}.
References to equations without a prefix (such as \eqref{eqn:MestCoVaR}) always refer to a formula in the main paper.

\section{Proof of Theorem~\ref{thm:cons}}
\label{sec:thm1}

\begin{proof}[{\textbf{Proof of Theorem~\ref{thm:cons}:}}] 
	Similar to \citet[Proof of Theorem~1]{Wei91}, we show consistency of the parameter estimators by invoking the ``convergence in probability'' version of Lemma~2.2 in \citet{Whi80}; cf.~also \citet[Theorem~3.4]{Whi96}.
	First, we establish that $\widehat{\vtheta}_{n}^{v}\overset{\P}{\longrightarrow}\vtheta_0^{v}$. 
	For $\vtheta^v\in\mTheta^v$, we define
	\begin{align*}
		Q_n^{v}(\vtheta^v) &= \frac{1}{n}\sum_{t=1}^{n}S^{\VaR}\big(v_t(\vtheta^v), X_t\big),\\
		\overline{Q}_n^v(\vtheta^v) &= \E\bigg[\frac{1}{n}\sum_{t=1}^{n}S^{\VaR}\big(v_t(\vtheta^v), X_t\big)\bigg].
	\end{align*}
	Recall from Assumption~\ref{ass:cons}~\ref{it:compact} that $\mTheta^v$ is compact.
	Note that $Q_n^v(\cdot)$ is a continuous function on $\mTheta^v$, due to the continuity of $v\mapsto S^{\VaR}(v,x)$ for all $x\in\mathbb{R}$ and the continuity of $v_t(\cdot)$ (by Assumption~\ref{ass:cons}~\ref{it:smooth}).
	Observe moreover that $\widehat{\vtheta}_n^v=\argmin_{\vtheta^v\in\mTheta^v}Q_n^{v}(\vtheta^v)$.

	To apply the ``in probability'' version of Lemma~2.2 in \citet{Whi80}, we have to verify that, as $n\to\infty$,
	\[
	\sup_{\vtheta^v\in\mTheta^v}\big|Q_n^v(\vtheta^v) - \overline{Q}_n^v(\vtheta^v)\big|=o_{\P}(1).
	\]
	This, however, is immediate from Assumption~\ref{ass:cons}~\ref{it:ULLN}.
	
	It remains to check that $\vtheta_0^v$ is the identifiably unique minimum of $\overline{Q}_n^{v}(\cdot)$ in the sense of \citet[Definition~2.1]{Whi80}.
	To do so, note that $\overline{Q}_n^{v}(\cdot)$ is continuous on $\mTheta^v$, because for any $\vtheta^v\in\mTheta^v$,
	\begin{align*}
		\lim_{\vtheta\to\vtheta^v}\overline{Q}_n^v(\vtheta) &= \frac{1}{n}\sum_{t=1}^{n}\lim_{\vtheta\to\vtheta^v}\E\Big[S^{\VaR}\big(v_t(\vtheta), X_t\big)\Big]\\
		&= \frac{1}{n}\sum_{t=1}^{n}\E\Big[\lim_{\vtheta\to\vtheta^v}S^{\VaR}\big(v_t(\vtheta), X_t\big)\Big]\\
		&= \frac{1}{n}\sum_{t=1}^{n}\E\Big[S^{\VaR}\big(v_t(\vtheta^v), X_t\big)\Big]\\
		&= \overline{Q}_n^v(\vtheta^v),
	\end{align*}
	where the second step follows from the dominated convergence theorem (DCT), and the third step from the continuity of the map $v\mapsto S^{\VaR}(v,x)$ together with continuity of $v_t(\cdot)$ (by Assumption~\ref{ass:cons} \ref{it:smooth}). 
	Thus, $\vtheta^v\mapsto \overline{Q}_n^v(\vtheta^v)$ is continuous. 
	Note that we may apply the DCT, since $S^{\VaR}\big(v_t(\vtheta), X_t\big)$ is dominated by 
	\[
	\big|S^{\VaR}\big(v_t(\vtheta), X_t\big)\big|\leq\big|X_t - v_t(\vtheta)\big|\leq \big|X_t\big| + \big|v_t(\vtheta)\big|\leq |X_t| + V(\mathcal{F}_{t-1}),
	\]
	where the right-hand side is integrable due to Assumption~\ref{ass:cons}~\ref{it:mom bounds cons}. 
	
	Next, we check that $\vtheta_0^v$ minimizes $\overline{Q}_n^{v}(\cdot)$.
	Theorem~4.4 of \citet{FH24} (which applies thanks to Assumption~\ref{ass:cons}~\ref{it:cond dist}) implies that $\E_{t-1}\big[S^{\VaR}\big(\,\cdot\, , X_t\big)\big]$ is minimized at $\VaR_\beta(F_{X_t\mid\mathcal{F}_{t-1}})$, which equals $v_t(\vtheta_0^v)$ by Assumption~\ref{ass:cons}~\ref{it:id}. 
	In other words,
	\[
	\E_{t-1}\big[S^{\VaR}\big(v_t(\vtheta_0^v), X_t\big)\big] \leq \E_{t-1}\big[S^{\VaR}\big(v_t(\vtheta^v), X_t\big)\big]\qquad\text{for all }\vtheta^v\neq\vtheta_0^v.
	\]
	Applying expectations to both sides gives that
	\[
	\E\Big\{\E_{t-1}\big[S^{\VaR}\big(v_t(\vtheta_0^v), X_t\big)\big]\big\} \leq \E\Big\{\E_{t-1}\big[S^{\VaR}\big(v_t(\vtheta^v), X_t\big)\big]\Big\}\qquad\text{for all }\vtheta^v\neq\vtheta_0^v,
	\]
	which, by the law of iterated expectations (LIE), is equivalent to
	\[
	\E\big[S^{\VaR}\big(v_t(\vtheta_0^v), X_t\big)\big] \leq \E\big[S^{\VaR}\big(v_t(\vtheta^v), X_t\big)\big]\qquad\text{for all }\vtheta^v\neq\vtheta_0^v.
	\]
	Therefore, $\vtheta_0^v$ minimizes $\overline{Q}_n^{v}(\cdot)$, as desired.
	
	To show that $\vtheta_0^v$ is also the \textit{identifiably unique} minimizer in the sense of \citet[Definition~2.1]{Whi80}, we proceed as follows.
	In Section~\ref{Proofs of Equations}, we show under Assumption~\ref{ass:cons} that for all $\vtheta^v\in\mTheta^v$,
	\begin{equation}\label{eq:un id VaR}
		\E\Big[S^{\VaR}\big(v_t(\vtheta^v), X_t\big) - S^{\VaR}\big(v_t(\vtheta_0^v), X_t\big)\Big]\geq\frac{1}{4}\tau^2f_1p_1\P\Big\{\big|v_t(\vtheta^v) - v_t(\vtheta_0^v)\big|>\tau\,\Big\vert\, f_t^X\big(v_t(\vtheta_0^v)\big)>f_1\Big\}
	\end{equation}
	for sufficiently small $\tau>0$, and $p_1$ and $f_1$ from Assumption~\ref{ass:cons}~\ref{it:cond dens}.
	Therefore, for every $\xi>0$ we may pick $\tau>0$ sufficiently small, such that for $\Vert\vtheta^v-\vtheta_0^v\Vert\geq\xi$ it holds that
	\[
	\liminf_{n\to\infty} \Big[\overline{Q}_n^v(\vtheta^v) - \overline{Q}_n^v(\vtheta_0^v)\Big]\geq\frac{1}{4}\tau^2 f_1 p_1 \liminf_{n\to\infty}\frac{1}{n}\sum_{t=1}^{n}\P\Big\{\big|v_t(\vtheta^v) - v_t(\vtheta_0^v)\big|>\tau\,\Big\vert\, f_t^X\big(v_t(\vtheta_0^v)\big)>f_1\Big\}>0
	\]
	by Assumption~\ref{ass:cons}~\ref{it:unique id}~(a).
	The unique identifiability of $\vtheta_0^v$ follows.

	In sum, Lemma~2.2 of \citet{Whi80} applies, such that $\widehat{\vtheta}_n^v\overset{\P}{\longrightarrow}\vtheta_0^v$.

	It remains to prove $\widehat{\vtheta}_{n}^{c}\overset{\P}{\longrightarrow}\vtheta_0^{c}$. To this end, we again invoke (the ``in probability'' version of) Lemma~2.2 of \citet{Whi80}.
	For $\vtheta^c\in\mTheta^c$, we define
	\begin{align*}
		Q_n^{c}(\vtheta^c) &= \frac{1}{n}\sum_{t=1}^{n}S^{\CoVaR}\big((v_t(\widehat{\vtheta}_n^v),c_t(\vtheta^c))^\prime, (X_t, Y_t)^\prime\big),\\
		\overline{Q}_n^{c}(\vtheta^c) &= \E\bigg[\frac{1}{n}\sum_{t=1}^{n}S^{\CoVaR}\big((v_t(\vtheta_0^v),c_t(\vtheta^c))^\prime, (X_t, Y_t)^\prime\big)\bigg].
	\end{align*}
	Once again, $\mTheta^c$ is compact by Assumption~\ref{ass:cons}~\ref{it:compact}.
	The function $\vtheta^c\mapsto Q_n^c(\vtheta^c)$ is continuous on $\mTheta^c$ due to the continuity of the map $c\mapsto S^{\CoVaR}\big((v,c)^\prime,(x,y)^\prime\big)$ together with the continuity of $c_t(\cdot)$ (by Assumption~\ref{ass:cons}~\ref{it:smooth}).
	Recall that $\widehat{\vtheta}_n^c=\argmin_{\vtheta^c\in\mTheta^c} Q_n^c(\vtheta^c)$.
	
	Now, let us verify the ULLN for the objective function, required by Lemma~2.2.
	Since $\widehat{\vtheta}_n^v\overset{\P}{\longrightarrow}\vtheta_0^v$, the event $\big\{\Vert\widehat{\vtheta}_n^v-\vtheta_0^v\Vert\leq \delta\big\}$ for any $\delta>0$ occurs with probability approaching one (w.p.a.~1), as $n\to\infty$.
	It holds on the set $\big\{\Vert\widehat{\vtheta}_n^v-\vtheta_0^v\Vert\leq \delta\big\}$ that
	\begin{align*}
		&\sup_{\vtheta^c\in\mTheta^c}\big|Q_n^c(\vtheta^c) - \overline{Q}_n^c(\vtheta^c)\big|\\
		&\leq \sup_{\substack{\vtheta^{c}\in\mTheta^{c}\\ \Vert\vtheta^v-\vtheta_0^v\Vert\leq \delta}}\bigg|\frac{1}{n}\sum_{t=1}^{n}S^{\CoVaR}\big((v_t(\vtheta^v), c_t(\vtheta^c))^\prime, (X_t, Y_t)^\prime\big)-\E\Big[S^{\CoVaR}\big((v_t(\vtheta_0^v), c_t(\vtheta^c))^\prime, (X_t, Y_t)^\prime\big)\Big]\bigg|\\	
		&\leq \sup_{\substack{\vtheta^{c}\in\mTheta^{c}\\ \Vert\vtheta^v-\vtheta_0^v\Vert\leq \delta}}\bigg|\frac{1}{n}\sum_{t=1}^{n}S^{\CoVaR}\big((v_t(\vtheta^v), c_t(\vtheta^c))^\prime, (X_t, Y_t)^\prime\big)-\E\Big[S^{\CoVaR}\big((v_t(\vtheta^v), c_t(\vtheta^c))^\prime, (X_t, Y_t)^\prime\big)\Big]\bigg|\\
		&\hspace{0.3cm} + \sup_{\substack{\vtheta^{c}\in\mTheta^{c}\\ \Vert\vtheta^v-\vtheta_0^v\Vert\leq \delta}}\bigg|\frac{1}{n}\sum_{t=1}^{n}\E\Big[S^{\CoVaR}\big((v_t(\vtheta^v), c_t(\vtheta^c))^\prime, (X_t, Y_t)^\prime\big)\Big]-\E\Big[S^{\CoVaR}\big((v_t(\vtheta_0^v), c_t(\vtheta^c))^\prime, (X_t, Y_t)^\prime\big)\Big]\bigg|\\
		&=:A_{1n} + B_{1n}.
	\end{align*}
	The ULLN from Assumption~\ref{ass:cons}~\ref{it:ULLN} immediately implies that $A_{1n}=o_{\P}(1)$.
	
	Thus, it remains to show that $B_{1n}=o(1)$. 
	Note that
	\begin{multline}\label{eq:tbb}
		B_{1n}\leq \frac{1}{n}\sum_{t=1}^{n}\E\bigg[\sup_{\substack{\vtheta^{c}\in\mTheta^{c}\\ \Vert\vtheta^v-\vtheta_0^v\Vert\leq\delta}}\Big|S^{\CoVaR}\big((v_t(\vtheta^v), c_t(\vtheta^c))^\prime, (X_t, Y_t)^\prime\big)\\
		-S^{\CoVaR}\big((v_t(\vtheta_0^v), c_t(\vtheta^c))^\prime, (X_t, Y_t)^\prime\big)\Big|\bigg].
	\end{multline}
	Choose $\delta>0$ sufficiently small, such that Assumption~\ref{ass:cons} \ref{it:bound} is satisfied.
	Define the $\mathcal{F}_{t-1}$-measurable quantities
	\begin{align*}
		\underline{\vtau} &:= \argmin_{\Vert\vtau-\vtheta_0^v\Vert\leq \delta}v_t(\vtau),\\
		\overline{\vtau} 	&:= \argmax_{\Vert\vtau-\vtheta_0^v\Vert\leq \delta}v_t(\vtau),
	\end{align*}
	which exist by continuity of $v_t(\cdot)$ (see Assumption~\ref{ass:cons}~\ref{it:smooth}).
	Write
	\begin{align*}
		\sup_{\substack{\vtheta^{c}\in\mTheta^{c}\\ \Vert\vtheta^v-\vtheta_0^v\Vert\leq\delta}}&\Big|S^{\CoVaR}\big((v_t(\vtheta^v), c_t(\vtheta^c))^\prime, (X_t, Y_t)^\prime\big)-S^{\CoVaR}\big((v_t(\vtheta_0^v), c_t(\vtheta^c))^\prime, (X_t, Y_t)^\prime\big)\Big|\\
		&= \sup_{\substack{\vtheta^{c}\in\mTheta^{c}\\ \Vert\vtheta^v-\vtheta_0^v\Vert\leq\delta}}\Big|\big[\1_{\{X_t>v_t(\vtheta^v)\}}-\1_{\{X_t>v_t(\vtheta_0^v)\}}\big]\big[\1_{\{Y_t\leq c_t(\vtheta^c)\}} - \alpha\big]\big[c_t(\vtheta^c)-Y_t\big]\Big|\\
		&\leq\big|\1_{\{X_t>v_t(\overline{\vtau})\}}-\1_{\{X_t>v_t(\underline{\vtau})\}}\big|\sup_{\vtheta^{c}\in\mTheta^{c}}\Big|\big[\1_{\{Y_t\leq c_t(\vtheta^c)\}} - \alpha\big]\big[c_t(\vtheta^c)-Y_t\big]\Big|\\
		&\leq \1_{\{v_t(\underline{\vtau})<X_t\leq v_t(\overline{\vtau})\}}\Big[\sup_{\vtheta^{c}\in\mTheta^{c}}\big|c_t(\vtheta^c)\big|+|Y_t|\Big].
	\end{align*}
	Therefore,
	\begin{align*}
		&\E\bigg[\sup_{\substack{\vtheta^{c}\in\mTheta^{c}\\ \Vert\vtheta^v-\vtheta_0^v\Vert\leq\delta}}\Big|S^{\CoVaR}\big((v_t(\vtheta^v), c_t(\vtheta^c))^\prime, (X_t, Y_t)^\prime\big)-S^{\CoVaR}\big((v_t(\vtheta_0^v), c_t(\vtheta^c))^\prime, (X_t, Y_t)^\prime\big)\Big|\bigg]\\
		&\leq \E\Big[\1_{\{v_t(\underline{\vtau})<X_t\leq v_t(\overline{\vtau})\}}\sup_{\vtheta^{c}\in\mTheta^{c}}\big|c_t(\vtheta^c)\big|\Big] + \E\Big[\1_{\{v_t(\underline{\vtau})<X_t\leq v_t(\overline{\vtau})\}}\big|Y_t\big|\Big]\\
		&=:B_{11t} + B_{12t}.
	\end{align*}
	For $B_{11t}$, we obtain using the LIE (in the first step), Assumption~\ref{ass:cons}~\ref{it:cond dens} (in the third step), the mean value theorem (in the fourth step) and Assumption~\ref{ass:cons} \ref{it:bound}--\ref{it:mom bounds cons} (in the fifth and sixth step) that
	\begin{align}
		B_{11t}	&= \E\bigg[\sup_{\vtheta^{c}\in\mTheta^{c}}\big|c_t(\vtheta^c)\big|\E_{t-1}\big\{\1_{\{v_t(\underline{\vtau})<X_t\leq v_t(\overline{\vtau})\}}\big\}\bigg]\notag\\
		&\leq \E\bigg[C(\mathcal{F}_{t-1})\int_{v_t(\underline{\vtau})}^{v_t(\overline{\vtau})}f_t^{X}(x)\D x\bigg]\notag\\
		&\leq \E\Big[C(\mathcal{F}_{t-1}) K \big\{v_t(\overline{\vtau})-v_t(\underline{\vtau})\big\}\Big]\notag\\
		&\leq K\E\Big[C(\mathcal{F}_{t-1})  \big\Vert\nabla v_{t}(\vtau^{*})\big\Vert\cdot\big\Vert\overline{\vtau}-\underline{\vtau}\big\Vert\Big]\notag\\
		&\leq K\E\big[C(\mathcal{F}_{t-1})  V_1(\mathcal{F}_{t-1})\big]2\delta\notag\\
		&\leq C\delta,\label{eq:(B.11t)}
	\end{align}
	where $\vtau^\ast$ is some mean value between $\underline{\vtau}$ and $\overline{\vtau}$.
	
	For $B_{12t}$, we apply Hölder's inequality for $r=(1+\iota)/\iota$ and $s=1+\iota$ (with $\iota>0$ from Assumption~\ref{ass:cons}~\ref{it:mom bounds cons}) to get that
	\begin{equation}\label{eq:B12t}
		B_{12t} \leq \Big\{\E\big[\1^{r}_{\{v_t(\underline{\vtau})<X_t\leq v_t(\overline{\vtau})\}}\big]\Big\}^{1/r}\Big\{\E|Y_t|^{s}\Big\}^{1/s}.
	\end{equation}
	By the LIE,
	\begin{align*}
		\E\big[\1^{r}_{\{v_t(\underline{\vtau})<X_t\leq v_t(\overline{\vtau})\}}\big]	&= \E\big[\1_{\{v_t(\underline{\vtau})<X_t\leq v_t(\overline{\vtau})\}}\big]\\
		&= \E\Big[\E_{t-1}\big\{\1_{\{v_t(\underline{\vtau})<X_t\leq v_t(\overline{\vtau})\}}\big\}\Big]\\
		&=\E\Big[F_t^{X}\big(v_t(\overline{\vtau})\big) - F_t^{X}\big(v_t(\underline{\vtau})\big)\Big]\\
		&=\E\Big[f_t^{X}\big(v_t(\vtau^\ast)\big)\big\{v_t(\overline{\vtau}) - v_t(\underline{\vtau})\big\}\Big]\\
		&\leq \E\Big[K\big\Vert \nabla v_t(\vtau^\ast)\big\Vert \cdot\big\Vert \overline{\vtau} - \underline{\vtau}\big\Vert\Big]\\
		&\leq K\E\big[V_1(\mathcal{F}_{t-1})\big]2\delta\\
		&\leq C\delta,
	\end{align*}
	where $\vtau^\ast$ (which may change from line to line) lies between $\underline{\vtau}$ and $\overline{\vtau}$, and the penultimate step follows from Assumption~\ref{ass:cons}~\ref{it:bound}.
	Plugging this into \eqref{eq:B12t} and using Assumption~\ref{ass:cons}~\ref{it:mom bounds cons} gives that
	\begin{equation}\label{eq:(B.12t)}
		B_{12t}\leq C^{1/r}\Big\{\E|Y_t|^{s}\Big\}^{1/s}\delta^{1/r}\leq C\delta^{1/r}.
	\end{equation}
	Plugging \eqref{eq:(B.11t)} and \eqref{eq:(B.12t)} into \eqref{eq:tbb} yields that $B_{1n}\leq C[\delta+\delta^{1/r}]$ for $\delta>0$ that can be chosen to be arbitrarily small.

	Recalling that the event $\big\{\Vert\widehat{\vtheta}_n^v-\vtheta_0^v\Vert\leq \delta\big\}$ occurs w.p.a.~1, as $n\to\infty$, we obtain for any $\varepsilon>0$ that
	\begin{align*}
		\P\bigg\{\sup_{\vtheta^c\in\mTheta^c}\big|Q_n^c(\vtheta^c) - \overline{Q}_n^c(\vtheta^c)\big|>\varepsilon\bigg\} &\leq \P\bigg\{\sup_{\vtheta^c\in\mTheta^c}\big|Q_n^c(\vtheta^c) - \overline{Q}_n^c(\vtheta^c)\big|>\varepsilon,\ \big\Vert\widehat{\vtheta}_n^v-\vtheta_0^v\big\Vert\leq \delta\bigg\} \\
		& \hspace{2cm} + \P\Big\{\big\Vert\widehat{\vtheta}_n^v-\vtheta_0^v\big\Vert> \delta\Big\}\\
		&\leq \P\big\{A_{1n} + B_{1n}>\varepsilon\big\}+o(1)\\
		&\leq \P\big\{A_{1n}>\varepsilon/2\big\}+\P\big\{B_{1n}>\varepsilon/2\big\}+o(1)\\
		&\leq o(1)+ \P\big\{C[\delta+\delta^{1/r}]>\varepsilon/2\big\}+o(1)\\
		&=o(1)
	\end{align*}
	if $\delta>0$ is chosen sufficiently small to satisfy $C[\delta+\delta^{1/r}]\leq\varepsilon/2$.
	This establishes the required ULLN $\sup_{\vtheta^c\in\mTheta^c}\big|Q_n^c(\vtheta^c) - \overline{Q}_n^c(\vtheta^c)\big|=o_{\P}(1)$.
	
	Now, we verify that $\vtheta_0^c$ is the identifiably unique minimum of $\overline{Q}_{n}^{c}(\cdot)$ in the sense of \citet[Definition~2.1]{Whi80}.
	To do so, first observe that $\overline{Q}_{n}^{c}(\cdot)$ is continuous on $\mTheta^c$, since for all $\vtheta^c\in\mTheta^c$,
	\begin{align*}
		\lim_{\vtheta\to\vtheta^c}\overline{Q}_n^c(\vtheta) &= \frac{1}{n}\sum_{t=1}^{n}\lim_{\vtheta\to\vtheta^c}\E\Big[S^{\CoVaR}\big((v_t(\vtheta_0^v), c_t(\vtheta))^\prime, (X_t, Y_t)^\prime\big)\Big]\\
		&= \frac{1}{n}\sum_{t=1}^{n}\E\Big[\lim_{\vtheta\to\vtheta^c}S^{\CoVaR}\big((v_t(\vtheta_0^v), c_t(\vtheta))^\prime, (X_t, Y_t)^\prime\big)\Big]\\
		&= \frac{1}{n}\sum_{t=1}^{n}\E\Big[S^{\CoVaR}\big((v_t(\vtheta_0^v), c_t(\vtheta^c))^\prime, (X_t, Y_t)^\prime\big)\Big]\\
		&= \overline{Q}_n^c(\vtheta^c),
	\end{align*}
	where the second step follows from the DCT, and the third step from the continuity of the map $c\mapsto S^{\CoVaR}\big((v,c)^\prime,(x,y)^\prime\big)$ together with the continuity of $c_t(\cdot)$ (by Assumption~\ref{ass:cons} \ref{it:smooth}). 
	Thus, $\vtheta^c\mapsto \overline{Q}_{n}^{c}(\vtheta^c)$ is continuous. 
	Note that we may apply the DCT, because $S^{\CoVaR}\big((v_t(\vtheta_0^v), c_t(\vtheta))^\prime, (X_t, Y_t)^\prime\big)$ is dominated by 
	\[
	\big|S^{\CoVaR}\big((v_t(\vtheta_0^v), c_t(\vtheta))^\prime, (X_t, Y_t)^\prime\big)\big|\leq\big|Y_t - c_t(\vtheta)\big|\leq \big|Y_t\big| + \big|c_t(\vtheta)\big|\leq |Y_t| + C(\mathcal{F}_{t-1}),
	\]
	where the final term is integrable due to Assumption~\ref{ass:cons}~\ref{it:mom bounds cons}.
	
	That $\vtheta_0^c$ minimizes $\overline{Q}_n^{c}(\cdot)$ follows from similar arguments used to prove that $\vtheta_0^v$ minimizes $\overline{Q}_n^{v}(\cdot)$.
	
	It remains to show that $\vtheta_0^c$ is even the identifiably unique minimum of $\overline{Q}_n^{c}(\cdot)$.
	Section~\ref{Proofs of Equations} shows under Assumption~\ref{ass:cons} that for all $\vtheta^c\in\mTheta^c$,
	\begin{multline}\label{eq:un id CoVaR}
		\E\Big[S^{\CoVaR}\big((v_t(\vtheta_0^v), c_t(\vtheta^c))^\prime, (X_t,Y_t)^\prime \big) - S^{\CoVaR}\big((v_t(\vtheta_0^v), c_t(\vtheta_0^c))^\prime, (X_t,Y_t)^\prime \big)\Big]\\
		\geq\frac{1}{4}\tau^2 f_2 p_2 \P\Big\{\big|c_t(\vtheta^c) - c_t(\vtheta_0^c)\big|>\tau\,\Big\vert\, \int_{v_t(\vtheta_0^v)}^{\infty}f_t\big(x,c_t(\vtheta_0^c)\big)\D x>f_2\Big\}
	\end{multline}
	for sufficiently small $\tau>0$, and $p_2$ and $f_2$ from Assumption~\ref{ass:cons}~\ref{it:cond dens}.
	Therefore, for every $\xi>0$ we may pick $\tau>0$ sufficiently small, such that for $\Vert\vtheta^c-\vtheta_0^c\Vert\geq\xi$ it holds that
	\begin{multline*}
		\liminf_{n\to\infty} \Big[\overline{Q}_n^c(\vtheta^c) - \overline{Q}_n^c(\vtheta_0^c)\Big]\\
		\geq\frac{1}{4}\tau^2 f_2p_2 \liminf_{n\to\infty}\frac{1}{n}\sum_{t=1}^{n}\P\Big\{\big|c_t(\vtheta^c) - c_t(\vtheta_0^c)\big|>\tau\,\Big\vert\, \int_{v_t(\vtheta_0^v)}^{\infty}f_t\big(x,c_t(\vtheta_0^c)\big)\D x>f_2\Big\}>0
	\end{multline*}
	by Assumption~\ref{ass:cons}~\ref{it:unique id}~(b).
	The unique identifiability of $\vtheta_0^c$ follows.
	
	Lemma~2.2 of \citet{Whi80} now implies that $\widehat{\vtheta}_n^c\overset{\P}{\longrightarrow}\vtheta_0^c$.
\end{proof}

\subsection{Proofs of Equations~\eqref{eq:un id VaR} and \eqref{eq:un id CoVaR}}\label{Proofs of Equations}

\begin{proof}[{\textbf{Proof of \eqref{eq:un id VaR}:}}]
	For brevity, define
	\begin{align*}
		\Delta S^{\VaR}(\vtheta^v) &= S^{\VaR}\big(v_t(\vtheta^v), X_t\big) - S^{\VaR}\big(v_t(\vtheta_0^v), X_t\big),\\
		\Delta v_t(\vtheta^v) &= v_t(\vtheta^v) - v_t(\vtheta_0^v).
	\end{align*}
	Then,
	\begin{align*}
		\Delta S^{\VaR}(\vtheta^v) &= \big[\1_{\{X_t\leq v_t(\vtheta^v)\}}-\beta\big]\big[v_t(\vtheta^v)-X_t\big] - \big[\1_{\{X_t\leq v_t(\vtheta_0^v)\}}-\beta\big]\big[v_t(\vtheta_0^v)-X_t\big]\\
		&=\begin{cases}
			(1-\beta)\Delta v_t(\vtheta^v),& \text{if}\ X_t\leq v_t(\vtheta^v) \wedge X_t\leq v_t(\vtheta_0^v),\\
			(1-\beta)\Delta v_t(\vtheta^v) - \big[X_t-v_t(\vtheta^v)\big],&\text{if}\ X_t\leq v_t(\vtheta^v)\wedge X_t> v_t(\vtheta_0^v),\\
			\big[X_t-v_t(\vtheta^v)\big] - \beta\Delta v_t(\vtheta^v),&\text{if}\ X_t> v_t(\vtheta^v)\wedge X_t\leq v_t(\vtheta_0^v),\\
			-\beta\Delta v_t(\vtheta^v),& \text{if}\ X_t>v_t(\vtheta^v)\wedge X_t>v_t(\vtheta_0^v).
		\end{cases}
	\end{align*}
	Therefore,
	\begin{align*}
		&\E_{t-1}\big[\Delta S^{\VaR}(\vtheta^v)\big]\\
		&= \E_{t-1}\big[\Delta S^{\VaR}(\vtheta^v)\1_{\{X_t\leq v_t(\vtheta^v),\ X_t\leq v_t(\vtheta_0^v)\}}\big] + \E_{t-1}\big[\Delta S^{\VaR}(\vtheta^v)\1_{\{v_t(\vtheta_0^v)<X_t\leq v_t(\vtheta^v)\}}\big] \\
		&\hspace{0.5cm} + \E_{t-1}\big[\Delta S^{\VaR}(\vtheta^v)\1_{\{v_t(\vtheta^v)<X_t\leq v_t(\vtheta_0^v)\}}\big] + \E_{t-1}\big[\Delta S^{\VaR}(\vtheta^v)\1_{\{X_t> v_t(\vtheta^v),\ X_t> v_t(\vtheta_0^v)\}}\big]\\
		&= (1-\beta)\Delta v_t(\vtheta^v)\P_{t-1}\big\{X_t\leq v_t(\vtheta^v),\ X_t\leq v_t(\vtheta_0^v)\big\}\\
		&\hspace{0.5cm} +(1-\beta)\Delta v_t(\vtheta^v)\P_{t-1}\big\{v_t(\vtheta_0^v)<X_t\leq v_t(\vtheta^v)\big\} - \E_{t-1}\Big[\big\{X_t - v_t(\vtheta_0^v)\big\} \1_{\{v_t(\vtheta_0^v)<X_t\leq v_t(\vtheta^v)\}}\Big]\\
		&\hspace{0.5cm} + \E_{t-1}\Big[\big\{X_t - v_t(\vtheta_0^v)\big\} \1_{\{v_t(\vtheta^v)<X_t\leq v_t(\vtheta_0^v)\}}\Big] - \beta\Delta v_t(\vtheta^v)\P_{t-1}\big\{v_t(\vtheta^v)<X_t\leq v_t(\vtheta_0^v)\big\}\\
		&\hspace{0.5cm} - \beta\Delta v_t(\vtheta^v)\P_{t-1}\big\{X_t>v_t(\vtheta^v),\ X_t>v_t(\vtheta_0^v)\big\}\\
		&= -\beta\Delta v_t(\vtheta^v) + \Delta v_t(\vtheta^v)\Big[\P_{t-1}\big\{X_t\leq v_t(\vtheta^v),\ X_t\leq v_t(\vtheta_0^v)\big\} + \P_{t-1}\big\{v_t(\vtheta_0^v)<X_t\leq v_t(\vtheta^v)\big\}\Big]\\
		&\hspace{0.5cm} - \E_{t-1}\Big[\big\{X_t - v_t(\vtheta_0^v)\big\} \1_{\{v_t(\vtheta_0^v)<X_t\leq v_t(\vtheta^v)\}}\Big] + \E_{t-1}\Big[\big\{X_t - v_t(\vtheta_0^v)\big\} \1_{\{v_t(\vtheta^v)<X_t\leq v_t(\vtheta_0^v)\}}\Big].
	\end{align*}
	The first two terms on the right-hand side simplify to
	\begin{align*}
		-\beta&\Delta v_t(\vtheta^v) + \Delta v_t(\vtheta^v)\Big[\P_{t-1}\big\{X_t\leq v_t(\vtheta^v),\ X_t\leq v_t(\vtheta_0^v)\big\} + \P_{t-1}\big\{v_t(\vtheta_0^v)<X_t\leq v_t(\vtheta^v)\big\}\Big]\\
		&= - \beta\Delta v_t(\vtheta^v) + \Delta v_t(\vtheta^v)\P_{t-1}\big\{X_t\leq v_t(\vtheta^v)\big\}\\
		&= \Delta v_t(\vtheta^v)\Big[\P_{t-1}\big\{X_t\leq v_t(\vtheta^v)\big\} - \P_{t-1}\big\{X_t\leq v_t(\vtheta_0^v)\big\}\Big]\\
		&=\Delta v_t(\vtheta^v) \int_{v_t(\vtheta_0^v)}^{v_t(\vtheta^v)} f_t^X(x)\D x\\
		&=\big[v_t(\vtheta^v) - v_t(\vtheta_0^v)\big] \int_{v_t(\vtheta_0^v)}^{v_t(\vtheta^v)} f_t^X(x)\D x \big[\1_{\{v_t(\vtheta^v)>v_t(\vtheta_0^v)\}} + \1_{\{v_t(\vtheta^v)\leq v_t(\vtheta_0^v)\}}\big]\\
		&= \1_{\{v_t(\vtheta^v)>v_t(\vtheta_0^v)\}}\big[v_t(\vtheta^v) - v_t(\vtheta_0^v)\big] \int_{v_t(\vtheta_0^v)}^{v_t(\vtheta^v)} f_t^X(x)\D x\\
		&\hspace{2cm} - \1_{\{v_t(\vtheta^v)\leq v_t(\vtheta_0^v)\}}\big[v_t(\vtheta^v) - v_t(\vtheta_0^v)\big] \int_{v_t(\vtheta^v)}^{v_t(\vtheta_0^v)} f_t^X(x)\D x.
	\end{align*}
	Hence,
	\begin{align*}
		\E_{t-1}\big[\Delta S^{\VaR}(\vtheta^v)\big] &= \1_{\{v_t(\vtheta^v)>v_t(\vtheta_0^v)\}}\big[v_t(\vtheta^v) - v_t(\vtheta_0^v)\big] \int_{v_t(\vtheta_0^v)}^{v_t(\vtheta^v)} f_t^X(x)\D x\\
		&\hspace{2cm} - \1_{\{v_t(\vtheta^v)\leq v_t(\vtheta_0^v)\}}\big[v_t(\vtheta^v) - v_t(\vtheta_0^v)\big] \int_{v_t(\vtheta^v)}^{v_t(\vtheta_0^v)} f_t^X(x)\D x\\
		&\hspace{0.5cm} - \1_{\{v_t(\vtheta_0^v)<v_t(\vtheta^v)\}}\int_{v_t(\vtheta_0^v)}^{v_t(\vtheta^v)}\big[x-v_t(\vtheta_0^v)\big]f_t^X(x)\D x\\
		&\hspace{0.5cm} + \1_{\{v_t(\vtheta^v)\leq v_t(\vtheta_0^v)\}}\int_{v_t(\vtheta^v)}^{v_t(\vtheta_0^v)}\big[x-v_t(\vtheta_0^v)\big]f_t^X(x)\D x\\
		&= \1_{\{v_t(\vtheta_0^v)< v_t(\vtheta^v)\}} \int_{v_t(\vtheta_0^v)}^{v_t(\vtheta^v)} \big[v_t(\vtheta^v) - x\big]f_t^X(x)\D x\\
		&\hspace{2cm} + \1_{\{v_t(\vtheta^v)\leq v_t(\vtheta_0^v)\}} \int_{v_t(\vtheta^v)}^{v_t(\vtheta_0^v)} \big[x - v_t(\vtheta^v)\big]f_t^X(x)\D x.
	\end{align*}
	In particular, by first decreasing the set in the indicator functions (in the first inequality), then using the conditions in the indicator functions for the integration boundaries (in the second inequality) and for the integrands (in the third inequality),  we have that for any $\tau>0$, 
	\begin{align}
		\E_{t-1}\big[\Delta S^{\VaR}(\vtheta^v)\big] &\geq  
		\1_{\{v_t(\vtheta_0^v) + \tau< v_t(\vtheta^v)\}} \int_{v_t(\vtheta_0^v)}^{v_t(\vtheta^v)} \big[v_t(\vtheta^v) - x\big]f_t^X(x)\D x\notag\\
		&\hspace{2cm} + \1_{\{v_t(\vtheta^v)+\tau\leq v_t(\vtheta_0^v)\}} \int_{v_t(\vtheta^v)}^{v_t(\vtheta_0^v)} \big[x - v_t(\vtheta^v)\big]f_t^X(x)\D x\notag\\
		&\geq \1_{\{v_t(\vtheta_0^v) + \tau < v_t(\vtheta^v)\}} \int_{v_t(\vtheta_0^v)}^{v_t(\vtheta_0^v) + \tau} \big[v_t(\vtheta^v) - x\big]f_t^X(x)\D x\notag\\
		&\hspace{2cm} + \1_{\{v_t(\vtheta^v)+\tau\leq v_t(\vtheta_0^v)\}} \int_{v_t(\vtheta_0^v)-\tau}^{v_t(\vtheta_0^v)} \big[x - v_t(\vtheta^v)\big]f_t^X(x)\D x\notag\\
		&\geq \1_{\{v_t(\vtheta_0^v) + \tau< v_t(\vtheta^v)\}} \int_{v_t(\vtheta_0^v)}^{v_t(\vtheta_0^v) + \tau} \big[\tau + \{v_t(\vtheta_0^v)- x\}\big]f_t^X(x)\D x\notag\\
		&\hspace{2cm} + \1_{\{v_t(\vtheta^v)+\tau\leq v_t(\vtheta_0^v)\}} \int_{v_t(\vtheta_0^v)-\tau}^{v_t(\vtheta_0^v)} \big[\tau - \{v_t(\vtheta_0^v)-x\}\big]f_t^X(x)\D x.\label{eq:plug in}
	\end{align}
	By Assumption~\ref{ass:cons}~\ref{it:Lipschitz cons}--\ref{it:cond dens}, we obtain that
	\begin{align*}
		f_t^{X}\big(v_t(\vtheta_0^v)+\lambda\big) &= f_t^{X}\big(v_t(\vtheta_0^v)\big) + \Big[f_t^{X}\big(v_t(\vtheta_0^v)+\lambda\big) - f_t^{X}\big(v_t(\vtheta_0^v)\big)\Big]\\
		&\geq f_t^{X}\big(v_t(\vtheta_0^v)\big) - \Big|f_t^{X}\big(v_t(\vtheta_0^v)+\lambda\big) - f_t^{X}\big(v_t(\vtheta_0^v)\big)\Big|\\
		&> f_1-K|\lambda|
	\end{align*}
	on the set $\big\{f_t^{X}\big(v_t(\vtheta_0^v)\big)>f_1\big\}$.
	Hence, for some sufficiently small $\overline{\lambda}>0$, it holds on this set that
	\[
	f_t^{X}\big(v_t(\vtheta_0^v)+\lambda\big)\geq \frac{f_1}{2}>0
	\]
	for all $|\lambda|\leq\overline{\lambda}$ and all $t\in\mathbb{N}$.
	Plugging this into \eqref{eq:plug in}, we get for $\tau\in(0,\overline{\lambda})$ that
	\begin{align*}
		\E_{t-1}\big[\Delta S^{\VaR}(\vtheta^v)\big]	& = \E_{t-1}\Big[\Delta S^{\VaR}(\vtheta^v)\big(\1_{\{f_t^{X}(v_t(\vtheta_0^v))>f_1\}} + \1_{\{f_t^{X}(v_t(\vtheta_0^v))\leq f_1\}}\big)\Big]\\
		&\geq \E_{t-1}\big[\Delta S^{\VaR}(\vtheta^v)\big]\1_{\{f_t^{X}(v_t(\vtheta_0^v))>f_1\}}\\	
		&\geq \Big(\1_{\{v_t(\vtheta_0^v) + \tau< v_t(\vtheta^v)\}} \int_{v_t(\vtheta_0^v)}^{v_t(\vtheta_0^v) + \tau} \big[\tau + \{v_t(\vtheta_0^v)- x\}\big]\frac{f_1}{2}\D x\notag\\
		&\hspace{1cm} + \1_{\{v_t(\vtheta^v)+\tau\leq v_t(\vtheta_0^v)\}} \int_{v_t(\vtheta_0^v)-\tau}^{v_t(\vtheta_0^v)} \big[\tau - \{v_t(\vtheta_0^v)-x\}\big]\frac{f_1}{2}\D x\Big)\1_{\{f_t^{X}(v_t(\vtheta_0^v))>f_1\}}\\
		&=\Big(\1_{\{v_t(\vtheta_0^v) + \tau< v_t(\vtheta^v)\}} \frac{1}{2}\tau^2\frac{f_1}{2} + \1_{\{v_t(\vtheta^v)+\tau\leq v_t(\vtheta_0^v)\}} \frac{1}{2}\tau^2\frac{f_1}{2}\Big)\1_{\{f_t^{X}(v_t(\vtheta_0^v))>f_1\}}\\
		&\geq \1_{\{|v_t(\vtheta^v) - v_t(\vtheta_0^v)|>\tau,\ f_t^{X}(v_t(\vtheta_0^v))>f_1\}}\frac{1}{4}\tau^2f_1.
	\end{align*}
	Taking the unconditional expectation gives
	\begin{align*}
		\E\big[\Delta S^{\VaR}(\vtheta^v)\big] &\geq \frac{1}{4}\tau^2f_1\P\big\{|v_t(\vtheta^v) - v_t(\vtheta_0^v)|>\tau,\ f_t^{X}(v_t(\vtheta_0^v))>f_1\big\}\\
		&=\frac{1}{4}\tau^2f_1 \P\Big\{f_t^X\big(v_t(\vtheta_0^v)\big)>f_1\Big\}\P\Big\{\big|v_t(\vtheta^v) - v_t(\vtheta_0^v)\big|>\tau \,\Big\vert\, f_t^X\big(v_t(\vtheta_0^v)\big)>f_1\Big\}\\
		&> \frac{1}{4}\tau^2f_1 p_1\P\Big\{\big|v_t(\vtheta^v) - v_t(\vtheta_0^v)\big|>\tau \,\Big\vert\, f_t^X\big(v_t(\vtheta_0^v)\big)>f_1\Big\},
	\end{align*}
	where the final step follows from Assumption~\ref{ass:cons}~\ref{it:cond dens}.
\end{proof}

\begin{proof}[{\textbf{Proof of \eqref{eq:un id CoVaR}:}}]
	The proof is similar to that of \eqref{eq:un id VaR}.
	For brevity, define
	\begin{align*}
		\Delta S^{\CoVaR}(\vtheta^c) &= S^{\CoVaR}\big((v_t(\vtheta_0^v), c_t(\vtheta^c))^\prime, (X_t,Y_t)^\prime\big) - S^{\CoVaR}\big((v_t(\vtheta_0^v), c_t(\vtheta_0^c))^\prime, (X_t,Y_t)^\prime\big),\\
		\Delta c_t(\vtheta^c) &= c_t(\vtheta^c) - c_t(\vtheta_0^c).
	\end{align*}
	Then, similarly as above,
	\begin{align*}
		\Delta S^{\CoVaR}(\vtheta^c) &= \1_{\{X_t>v_t(\vtheta_0^v)\}}\Big\{\big[\1_{\{Y_t\leq c_t(\vtheta^c)\}}-\alpha\big]\big[c_t(\vtheta^c)-Y_t\big] - \big[\1_{\{Y_t\leq c_t(\vtheta_0^c)\}}-\alpha\big]\big[c_t(\vtheta_0^c)-Y_t\big]\Big\}\\
		&=\begin{cases}
			\1_{\{X_t>v_t(\vtheta_0^v)\}}(1-\alpha)\Delta c_t(\vtheta^c),& \text{if}\ Y_t\leq c_t(\vtheta^c) \wedge Y_t\leq c_t(\vtheta_0^c),\\
			\1_{\{X_t>v_t(\vtheta_0^v)\}}\Big\{(1-\alpha)\Delta c_t(\vtheta^c) - \big[Y_t-c_t(\vtheta^c)\big]\Big\},&\text{if}\ Y_t\leq c_t(\vtheta^c)\wedge Y_t> c_t(\vtheta_0^c),\\
			\1_{\{X_t>v_t(\vtheta_0^v)\}}\Big\{\big[Y_t-c_t(\vtheta^c)\big] - \alpha\Delta c_t(\vtheta^c)\Big\},&\text{if}\ Y_t> c_t(\vtheta^c)\wedge Y_t\leq c_t(\vtheta_0^c),\\
			-\1_{\{X_t>v_t(\vtheta_0^v)\}}\alpha\Delta c_t(\vtheta^c),& \text{if}\ Y_t>c_t(\vtheta^c)\wedge Y_t>c_t(\vtheta_0^c).
		\end{cases}
	\end{align*}
	Therefore,
	\begin{align*}
		&\E_{t-1}\big[\Delta S^{\CoVaR}(\vtheta^c)\big]\\
		&= \E_{t-1}\big[\Delta S^{\CoVaR}(\vtheta^c)\1_{\{Y_t\leq c_t(\vtheta^c),\ Y_t\leq c_t(\vtheta_0^c)\}}\big] + \E_{t-1}\big[\Delta S^{\CoVaR}(\vtheta^c)\1_{\{c_t(\vtheta_0^c)<Y_t\leq c_t(\vtheta^c)\}}\big] \\
		&\hspace{0.5cm} + \E_{t-1}\big[\Delta S^{\CoVaR}(\vtheta^c)\1_{\{c_t(\vtheta^c)<Y_t\leq c_t(\vtheta_0^c)\}}\big] + \E_{t-1}\big[\Delta S^{\CoVaR}(\vtheta^c)\1_{\{Y_t> c_t(\vtheta^c),\ Y_t> c_t(\vtheta_0^c)\}}\big]\\
		&= (1-\alpha)\Delta c_t(\vtheta^c)\P_{t-1}\big\{X_t>v_t(\vtheta_0^v),\ Y_t\leq c_t(\vtheta^c),\ Y_t\leq c_t(\vtheta_0^c)\big\}\\
		&\hspace{0.5cm} +(1-\alpha)\Delta c_t(\vtheta^c)\P_{t-1}\big\{X_t>v_t(\vtheta_0^v),\ c_t(\vtheta_0^c)<Y_t\leq c_t(\vtheta^c)\big\}\\
		&\hspace{0.5cm} - \E_{t-1}\Big[\big\{Y_t - c_t(\vtheta_0^c)\big\} \1_{\{X_t>v_t(\vtheta_0^v),\ c_t(\vtheta_0^c)<Y_t\leq c_t(\vtheta^c)\}}\Big]\\
		&\hspace{0.5cm} + \E_{t-1}\Big[\big\{Y_t - c_t(\vtheta_0^c)\big\} \1_{\{X_t>v_t(\vtheta_0^v),\ c_t(\vtheta^c)<Y_t\leq c_t(\vtheta_0^c)\}}\Big] \\
		&\hspace{0.5cm}- \alpha\Delta c_t(\vtheta^c)\P_{t-1}\big\{X_t>v_t(\vtheta_0^v),\ c_t(\vtheta^c)<Y_t\leq c_t(\vtheta_0^c)\big\}\\
		&\hspace{0.5cm} - \alpha\Delta c_t(\vtheta^c)\P_{t-1}\big\{X_t>v_t(\vtheta_0^v),\ Y_t>c_t(\vtheta^c),\ Y_t>c_t(\vtheta_0^c)\big\}\\
		&= -\alpha\Delta c_t(\vtheta^c)\P_{t-1}\big\{X_t>v_t(\vtheta_0^v)\big\} + \Delta c_t(\vtheta^c)\Big[\P_{t-1}\big\{X_t>v_t(\vtheta_0^v),\ Y_t\leq c_t(\vtheta^c),\ Y_t\leq c_t(\vtheta_0^c)\big\}\\
		&\hspace{0.5cm}+ \P_{t-1}\big\{X_t>v_t(\vtheta_0^v),\ c_t(\vtheta_0^c)<Y_t\leq c_t(\vtheta^c)\big\}\Big]\\
		&\hspace{0.5cm} - \E_{t-1}\Big[\big\{Y_t - c_t(\vtheta_0^c)\big\} \1_{\{X_t>v_t(\vtheta_0^v),\ c_t(\vtheta_0^c)<Y_t\leq c_t(\vtheta^c)\}}\Big]\\
		&\hspace{0.5cm}+ \E_{t-1}\Big[\big\{Y_t - c_t(\vtheta_0^c)\big\} \1_{\{X_t>v_t(\vtheta_0^v),\ c_t(\vtheta^c)<Y_t\leq c_t(\vtheta_0^c)\}}\Big].
	\end{align*}
	The first two terms on the right-hand side simplify to
	\begin{align*}
		-\alpha&\Delta c_t(\vtheta^c)\P_{t-1}\big\{X_t>v_t(\vtheta_0^v)\big\} +  \Delta c_t(\vtheta^c)\Big[\P_{t-1}\big\{X_t>v_t(\vtheta_0^v),\ Y_t\leq c_t(\vtheta^c),\ Y_t\leq c_t(\vtheta_0^c)\big\}\\
		&\hspace{2cm} + \P_{t-1}\big\{X_t>v_t(\vtheta_0^v),\ c_t(\vtheta_0^c)<Y_t\leq c_t(\vtheta^c)\big\}\Big]\\
		&=- \Delta c_t(\vtheta^c) \P_{t-1}\big\{Y_t\leq c_t(\vtheta_0^c)\mid X_t>v_t(\vtheta_0^v)\big\}\P_{t-1}\big\{X_t>v_t(\vtheta_0^v)\big\} \\
		&\hspace{1cm}+ \Delta c_t(\vtheta^c)\P_{t-1}\big\{X_t>v_t(\vtheta_0^v),\ Y_t\leq c_t(\vtheta^c)\big\}\\
		&=\Delta c_t(\vtheta^c)\Big[\P_{t-1}\big\{X_t>v_t(\vtheta_0^v),\ Y_t\leq c_t(\vtheta^c)\big\} - \P_{t-1}\big\{X_t>v_t(\vtheta_0^v),\ Y_t\leq c_t(\vtheta_0^c)\big\}\Big]\\
		&=\Delta c_t(\vtheta^c)\int_{v_t(\vtheta_0^v)}^{\infty}\int_{c_t(\vtheta_0^c)}^{c_t(\vtheta^c)}f_t(x,y)\D y \D x\\
		&= \Delta c_t(\vtheta^c)\int_{v_t(\vtheta_0^v)}^{\infty}\int_{c_t(\vtheta_0^c)}^{c_t(\vtheta^c)}f_t(x,y)\D y \D x\big[\1_{\{c_t(\vtheta^c)>c_t(\vtheta_0^c)\}} + \1_{\{c_t(\vtheta^c)\leq c_t(\vtheta_0^c)\}}\big]\\
		&= \1_{\{c_t(\vtheta^c)>c_t(\vtheta_0^c)\}} \big[c_t(\vtheta^c)-c_t(\vtheta_0^c)\big]\int_{v_t(\vtheta_0^v)}^{\infty}\int_{c_t(\vtheta_0^c)}^{c_t(\vtheta^c)}f_t(x,y)\D y \D x\\
		&\hspace{1cm}- \1_{\{c_t(\vtheta^c)\leq c_t(\vtheta_0^c)\}} \big[c_t(\vtheta^c)-c_t(\vtheta_0^c)\big]\int_{v_t(\vtheta_0^v)}^{\infty}\int_{c_t(\vtheta^c)}^{c_t(\vtheta_0^c)}f_t(x,y)\D y \D x.
	\end{align*}
	Hence, 
	\begin{align*}
		\E_{t-1}\big[\Delta S^{\CoVaR}(\vtheta^c)\big] &= \1_{\{c_t(\vtheta^c)>c_t(\vtheta_0^c)\}} \big[c_t(\vtheta^c)-c_t(\vtheta_0^c)\big]\int_{v_t(\vtheta_0^v)}^{\infty}\int_{c_t(\vtheta_0^c)}^{c_t(\vtheta^c)}f_t(x,y)\D y \D x\\
		&\hspace{1cm}- \1_{\{c_t(\vtheta^c)\leq c_t(\vtheta_0^c)\}} \big[c_t(\vtheta^c)-c_t(\vtheta_0^c)\big]\int_{v_t(\vtheta_0^v)}^{\infty}\int_{c_t(\vtheta^c)}^{c_t(\vtheta_0^c)}f_t(x,y)\D y \D x\\
		&\hspace{1cm} - \E_{t-1}\Big[\big\{Y_t - c_t(\vtheta_0^c)\big\} \1_{\{X_t>v_t(\vtheta_0^v),\ c_t(\vtheta_0^c)<Y_t\leq c_t(\vtheta^c)\}}\Big]\\
		&\hspace{1cm}+ \E_{t-1}\Big[\big\{Y_t - c_t(\vtheta_0^c)\big\} \1_{\{X_t>v_t(\vtheta_0^v),\ c_t(\vtheta^c)<Y_t\leq c_t(\vtheta_0^c)\}}\Big]\\
		&= \1_{\{c_t(\vtheta^c)>c_t(\vtheta_0^c)\}} \big[c_t(\vtheta^c)-c_t(\vtheta_0^c)\big]\int_{v_t(\vtheta_0^v)}^{\infty}\int_{c_t(\vtheta_0^c)}^{c_t(\vtheta^c)}f_t(x,y)\D y \D x\\
		&\hspace{1cm}- \1_{\{c_t(\vtheta^c)\leq c_t(\vtheta_0^c)\}} \big[c_t(\vtheta^c)-c_t(\vtheta_0^c)\big]\int_{v_t(\vtheta_0^v)}^{\infty}\int_{c_t(\vtheta^c)}^{c_t(\vtheta_0^c)}f_t(x,y)\D y \D x\\
		&\hspace{1cm} - \1_{\{c_t(\vtheta_0^c) < c_t(\vtheta^c)\}}\int_{v_t(\vtheta_0^v)}^{\infty}\int_{c_t(\vtheta_0^c)}^{c_t(\vtheta^c)}\big[y-c_t(\vtheta_0^c)\big]f_t(x,y)\D y \D x\\
		&\hspace{1cm} + \1_{\{c_t(\vtheta_0^c) > c_t(\vtheta^c)\}}\int_{v_t(\vtheta_0^v)}^{\infty}\int_{c_t(\vtheta^c)}^{c_t(\vtheta_0^c)}\big[y-c_t(\vtheta_0^c)\big]f_t(x,y)\D y \D x\\
		&= \1_{\{c_t(\vtheta_0^c) < c_t(\vtheta^c)\}}\int_{v_t(\vtheta_0^v)}^{\infty}\int_{c_t(\vtheta_0^c)}^{c_t(\vtheta^c)}\big[c_t(\vtheta^c)-y\big]f_t(x,y)\D y \D x\\
		&\hspace{1cm} + \1_{\{c_t(\vtheta_0^c) > c_t(\vtheta^c)\}}\int_{v_t(\vtheta_0^v)}^{\infty}\int_{c_t(\vtheta^c)}^{c_t(\vtheta_0^c)}\big[y-c_t(\vtheta^c)\big]f_t(x,y)\D y \D x.
	\end{align*}
	By similar steps as in the derivation leading to \eqref{eq:plug in}, we have that for any $\tau>0$,
	\begin{align}
		\E_{t-1}\big[\Delta S^{\CoVaR}(\vtheta^c)\big] &\geq \1_{\{c_t(\vtheta_0^c)+\tau < c_t(\vtheta^c)\}}\int_{v_t(\vtheta_0^v)}^{\infty}\int_{c_t(\vtheta_0^c)}^{c_t(\vtheta_0^c)+\tau}\big[\tau + \{c_t(\vtheta_0^c)-y\}\big]f_t(x,y)\D y \D x\notag\\
		&\hspace{1cm} + \1_{\{c_t(\vtheta_0^c) > c_t(\vtheta^c)+\tau\}}\int_{v_t(\vtheta_0^v)}^{\infty}\int_{c_t(\vtheta_0^c)-\tau}^{c_t(\vtheta_0^c)}\big[\tau - \{c_t(\vtheta_0^c)-y\}\big]f_t(x,y)\D y \D x\notag\\
		&=\1_{\{c_t(\vtheta_0^c)+\tau < c_t(\vtheta^c)\}}\int_{c_t(\vtheta_0^c)}^{c_t(\vtheta_0^c)+\tau}\big[\tau + \{c_t(\vtheta_0^c)-y\}\big]\bigg(\int_{v_t(\vtheta_0^v)}^{\infty} f_t(x,y)\D x\bigg)\D y \notag\\
		&\hspace{1cm} + \1_{\{c_t(\vtheta_0^c) > c_t(\vtheta^c)+\tau\}}\int_{c_t(\vtheta_0^c)-\tau}^{c_t(\vtheta_0^c)}\big[\tau - \{c_t(\vtheta_0^c)-y\}\big]\bigg(\int_{v_t(\vtheta_0^v)}^{\infty}f_t(x,y)\D x \bigg)\D y.\label{eq:(p.2.2)}
	\end{align}
	Then, for any $\lambda\in\mathbb{R}$,
	\begin{align*}
		\int_{v_t(\vtheta_0^v)}^{\infty}&f_t\big(x,c_t(\vtheta_0^c)+\lambda\big)\D x \\
		&= \int_{v_t(\vtheta_0^v)}^{\infty}f_t\big(x,c_t(\vtheta_0^c)\big)\D x´+ \bigg[\int_{v_t(\vtheta_0^v)}^{\infty}f_t\big(x,c_t(\vtheta_0^c)+\lambda\big)\D x - \int_{v_t(\vtheta_0^v)}^{\infty}f_t\big(x,c_t(\vtheta_0^c)\big)\D x\bigg] \\
		&\geq \int_{v_t(\vtheta_0^v)}^{\infty}f_t\big(x,c_t(\vtheta_0^c)\big)\D x-\bigg|\int_{v_t(\vtheta_0^v)}^{\infty}f_t\big(x,c_t(\vtheta_0^c)+\lambda\big)\D x - \int_{v_t(\vtheta_0^v)}^{\infty}f_t\big(x,c_t(\vtheta_0^c)\big)\D x\bigg|.
	\end{align*}
	Now,
	\begin{align*}
		\bigg|\int_{v_t(\vtheta_0^v)}^{\infty}&f_t\big(x,c_t(\vtheta_0^c)+\lambda\big)\D x - \int_{v_t(\vtheta_0^v)}^{\infty}f_t\big(x,c_t(\vtheta_0^c)\big)\D x\bigg|\\
		&=\bigg|\int_{-\infty}^{\infty}f_t\big(x,c_t(\vtheta_0^c)+\lambda\big)\D x - \int_{-\infty}^{v_t(\vtheta_0^v)}f_t\big(x,c_t(\vtheta_0^c)+\lambda\big)\D x\\
		&\hspace{2cm}- \int_{-\infty}^{\infty}f_t\big(x,c_t(\vtheta_0^c)\big)\D x + \int_{-\infty}^{v_t(\vtheta_0^v)}f_t\big(x,c_t(\vtheta_0^c)\big)\D x\bigg|\\
		&\leq \bigg|\int_{-\infty}^{\infty}f_t\big(x,c_t(\vtheta_0^c)+\lambda\big)\D x - \int_{-\infty}^{\infty}f_t\big(x,c_t(\vtheta_0^c)\big)\D x\bigg|\\
		&\hspace{2cm} + \bigg|\int_{-\infty}^{v_t(\vtheta_0^v)}f_t\big(x,c_t(\vtheta_0^c)+\lambda\big)\D x - \int_{-\infty}^{v_t(\vtheta_0^v)}f_t\big(x,c_t(\vtheta_0^c)\big)\D x\bigg|\\
		&= \Big|f_t^{Y}\big(c_t(\vtheta_0^c)+\lambda\big) - f_t^{Y}\big(c_t(\vtheta_0^c)\big)\Big|\\
		&\hspace{2cm} + \Big|\partial_2 F_t\big(v_t(\vtheta_0^v), c_t(\vtheta_0^c)+\lambda\big) - \partial_2 F_t\big(v_t(\vtheta_0^v), c_t(\vtheta_0^c)\big)\Big|\\
		&\leq K|\lambda| + K|\lambda|=2K|\lambda|,
	\end{align*}
	where the last inequality follows from Assumption~\ref{ass:cons}~\ref{it:Lipschitz cons}.
	Using this and Assumption~\ref{ass:cons}~\ref{it:cond dens}, it holds on the set $\big\{\int_{v_t(\vtheta_0^v)}^{\infty}f_t\big(x,c_t(\vtheta_0^c)\big)\D x>f_2\big\}$ that there exists some sufficiently small $\overline{\lambda}>0$, such that
	\[
	\int_{v_t(\vtheta_0^v)}^{\infty}f_t\big(x,c_t(\vtheta_0^c)+\lambda\big)\D x\geq \frac{f_2}{2}>0
	\]
	for all $|\lambda|\leq\overline{\lambda}$ and all $t\in\mathbb{N}$.
	Plugging this into \eqref{eq:(p.2.2)}, we get for $\tau\in(0,\overline{\lambda})$ that
	\begin{align*}
		\E_{t-1}\big[\Delta S^{\CoVaR}(\vtheta^c)\big] & =\E_{t-1}\Big[\Delta S^{\CoVaR}(\vtheta^c)\big(\1_{\{\int_{v_t(\vtheta_0^v)}^{\infty}f_t(x,c_t(\vtheta_0^c))\D x>f_2\}} + \1_{\{\int_{v_t(\vtheta_0^v)}^{\infty}f_t(x,c_t(\vtheta_0^c))\D x\leq f_2\}}\big)\Big]\\	
		&\geq \E_{t-1}\big[\Delta S^{\CoVaR}(\vtheta^c)\big]\1_{\{\int_{v_t(\vtheta_0^v)}^{\infty}f_t(x,c_t(\vtheta_0^c))\D x>f_2\}} \\
		&\geq \bigg(\1_{\{c_t(\vtheta_0^c)+\tau < c_t(\vtheta^c)\}}\int_{c_t(\vtheta_0^c)}^{c_t(\vtheta_0^c)+\tau}\big[\tau + \{c_t(\vtheta_0^c)-y\}\big]\frac{f_2}{2}\D y \\
		&\hspace{1cm} + \1_{\{c_t(\vtheta_0^c) > c_t(\vtheta^c)+\tau\}}\int_{c_t(\vtheta_0^c)-\tau}^{c_t(\vtheta_0^c)}\big[\tau - \{c_t(\vtheta_0^c)-y\}\big]\frac{f_2}{2}\D y\bigg)\times\\
		&\hspace{8cm} \times\1_{\{\int_{v_t(\vtheta_0^v)}^{\infty}f_t(x,c_t(\vtheta_0^c))\D x>f_2\}} \\
		&= \bigg(\1_{\{c_t(\vtheta_0^c)+\tau < c_t(\vtheta^c)\}}\frac{1}{4}\tau^2 f_2 + \1_{\{c_t(\vtheta_0^c)>c_t(\vtheta^c)+\tau\}}\frac{1}{4}\tau^2 f_2\bigg) \1_{\{\int_{v_t(\vtheta_0^v)}^{\infty}f_t(x,c_t(\vtheta_0^c))\D x>f_2\}}\\
		&= \1_{\{|c_t(\vtheta^c)-c_t(\vtheta_0^c)|>\tau,\ \int_{v_t(\vtheta_0^v)}^{\infty}f_t(x,c_t(\vtheta_0^c))\D x>f_2\}}\frac{1}{4}\tau^2 f_2.
	\end{align*}
	Taking unconditional expectations gives 
	\begin{align*}
		&\E\big[\Delta S^{\CoVaR}(\vtheta^c)\big] \\
		&\geq \frac{1}{4}\tau^2 f_2\P\big\{|c_t(\vtheta^c) - c_t(\vtheta_0^c)|>\tau,\ \int_{v_t(\vtheta_0^v)}^{\infty}f_t(x,c_t(\vtheta_0^c))\D x>f_2\big\}\\
		&=\frac{1}{4}\tau^2 f_2 \P\Big\{\int_{v_t(\vtheta_0^v)}^{\infty}f_t\big(x,c_t(\vtheta_0^c)\big)\D x>f_2\Big\} \P\Big\{\big|c_t(\vtheta^c) - c_t(\vtheta_0^c)\big|>\tau\,\Big\vert\, \int_{v_t(\vtheta_0^v)}^{\infty}f_t\big(x,c_t(\vtheta_0^c)\big)\D x>f_2\Big\}\\
		&> \frac{1}{4}\tau^2 f_2 p_2 \P\Big\{\big|c_t(\vtheta^c) - c_t(\vtheta_0^c)\big|>\tau\,\Big\vert\, \int_{v_t(\vtheta_0^v)}^{\infty}f_t\big(x,c_t(\vtheta_0^c)\big)\D x>f_2\Big\},
	\end{align*}
	where the final step follows from Assumption~\ref{ass:cons}~\ref{it:cond dens}.
\end{proof}

\section{Misspecified Model Initialization}
\label{sec:StartingValues}

Theorem~\ref{thm:cons} in the main paper shows consistency of the two-step M-estimator under Assumption~\ref{ass:cons}.
Consistent with the related literature proposing estimators based on non-smooth loss functions for dynamic models \citep{EM04,WhiteKimManganelli2015,PZC19,Catania2022}, this consistency result is achieved under correct model specification in \eqref{eqn:TrueModelParameters}.
This also implies correct specification at time $t=1$, i.e., correct use of the initialization information $\init$.
However, Theorem~\ref{thm:cons} does not explicitly treat the case of misspecified initial values.
Such a case may arise naturally (but not exclusively) when the initialization information $\init$ includes the infinite past, because then correctly specified starting values are in general impossible to compute in finite time.
Hence, one may have to work with incorrect starting values.

To make this more concrete, suppose that the information set $\mathcal{F}_{t-1}$ (relevant to the forecasting problem at hand) includes the initialization information $\init =(X_0,Y_0,\mZ_0,X_{-1},Y_{-1},\mZ_{-1},\ldots)^\prime$ in \eqref{eqn:GeneralModel}.
Then, interest centers on VaR and CoVaR conditioned on the infinite past with $\mathcal{F}_{t-1}=\mathcal{F}_{t-1}^{\full} = \sigma\big((X_{t-1},Y_{t-1})^\prime,\mZ_{t-1}, (X_{t-2},Y_{t-2})^\prime,\mZ_{t-2}, \ldots \big)$, and the relevant quantities are $\VaR_{t,\b}=\VaR_{\b}(F_{X_t\mid\mathcal{F}_{t-1}^{\full}})$ and $\CoVaR_{t,\a\mid\b}=\CoVaR_{\a\mid\b}(F_{X_t,Y_t\mid\mathcal{F}_{t-1}^{\full}})$.
%
%
Of course, with the above infinite sequence of initialization information $\init$, it becomes infeasible to practically compute the model values $\big((v_1(\vtheta^v), c_1(\vtheta^c)\big)^\prime$,$\ldots,$ $\big(v_n(\vtheta^v), c_n(\vtheta^c)\big)^\prime$ required for the M-estimators in \eqref{eqn:MestVaR} and \eqref{eqn:MestCoVaR} when only truncated observations, e.g., starting at time $t=1$, are available.

However, we show in Theorem \ref{prop:StartingValues} below that for models satisfying a \emph{contraction condition}, consistent parameter estimation still remains feasible with arbitrary (and, hence, possibly misspecified) initial values $\tilde v_1(\vtheta^v)$ and  $\tilde c_1(\vtheta^c)$ for time $t=1$ that may or may not depend on $\vtheta^v$ and $\vtheta^c$.
Hence, estimation must be carried out based on the truncated information set $\widetilde{\mathcal{F}}_{t-1}=\sigma\big((X_{t-1},Y_{t-1})^\prime, \mZ_{t-1}, \ldots,(X_{1},Y_{1})^\prime,  \mZ_{1}, \tilde v_1(\vtheta^v), \tilde c_1(\vtheta^c) \big)$.
Corresponding to the generally incorrect initial values, we denote by $\tilde v_t(\vtheta^v)$ and $\tilde c_t(\vtheta^c)$  for $t \ge 2$ models that are initialized with $\tilde v_{1}(\vtheta^v)$ and $\tilde c_{1}(\vtheta^c)$.
The respective M-estimators are denoted by $\widetilde{\vtheta}_{n}^{v}$ and $\widetilde{\vtheta}_{n}^{c}$, which are as in \eqref{eqn:MestVaR} and \eqref{eqn:MestCoVaR}, but are based on the model values $\tilde v_{t}(\vtheta^v)$ and $\tilde c_{t}(\vtheta^c)$ and, thus, on the initializations $\tilde v_{1}(\vtheta^v)$ and $\tilde c_{1}(\vtheta^c)$.


\setcounter{thm}{1}

\begin{thm}
	\label{prop:StartingValues}
	Let Assumption~\ref{ass:cons} hold and suppose that $\E\big[ \sup_{\vtheta^v \in \mTheta^v} | v_1(\vtheta^v)- \tilde v_1(\vtheta^v)| \big] < \infty$ and $\E\big[\sup_{\vtheta^c \in \mTheta^c}|  c_{1}(\vtheta^c) - \tilde c_{1}(\vtheta^c) |\big] < \infty$.
	Moreover, assume that the \emph{contraction condition} holds, i.e., for some $\rho < 1$, we have 
	$|v_t(\vtheta^v)- \tilde v_{t}(\vtheta^v)| < \rho |v_{t-1}(\vtheta^v)- \tilde v_{t-1}(\vtheta^v)|$ and $|c_t(\vtheta^c)- \tilde c_{t}(\vtheta^c)| < \rho |c_{t-1}(\vtheta^c)- \tilde c_{t-1}(\vtheta^c)|$ for all $(\vtheta^{v\prime},\vtheta^{c\prime})^\prime\in\mTheta$, and all $t \ge 2$.
	Then, as $n\to\infty$, $\widetilde{\vtheta}_{n}^{v}\overset{\P}{\longrightarrow}\vtheta_0^{v}$ and $\widetilde{\vtheta}_{n}^{c}\overset{\P}{\longrightarrow}\vtheta_0^{c}$.
\end{thm}

\begin{proof}
	See Appendix~\ref{sec:prop1}.
\end{proof}

The proof of Theorem~\ref{prop:StartingValues} draws on the basic consistency proof of Theorem~\ref{thm:cons} and additionally bounds the difference between the objective functions evaluated at the estimators $\widetilde{\vtheta}_{n}^{v}$ and  $\widetilde{\vtheta}_{n}^{c}$ based on misspecified starting information and the practically infeasible estimators $\widehat{\vtheta}_{n}^{v}$ and  $\widehat{\vtheta}_{n}^{c}$.
The proof is complicated by the \emph{discontinuity} of the CoVaR scoring function in \eqref{eqn:MestVaR}, which contrasts with the GARCH literature in \citet{Straumann2006} and \citet{Blasques2018}.

\setcounter{rem}{2}
\begin{rem}
	In the GARCH literature, a model initialization in the infinite past is often imposed, such that the dynamic iterations as in \eqref{eqn:SAVCoCAViaRModelClass} hold for all $t \in \mathbb{Z}$ \citep{Francq2004,FZ10,Blasques2018,Hog25}.
	On a theoretical level, this often guarantees stationarity of the model and sidesteps the explicit choice of model initialization.
	This practice---while perhaps useful as an approximation---is empirically questionable since all economic and financial time series evidently started in the finite past.
\end{rem}

As Proposition~\ref{prop:ARMAclass} below shows, the additional \emph{contraction condition} in Theorem~\ref{prop:StartingValues} is for example satisfied by the following ARMA-type model class for the pair (VaR, CoVaR):
\begin{align}
	\label{eqn:LinearModels}
	\begin{pmatrix} v_t(\vtheta^v) \\ c_t(\vtheta^c) \end{pmatrix}
	= \vomega + 
	{\boldsymbol{A}} \mathbf{W}_{t-1} +
	\mB \begin{pmatrix} v_{t-1}(\vtheta^v) \\ c_{t-1}(\vtheta^c) \end{pmatrix}, \qquad t \in \mathbb{Z},
\end{align}
for some $2 \times k$ parameter matrix $\boldsymbol{A}$, and a $k \times 2$ matrix $\mathbf{W}_{t-1}$ of $\mathcal{F}_{t-1}^{\full}$-measurable observables. 
The parameters $\vomega \in \mathbb{R}^2$, $\mB \in  \mathbb{R}^{2 \times 2}$ and ${\boldsymbol{A}}$ are collected in $\vtheta^{v}$ and $\vtheta^{c}$.
As under our modeling framework the VaR (CoVaR) equation does not depend on $\vtheta^c$ ($\vtheta^v$), the matrix $\mB$ must be diagonal.
These models nest the ones in \eqref{eqn:SAVCoCAViaRModelClass} and \eqref{eqn:ASCoCAViaRModelClass} below, such that they encompass all models used in our simulations and empirical application.

\begin{prop}
	\label{prop:ARMAclass}
	Consider a model of the form \eqref{eqn:LinearModels} with spectral radius of the diagonal matrix $\mB$ strictly smaller than one. 
	Then, the contraction condition of Theorem~\ref{prop:StartingValues} holds.
\end{prop}

\begin{proof}
	See Appendix~\ref{sec:prop1}.
\end{proof}

Together with Theorem~\ref{prop:StartingValues}, this proposition shows that the standard assumption of a spectral radius smaller one of the coefficient matrix for the lagged model values---well known, e.g., from vector autoregressive conditional mean models---is sufficient for consistency of the arbitrarily initialized M-estimators for the ARMA-type model class given in  \eqref{eqn:LinearModels}.
Models that are non-linear in their lagged model values often require a more detailed treatment as in, e.g., \citet{Straumann2006} and \citet{Blasques2018}. In our case, such an analysis would be further complicated by the discontinuity of the loss function.
Therefore, the derivation of asymptotic results for $\widetilde{\vtheta}_{n}^{v}$ and $\widetilde{\vtheta}_{n}^{c}$ in non-linear models is beyond the scope of this paper.

We emphasize that the result of Theorem \ref{prop:StartingValues} is mainly of interest from a theoretical perspective. 
While it is interesting that the true parameters of model \eqref{eqn:LinearModels} can be recovered from a finite stretch of data, for practical forecasting purposes we still have to plug in the estimates $\widetilde{\vtheta}_n^v$ and $\widetilde{\vtheta}_n^c$ in the ``false'' truncated recursions for computing the practically relevant out-of-sample forecasts $\tilde{v}_{n+1}(\cdot)$ and $\tilde{c}_{n+1}(\cdot)$.
Also, models initialized in the infinite past may not be seen as realistic, since the economic and financial time series we have in mind for applications all originated in the finite past.

From our experience gained in the simulations, the empirical application and from results comparing two different initialization methods in Section~\ref{sec:Initializations}, we find that the assumed initialization only has a negligible effect on the model parameter estimates and the CoVaR forecasts of interest in sufficiently large samples.
So for practical purposes, one may assume the simple and practically convenient initialization $\big(v_1(\vtheta^v), c_1(\vtheta^c)\big)^\prime=\vomega$ without any loss in forecast accuracy.

\section{Asymptotic Normality}
\label{Asymptotic Normality}

To show asymptotic normality of our estimators $\widehat{\vtheta}_n^v$ and $\widehat{\vtheta}_n^c$  in \eqref{eqn:MestVaR} and \eqref{eqn:MestCoVaR}, we have to impose additional regularity conditions. 
We denote the partial derivative of $F_t(\cdot,\cdot)$ with respect to the $i$-th argument by $\partial_i F_t(\cdot,\cdot)$ for $i=1,2$. 
Furthermore, define the derivative of the VaR score with respect to $\vtheta^v$ as 
\[
\Vv_t(\vtheta^v) = \nabla v_t(\vtheta^v)\big[\1_{\{X_t\leq v_t(\vtheta^v)\}}-\beta\big]\overset{X_t\neq v_t(\vtheta^v)}{=} \frac{\partial}{\partial \vtheta^v}S^{\VaR}(v_t(\vtheta^v),X_t))
\]
and the derivative of the CoVaR score with respect to $\vtheta^c$ as
\[
\Cc_t(\vtheta^c,\vtheta^v) = \1_{\{X_t> v_t(\vtheta^v)\}}\nabla c_t(\vtheta^c)\big[\1_{\{Y_t\leq c_t(\vtheta^c)\}}-\alpha\big] \overset{Y_t\neq c_t(\vtheta^c)}{=} \frac{\partial}{\partial \vtheta^c}S^{\CoVaR}\big((v_t(\vtheta^v),c_t(\vtheta^c))^\prime, (X_t,Y_t)^\prime\big);
\]
for details see \eqref{eq:gt} and \eqref{eq:(12+)} in Section~\ref{sec:thm2}.

\setcounter{assumption}{1}

\begin{assumption}\label{ass:an}
	$ $ \vspace{-0.3cm}
	\renewcommand{\theenumi}{(\roman{enumi})}
	\begin{enumerate}
		\item\label{it:int} $\vtheta_0^v\in\inter(\mTheta^v)$ and $\vtheta_0^c\in\inter(\mTheta^c)$, where $\inter(\cdot)$ denotes the interior of a set.
		
		\item\label{it:diff} For all $t\in\mathbb{N}$, $v_t(\cdot)$ and $c_t(\cdot)$ are a.s.\ twice continuously differentiable on $\inter(\mTheta^v)$ and $\inter(\mTheta^c)$, respectively. Moreover, $\nabla v_t(\cdot)\neq\vzero$ and $\nabla c_t(\cdot)\neq\vzero$ a.s.
		
		\item\label{it:bound2.1} There exists a neighborhood of $\vtheta_0^v$, such that $\big\Vert\nabla^2v_t(\vtheta^v)\big\Vert\leq V_2(\mathcal{F}_{t-1})$, $\big\Vert\nabla^2 v_t(\vtheta^v) - \nabla^2 v_t(\vtau^v)\big\Vert\leq V_3(\mathcal{F}_{t-1})\big\Vert\vtheta^v-\vtau^v\big\Vert$ for all elements $\vtheta^v$, $\vtau^v$ of that neighborhood and all $t\in\mathbb{N}$.
		
		\item\label{it:bound2.2} There exists a neighborhood of $\vtheta_0^c$, such that $\big\Vert\nabla c_t(\vtheta^c)\big\Vert\leq C_1(\mathcal{F}_{t-1})$, $\big\Vert\nabla^2 c_t(\vtheta^c)\big\Vert\leq C_2(\mathcal{F}_{t-1})$, $\big\Vert\nabla^2 c_t(\vtheta^c) - \nabla^2 c_t(\vtau^c)\big\Vert\leq C_3(\mathcal{F}_{t-1})\big\Vert\vtheta^c-\vtau^c\big\Vert$ for all elements $\vtheta^c$, $\vtau^c$ of that neighborhood and all $t\in\mathbb{N}$.
		
		\item\label{it:mom bounds cons2} For all $t\in\mathbb{N}$ it holds that $\E\big[V_1^{3+\iota}(\mathcal{F}_{t-1})\big]\leq K$, $\E\big[V_2^{(3+\iota)/2}(\mathcal{F}_{t-1})\big]\leq K$, $\E\big[V_3(\mathcal{F}_{t-1})\big]\leq K$, $\E\big[C_1^{3+\iota}(\mathcal{F}_{t-1})\big]\leq K$, $\E\big[C_2^{(3+\iota)/2}(\mathcal{F}_{t-1})\big]\leq K$, $\E\big[C_{3}(\mathcal{F}_{t-1})\big]\leq K$ for some $\iota>0$.
		
		\item\label{it:Lipschitz an} For all $t\in\mathbb{N}$, $\big|\partial_1 F_t(x,y)-\partial_1 F_t(x^\prime,y)\big|\leq K|x-x^\prime|$ for all $x,x^\prime,y,y^\prime\in\mathbb{R}$.
		
		\item\label{it:bound cdf} For all $t\in\mathbb{N}$, $f_t^{Y}(\cdot)\leq K$, $f_t(\cdot,\cdot)\leq K$, $\big|\partial_i F_t(\cdot,\cdot) \big|\leq K$ for $i=1,2$.
		
		\item\label{it:pd} The matrices
		\begin{align*}
			\mV_{n}&=\frac{1}{n}\sum_{t=1}^{n}\beta(1-\beta)\E\big[\nabla v_t(\vtheta_0^v)\nabla^\prime v_t(\vtheta_0^v)\big]\in\mathbb{R}^{p\times p},\\
			\mC_{n}&=\frac{1}{n}\sum_{t=1}^{n}\alpha(1-\alpha)(1-\beta)\E\big[\nabla c_t(\vtheta_0^c)\nabla^\prime c_t(\vtheta_0^c)\big]\in\mathbb{R}^{q\times q},\\
			\mLambda_n & =\frac{1}{n}\sum_{t=1}^{n}\E\Big[f_t^{X}\big(v_t(\vtheta_0^v)\big)\nabla v_t(\vtheta_0^v)\nabla^\prime v_t(\vtheta_0^v)\Big]\in\mathbb{R}^{p\times p},\\
			\mLambda_{n,(1)} &= \frac{1}{n}\sum_{t=1}^{n}\E\bigg\{ \nabla c_t(\vtheta_0^c)\nabla^\prime c_t(\vtheta_0^c) \Big[  f_t^{Y}\big(c_t(\vtheta_0^c)\big) - \partial_2 F_{t}\big(v_t(\vtheta_0^v), c_t(\vtheta_0^c)\big)\Big] \bigg\}\in\mathbb{R}^{q\times q}
			\intertext{are uniformly positive definite, and}
			\mLambda_{n,(2)} &= \frac{1}{n}\sum_{t=1}^{n}\E\bigg\{\nabla c_t(\vtheta_0^c)\nabla^\prime v_t(\vtheta_0^v)\Big[\alpha f_t^X\big(v_t(\vtheta_0^v)\big) - \partial_1 F_t\big(v_t(\vtheta_0^v), c_t(\vtheta_0^c)\big)\Big]\bigg\}\in\mathbb{R}^{q\times p}
		\end{align*}
		has uniformly full rank.
		
		\item\label{it:lambda func} For the function $\vlambda_n(\vtheta^c,\vtheta^v)=\frac{1}{n}\sum_{t=1}^{n}\E\big[\Cc_t(\vtheta^c,\vtheta^v)\big]$ it holds for sufficiently small $\varepsilon>0$ that if $\big\Vert\vtheta^v-\vtheta_0^v\big\Vert\leq\varepsilon$ and $\vtheta^v$, $\vtheta^c=\vtheta^c(\vtheta^v)$ satisfy $\vlambda_n(\vtheta^c,\vtheta^v)=\vzero$, then $\big\Vert\vtheta^c-\vtheta_0^c\big\Vert\leq K\big\Vert\vtheta^v-\vtheta_0^v\big\Vert$.
		
		\item\label{it:eq bound} $\sup_{\vtheta^v\in\mTheta^v}\sum_{t=1}^{n}\1_{\{X_t=v_t(\vtheta^v)\}}=O(1)$ and $\sup_{\vtheta^c\in\mTheta^c}\sum_{t=1}^{n}\1_{\{Y_t=c_t(\vtheta^c)\}}=O(1)$ a.s., as $n\to\infty$.
		
		\item\label{it:mixing} $\Big\{\big(X_t,\ Y_t,\ \mZ_t^\prime,\ v_t(\vtheta_0^v),\ \nabla^\prime v_t(\vtheta_0^v),\ c_t(\vtheta_0^c),\ \nabla^\prime c_t(\vtheta_0^c)\big)^\prime\Big\}_{t\in\mathbb{N}}$ is $\alpha$-mixing with mixing coefficients $\alpha(\cdot)$ satisfying $\sum_{m=1}^{\infty}\alpha^{(\tilde{q}-2)/\tilde{q}}(m)<\infty$ for some $\tilde{q}>2$.
		
	\end{enumerate}
\end{assumption}

Item~\ref{it:int} of Assumption~\ref{ass:an} is essential for the asymptotic normality of extremum estimators. In fact, examples of non-normal estimators can easily be constructed when the true parameter is on the boundary \citep[p.~2144]{NeweyMcFadden1994}. Items~\ref{it:diff}--\ref{it:mom bounds cons2} are again smoothness conditions on the model and moment conditions that are similar in spirit to those of \citet{EM04} and \citet{PZC19}. 
Items~\ref{it:Lipschitz an}--\ref{it:bound cdf} are similar to Assumption~\ref{ass:cons}~\ref{it:Lipschitz cons} and \ref{it:mom bounds cons}.
The assumptions on the matrices in item~\ref{it:pd} are standard in regression contexts with heterogeneously distributed observations; see \citet{Whi01} and in particular his Definition 2.17.
Item~\ref{it:lambda func} is a smoothness assumption on the expectations of the CoVaR score derivatives.
It is very similar in spirit to Assumption~3~(a) enforced by \citet{Zin02} for her two-step smoothed least median of squares estimator.
Item~\ref{it:eq bound} bounds the number of exact equalities of financial losses and all possible (VaR, CoVaR) model values; see also \citet[Assumption~2~(G)]{PZC19}. 
Finally, item~\ref{it:mixing} is a standard mixing condition that ensures suitable central limit theorems hold; for instance those in Lemmas~V.\ref{lem:7} and C.\ref{lem:7 tilde}. We refer to, e.g., \citet[Definition~3.42]{Whi01} for a definition of mixing that also covers possibly non-stationary sequences.

\begin{thm}\label{thm:an}
	Suppose Assumptions~\ref{ass:cons} and \ref{ass:an} hold. Then, as $n\to\infty$, 
	\begin{align*}
		\sqrt{n}\mM_{n}^{-1/2}\overline{\mGamma}_n\begin{pmatrix} \widehat{\vtheta}_n^v - \vtheta_0^v\\  \widehat{\vtheta}_n^c - \vtheta_0^c\end{pmatrix}\overset{d}{\longrightarrow}N(\vzero,\mI),
	\end{align*}
	where 
	\begin{equation*}
		\overline{\mGamma}_n  =\begin{pmatrix}\mLambda_n &\vzeros \\ \mLambda_{n,(2)} & \mLambda_{n,(1)} \end{pmatrix}\in\mathbb{R}^{(p+q)\times(p+q)},\qquad	\mM_{n}	=\begin{pmatrix}\mV_n &\vzeros \\ \vzeros & \mC_{n}\end{pmatrix}\in\mathbb{R}^{(p+q)\times(p+q)}.
	\end{equation*}
\end{thm}

\begin{proof}
	See Appendix~\ref{sec:thm2}.
\end{proof}

The joint asymptotic normality of Theorem~\ref{thm:an} may be useful when testing joint restrictions on the parameters, such as the significance of some (macroeconomic or financial) variables in explaining VaR \textit{and} CoVaR dynamics. Note that $\mM_{n}^{-1}$ exists and is positive definite (because $\mM_n$ is positive definite by Assumption~\ref{ass:an}~\ref{it:pd} and \citet[Section A.8.3]{Luetkepohl:05}), such that $\mM_{n}^{-1/2}$ exists \citep[Section A.9.2]{Luetkepohl:05}.

Similarly as in \citet{EM04} and \citet{PZC19}, the key step in the proof of Theorem~\ref{thm:an} is to apply Lemma~A.1 of \citet{Wei91}. However, since we consider a two-step estimator, our arguments necessarily extend those of the aforementioned authors, who only consider one-step estimators.
Specifically, in showing asymptotic normality of $\widehat{\vtheta}_{n}^{c}$ one has to take into account the fact that it depends on the first-step estimate $\widehat{\vtheta}_{n}^{v}$, which complicates the technical treatment.

One can also show that the first-step estimator $\widehat{\vtheta}_{n}^{v}$ increases the asymptotic variance of $\widehat{\vtheta}_{n}^{c}$ relative to the case where $\vtheta_0^v$ is known.
To see this, define $\mGamma_n^{-1} =\Big(\begin{array}{c|c}
	-\mLambda_{n,(1)}^{-1}\mLambda_{n,(2)}\mLambda_n^{-1} & \mLambda_{n,(1)}^{-1}	\end{array}\Big)\in\mathbb{R}^{q\times (p+q)}$ to be the lower $q$ rows of $\overline{\mGamma}_n^{-1}$; see \eqref{eq:(C.6p)}. 
Then, our proof shows that if the true value of $\vtheta_0^v$ were known in the second step, the asymptotic variance of $\widehat{\vtheta}_n^c$ would be given by $\mLambda_{n,(1)}^{-1}\mC_n \mLambda_{n,(1)}^{-1}$. Comparing this with the asymptotic variance $\mGamma_n^{-1} \mM_n (\mGamma_n^{-1})^\prime$ from Theorem~\ref{thm:an} demonstrates that, as expected, the first-stage estimation has an asymptotic effect on $\widehat{\vtheta}_n^c$. 
More precisely, the first-step estimator increases the asymptotic variance of $\widehat{\vtheta}_{n}^{c}$ (in terms of the Loewner order), because
\[
\mGamma_n^{-1} \mM_n (\mGamma_n^{-1})^\prime - \mLambda_{n,(1)}^{-1}\mC_n \mLambda_{n,(1)}^{-1} = (\mLambda_{n,(1)}^{-1}\mLambda_{n,(2)}\mLambda_n^{-1}) \mV_n\big(\mLambda_{n,(1)}^{-1}\mLambda_{n,(2)}\mLambda_n^{-1}\big)^\prime
\]
is positive semi-definite. 

This is akin to two-step GMM estimation, where uncorrelated first- and second-step moment conditions imply an increased variance of the second-step estimator; see \citet[Eq.~(6.9)]{NeweyMcFadden1994}. 
In our context of two-step M-estimation, the derivatives (with respect to the parameters) of the two components of the scoring functions in \eqref{eq:loss}, i.e., $\Vv_t(\vtheta_0^v)$ and $\Cc_t(\vtheta_0^c,\vtheta_0^v)$, play the role of these moment conditions.
The uncorrelatedness of $\Vv_t(\vtheta_0^v)$ and $\Cc_t(\vtheta_0^c,\vtheta_0^v)$ is the reason for the off-diagonal zero blocks in the matrix $\mM_n$; see in particular \eqref{eq:(C.4+)}.

\setcounter{example}{1}
\begin{example}\label{ex:2}
	Although our models will mainly be used for financial data, there is also a growing interest in macroeconomic tail risks, as demonstrated by the rapidly growing literature on \citeauthor{ABG19}'s \citeyearpar{ABG19} Growth-at-Risk. The standard models in macroeconometrics are VAR($m$) processes
	\begin{equation*}
		\begin{pmatrix}
			X_t\\ Y_t
		\end{pmatrix} =
		\vphi_0 + \mPhi_1\begin{pmatrix}
			X_{t-1}\\ Y_{t-1}
		\end{pmatrix}
		+\ldots+
		\mPhi_m\begin{pmatrix}
			X_{t-m}\\ Y_{t-m}
		\end{pmatrix}
		+\vvarepsilon_t,	
	\end{equation*}
	where $\vphi_0=(\phi_{X,0}, \phi_{Y,0})^\prime$, $\mPhi_j\in\mathbb{R}^{2\times 2}$ ($j=1,\ldots,m$) and the $\vvarepsilon_t=(\varepsilon_{X,t}, \varepsilon_{Y,t})^\prime$ are i.i.d.~and independent of $\mathcal{F}_{t-1}=\sigma\big((X_{t-1},Y_{t-1})^\prime,\ldots,(X_{1},Y_{1})^\prime,\init\big)$, where $\init=(X_0,Y_0,\ldots,X_{-m+1},Y_{-m+1})^\prime$. 
	It is not difficult to show that VAR($m$) models imply the (VaR, CoVaR) dynamics
	\begin{equation}\label{eq:VARp}
		\begin{pmatrix}
			v_t(\vtheta^v)\\ c_t(\vtheta^c)
		\end{pmatrix}
		=\vphi_0^\ast + \mPhi_1\begin{pmatrix}
			X_{t-1}\\ Y_{t-1}
		\end{pmatrix}
		+\ldots+
		\mPhi_m\begin{pmatrix}
			X_{t-m}\\ Y_{t-m}
		\end{pmatrix},
	\end{equation}
	where $\vphi_0^\ast=\big(\phi_{X,0}+\VaR_\b(F_{\varepsilon_X}),\ \phi_{Y,0} + \CoVaR_{\a\mid\b}(F_{\varepsilon_X,\varepsilon_Y})\big)^\prime$; see Section~\ref{PR VAR} for details.
	For $m=1$, this reduces to the model in \eqref{eqn:SAVCoCAViaRModelClass} with $\mB=\vzero$. 
	Note that in the (VaR, CoVaR) model \eqref{eq:VARp} only the observations $(X_0,Y_0)^\prime,\ldots,(X_{-m+1},Y_{-m+1})^\prime$ are required for initialization even though the VAR($m$) process may have originated in the infinite past. 
	Therefore, infinite starting values play no role in linear models such as \eqref{eq:VARp}.
	Indeed, even if the VAR($m$) process originated in the infinite past, and the relevant information set for the forecaster is $\mathcal{F}_{t-1}^{\full} = \sigma\big((X_{t-1},Y_{t-1})^\prime, (X_{t-2},Y_{t-2})^\prime, \ldots \big)$, then it still holds that $v_t(\vtheta_0^v)=\VaR_\b(F_{X_t\mid\mathcal{F}_{t-1}})=\VaR_\b(F_{X_t\mid\mathcal{F}_{t-1}^{\full}})$ and $c_t(\vtheta_0^v)=\CoVaR_{\a|\b}(F_{X_t,Y_t\mid\mathcal{F}_{t-1}})=\CoVaR_{\a|\b}(F_{X_t,Y_t\mid\mathcal{F}_{t-1}^{\full}})$.
	Therefore, the VaR and CoVaR dynamics of a VAR process started infinitely long ago can be recovered from a truncated information set (if $m$ initial values are available).
\end{example}

\begin{rem}
	\label{rem:CondVerification}
	\begin{enumerate}
		\item[(a)] Assumptions~\ref{ass:cons} and \ref{ass:an} contain primitive conditions (such as the differentiability of $v_t(\cdot)$ and $c_t(\cdot)$) and high-level conditions (such as the ULLN for the scores). We verify all items of Assumptions~\ref{ass:cons} and \ref{ass:an} for a model similar to a CCC--GARCH of \citet{Bol90} in Section~\ref{sec:verification}.
		In fact, we consider model \eqref{eqn:ECCCmodel} with diagonal $\widetilde{\mA}$ and $\widetilde{\mB}$ in Section~\ref{sec:verification}.
		
		\item[(b)] Next to non-linear CCC--GARCH-type models, we also verify our Assumptions~\ref{ass:cons} and \ref{ass:an} for linear VAR models in Section~\ref{sec:verificationVAR}. VAR models are widely used in macroeconomics, where \citet{ABG19,Aea21} have recently drawn attention to tail risks by introducing the Growth-at-Risk (i.e., the VaR of GDP growth). While Growth-at-Risk focuses on the VaR as a univariate measure of risk, applying our (VaR, \textit{CoVaR}) models may shed light on the \textit{interconnectedness} of macroeconomic tail risks. Therefore, verifying our assumptions for VAR models, which are well-known to adequately capture the dynamics of many macroeconomic time series, is of particular interest, as it paves the way for future applications of our (VaR, CoVaR) models in macroeconomics.
	\end{enumerate}
\end{rem}

\section{Consistent Estimation of the Asymptotic Variance}\label{avar}

To draw inference on the model parameters, we require consistent estimates of the matrices $\mM_n$ and $\overline{\mGamma}_n$ in Theorem~\ref{thm:an}. For the matrices pertaining to the VaR parameters, these are well-explored. For instance, \citet{EM04} and \citet{PZC19} estimate $\mV_n$ and $\mLambda_n$ via
\begin{align*}
	\widehat{\mV}_n &= \frac{1}{n}\sum_{t=1}^{n}\beta(1-\beta)\nabla v_{t}(\widehat{\vtheta}_n^v)\nabla v_{t}^\prime(\widehat{\vtheta}_n^v),\\
	\widehat{\mLambda}_n &= \frac{1}{n}\sum_{t=1}^{n}(2\widehat{b}_{n,x})^{-1}\1_{\big\{|X_t-v_t(\widehat{\vtheta}_n^v)|<\widehat{b}_{n,x}\big\}}\nabla v_t(\widehat{\vtheta}_n^v)\nabla^\prime v_t(\widehat{\vtheta}_n^v),
\end{align*}
respectively, where $\widehat{b}_{n,x}=o_{\P}(1)$ is a (possibly stochastic) bandwidth. As we are mainly interested in the CoVaR parameters, we still have to estimate $\mLambda_{n,(1)}$, $\mLambda_{n,(2)}$ and $\mC_n$. For the latter matrix, we employ the estimator
\[
\widehat{\mC}_n = \frac{1}{n}\sum_{t=1}^{n}\alpha(1-\alpha)(1-\beta)\nabla c_{t}(\widehat{\vtheta}_n^c)\nabla^\prime c_{t}(\widehat{\vtheta}_n^c).
\]
The estimators we propose for $\mLambda_{n,(1)}$ and $\mLambda_{n,(2)}$ are more involved and are given by
\begin{align*}
	\widehat{\mLambda}_{n,(1)} & = \frac{1}{n}\sum_{t=1}^{n} \nabla c_t(\widehat{\vtheta}_n^c)\nabla^\prime c_t(\widehat{\vtheta}_n^c)  (2\widehat{b}_{n,y})^{-1}\Big[ \1_{\big\{|Y_t-c_t(\widehat{\vtheta}_n^c)|<\widehat{b}_{n,y}\big\}} -\1_{\big\{X_t\leq v_t(\widehat{\vtheta}_n^v),\ |Y_t-c_t(\widehat{\vtheta}_n^c)|<\widehat{b}_{n,y}\big\}} \Big],\\
	\widehat{\mLambda}_{n,(2)} & = \frac{1}{n}\sum_{t=1}^{n} \nabla c_t(\widehat{\vtheta}_n^c)\nabla^\prime v_t(\widehat{\vtheta}_n^v) (2\widehat{b}_{n,x})^{-1}\Big[ \alpha \1_{\big\{|X_t-v_t(\widehat{\vtheta}_n^v)|<\widehat{b}_{n,x}\big\}} - \1_{\big\{|X_t-v_t(\widehat{\vtheta}_n^v)|<\widehat{b}_{n,x},\ Y_t\leq c_t(\widehat{\vtheta}_n^c)\big\}} \Big],
\end{align*}
where $\widehat{b}_{n,y}=o_{\P}(1)$ is another (possibly stochastic) bandwidth.
Remark~\ref{rem:asvar} below reports the data-dependent bandwidth choices we use in the empirical parts of the paper.
These estimators rely on kernel density estimates of the derivatives of the conditional c.d.f., and use a rectangular kernel. However, at the expense of some additional technicality other kernels may also be considered.

To prove consistency of the estimators, we require some further conditions.

\begin{assumption}\label{ass:avar}
	$ $ \vspace{-0.2cm}
	\renewcommand{\theenumi}{(\roman{enumi})}
	\begin{enumerate}
		\item\label{it:bw} It holds for $\widehat{b}_{n,w}$ ($w\in\{x,y\}$) that $\widehat{b}_{n,w}/b_{n,w}\overset{\P}{\longrightarrow}1$, where the non-stochastic $b_{n,w}>0$ satisfies $b_{n,w}=o(1)$ and $b_{n,w}^{-1}=o\big(n^{1/2}\big)$, as $n\to\infty$.
		
		\item\label{it:Lambda} $\frac{1}{n}\sum_{t=1}^{n}f_t^{X}\big(v_t(\vtheta_0^v)\big)\nabla v_t(\vtheta_0^v)\nabla^\prime v_t(\vtheta_0^v)-\mLambda_n \overset{\P}{\longrightarrow}\vzero$.
		
		\item\label{it:Lambda1} $\frac{1}{n}\sum_{t=1}^{n}\nabla c_t(\vtheta_0^c)\nabla^\prime c_t(\vtheta_0^c) \Big[  f_t^{Y}\big(c_t(\vtheta_0^c)\big) - \partial_2 F_{t}\big(v_t(\vtheta_0^v), c_t(\vtheta_0^c)\big)\Big] - \mLambda_{n,(1)}\overset{\P}{\longrightarrow}\vzero$.
		
		\item\label{it:Lambda2} $\frac{1}{n}\sum_{t=1}^{n}\nabla c_t(\vtheta_0^c)\nabla^\prime v_t(\vtheta_0^v)\Big[\alpha f_t^X\big(v_t(\vtheta_0^v)\big) - \partial_1 F_t\big(v_t(\vtheta_0^v), c_t(\vtheta_0^c)\big)\Big] - \mLambda_{n,(2)}\overset{\P}{\longrightarrow}\vzero$.
		
		\item\label{it:mom bounds cons3} For all $t\in\mathbb{N}$ it holds that $\E\big[C_1^4(\mathcal{F}_{t-1})\big]\leq K$, $\E\big[V_1^4(\mathcal{F}_{t-1})\big]\leq K$.

	\end{enumerate}
\end{assumption}

Item~\ref{it:bw} of Assumption~\ref{ass:avar} ensures that the bandwidth vanishes, but not too fast.
Items~\ref{it:Lambda}--\ref{it:Lambda2} provide standard laws of large numbers for $\mLambda_n$, $\mLambda_{n,(1)}$ and $\mLambda_{n,(2)}$.
(Note that laws of large numbers automatically apply for the remaining matrices $\mV_n$ and $\mC_n$ thanks to the mixing properties of $\nabla v_t(\vtheta_0^v)$ and $\nabla c_t(\vtheta_0^c)$ from Assumption~\ref{ass:an}~\ref{it:mixing}.)
Finally, item~\ref{it:mom bounds cons3} strengthens the moment bounds for the gradients of the VaR and CoVaR models.

The next theorem shows consistency of our asymptotic variance estimators:

\begin{thm}\label{thm:avar}
	Suppose Assumptions~\ref{ass:cons}--\ref{ass:avar} hold. Then, as $n\to\infty$, $\widehat{\mV}_n -\mV_n\overset{\P}{\longrightarrow}\vzero$, $\widehat{\mC}_n - \mC_{n} \overset{\P}{\longrightarrow}\vzero$, $\widehat{\mLambda}_n - \mLambda_n \overset{\P}{\longrightarrow}\vzero$, $\widehat{\mLambda}_{n,(1)} - \mLambda_{n,(1)}\overset{\P}{\longrightarrow}\vzero$ and $\widehat{\mLambda}_{n,(2)} - \mLambda_{n,(2)} \overset{\P}{\longrightarrow}\vzero$.
\end{thm}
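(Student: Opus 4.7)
The plan is to handle the five estimators in two groups. The non-smoothed estimators $\widehat{\mV}_n$ and $\widehat{\mC}_n^\ast$ are simply sample averages of smooth model quantities evaluated at the plug-in estimators, while $\widehat{\mLambda}_n$, $\widehat{\mLambda}_{n,(1)}$, $\widehat{\mLambda}_{n,(2)}$ additionally involve rectangular-kernel density estimates. The common strategy for all five is: first replace the stochastic bandwidth $\widehat{b}_{n,w}$ by the deterministic $b_{n,w}$ using Assumption~\ref{ass:avar}~\ref{it:bw}; then replace the plug-in parameters $\widehat{\vtheta}_n^v,\widehat{\vtheta}_n^c$ by $\vtheta_0^v,\vtheta_0^c$ using Theorems~\ref{thm:cons}--\ref{thm:an}; and finally invoke the oracle convergence statements in Assumption~\ref{ass:avar}~\ref{it:Lambda}--\ref{it:Lambda2}.

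For $\widehat{\mV}_n$, a mean-value expansion around $\vtheta_0^v$, together with Assumption~\ref{ass:an}~\ref{it:bound2.1}, the moment bound $\E V_1^4\le K$ from Assumption~\ref{ass:avar}~\ref{it:mom bounds cons3}, and Cauchy--Schwarz, reduces convergence to that of $n^{-1}\sum_t\nabla v_t(\vtheta_0^v)\nabla^\prime v_t(\vtheta_0^v)$, which is a ULLN-type statement that follows under the stationarity and mixing assumptions \ref{ass:cons}~\ref{it:str stat} and \ref{ass:an}~\ref{it:mixing}. The proof for $\widehat{\mC}_n^\ast$ is identical with $c_t$ in place of $v_t$. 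For the kernel-smoothed estimators I illustrate the strategy on $\widehat{\mLambda}_n$; the others are entirely analogous. Define the oracle version
\[
	\mLambda_n^\ast = \frac{1}{n}\sum_{t=1}^{n}(2b_{n,x})^{-1}\1_{\{|X_t-v_t(\vtheta_0^v)|<b_{n,x}\}}\nabla v_t(\vtheta_0^v)\nabla^\prime v_t(\vtheta_0^v),
\]
and decompose $\widehat{\mLambda}_n-\mLambda=(\widehat{\mLambda}_n-\mLambda_n^\ast)+(\mLambda_n^\ast-\mLambda)$. Conditional on $\mathcal{F}_{t-1}$, the $t$-th summand of $\mLambda_n^\ast$ has mean $\{f_t^X(v_t(\vtheta_0^v))+O(b_{n,x})\}\nabla v_t(\vtheta_0^v)\nabla^\prime v_t(\vtheta_0^v)$ by the Lipschitz condition on $f_t^X$ in Assumption~\ref{ass:an}~\ref{it:Lipschitz}; Assumption~\ref{ass:avar}~\ref{it:Lambda} then handles the leading term, while the centered part is shown to have variance of order $(nb_{n,x})^{-1}$ via the density bound in Assumption~\ref{ass:an}~\ref{it:bound cdf}, the moment bound on $V_1^4$, and covariance summability from the mixing condition, hence is $o_\P(1)$ since $b_{n,x}^{-1}=o(n^{1/2})$.

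The main obstacle is controlling $\widehat{\mLambda}_n-\mLambda_n^\ast$, in which $\widehat{\vtheta}_n^v$ and $\widehat{b}_{n,x}$ both enter inside an indicator divided by a vanishing bandwidth. The stochastic bandwidth is handled by sandwiching $\widehat{b}_{n,x}$ between $(1\pm\epsilon)b_{n,x}$ (valid with probability tending to one for any fixed $\epsilon>0$ by Assumption~\ref{ass:avar}~\ref{it:bw}) and then letting $\epsilon\downarrow0$. For the parameter replacement, the key inequality is
\[
	\big|\1_{\{|X_t-v_t(\widehat{\vtheta}_n^v)|<b_{n,x}\}}-\1_{\{|X_t-v_t(\vtheta_0^v)|<b_{n,x}\}}\big|\le\1_{\{\,\big||X_t-v_t(\vtheta_0^v)|-b_{n,x}\big|\le\Delta_t\,\}},
\]
where $\Delta_t\le V_1(\mathcal{F}_{t-1})\|\widehat{\vtheta}_n^v-\vtheta_0^v\|$ by a mean-value bound and Assumption~\ref{ass:an}~\ref{it:bound2.1}. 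Taking conditional expectations and using $f_t^X\le K$ bounds each expected summand by a constant times $V_1(\mathcal{F}_{t-1})\|\widehat{\vtheta}_n^v-\vtheta_0^v\|\, b_{n,x}^{-1}$; since $\|\widehat{\vtheta}_n^v-\vtheta_0^v\|=O_\P(n^{-1/2})$ by Theorem~\ref{thm:an}, $\E V_1\le K$, and $b_{n,x}^{-1}=o(n^{1/2})$, this contribution is $o_\P(1)$. For $\widehat{\mLambda}_{n,(1)}$ and $\widehat{\mLambda}_{n,(2)}$ the same scheme applies: the oracle terms converge by Assumption~\ref{ass:avar}~\ref{it:Lambda1}--\ref{it:Lambda2} upon computing conditional expectations via the Lipschitz bounds on $f_t^Y$, $\partial_1F_t$, and $\partial_2F_t$, while the parameter-replacement arguments use the same indicator bound together with $\|\widehat{\vtheta}_n^c-\vtheta_0^c\|=O_\P(n^{-1/2})$ and the bounds on $C_1$ in Assumption~\ref{ass:avar}~\ref{it:mom bounds cons3}.
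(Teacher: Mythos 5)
Your proposal is correct and follows essentially the same route as the paper's proof: the oracle decomposition $\widehat{\mLambda}_{n,\cdot}-\mLambda_{\cdot}=(\widehat{\mLambda}_{n,\cdot}-\mLambda_{n,\cdot}^{\ast})+(\mLambda_{n,\cdot}^{\ast}-\mLambda_{\cdot})$, the indicator-perturbation bound $\big|\1_{\{|X_t-v_t(\widehat{\vtheta}_n^v)|<\widehat{b}_{n,x}\}}-\1_{\{|X_t-v_t(\vtheta_0^v)|<b_{n,x}\}}\big|\leq\1_{\{||X_t-v_t(\vtheta_0^v)|-b_{n,x}|\leq\Delta_t\}}$, the bias control via the Lipschitz conditions, the variance/Chebyshev argument for the centered oracle term (the summands are in fact martingale differences, so the cross terms vanish exactly rather than merely being summable), and the final appeal to Assumption~\ref{ass:avar}~\ref{it:Lambda}--\ref{it:Lambda2} are all exactly the paper's steps. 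The one place your outline needs tightening is the claim that ``taking conditional expectations \ldots bounds each expected summand by a constant times $V_1(\mathcal{F}_{t-1})\|\widehat{\vtheta}_n^v-\vtheta_0^v\|\,b_{n,x}^{-1}$'': since $\Delta_t$ involves $\widehat{\vtheta}_n^v$, which is not $\mathcal{F}_{t-1}$-measurable, you must first restrict to the event $\{b_{n,x}^{-1}\|\widehat{\vtheta}_n^v-\vtheta_0^v\|\leq d\}$ (which has probability tending to one for any $d>0$ by $\sqrt{n}$-consistency and $b_{n,x}^{-1}=o(n^{1/2})$) so that the indicator's radius becomes the $\mathcal{F}_{t-1}$-measurable quantity $V_1(\mathcal{F}_{t-1})\,d\,b_{n,x}$ before applying $f_t^X\leq K$ and letting $d\downarrow0$ --- this is precisely the event $\mathsf{E}_n$ device in the paper, and is the same trick you already invoke for the stochastic bandwidth.
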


\begin{proof}
	See Section~\ref{sec:thm3}.
\end{proof}

\begin{rem}\label{rem:asvar}
	The bandwidths are only specified asymptotically in Assumption~\ref{ass:avar}, such that little theoretical guidance is available in finite samples.
	In practice, we recommend to follow the (stochastic) bandwidth choice of \citet{Koenker2005book} and \citet{MachadoSantosSilva2013}.
	Formally, the bandwidths are chosen as
	\begin{align*}
		&\widehat{b}_{n,x} = \text{MAD} \Big[\big\{X_t-v_t(\widehat{\vtheta}_n^v) \big\}_{t=1,\ldots,n}\Big] \left[ \Phi^{-1} \big( \beta + m(n,\beta) \big) - \Phi^{-1} \big( \beta - m(n,\beta) \big) \right], \\
		&\widehat{b}_{n,y} = \text{MAD} \Big[\big\{Y_t-c_t(\widehat{\vtheta}_n^c) \big\}_{t=1,\ldots,n}\Big] \left[ \Phi^{-1} \big( \alpha + m((1-\beta)n, \alpha) \big) - \Phi^{-1} \big(\alpha - m((1-\beta)n,\alpha)  \big) \right], \\
		&m(n, \tau) =  n^{-1/3} \left( \Phi^{-1} (0.975)\right)^{2/3} \left( \frac{1.5 (\phi(\Phi^{-1}(\tau)))^2}{2(\Phi^{-1}(\tau))^2 +1} \right)^{1/3},
	\end{align*}
	where $\text{MAD}(\cdot)$ refers to the sample median absolute deviation, and $\phi(\cdot)$ and $\Phi^{-1}(\cdot)$ denote the density and quantile functions of the standard normal distribution.
	The choice of $\widehat{b}_{n,y}$ is a straightforward adaptation for the density at the CoVaR, which is simply estimated as the VaR at level $\alpha$ of only $(1-\beta) n$ observations.
\end{rem}

\section{Proofs of Theorem~\ref{prop:StartingValues} and Proposition~\ref{prop:ARMAclass}}
\label{sec:prop1}

\begin{proof}[{\textbf{Proof of Theorem~\ref{prop:StartingValues}:}}]
	As in the proof of Theorem~\ref{thm:cons}, we show consistency of $\widetilde{\vtheta}_{n}^{v}$ by checking the conditions of Lemma~2.2 in \citet{Whi80} and by explicitly treating the difference between $\widetilde{\vtheta}_{n}^{v}$ and $\widehat{\vtheta}_{n}^{v}$.

	We define $\overline{Q}_n^v(\vtheta^v)$ as in the proof of Theorem~\ref{thm:cons}, but now use $\frac{1}{n} \sum_{t=1}^n  S^{\VaR}(\tilde v_t(\vtheta^v), X_t)$ (based on $\tilde v_t$ instead of $v_t$) as its empirical counterpart.
	We first show the ULLN condition of Lemma~2.2 in \citet{Whi80}.
	For this, write
	\begin{align*}
		\sup_{\vtheta^v \in \mTheta^v}& \bigg| \frac{1}{n} \sum_{t=1}^n   S^{\VaR}(\tilde v_t(\vtheta^v), X_t) - \mathbb{E} \big[ S^{\VaR}(v_t(\vtheta^v), X_t) \big] \bigg| \\
		&\le \sup_{\vtheta^v \in \mTheta^v} \bigg| \frac{1}{n} \sum_{t=1}^n  S^{\VaR}(\tilde v_t(\vtheta^v), X_t) - S^{\VaR}(v_t(\vtheta^v), X_t) \bigg| \\
		&\qquad + \sup_{\vtheta^v \in \mTheta^v} \bigg| \frac{1}{n} \sum_{t=1}^n   S^{\VaR}(v_t(\vtheta^v), X_t) - \mathbb{E} \big[ S^{\VaR}(v_t(\vtheta^v), X_t) \big] \bigg|.
	\end{align*}
	For the second right-hand side term, a ULLN applies by Assumption~\ref{ass:cons}~\ref{it:ULLN}.
	Due to the Lipschitz-continuity of the tick loss function $S^{\VaR}(\cdot, \cdot)$ in its first argument, we get for the first term that
	\begin{align}
		\sup_{\vtheta^v \in \mTheta^v} \bigg|\frac{1}{n} \sum_{t=1}^n &   S^{\VaR}(\tilde v_t(\vtheta^v), X_t) - S^{\VaR}(v_t(\vtheta^v), X_t) \bigg| \notag\\
		&\le \frac{1}{n} \sum_{t=1}^n \sup_{\vtheta^v \in \mTheta^v}  \max(\beta, 1-\beta)  \cdot \left|  v_{t}(\vtheta^v) - \tilde v_{t}(\vtheta^v) \right| \label{eq:LipschitzTickLoss} \\
		&\le  \max(\beta, 1-\beta)  \frac{1}{n} \sum_{t=1}^n \sup_{\vtheta^v \in \mTheta^v} \rho^{t-1}  \cdot \left|  v_{1}(\vtheta^v) - \tilde v_{1}(\vtheta^v) \right| \notag\\
		&\le  \max(\beta, 1-\beta) \sup_{\vtheta^v \in \mTheta^v}\left|  v_{1}(\vtheta^v) - \tilde v_{1}(\vtheta^v) \right| \frac{1}{n} \sum_{t=1}^n  \rho^{t-1} \overset{\P}{\longrightarrow} 0,\notag
	\end{align}
	where the second inequality follows by the contraction condition in Theorem~\ref{prop:StartingValues}.
	The convergence of the above term follows due to the convergence of the geometric sum,
	and $\sup_{\vtheta^v \in \mTheta^v}|  v_{1}(\vtheta^v) - \tilde v_{1}(\vtheta^v) |=O_{\P}(1)$ since $\E\big[\sup_{\vtheta^v \in \mTheta^v}|  v_{1}(\vtheta^v) - \tilde v_{1}(\vtheta^v) |\big]<\infty$ by assumption.
	As the remaining conditions of Lemma~2.2 in \citet{Whi80} are not affected by the starting values and can be shown as in Theorem~\ref{thm:cons}, we obtain consistency of $\widetilde{\vtheta}^v_n$.

	We now turn to the M-estimator of the CoVaR parameters, where we apply the same proof strategy as above and only derive the ULLN condition of Lemma~2.2 in \citet{Whi80}.
	For this, $\overline{Q}_n^c(\vtheta^c)$ is defined as in the proof of Theorem~\ref{thm:cons} and we use 
	\[
	\frac{1}{n} \sum_{t=1}^n S^{\CoVaR}\big((\tilde v_t(\widetilde{\vtheta}^v_n), \tilde c_t(\vtheta^c))^\prime, (X_t, Y_t)^\prime\big)
	\]
	as its empirical counterpart, which is based on $\tilde v_t(\cdot)$ and $\tilde c_t(\cdot)$, and relies on the consistent estimator $\widetilde{\vtheta}^v_n$.	
	To derive the ULLN, we use the bound
	\begin{align*}
		&\sup_{\vtheta^c \in \mTheta^c} \bigg| \frac{1}{n} \sum_{t=1}^n   S^{\CoVaR}\big((\tilde v_t(\widetilde{\vtheta}^v_n), \tilde c_t(\vtheta^c))^\prime, (X_t, Y_t)^\prime\big) - \mathbb{E} \big[ S^{\CoVaR}\big((v_t(\vtheta^v_0), c_t(\vtheta^c))^\prime, (X_t, Y_t)^\prime\big) \big] \bigg| \\
		\le \, &\sup_{\vtheta^c \in \mTheta^c} \bigg| \frac{1}{n} \sum_{t=1}^n  S^{\CoVaR}\big((\tilde v_t(\widetilde{\vtheta}^v_n), \tilde c_t(\vtheta^c))^\prime, (X_t, Y_t)^\prime\big) - S^{\CoVaR}\big((
		v_t(\widetilde{\vtheta}^v_n),  c_t(\vtheta^c))^\prime, (X_t, Y_t)^\prime\big)   \bigg|\\
		+ \, &\sup_{\vtheta^c \in \mTheta^c} \bigg| \frac{1}{n} \sum_{t=1}^n  S^{\CoVaR}\big((
		v_t(\widetilde{\vtheta}^v_n),  c_t(\vtheta^c))^\prime, (X_t, Y_t)^\prime\big) - \mathbb{E} \big[ S^{\CoVaR}\big((v_t(\vtheta^v_0), c_t(\vtheta^c))^\prime, (X_t, Y_t)^\prime\big) \big] \bigg|.
	\end{align*}
	To show that the second term on the right-hand side above is $o_\mathbb{P}(1)$, we can apply the same steps as in the proof of Theorem~\ref{thm:cons} (where we show that  $\sup_{\vtheta^c \in \mTheta^c}  \big| Q_n^c(\vtheta^c) - \overline{Q}_n^c(\vtheta^c) \big| =  o_\mathbb{P}(1)$) and replace $\widehat{\vtheta}^v_n$ by $\widetilde{\vtheta}^v_n$.
	Notice that this proof only uses consistency of $\widehat{\vtheta}_n^v$, which we have established for $\widetilde{\vtheta}^v_n$ above.
	
	For the first term on the right-hand side, we write
	\begin{align}
		&\sup_{\vtheta^c \in \mTheta^c} \bigg| \frac{1}{n} \sum_{t=1}^n  S^{\CoVaR}\big((\tilde v_t(\widetilde{\vtheta}^v_n), \tilde c_t(\vtheta^c))^\prime, (X_t, Y_t)^\prime\big) - S^{\CoVaR}\big((
		v_t(\widetilde{\vtheta}^v_n),  c_t(\vtheta^c))^\prime, (X_t, Y_t)^\prime\big)  \bigg| \notag\\
		\le \, &\sup_{\vtheta^c \in \mTheta^c} \bigg| \frac{1}{n} \sum_{t=1}^n  S^{\CoVaR}\big((\tilde v_t(\widetilde{\vtheta}^v_n), \tilde c_t(\vtheta^c))^\prime, (X_t, Y_t)^\prime\big) - S^{\CoVaR}\big((
		\tilde v_t(\widetilde{\vtheta}^v_n),  c_t(\vtheta^c))^\prime, (X_t, Y_t)^\prime\big)  \bigg| \label{eq:ProofStartVal1}\\
		+ \, &\sup_{\vtheta^c \in \mTheta^c} \bigg| \frac{1}{n} \sum_{t=1}^n  S^{\CoVaR}\big((\tilde v_t(\widetilde{\vtheta}^v_n), c_t(\vtheta^c))^\prime, (X_t, Y_t)^\prime\big) - S^{\CoVaR}\big((
		v_t(\widetilde{\vtheta}^v_n),  c_t(\vtheta^c))^\prime, (X_t, Y_t)^\prime\big) \bigg|.\label{eq:ProofStartVal11}
	\end{align}
	For the term in \eqref{eq:ProofStartVal1}, we obtain that
	\begin{align*}
		&\sup_{\vtheta^c \in \mTheta^c} \bigg| \frac{1}{n} \sum_{t=1}^n  S^{\CoVaR}\big((\tilde v_t(\widetilde{\vtheta}^v_n), \tilde c_t(\vtheta^c))^\prime, (X_t, Y_t)^\prime\big) - S^{\CoVaR}\big((
		\tilde v_t(\widetilde{\vtheta}^v_n),  c_t(\vtheta^c))^\prime, (X_t, Y_t)^\prime\big)  \bigg| \\
		\le \, &\frac{1}{n} \sum_{t=1}^n\mathds{1}_{\{X_t > \tilde v_t(\widetilde{\vtheta}^v_n)\}} \sup_{\vtheta^c \in \mTheta^c} \left|  \big( \mathds{1}_{\{ Y_t \le  \tilde c_t(\vtheta^c)\}} - \alpha \big) \big( \tilde c_t(\vtheta^c) -  Y_t \big) -  \big( \mathds{1}_{\{ Y_t \le  c_t(\vtheta^c)\}} - \alpha \big) \big( c_t(\vtheta^c) -  Y_t \big) \right| \\
		\le \, &\frac{1}{n} \sum_{t=1}^n  \sup_{\vtheta^c \in \mTheta^c} \left|  \big( \mathds{1}_{\{ Y_t \le  \tilde c_t(\vtheta^c)\}} - \alpha \big) \big( \tilde c_t(\vtheta^c) -  Y_t \big) -  \big( \mathds{1}_{\{ Y_t \le  c_t(\vtheta^c)\}} - \alpha \big) \big( c_t(\vtheta^c) -  Y_t \big) \right| \\
		\le \, &\frac{1}{n} \sum_{t=1}^n  \sup_{\vtheta^c \in \mTheta^c} \max(\alpha, 1-\alpha)  \left| \tilde c_t(\vtheta^c) -  c_t(\vtheta^c) \right| \\
		\le \, &\max(\alpha, 1-\alpha)  \frac{1}{n} \sum_{t=1}^n \sup_{\vtheta^c \in \mTheta^c} \rho^{t-1}  \cdot \left|  c_{1}(\vtheta^c) - \tilde c_{1}(\vtheta^c) \right|\\
		\le \, &\max(\alpha, 1-\alpha) \sup_{\vtheta^c \in \mTheta^c}\left|  c_{1}(\vtheta^c) - \tilde c_{1}(\vtheta^c) \right| \frac{1}{n} \sum_{t=1}^n \rho^{t-1}  \overset{\P}{\longrightarrow} 0,
	\end{align*}
	where, for the third inequality, we have used the Lipschitz-continuity of the tick loss function in its first argument as in \eqref{eq:LipschitzTickLoss}, and the penultimate inequality uses the imposed contraction condition for the CoVaR model.
	
	For the term in \eqref{eq:ProofStartVal11}, we get that for any $\varepsilon > 0$ and $\delta > 0$ small enough,
	\begin{align*}
		&\mathbb{P} \Bigg\{ \sup_{\vtheta^c \in \mTheta^c} \bigg|  \frac{1}{n} \sum_{t=1}^n  S^{\CoVaR}\big((\tilde v_t(\widetilde{\vtheta}^v_n), c_t(\vtheta^c))^\prime, (X_t, Y_t)^\prime\big) - S^{\CoVaR}\big((
		v_t(\widetilde{\vtheta}^v_n),  c_t(\vtheta^c))^\prime, (X_t, Y_t)^\prime\big)  \bigg|  > \varepsilon \Bigg\} \notag\\
		&\le  \mathbb{P} \Bigg\{ \frac{1}{n} \sum_{t=1}^n  \big| \mathds{1}_{\{X_t > \tilde v_t(\widetilde{\vtheta}^v_n)\}}  - \mathds{1}_{\{X_t > v_t(\widetilde{\vtheta}^v_n)\}}  \big|  \sup_{\vtheta^c \in \mTheta^c}  \left| \big( \mathds{1}_{\{ Y_t \le  c_t(\vtheta^c)\}} - \alpha \big) \big( c_t(\vtheta^c) -  Y_t \big)  \right|  > \varepsilon \Bigg\} \notag \\
		&\le \mathbb{P} \Bigg\{ \frac{1}{n} \sum_{t=1}^n  \big| \mathds{1}_{\{X_t > \tilde v_t(\widetilde{\vtheta}^v_n)\}}  - \mathds{1}_{\{X_t > v_t(\widetilde{\vtheta}^v_n)\}}  \big| \sup_{\vtheta^c \in \mTheta^c}  \left| \big( \mathds{1}_{\{ Y_t \le  c_t(\vtheta^c)\}} - \alpha \big) \big( c_t(\vtheta^c) -  Y_t \big)  \right|  > \varepsilon, \\
		&\hspace{2cm}\quad \big\Vert  \widetilde{\vtheta}^v_n - \vtheta_0^v \big\Vert \le \delta \Bigg\} + \mathbb{P} \left\{ \big\Vert  \widetilde{\vtheta}^v_n - \vtheta_0^v \big\Vert > \delta \right\}.
	\end{align*} 
	The second term after the last inequality is $o(1)$ by consistency of $\widetilde{\vtheta}^v_n$, and the first term is smaller or equal to
	\begin{align}		
		&\mathbb{P} \left\{ \frac{1}{n} \sum_{t=1}^n  \sup_{\Vert \vtheta^v - \vtheta_0^v \Vert \le \delta} \big| \mathds{1}_{\{X_t > \tilde v_t(\vtheta^v)\}}  - \mathds{1}_{\{X_t > v_t(\vtheta^v)\}}  \big| \sup_{\vtheta^c \in \mTheta^c}  \left| \big( \mathds{1}_{\{ Y_t \le  c_t(\vtheta^c)\}} - \alpha \big) \big( c_t(\vtheta^c) -  Y_t \big)  \right|  > \varepsilon \right\} \notag \\
		&\quad \le \mathbb{P} \left\{ \frac{1}{n} \sum_{t=1}^n   \sup_{\Vert \vtheta^v - \vtheta_0^v \Vert \le \delta} \big| \mathds{1}_{\{X_t > \tilde v_t(\vtheta^v)\}}  - \mathds{1}_{\{X_t > v_t(\vtheta^v)\}}  \big|   \Big( \sup_{\vtheta^c \in \mTheta^c}  \big| c_t(\vtheta^c) \big| + \big|Y_t \big|  \Big)   > \varepsilon \right\} \notag \\
		&\quad \le \mathbb{P} \left\{ \frac{1}{n} \sum_{t=1}^n   \sup_{\Vert \vtheta^v - \vtheta_0^v \Vert \le \delta} \big| \mathds{1}_{\{X_t > \tilde v_t(\vtheta^v)\}}  - \mathds{1}_{\{X_t > v_t(\vtheta^v)\}}  \big|  C(\mathcal{F}_{t-1})  > \varepsilon/2 \right\} \label{eq:TermUntreated} \\
		&\qquad + \mathbb{P} \left\{ \frac{1}{n} \sum_{t=1}^n   \sup_{\Vert \vtheta^v - \vtheta_0^v \Vert \le \delta} \big| \mathds{1}_{\{X_t > \tilde v_t(\vtheta^v)\}}  - \mathds{1}_{\{X_t > v_t(\vtheta^v)\}}  \big| \cdot \big|Y_t\big|  > \varepsilon/2 \right\}. \label{eq:TermTreated}
	\end{align}
	
	We continue to show how the term in \eqref{eq:TermTreated} can be bounded and already notice that the  term in \eqref{eq:TermUntreated} can be treated in a simplified fashion (without the need to apply H\"{o}lder's inequality below).
	
	For the term in \eqref{eq:TermTreated},	Markov's inequality, H\"{o}lder's inequality (with $r = (1 + \iota)/\iota$ and $s=1+\iota$ for $\iota > 0$ from Assumption~\ref{ass:cons}~\ref{it:mom bounds cons}) and the law of total expectation yield that
	\begin{align}
		\P &\left\{ \frac{1}{n} \sum_{t=1}^n  \sup_{\Vert \vtheta^v - \vtheta_0^v \Vert \le \delta} \big| \mathds{1}_{\{X_t > \tilde v_t(\vtheta^v)\}}  - \mathds{1}_{\{X_t > v_t(\vtheta^v)\}}  \big|  \cdot  \big| Y_t \big|  > \varepsilon/2 \right\} \notag \\
		&\le\frac{2}{\varepsilon} \E \bigg[ \frac{1}{n} \sum_{t=1}^n \big| Y_t \big|  \sup_{\Vert \vtheta^v - \vtheta_0^v \Vert \le \delta} \big| \mathds{1}_{\{X_t > \tilde v_t(\vtheta^v)\}}  - \mathds{1}_{\{X_t > v_t(\vtheta^v)\}}  \big|   \bigg] \notag \\
		&\le\frac{2}{\varepsilon} \frac{1}{n} \sum_{t=1}^n   \big(\mathbb{E}|Y_t|^s  \big)^{1/s} \left(\mathbb{E} \left[ \left( \sup_{\Vert \vtheta^v - \vtheta_0^v \Vert \le \delta} \big| \mathds{1}_{\{X_t > \tilde v_t(\vtheta^v)\}}  - \mathds{1}_{\{X_t > v_t(\vtheta^v)\}}  \big| \right)^r  \right] \right)^{1/r} \notag \\
		&\le\frac{2}{\varepsilon} K^{1/s} \frac{1}{n} \sum_{t=1}^n  \left( \mathbb{E} \left[ \sup_{\Vert \vtheta^v - \vtheta_0^v \Vert \le \delta} \big| \mathds{1}_{\{X_t > \tilde v_t(\vtheta^v)\}}  - \mathds{1}_{\{X_t > v_t(\vtheta^v)\}}  \big| \right]  \right)^{1/r} \notag \\
		&\le \frac{2}{\varepsilon} K^{1/s}  \frac{1}{n} \sum_{t=1}^n \left( \sum_{k=1}^4  \mathbb{E} \left[ \sup_{\Vert \vtheta^v - \vtheta_0^v \Vert \le \delta} \big| \mathds{1}_{\{X_t > \tilde v_t(\vtheta^v)\}}  - \mathds{1}_{\{X_t > v_t(\vtheta^v)\}}  \big|   \;\mathds{1}_{\{\mathfrak{A}_{k,t}\}} \right]  \right)^{1/r},
		\label{eqn:SupInfNew}
	\end{align}
	where we define the four sets
	\begin{align*}
		&\mathfrak{A}_{1,t} :=  \left\{ \inf_{\Vert \vtheta^v - \vtheta_0^v \Vert \le \delta} v_t(\vtheta^v) \le \inf_{\Vert \vtheta^v - \vtheta_0^v \Vert \le \delta} \tilde v_t(\vtheta^v),
		\quad 
		\sup_{\Vert \vtheta^v - \vtheta_0^v \Vert \le \delta} v_t(\vtheta^v) \le \sup_{\Vert \vtheta^v - \vtheta_0^v \Vert \le \delta} \tilde v_t(\vtheta^v)  \right\}, 
		\\
		&\mathfrak{A}_{2,t}  :=  \left\{ \inf_{\Vert \vtheta^v - \vtheta_0^v \Vert \le \delta} v_t(\vtheta^v) \ge \inf_{\Vert \vtheta^v - \vtheta_0^v \Vert \le \delta} \tilde v_t(\vtheta^v),
		\quad 
		\sup_{\Vert \vtheta^v - \vtheta_0^v \Vert \le \delta} v_t(\vtheta^v) \ge \sup_{\Vert \vtheta^v - \vtheta_0^v \Vert \le \delta} \tilde v_t(\vtheta^v)  \right\}, 
		\\
		&\mathfrak{A}_{3,t}  :=  \left\{ \inf_{\Vert \vtheta^v - \vtheta_0^v \Vert \le \delta} v_t(\vtheta^v) \ge \inf_{\Vert \vtheta^v - \vtheta_0^v \Vert \le \delta} \tilde v_t(\vtheta^v),
		\quad 
		\sup_{\Vert \vtheta^v - \vtheta_0^v \Vert \le \delta} v_t(\vtheta^v) \le \sup_{\Vert \vtheta^v - \vtheta_0^v \Vert \le \delta} \tilde v_t(\vtheta^v)  \right\}, 
		\\
		&\mathfrak{A}_{4,t}  :=  \left\{ \inf_{\Vert \vtheta^v - \vtheta_0^v \Vert \le \delta} v_t(\vtheta^v) \le \inf_{\Vert \vtheta^v - \vtheta_0^v \Vert \le \delta} \tilde v_t(\vtheta^v),
		\quad 
		\sup_{\Vert \vtheta^v - \vtheta_0^v \Vert \le \delta} v_t(\vtheta^v) \ge \sup_{\Vert \vtheta^v - \vtheta_0^v \Vert \le \delta} \tilde v_t(\vtheta^v)  \right\},
	\end{align*}
	such that $\Omega = \bigcup_{k=1}^4 \mathfrak{A}_{k,t}$ for all $t \in \mathbb{N}$.

	We now separately treat the four ($k=1,\dots,4$) conditional expectations in \eqref{eqn:SupInfNew}.
	For the expectation involving $\mathfrak{A}_{1,t}$, we define the $\mathcal{F}_{t-1}$-measurable quantities
	\begin{align}
		\label{eq:DefArgMinMax}
		\vtheta^v_\text{min} := \underset{\Vert \vtheta^v - \vtheta_0^v \Vert \le \delta}{\arg\min} \, v_t(\vtheta^v)
		\qquad \text{ and } \qquad
		\vtheta^v_\text{max} := \underset{\Vert \vtheta^v - \vtheta_0^v \Vert \le \delta}{\arg\max} \, \tilde v_t(\vtheta^v).
	\end{align}
	Hence, on $\mathfrak{A}_{1,t}$, we have that
	\begin{align*}
		\sup_{\Vert \vtheta^v - \vtheta_0^v \Vert \le \delta}  \big| \mathds{1}_{\{X_t > \tilde v_t(\vtheta^v)\}}  - \mathds{1}_{\{X_t > v_t(\vtheta^v)\}} \big| 
		&\le \mathds{1}_{\{ v_t(\vtheta^v_\text{min} ) \le X_t \le \tilde v_t(\vtheta^v_\text{max}  ) \}}  \\
		&\le \mathds{1}_{\{ v_t(\vtheta^v_\text{min} ) \le X_t \le v_t(\vtheta^v_\text{max}  ) \}}  
		+ \mathds{1}_{\{ v_t(\vtheta^v_\text{max} ) \le X_t \le \tilde v_t(\vtheta^v_\text{max}  ) \}}.
	\end{align*}
	Thus, for the expectation in \eqref{eqn:SupInfNew} involving $\mathfrak{A}_{1,t}$, we get that
	\begin{multline}\label{eq:ProofDiffMinMax}
		\mathbb{E} \bigg[ \sup_{\Vert \vtheta^v - \vtheta_0^v \Vert \le \delta} \big| \mathds{1}_{\{X_t > \tilde v_t(\vtheta^v)\}}  - \mathds{1}_{\{X_t > v_t(\vtheta^v)\}}  \big|  \;\mathds{1}_{\{\mathfrak{A}_{1,t}\}} \bigg]  \\
		\le	\mathbb{E} \Big[ \mathds{1}_{\{ v_t(\vtheta^v_\text{min} ) \le X_t \le v_t(\vtheta^v_\text{max}  ) \}}   \;\mathds{1}_{\{\mathfrak{A}_{1,t}\}} \Big]
		+ \mathbb{E} \Big[ \mathds{1}_{\{ v_t(\vtheta^v_\text{max} ) \le X_t \le \tilde v_t(\vtheta^v_\text{max}  ) \}} \;\mathds{1}_{\{\mathfrak{A}_{1,t}\}} \Big]. 	
	\end{multline}
	For the first right-hand side term in \eqref{eq:ProofDiffMinMax}, we apply the same arguments as applied for the term $B_{11t}$ in \eqref{eq:(B.11t)} to get that
	\begin{align}
		\mathbb{E} \left[ \mathds{1}_{\{ v_t(\vtheta^v_\text{min} ) \le X_t \le v_t(\vtheta^v_\text{max}  ) \}}   \;\mathds{1}_{\{\mathfrak{A}_{1,t}\}} \right] 
		&= \mathbb{E} \left[ \mathbb{E}_{t-1} \left[ \mathds{1}_{\{ v_t(\vtheta^v_\text{min} ) \le X_t \le v_t(\vtheta^v_\text{max}  ) \}}  \right]  \mathds{1}_{\{\mathfrak{A}_{1,t}\}} \right] \notag\\
		&= \mathbb{E} \left[ \int_{v_t(\vtheta^v_\text{min})}^{v_t(\vtheta^v_\text{max})} f_t^X(x) \mathrm{d}x \,  \mathds{1}_{\{\mathfrak{A}_{1,t}\}} \right] \notag\\
		&\le \mathbb{E} \left[ K \left| v_t(\vtheta^v_\text{min}) - v_t(\vtheta^v_\text{max}) \right|  \;\mathds{1}_{\{\mathfrak{A}_{1,t}\}} \right] \notag\\
		&\le K \, \mathbb{E} \left[ \big\Vert\nabla v_{t}(\vtheta^{v\ast}) \big\Vert \cdot  \Vert \vtheta^v_\text{min} - \vtheta^v_\text{max} \Vert  \;\mathds{1}_{\{\mathfrak{A}_{1,t}\}} \right] \notag\\
		&\le K \, \mathbb{E} \left[ V_1(\mathcal{F}_{t-1} ) \right] \cdot \delta \notag\\
		&\le C \delta,\label{eq:ExpectIndicator1}
	\end{align}
	where $\vtheta^{v\ast}$ is some mean value between $\vtheta^v_\text{min}$ and $\vtheta^v_\text{max}$.
	
	For the second term in \eqref{eq:ProofDiffMinMax}, we get using similar arguments that 
	\begin{align}
		\mathbb{E} \left[ \mathds{1}_{\{ v_t(\vtheta^v_\text{max} ) \le X_t \le \tilde v_t(\vtheta^v_\text{max}  ) \}}   \mathds{1}_{\{\mathfrak{A}_{1,t}\}} \right] 
		&\le \mathbb{E} \left[ K \cdot \big| v_t(\vtheta^v_\text{max}) -  \tilde v_t(\vtheta^v_\text{max})  \big|   \;\mathds{1}_{\{\mathfrak{A}_{1,t}\}} \right] \notag\\
		&\le K \rho^{t-1}    \cdot \mathbb{E} \Big[  \big| v_1(\vtheta^v_\text{max}) -  \tilde v_1(\vtheta^v_\text{max}) \big|  \;\mathds{1}_{\{\mathfrak{A}_{1,t}\}} \Big] \notag\\
		&\le C \rho^{t-1}. \label{eq:ExpectIndicator2}
	\end{align}	
	In the last two inequalities leading to \eqref{eq:ExpectIndicator2}, we used the contraction condition and the assumption that $\mathbb{E} \big[ \sup_{\vtheta^v \in \mTheta^v} | v_1(\vtheta^v)- \tilde v_1(\vtheta^v)| \big] < \infty$.
	
	We now show that the terms involving $\mathfrak{A}_{2,t}$,  $\mathfrak{A}_{3,t}$ and $\mathfrak{A}_{4,t}$ in \eqref{eqn:SupInfNew} result in terms that are of the same order as \eqref{eq:ExpectIndicator1} and \eqref{eq:ExpectIndicator2}.
	
	For the expectation in \eqref{eqn:SupInfNew} pertaining to $\mathfrak{A}_{2,t}$, we essentially interchange $\tilde v_t$ and $v_t$ and redefine
	\begin{align*}
		\vtheta^v_\text{min} := \underset{\Vert \vtheta^v - \vtheta_0^v \Vert \le \delta}{\arg\min} \, \tilde v_t(\vtheta^v)
		\qquad \text{ and } \qquad
		\vtheta^v_\text{max} := \underset{\Vert \vtheta^v - \vtheta_0^v \Vert \le \delta}{\arg\max} \, v_t(\vtheta^v),
	\end{align*}
	such that on the set $\mathfrak{A}_{2,t}$,
	\begin{align*}
		\sup_{\Vert \vtheta^v - \vtheta_0^v \Vert \le \delta}  \big| \mathds{1}_{\{X_t > \tilde v_t(\vtheta^v)\}}  - \mathds{1}_{\{X_t > v_t(\vtheta^v)\}} \big| 
		&\le \mathds{1}_{\{ \tilde v_t(\vtheta^v_\text{min} ) \le X_t \le v_t(\vtheta^v_\text{max}  ) \}}  \\
		&\le \mathds{1}_{\{ \tilde v_t(\vtheta^v_\text{min} ) \le X_t \le v_t(\vtheta^v_\text{min}  ) \}}  
		+ \mathds{1}_{\{ v_t(\vtheta^v_\text{min} ) \le X_t \le  v_t(\vtheta^v_\text{max}  ) \}}.
	\end{align*}
	The remaining steps of the proof apply just as in the first case treated in detail above.
	
	For the third expectation in \eqref{eqn:SupInfNew} that involves $\mathfrak{A}_{3,t}$, we define the $\mathcal{F}_{t-1}$-measurable quantities
	\begin{align*}
		\vtheta^v_\text{min} := \underset{\Vert \vtheta^v - \vtheta_0^v \Vert \le \delta}{\arg\min} \, \tilde v_t(\vtheta^v)
		\qquad \text{ and } \qquad
		\vtheta^v_\text{max} := \underset{\Vert \vtheta^v - \vtheta_0^v \Vert \le \delta}{\arg\max} \, \tilde v_t(\vtheta^v).
	\end{align*}
	Then, on the set $\mathfrak{A}_{3,t}$,
	\begin{align*}
		&\sup_{\Vert \vtheta^v - \vtheta_0^v \Vert \le \delta}  \big| \mathds{1}_{\{X_t > \tilde v_t(\vtheta^v)\}}  - \mathds{1}_{\{X_t > v_t(\vtheta^v)\}} \big| 
		\le \mathds{1}_{\{ \tilde v_t(\vtheta^v_\text{min} ) \le X_t \le \tilde v_t(\vtheta^v_\text{max}  ) \}}  \\
		&\qquad \qquad \le \mathds{1}_{\{ \tilde v_t(\vtheta^v_\text{min} ) \le X_t \le v_t(\vtheta^v_\text{min}  ) \}}  
		+ \mathds{1}_{\{ v_t(\vtheta^v_\text{min} ) \le X_t \le v_t(\vtheta^v_\text{max}  ) \}}
		+ \mathds{1}_{\{ v_t(\vtheta^v_\text{max} ) \le X_t \le \tilde v_t(\vtheta^v_\text{max}  ) \}},
	\end{align*}
	and the proof is as in the two previous cases, noting that the first and third indicator function above can both be treated as the term in \eqref{eq:ExpectIndicator2}, and the second indicator as the term in \eqref{eq:ExpectIndicator1}.
	
	Finally, for the fourth expectation in \eqref{eqn:SupInfNew} pertaining to $\mathfrak{A}_{4,t}$, we define the $\mathcal{F}_{t-1}$-measurable quantities
	\begin{align*}
		\vtheta^v_\text{min} := \underset{\Vert \vtheta^v - \vtheta_0^v \Vert \le \delta}{\arg\min} \,  v_t(\vtheta^v)
		\qquad \text{ and } \qquad
		\vtheta^v_\text{max} := \underset{\Vert \vtheta^v - \vtheta_0^v \Vert \le \delta}{\arg\max} \, v_t(\vtheta^v).
	\end{align*}
	This simplifies the proof as then, on the set $\mathfrak{A}_{4,t}$,
	\begin{align*}
		&\sup_{\Vert \vtheta^v - \vtheta_0^v \Vert \le \delta}  \big| \mathds{1}_{\{X_t > \tilde v_t(\vtheta^v)\}}  - \mathds{1}_{\{X_t > v_t(\vtheta^v)\}} \big| 
		\le \mathds{1}_{\{ v_t(\vtheta^v_\text{min} ) \le X_t \le v_t(\vtheta^v_\text{max}  ) \}}.
	\end{align*}
	Therefore, the same arguments as above can be applied.
	
	Hence, by using \eqref{eq:ExpectIndicator1} and \eqref{eq:ExpectIndicator2} (together with the equivalent terms arising for $\mathfrak{A}_{2,t}$, $\mathfrak{A}_{3,t}$ and $\mathfrak{A}_{4,t}$) for the term in \eqref{eqn:SupInfNew}, we get that
	\begin{align*}
		&\frac{2}{\varepsilon} K^{1/s}  \frac{1}{n} \sum_{t=1}^n \Bigg( \sum_{k=1}^4  \mathbb{E} \bigg[ \sup_{\Vert \vtheta^v - \vtheta_0^v \Vert \le \delta} \big| \mathds{1}_{\{X_t > \tilde v_t(\vtheta^v)\}}  - \mathds{1}_{\{X_t > v_t(\vtheta^v)\}}  \big|  \;\mathds{1}_{\{\mathfrak{A}_{k,t}\}} \bigg]  \Bigg)^{1/r} \\
		&\quad\le  C \frac{1}{n} \sum_{t=1}^n \Big( C \delta +  C \rho^{t-1}  \Big)^{1/r} \\
		&\quad\le  C \frac{1}{n} \sum_{t=1}^n  \Big(C^{1/r} \delta^{1/r} +  C^{1/r} \rho^{(t-1)/r} \Big) \\
		&\quad\le  C  \delta^{1/r} +  C  \frac{1}{n} \sum_{t=1}^n \rho^{(t-1)/r} .
	\end{align*}
	
	The first term is bounded by (a function of) the arbitrarily small $\delta$ and the second term above converges to zero as the geometric sum $\lim_{n\to \infty} \sum_{t=1}^n \rho^{(t-1)/r}  = \lim_{n\to \infty} \sum_{t=1}^n \big( \rho^{1/r} \big)^{t-1}  < \infty$ is bounded, since $\rho^{1/r} < 1$ for $1/r >0$. 
	This implies that \eqref{eqn:SupInfNew} is $o(1)$. 
	A simplified line of arguments shows that the term \eqref{eq:TermUntreated} is also is $o(1)$, such that the term in  \eqref{eq:ProofStartVal11} is $o_\mathbb{P}(1)$.
	This verifies the ULLN assumption of Lemma~2.2 in \citet{Whi80}. 
	Because the remaining assumptions of that lemma follow as in Theorem~\ref{thm:cons}, the proof is concluded.
\end{proof}

\begin{proof}[{\textbf{Proof of Proposition~\ref{prop:ARMAclass}:}}]
	For models of the form \eqref{eqn:LinearModels}, the matrix\
	\[
	\mB=\begin{pmatrix}B_{11} & B_{12}\\ B_{21} & B_{22}\end{pmatrix}=\begin{pmatrix}B_{11} & 0\\0 & B_{22}\end{pmatrix}
	\]
	is diagonal, because in \eqref{eqn:GeneralModel} $v_t(\cdot)$ is a function of $\vtheta^v$ only and $c_t(\cdot)$ is a function of $\vtheta^c$ only.
	Since $\mB$ is diagonal, the eigenvalues are simply the entries on the main diagonal.
	Therefore, the spectral radius $\rho(\mB)$ (i.e., the largest eigenvalue of $\mB$ in absolute terms) is given by $\rho(\mB)=\max\big\{|B_{11}|,|B_{22}|\big\}$, which is smaller than one by assumption.
	
	Direct calculations for models of the form \eqref{eqn:LinearModels} yield that
	\begin{align*}
		\begin{pmatrix} v_t(\vtheta^v) \\ c_t(\vtheta^c) \end{pmatrix} - \begin{pmatrix} \tilde v_t(\vtheta^v) \\ \tilde c_t(\vtheta^c) \end{pmatrix} 
		= \mB \left[ \begin{pmatrix} v_{t-1}(\vtheta^v) \\ c_{t-1}(\vtheta^c) \end{pmatrix} -  \begin{pmatrix} \tilde v_{t-1}(\vtheta^v) \\ \tilde c_{t-1}(\vtheta^c) \end{pmatrix} \right], 
	\end{align*} 	
	such that the contraction condition for the VaR and CoVaR model holds with $\rho = \max\big\{|B_{11}|,|B_{22}|\big\}$.
\end{proof}

\section{Proof of Theorem~\ref{thm:an}}
\label{sec:thm2}

We split the proof of Theorem~\ref{thm:an} into two parts. In Section~\ref{Asymptotic Normality of the VaR Parameter Estimator}, we prove the asymptotic normality of $\widehat{\vtheta}_n^v$, and in Section~\ref{Asymptotic Normality of the CoVaR Parameter Estimator} the joint convergence of $\widehat{\vtheta}_n^{v}$ and $\widehat{\vtheta}_n^{c}$. 

\subsection{Asymptotic Normality of the VaR Parameter Estimator}\label{Asymptotic Normality of the VaR Parameter Estimator}

The proof of the asymptotic normality of $\widehat{\vtheta}_n^v$ follows closely that of Theorem~2 in \citet{EM04}, although some of our assumptions differ from theirs. Our main motivation for detailing the proof is that some of the subsequent results are needed to prove the asymptotic normality of $\widehat{\vtheta}_n^c$ in Section~\ref{Asymptotic Normality of the CoVaR Parameter Estimator}.

Before giving the formal proof, define 
\begin{equation}\label{eq:gt prev}
	\Vv_t(\vtheta^v) := \nabla v_t(\vtheta^v)\big[\1_{\{X_t\leq v_t(\vtheta^v)\}}-\beta\big].
\end{equation}
This quantity may be interpreted as the derivative of the VaR score with respect to the model parameters, because for $v\neq x$,
\begin{equation*}
	\frac{\partial}{\partial v}S^{\VaR}(v,x) = \big[\1_{\{x\leq v\}}-\beta\big],
\end{equation*}
such that by the chain rule $\Vv_t(\vtheta^v)=\frac{\partial}{\partial \vtheta^v}S^{\VaR}(v_t(\vtheta^v),X_t)$ for $v_t(\vtheta^v)\neq X_t$.

Similarly as in \citet{EM04} and \citet{PZC19}, the key step in the proof is to apply a mean value expansion and Lemma~A.1 of \citet{Wei91} to prove the following result.

\begin{lemV}\label{lem:1}
	Suppose Assumptions~\ref{ass:cons} and \ref{ass:an} hold. Then, as $n\to\infty$,
	\[
	\sqrt{n}\big(\widehat{\vtheta}_n^{v}-\vtheta_0^v\big) = \Big[\mLambda_n^{-1}+o_{\P}(1)\Big]\bigg[-\frac{1}{\sqrt{n}}\sum_{t=1}^{n}\Vv_{t}(\vtheta_0^v) + o_{\P}(1)\bigg].
	\]
\end{lemV}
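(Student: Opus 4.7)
The plan is to establish a Bahadur-type expansion for $\widehat{\vtheta}_n^v$ by adapting the well-known strategy of \citet{Wei91}, \citet{EM04}, and \citet{PZC19} to our setting. The starting point is the consistency result $\widehat{\vtheta}_n^v \overset{\P}{\to} \vtheta_0^v$ from Theorem~\ref{thm:cons} combined with Assumption~\ref{ass:an}~\ref{it:int}, which guarantees that with probability approaching one, $\widehat{\vtheta}_n^v \in \inter(\mTheta^v)$. Although $S^{\VaR}(\cdot,x)$ is only piecewise differentiable, an approximate first-order condition still holds: applying standard convex-analysis arguments to the subgradient of $Q_n^v(\vtheta^v) = n^{-1}\sum_{t=1}^n S^{\VaR}(v_t(\vtheta^v),X_t)$, and using Assumption~\ref{ass:an}~\ref{it:eq bound} to bound the finite number of ``ties'' $\{X_t = v_t(\widehat{\vtheta}_n^v)\}$, I would show that $n^{-1/2}\sum_{t=1}^n \Vv_t(\widehat{\vtheta}_n^v) = o_{\P}(1)$.

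The main work is then to decompose
\[
	\frac{1}{\sqrt{n}}\sum_{t=1}^n \Vv_t(\widehat{\vtheta}_n^v) = \frac{1}{\sqrt{n}}\sum_{t=1}^n \Vv_t(\vtheta_0^v) + \frac{1}{\sqrt{n}}\sum_{t=1}^n\big[\Vv_t(\widehat{\vtheta}_n^v) - \Vv_t(\vtheta_0^v)\big]
\]
and analyze the second term via a stochastic equicontinuity argument in the spirit of Lemma~A.1 of \citet{Wei91}. I would first compute the conditional mean using the LIE, $\E_{t-1}[\Vv_t(\vtheta^v)] = \nabla v_t(\vtheta^v)\big[F_t^X(v_t(\vtheta^v))-\beta\big]$, noting that this vanishes at $\vtheta_0^v$ by correct specification. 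Then, via a first-order Taylor expansion of $F_t^X\circ v_t(\cdot)$ around $\vtheta_0^v$ and the Lipschitz property of $f_t^X$ from Assumption~\ref{ass:an}~\ref{it:Lipschitz}, together with the moment bounds on $V_1,V_2$ in Assumptions~\ref{ass:cons}~\ref{it:bound}--\ref{it:mom bounds cons} and \ref{ass:an}~\ref{it:bound2.1}--\ref{it:mom bounds cons2}, I would show for any $\vtheta$ in a shrinking $n^{-1/2}$-neighborhood of $\vtheta_0^v$ that
\[
	\frac{1}{n}\sum_{t=1}^n \E_{t-1}\big[\Vv_t(\vtheta) - \Vv_t(\vtheta_0^v)\big] = \mLambda(\vtheta-\vtheta_0^v) + o_{\P}\big(\|\vtheta-\vtheta_0^v\|\big) + o_{\P}\big(n^{-1/2}\big),
\]
where $\mLambda$ is as defined in Assumption~\ref{ass:an}~\ref{it:pd}.

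The hard part will be controlling the martingale-difference remainder
\[
	R_n(\vtheta) := \frac{1}{\sqrt{n}}\sum_{t=1}^n\Big\{\big[\Vv_t(\vtheta)-\Vv_t(\vtheta_0^v)\big]-\E_{t-1}\big[\Vv_t(\vtheta)-\Vv_t(\vtheta_0^v)\big]\Big\},
\]
uniformly over a $n^{-1/2}$-neighborhood of $\vtheta_0^v$. A standard chaining/bracketing argument would bound the conditional second moment of a single summand by something of order $\|\vtheta-\vtheta_0^v\|$ (using Assumption~\ref{ass:an}~\ref{it:bound cdf} and the Lipschitz property of $F_t^X$), so that the $\alpha$-mixing condition of Assumption~\ref{ass:an}~\ref{it:mixing} combined with an appropriate maximal inequality yields $\sup_{\|\vtheta-\vtheta_0^v\|\leq \delta_n}\|R_n(\vtheta)\| = o_{\P}(1)$ for any $\delta_n\downarrow 0$. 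The positive definiteness of $\mLambda$ from Assumption~\ref{ass:an}~\ref{it:pd} permits inversion, and combining the approximate FOC with the expansion yields
\[
	o_{\P}(1) = \frac{1}{\sqrt{n}}\sum_{t=1}^n \Vv_t(\vtheta_0^v) + \mLambda\sqrt{n}\big(\widehat{\vtheta}_n^v - \vtheta_0^v\big) + o_{\P}(1),
\]
which rearranges to the stated representation. A standard byproduct of this argument is $\sqrt{n}(\widehat{\vtheta}_n^v - \vtheta_0^v) = O_{\P}(1)$, which justifies restricting $\vtheta$ to the $n^{-1/2}$-neighborhood \emph{a posteriori}.
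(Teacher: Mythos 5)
Your overall architecture coincides with the paper's: an approximate first-order condition $n^{-1/2}\sum_{t=1}^{n}\Vv_t(\widehat{\vtheta}_n^v)=o_{\P}(1)$ obtained from one-sided directional derivatives of the tick loss together with the tie bound of Assumption~\ref{ass:an}~\ref{it:eq bound} (this is the paper's Lemma~V.\ref{lem:2}), a mean-value expansion of the expected score around $\vtheta_0^v$ that produces $\mLambda$, and a \citet{Wei91}-type stochastic equicontinuity step to pass from $\sum_t\Vv_t(\widehat{\vtheta}_n^v)$ to $\sum_t\Vv_t(\vtheta_0^v)$ plus the expected-score term. One cosmetic difference: the paper expands the \emph{unconditional} mean $\vlambda_n(\vtheta)=n^{-1}\sum_t\E[\Vv_t(\vtheta)]$, whose derivative at $\vtheta_0^v$ equals $\mLambda$ exactly by stationarity, whereas you expand the random conditional mean $\E_{t-1}[\Vv_t(\vtheta)]$; identifying your linear term with $\mLambda$ then needs an additional law of large numbers for $n^{-1}\sum_t f_t^X\big(v_t(\vtheta_0^v)\big)\nabla v_t(\vtheta_0^v)\nabla^\prime v_t(\vtheta_0^v)$, which holds under stationarity and mixing but should be stated.

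The one step that would fail as written is your localization. You control the remainder $R_n(\vtheta)$ only over an $n^{-1/2}$-neighborhood of $\vtheta_0^v$ and justify this ``a posteriori'' by the $\sqrt{n}$-rate that the expansion itself is supposed to deliver; that is circular, since at the moment you must evaluate $R_n$ at $\widehat{\vtheta}_n^v$ you only know consistency, not the rate, so you cannot restrict attention to an $n^{-1/2}$-ball. This is precisely the difficulty that the self-normalization in Lemma~A.1 of \citet{Wei91} resolves: the uniform bound is taken over a \emph{fixed} $d_0$-neighborhood, with the denominator $1+\sqrt{n}\,\Vert\vlambda_n(\vtheta)\Vert$ absorbing the growth of the increments away from $\vtheta_0^v$, and combined with the approximate FOC it delivers the rate and the expansion simultaneously. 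To close the gap you should either verify conditions (N1)--(N5) of that lemma explicitly --- as the paper does in Lemmas~V.\ref{lem:3}--V.\ref{lem:6}, via the bounds on $\mu_t(\vtheta,d)$ and the moment conditions of Assumption~\ref{ass:an}~\ref{it:mom bounds cons2} --- or run an explicit two-stage argument that first establishes $\sqrt{n}\big(\widehat{\vtheta}_n^v-\vtheta_0^v\big)=O_{\P}(1)$ from a normalized uniform bound before localizing.
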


\begin{proof}
	See Section~\ref{add results VaR norm}.
\end{proof}

Note that $\mLambda_n^{-1}$ in Lemma~V.\ref{lem:1} exists due to Assumption~\ref{ass:an}~\ref{it:pd}.

\begin{lemV}\label{lem:7}
	Suppose Assumptions~\ref{ass:cons} and \ref{ass:an} hold. Then, as $n\to\infty$,
	\[
	n^{-1/2}\mV_n^{-1/2}\sum_{t=1}^{n}\Vv_{t}(\vtheta_0^v)\overset{d}{\longrightarrow}N(\vzeros,\mI),
	\]
	where $\mV_n=\frac{1}{n}\sum_{t=1}^{n}\E\big[ \Vv_{t}(\vtheta_0^v)\Vv_{t}^\prime(\vtheta_0^v)\big]=\beta(1-\beta)\frac{1}{n}\sum_{t=1}^{n}\E\big[\nabla v_t(\vtheta_0^v)\nabla^\prime v_t(\vtheta_0^v)\big]$.
\end{lemV}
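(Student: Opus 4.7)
The plan is to recognize $\{\Vv_t(\vtheta_0^v),\mathcal{F}_t\}$ as a strictly stationary vector martingale difference sequence with finite second moment $\mV$, and then to apply the Cramér--Wold device in combination with a central limit theorem for $\alpha$-mixing sequences from Assumption~\ref{ass:an}~\ref{it:mixing}. The MDS property will cause the long-run variance to collapse to the one-step variance, which will exactly yield $\mV$ as the covariance in the limit.

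The first step is to establish the MDS property and compute the second moment. Correct specification (Assumption~\ref{ass:cons}~\ref{it:id}) gives $v_t(\vtheta_0^v)=\VaR_{t,\beta}$, and the continuity of the conditional distribution of $X_t$ (Assumption~\ref{ass:cons}~\ref{it:cond dist}) implies $\E_{t-1}\!\big[\1_{\{X_t\leq v_t(\vtheta_0^v)\}}-\beta\big]=0$ and $\E_{t-1}\!\big[(\1_{\{X_t\leq v_t(\vtheta_0^v)\}}-\beta)^2\big]=\beta(1-\beta)$. Since $\nabla v_t(\vtheta_0^v)$ is $\mathcal{F}_{t-1}$-measurable (Assumptions~\ref{ass:cons}~\ref{it:smooth}, \ref{it:diff c}), the law of iterated expectations yields $\E[\Vv_t(\vtheta_0^v)]=\vzeros$ and
\[
\E\bigl[\Vv_t(\vtheta_0^v)\Vv_t^\prime(\vtheta_0^v)\bigr]=\beta(1-\beta)\,\E\bigl[\nabla v_t(\vtheta_0^v)\nabla^\prime v_t(\vtheta_0^v)\bigr]=\mV,
\]
which is finite by Assumption~\ref{ass:an}~\ref{it:mom bounds cons2}.

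Next, I would invoke the Cramér--Wold device: for any fixed $\vlambda\in\mathbb{R}^p$, set $W_t:=\vlambda^\prime\Vv_t(\vtheta_0^v)$. Because $W_t$ is a fixed measurable function of the stationary (Assumption~\ref{ass:cons}~\ref{it:str stat}) $\alpha$-mixing vector listed in Assumption~\ref{ass:an}~\ref{it:mixing}, the scalar sequence $\{W_t\}$ inherits strict stationarity and $\alpha$-mixing with mixing coefficients no larger than those of the generating process, hence satisfying $\sum_m \alpha_W^{(\tilde q-2)/\tilde q}(m)<\infty$ for the same $\tilde q>2$. Hölder's inequality together with Assumption~\ref{ass:an}~\ref{it:mom bounds cons2} give $\E|W_t|^{\tilde q}\leq \|\vlambda\|^{\tilde q}\,\E[V_1^{\tilde q}(\mathcal{F}_{t-1})]<\infty$, so that a standard CLT for strictly stationary $\alpha$-mixing sequences (e.g., Theorem~5.20 of White, 2001) yields $n^{-1/2}\sum_{t=1}^n W_t \overset{d}{\longrightarrow} N(0,\sigma_W^2)$, where $\sigma_W^2$ denotes the long-run variance. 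Crucially, the MDS property implies that all autocovariances $\Cov(W_1,W_{1+k})$ vanish for $k\geq 1$, so $\sigma_W^2=\Var(W_1)=\vlambda^\prime\mV\vlambda$, and Cramér--Wold delivers the multivariate conclusion.

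The main obstacle, albeit a mild one, is matching the moment and mixing-rate requirements of the chosen CLT with the available assumptions: Assumption~\ref{ass:an}~\ref{it:mom bounds cons2} is calibrated to provide $(3+\iota)$-th moments of $\|\nabla v_t(\vtheta_0^v)\|$, which is precisely what is needed to combine with the mixing-rate summability condition of Assumption~\ref{ass:an}~\ref{it:mixing}. Once this bookkeeping is done, the argument is routine because the MDS structure trivialises the long-run variance computation.
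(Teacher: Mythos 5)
Your proposal is correct and follows essentially the same route as the paper: Cramér--Wold reduction, the mixing CLT of White (2001, Theorem~5.20), and the law of iterated expectations to show the variance equals $\beta(1-\beta)\E[\nabla v_t(\vtheta_0^v)\nabla^\prime v_t(\vtheta_0^v)]$; your explicit remark that the martingale-difference structure collapses the long-run variance to the one-step variance simply makes explicit what the paper leaves implicit. The only quibble is that the required moment bound is $\E\norm{\Vv_t(\vtheta_0^v)}^{2+\iota}\leq C$ (the paper's Lemma~V.6, which follows from $\E[V_1^{3+\iota}(\mathcal{F}_{t-1})]\leq K$), not a $\tilde q$-th moment, but this does not affect the argument.
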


\begin{proof}
	See Section~\ref{add results VaR norm}.
\end{proof}

\begin{proof}[{\textbf{Proof of Theorem~\ref{thm:an}:}}]
	In this first part of the proof, we show asymptotic normality of $\widehat{\vtheta}_n^v$. From Lemma~V.\ref{lem:1}, we have the expansion
	\[
	\sqrt{n}\big(\widehat{\vtheta}_n^{v}-\vtheta_0^v\big) = \Big[\mLambda_n^{-1}+o_{\P}(1)\Big]\bigg[-\frac{1}{\sqrt{n}}\sum_{t=1}^{n}\Vv_{t}(\vtheta_0^v) + o_{\P}(1)\bigg].
	\]
	From this, Slutzky's theorem and Lemma~V.\ref{lem:7}, we then obtain that, as $n\to\infty$,
	\[
	\sqrt{n}\mV_n^{-1/2}\mLambda_n\big(\widehat{\vtheta}_n^{v}-\vtheta_0^v\big)\overset{d}{\longrightarrow}N(\vzeros,\mI),
	\]
	which is the claimed result for the VaR parameter estimator.
\end{proof}

\subsection{Joint Asymptotic Normality of the Parameter Estimators}\label{Asymptotic Normality of the CoVaR Parameter Estimator}

Showing asymptotic normality of $\widehat{\vtheta}_n^c$ requires some further preliminary notation and lemmas. To see the analogy to the proof of the asymptotic normality of $\widehat{\vtheta}_n^v$ clearer, we label the lemmas as Lemma~C.\ref{lem:1 tilde}, C.\ref{lem:7 tilde}, etc.

Define
\begin{equation}\label{eq:(12+) prev}
	\Cc_{t}(\vtheta^c, \vtheta^v) := \1_{\{X_t> v_t(\vtheta^v)\}}\nabla c_t(\vtheta^c)\big[\1_{\{Y_t\leq c_t(\vtheta^c)\}}-\alpha\big].
\end{equation}
Similarly as above, this quantity may be viewed as the derivative of the CoVaR score with respect to the CoVaR parameters, since for $c\neq y$,
\begin{equation*}
	\frac{\partial}{\partial c}S^{\CoVaR}\big((v,c)^\prime,(x,y)^\prime\big) = \1_{\{x> v\}}\big[\1_{\{y\leq c\}}-\alpha\big]
\end{equation*}
and, therefore, $\Cc_{t}(\vtheta^c, \vtheta^v)=\frac{\partial}{\partial \vtheta^c}S^{\CoVaR}\big((v_t(\vtheta^v), c_t(\vtheta^c))^\prime,(X_t,Y_t)^\prime\big)$ for $c_t(\vtheta^c)\neq Y_t$ by the chain rule.

The crucial step in the proof is once again to apply a mean value expansion and Lemma~A.1 of \citet{Wei91} to prove the following lemma:

\begin{lemC}\label{lem:1 tilde}
	Suppose Assumptions~\ref{ass:cons} and \ref{ass:an} hold. Then, as $n\to\infty$,
	\begin{multline*}
		\sqrt{n}\big(\widehat{\vtheta}_n^{c}-\vtheta_0^c\big) = \big[\mLambda_{n,(1)}^{-1}\mLambda_{n,(2)}\mLambda_n^{-1}+o_{\P}(1)\big]\bigg[\frac{1}{\sqrt{n}}\sum_{t=1}^{n}\Vv_{t}(\vtheta_0^v) + o_{\P}(1)\bigg]\\ - \big[\mLambda_{n,(1)}^{-1}+o_{\P}(1)\big]\bigg[\frac{1}{\sqrt{n}}\sum_{t=1}^{n}\Cc_{t}(\vtheta_{0,n}^c, \widehat{\vtheta}_n^v) + o_{\P}(1)\bigg],
	\end{multline*}
	where $\vtheta_{0,n}^c$ is defined in the proof of Lemma~C.\ref{lem:1 tilde}.
\end{lemC}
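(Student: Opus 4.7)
My plan is to follow the strategy of Lemma~V.\ref{lem:1}, combining \citet{Wei91}'s Lemma~A.1 with a mean-value expansion of the conditional expectation of the score, but carefully adapted to the two-step nature of the CoVaR score $\Cc_t(\vtheta^c,\vtheta^v)$, which is discontinuous in \emph{both} $\vtheta^c$ and $\vtheta^v$. The starting point is the approximate first-order condition $\frac{1}{\sqrt{n}}\sum_{t=1}^n \Cc_t(\widehat{\vtheta}_n^c,\widehat{\vtheta}_n^v)=o_P(1)$, which holds because $\widehat{\vtheta}_n^c\in\inter(\mTheta^c)$ by Assumption~\ref{ass:an}\,\ref{it:int} and the only obstruction---ties $\{Y_t=c_t(\widehat{\vtheta}_n^c)\}$---is negligible under the tie-count bound in~\ref{it:eq bound} combined with the moment bound on $C_1(\mathcal{F}_{t-1})$ in~\ref{it:mom bounds cons2}.

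Next, I would apply \citet{Wei91}'s Lemma~A.1 to the empirical process indexed by $(\vtheta^c,\vtheta^v)$ in a shrinking neighborhood of $(\vtheta_0^c,\vtheta_0^v)$, into which $(\widehat{\vtheta}_n^c,\widehat{\vtheta}_n^v)$ falls with probability tending to one by Theorem~\ref{thm:cons}. This should deliver the stochastic-equicontinuity statement that the score evaluated at the estimators equals, up to $o_P(1)$, the score at an intermediate deterministic sequence $(\vtheta_{0,n}^c,\vtheta_0^v)$---with $\vtheta_{0,n}^c\to\vtheta_0^c$ introduced to meet Wei's hypotheses---plus the conditional-expectation displacement $\frac{1}{\sqrt{n}}\sum_{t=1}^n \E_{t-1}[\Cc_t(\widehat{\vtheta}_n^c,\widehat{\vtheta}_n^v)-\Cc_t(\vtheta_{0,n}^c,\vtheta_0^v)]$.

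Now I would Taylor-expand this displacement, separately in $\vtheta^c$ around $\vtheta_{0,n}^c$ and in $\vtheta^v$ around $\vtheta_0^v$. Using the explicit formula $\E_{t-1}[\Cc_t(\vtheta^c,\vtheta^v)]=\nabla c_t(\vtheta^c)\big\{\P_{t-1}(X_t>v_t(\vtheta^v),\,Y_t\le c_t(\vtheta^c))-\alpha[1-F_t^X(v_t(\vtheta^v))]\big\}$, differentiating and evaluating at the truth recovers exactly the matrices $\mLambda_{(1)}$ and $\mLambda_{(2)}$ from Assumption~\ref{ass:an}\,\ref{it:pd}, while the Lipschitz and density bounds in~\ref{it:Lipschitz}--\ref{it:bound cdf} control the remainders. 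Combining with the FOC yields $\mLambda_{(1)}\sqrt{n}(\widehat{\vtheta}_n^c-\vtheta_{0,n}^c) = -\frac{1}{\sqrt{n}}\sum_{t=1}^n \Cc_t(\vtheta_{0,n}^c,\widehat{\vtheta}_n^v) - \mLambda_{(2)}\sqrt{n}(\widehat{\vtheta}_n^v-\vtheta_0^v) + o_P(1)$, and substituting the expansion $\sqrt{n}(\widehat{\vtheta}_n^v-\vtheta_0^v)=-\mLambda^{-1}\frac{1}{\sqrt{n}}\sum_{t=1}^n \Vv_t(\vtheta_0^v)+o_P(1)$ from Lemma~V.\ref{lem:1} delivers the claim, with $\vtheta_{0,n}^c$ chosen so that $\vtheta_{0,n}^c-\vtheta_0^c=o_P(n^{-1/2})$, rendering the distinction between $\widehat{\vtheta}_n^c-\vtheta_0^c$ and $\widehat{\vtheta}_n^c-\vtheta_{0,n}^c$ asymptotically inconsequential.

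The main technical obstacle will be the stochastic-equicontinuity step: because $\Cc_t$ involves a \emph{product} of two indicator discontinuities---one in $v_t(\vtheta^v)$ and one in $c_t(\vtheta^c)$---the bracketing argument must bound the joint probability that $(X_t,Y_t)$ sits simultaneously close to $(v_t(\vtheta_0^v),c_t(\vtheta_0^c))$, a complication absent from the univariate case treated in Lemma~V.\ref{lem:1}. This is where the joint-density bound in~\ref{it:bound cdf}, the Lipschitz regularity of $\partial_i F_t$ in~\ref{it:Lipschitz}, and the $\alpha$-mixing condition in~\ref{it:mixing} all play essential roles. A related subtlety is that the $\mLambda_{(2)}$ term---capturing the leakage of first-step estimation error into the CoVaR score---emerges naturally from the $\vtheta^v$-derivative of $\E_{t-1}[\Cc_t]$, and it is this contribution that ultimately inflates the asymptotic variance of $\widehat{\vtheta}_n^c$ relative to an oracle that knows $\vtheta_0^v$.
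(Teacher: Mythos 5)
Your overall architecture is the same as the paper's: the approximate first-order condition, Weiss's Lemma~A.1 applied around a pseudo-true value $\vtheta_{0,n}^c$ solving $\frac{1}{n}\sum_{t=1}^{n}\E\big[\Cc_t(\vtheta_{0,n}^c,\widehat{\vtheta}_n^v)\big]=\vzeros$, a mean-value expansion of the expected score producing $\mLambda_{(1)}$ and $\mLambda_{(2)}$, and substitution of the first-step Bahadur representation from Lemma~V.\ref{lem:1}. The paper additionally makes the existence of $\vtheta_{0,n}^c$ precise via the implicit function theorem, which your sketch leaves implicit, and note that $\vtheta_{0,n}^c=\vtheta_n^c(\widehat{\vtheta}_n^v)$ is a \emph{random} sequence, not a deterministic one as you describe it.

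There is, however, a genuine error in your bookkeeping around $\vtheta_{0,n}^c$. You assert that $\vtheta_{0,n}^c-\vtheta_0^c=o_{\P}(n^{-1/2})$, so that centering at $\vtheta_{0,n}^c$ versus $\vtheta_0^c$ is asymptotically inconsequential. This is false: by the implicit function theorem, $\nabla\vtheta_n^c(\vtheta_0^v)=-\mLambda_{(1)}^{-1}\mLambda_{(2)}$, which has full rank by Assumption~\ref{ass:an}~\ref{it:pd}, so that $\vtheta_{0,n}^c-\vtheta_0^c=-\mLambda_{(1)}^{-1}\mLambda_{(2)}(\widehat{\vtheta}_n^v-\vtheta_0^v)+o_{\P}(n^{-1/2})$ is exactly $O_{\P}(n^{-1/2})$ with a nondegenerate limit; indeed this shift is precisely (one manifestation of) the first-stage variance inflation $\mLambda_{(1)}^{-1}\mLambda_{(2)}\mLambda^{-1}\mV\mLambda^{-1}\mLambda_{(2)}^\prime\mLambda_{(1)}^{-1}$ discussed after Theorem~\ref{thm:an}, so it cannot be negligible. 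Relatedly, your displayed intermediate equation is internally inconsistent: with $\sqrt{n}(\widehat{\vtheta}_n^c-\vtheta_{0,n}^c)$ on the left and the score sum evaluated at $(\vtheta_{0,n}^c,\widehat{\vtheta}_n^v)$ on the right, the correct relation is $\mLambda_{(1)}\sqrt{n}(\widehat{\vtheta}_n^c-\vtheta_{0,n}^c)=-n^{-1/2}\sum_{t=1}^{n}\Cc_t(\vtheta_{0,n}^c,\widehat{\vtheta}_n^v)+o_{\P}(1)$ with \emph{no} $\mLambda_{(2)}$ term, because $\vlambda_n(\vtheta_{0,n}^c,\widehat{\vtheta}_n^v)=\vzeros$ already absorbs the first-stage displacement; the $-\mLambda_{(2)}\sqrt{n}(\widehat{\vtheta}_n^v-\vtheta_0^v)$ correction belongs with the centering $\widehat{\vtheta}_n^c-\vtheta_0^c$ (as in the paper's expansion of $\vlambda_n$ around $(\vtheta_0^c,\vtheta_0^v)$). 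As written, your two mistakes cancel and you land on the correct final formula, but the derivation does not hold up: either expand around $(\vtheta_0^c,\vtheta_0^v)$ and keep $\widehat{\vtheta}_n^c-\vtheta_0^c$ throughout, or expand around $(\vtheta_{0,n}^c,\widehat{\vtheta}_n^v)$ and then convert using the exact $O_{\P}(n^{-1/2})$ expression for $\vtheta_{0,n}^c-\vtheta_0^c$.
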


\begin{proof}
	See Section~\ref{add results CoVaR norm}.
\end{proof}

\begin{lemC}\label{lem:7 tilde}
	Suppose Assumptions~\ref{ass:cons} and \ref{ass:an} hold. Then, as $n\to\infty$,
	\[
	n^{-1/2}\mC_{n}^{-1/2}\sum_{t=1}^{n}\Cc_{t}(\vtheta_{0,n}^c, \widehat{\vtheta}_n^v)\overset{d}{\longrightarrow}N(\vzeros,\mI),
	\]
	where $\mC_{n}=\frac{1}{n}\sum_{t=1}^{n}\E\big[ \Cc_{t}(\vtheta_0^c,\vtheta_0^v)\Cc_{t}^\prime(\vtheta_0^c, \vtheta_0^v)\big]=\alpha(1-\alpha)(1-\beta)\frac{1}{n}\sum_{t=1}^{n}\E\big[\nabla c_t(\vtheta_0^c)\nabla^\prime c_t(\vtheta_0^c)\big]$.
\end{lemC}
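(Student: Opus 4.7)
The plan is to decompose
\begin{equation*}
n^{-1/2}\sum_{t=1}^{n}\Cc_t(\vtheta_{0,n}^c, \widehat{\vtheta}_n^v) \;=\; n^{-1/2}\sum_{t=1}^{n}\Cc_t(\vtheta_0^c, \vtheta_0^v) \;+\; R_n,
\end{equation*}
establish the CLT for the ``ideal'' first term via its martingale-difference structure, and show that the estimation-induced perturbation $R_n$ is $o_\P(1)$ via a stochastic equicontinuity argument. Consistency (Theorem~\ref{thm:cons}) together with the fact that $\vtheta_{0,n}^c$ lies on the segment between $\vtheta_0^c$ and $\widehat{\vtheta}_n^c$ yields $(\vtheta_{0,n}^c, \widehat{\vtheta}_n^v)\overset{\P}{\longrightarrow}(\vtheta_0^c, \vtheta_0^v)$, so the two pieces can be handled separately.

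For the first piece, I would first verify that $\{\Cc_t(\vtheta_0^c, \vtheta_0^v), \mathcal{F}_t\}$ is a martingale difference sequence. Correct specification (Assumption~\ref{ass:cons}\ref{it:id}) gives $\P\{X_t>v_t(\vtheta_0^v)\mid\mathcal{F}_{t-1}\}=1-\beta$ and, by definition of the CoVaR, $\P\{Y_t\leq c_t(\vtheta_0^c)\mid X_t>v_t(\vtheta_0^v), \mathcal{F}_{t-1}\}=\alpha$; since $\nabla c_t(\vtheta_0^c)$ is $\mathcal{F}_{t-1}$-measurable, these identities yield $\E_{t-1}[\Cc_t(\vtheta_0^c, \vtheta_0^v)]=\vzeros$. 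The same conditional probabilities give
\begin{equation*}
\E_{t-1}\big[\Cc_t(\vtheta_0^c,\vtheta_0^v)\Cc_t^\prime(\vtheta_0^c,\vtheta_0^v)\big] = \alpha(1-\alpha)(1-\beta)\nabla c_t(\vtheta_0^c)\nabla^\prime c_t(\vtheta_0^c),
\end{equation*}
whose unconditional mean is $\mC^\ast$ by the law of iterated expectations. Stationarity (Assumption~\ref{ass:cons}\ref{it:str stat}), the $(3+\iota)$-th moment bound on $C_1$ (Assumption~\ref{ass:an}\ref{it:mom bounds cons2}), and the mixing condition in Assumption~\ref{ass:an}\ref{it:mixing} then allow one to invoke a standard CLT for strong-mixing (or, equivalently, martingale-difference) arrays to obtain $n^{-1/2}\sum_t\Cc_t(\vtheta_0^c, \vtheta_0^v)\overset{d}{\longrightarrow}N(\vzeros,\mC^\ast)$.

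The hard part is the remainder $R_n$, because $\Cc_t$ contains the product of two indicators whose arguments depend on $\vtheta^v$ and $\vtheta^c$, so the map $(\vtheta^c,\vtheta^v)\mapsto\Cc_t(\vtheta^c,\vtheta^v)$ is discontinuous and a direct Taylor expansion is unavailable. I would telescope
\begin{equation*}
\Cc_t(\vtheta_{0,n}^c, \widehat{\vtheta}_n^v) - \Cc_t(\vtheta_0^c, \vtheta_0^v) = \big[\nabla c_t(\vtheta_{0,n}^c) - \nabla c_t(\vtheta_0^c)\big]U_t + \nabla c_t(\vtheta_0^c)\big[U_t - U_t^0\big],
\end{equation*}
where $U_t = \1_{\{X_t>v_t(\widehat{\vtheta}_n^v)\}}\big[\1_{\{Y_t\leq c_t(\vtheta_{0,n}^c)\}}-\alpha\big]$ and $U_t^0$ is its analogue at $(\vtheta_0^c, \vtheta_0^v)$. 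The first (smooth) summand is controlled by a mean-value expansion of $\nabla c_t$ using the local Lipschitz bound $C_3$ from Assumption~\ref{ass:an}\ref{it:bound2.2} together with its integrability and the consistency $\vtheta_{0,n}^c\overset{\P}{\longrightarrow}\vtheta_0^c$. The second, discontinuous summand is handled via a stochastic equicontinuity bound paralleling the proof of Lemma~V.\ref{lem:7}: on a shrinking neighborhood $\|\vtheta^c-\vtheta_0^c\|\vee\|\vtheta^v-\vtheta_0^v\|\leq\delta_n$ with $\delta_n=o(1)$, each summand has expected absolute value of order $\delta_n$ by the Lipschitz bounds on $f_t^X$, $f_t^Y$, and the partial derivatives of $F_t$ in Assumption~\ref{ass:an}\ref{it:Lipschitz}--\ref{it:bound cdf}, while centered fluctuations are controlled by a covariance-sum bound that invokes the mixing condition in Assumption~\ref{ass:an}\ref{it:mixing} together with the moment bound on $C_1$. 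Combining these estimates yields $R_n=o_\P(1)$, and Slutsky's theorem completes the proof.
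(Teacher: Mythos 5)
Your treatment of the leading term is fine and matches the paper: the martingale-difference structure of $\Cc_t(\vtheta_0^c,\vtheta_0^v)$, the computation of $\mC^\ast$ via the conditional probabilities $\P_{t-1}\{X_t>v_t(\vtheta_0^v)\}=1-\beta$ and $\P_{t-1}\{Y_t\le c_t(\vtheta_0^c)\mid X_t>v_t(\vtheta_0^v)\}=\alpha$, and a Cram\'er--Wold mixing CLT are exactly what the paper does. The problem is the remainder $R_n$, and it starts with a misreading of what $\vtheta_{0,n}^c$ is. It is \emph{not} a mean value on the segment between $\vtheta_0^c$ and $\widehat{\vtheta}_n^c$; it is the pseudo-true second-stage parameter $\vtheta_{0,n}^c=\vtheta_n^c(\widehat{\vtheta}_n^v)$ produced by the implicit function theorem in the proof of Lemma~C.\ref{lem:1 tilde}, characterized by $\vlambda_n(\vtheta_{0,n}^c,\widehat{\vtheta}_n^v)=\frac{1}{n}\sum_{t=1}^n\E\big[\Cc_t(\vtheta_{0,n}^c,\widehat{\vtheta}_n^v)\big]=0$. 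That exact zero is the whole point of evaluating the score at $\vtheta_{0,n}^c$ rather than at $\vtheta_0^c$.

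This matters because your bound on $R_n$ does not close. You control each increment by $\E\lvert\Cc_t(\vtheta_{0,n}^c,\widehat{\vtheta}_n^v)-\Cc_t(\vtheta_0^c,\vtheta_0^v)\rvert=O(\delta_n)$ on a shrinking neighborhood; summing $n$ such terms and scaling by $n^{-1/2}$ gives $O(\sqrt{n}\,\delta_n)$, which is $O_\P(1)$ --- not $o_\P(1)$ --- at the relevant rate $\delta_n\asymp n^{-1/2}$ (and mere consistency gives no rate at all). Indeed the drift of the increment is, to first order, $\sqrt{n}\big[\mLambda_{(1)}(\vtheta_{0,n}^c-\vtheta_0^c)+\mLambda_{(2)}(\widehat{\vtheta}_n^v-\vtheta_0^v)\big]$, which is generically a non-degenerate $O_\P(1)$ term: this is precisely the first-stage estimation effect that shows up as $\mLambda_{(1)}^{-1}\mLambda_{(2)}\mLambda^{-1}$ in the asymptotic variance of Theorem~\ref{thm:an}. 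It vanishes here only because $\E[\Cc_t(\vtheta_0^c,\vtheta_0^v)]=\vzeros$ and $\vlambda_n(\vtheta_{0,n}^c,\widehat{\vtheta}_n^v)=\vzeros$ cancel the means exactly, so that $R_n$ is a sum of \emph{centered} empirical-process increments. The paper then controls those centered increments with Lemma~A.1 of \citet{Wei91} applied twice (once in $\vtheta^v$ at fixed $\vtheta_0^c$, once in $\vtheta^c$ at fixed $\widehat{\vtheta}_n^v$), which additionally requires the rates $\sqrt{n}\lVert\widehat{\vtheta}_n^v-\vtheta_0^v\rVert=O_\P(1)$ and $\sqrt{n}\lVert\vtheta_{0,n}^c-\vtheta_0^c\rVert=O_\P(1)$, the latter being established from the differentiability of the implicit function $\vtheta_n^c(\cdot)$. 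Without the identification of $\vtheta_{0,n}^c$ and the exact mean cancellation, your argument would leave a non-vanishing $O_\P(1)$ drift in $R_n$ and the stated convergence would fail.
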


\begin{proof}
	See Section~\ref{add results CoVaR norm}.
\end{proof}

\begin{proof}[{\textbf{Proof of Theorem~\ref{thm:an} (continued):}}]
	We can now show joint asymptotic normality of $\widehat{\vtheta}_n^{v}$ and $\widehat{\vtheta}_n^{c}$. From Lemma~C.\ref{lem:1 tilde}, we have the decomposition
	\begin{align}
		\sqrt{n}\big(\widehat{\vtheta}_n^{c}-\vtheta_0^c\big) &= \big[\mLambda_{n,(1)}^{-1}\mLambda_{n,(2)}\mLambda_n^{-1}+o_{\P}(1)\big]\bigg[\frac{1}{\sqrt{n}}\sum_{t=1}^{n}\Vv_{t}(\vtheta_0^v) + o_{\P}(1)\bigg]\notag\\
		&\hspace{6cm}- \big[\mLambda_{n,(1)}^{-1}+ o_{\P}(1)\big]\bigg[\frac{1}{\sqrt{n}}\sum_{t=1}^{n}\Cc_{t}(\vtheta_{0,n}^c, \widehat{\vtheta}_n^v) + o_{\P}(1)\bigg]\notag\\
		&= \Big(\begin{array}{c|c}
			\mLambda_{n,(1)}^{-1}\mLambda_{n,(2)}\mLambda_n^{-1}+o_{\P}(1) & - \mLambda_{n,(1)}^{-1}+o_{\P}(1)\end{array}\Big)
		\begin{pmatrix}
			\frac{1}{\sqrt{n}}\sum_{t=1}^{n}\Vv_{t}(\vtheta_0^v) + o_{\P}(1)\\
			\frac{1}{\sqrt{n}}\sum_{t=1}^{n}\Cc_{t}(\vtheta_{0,n}^c, \widehat{\vtheta}_n^v) + o_{\P}(1)
		\end{pmatrix}.\label{eq:(C.2p)}
	\end{align}
	From simple adaptations of the proofs of Lemmas~V.\ref{lem:7} and C.\ref{lem:7 tilde}, we have that, as $n\to\infty$,
	\begin{equation}\label{eq:(C.2pp)}
		n^{-1/2}\mM_{n}^{-1/2}\sum_{t=1}^{n}\begin{pmatrix}\Vv_{t}(\vtheta_0^v)\\ \Cc_{t}(\vtheta_{0,n}^c,\widehat{\vtheta}_n^v)\end{pmatrix}\overset{d}{\longrightarrow}N(\vzeros, \mI),
	\end{equation}
	where 
	\begin{equation}\label{eq:(C.4+)}
		\mM_{n}  =\frac{1}{n}\sum_{t=1}^{n}\begin{pmatrix}
			\E\big[\Vv_{t}(\vtheta_0^v)\Vv_{t}^\prime(\vtheta_0^v)\big] & \E\big[\Vv_{t}(\vtheta_0^v)\Cc_{t}^\prime(\vtheta_{0}^c,\vtheta_0^v)\big]\\
			\E\big[\Cc_{t}(\vtheta_{0}^c,\vtheta_0^v)\Vv_{t}^\prime(\vtheta_0^v)\big] & \E\big[\Cc_{t}(\vtheta_{0}^c,\vtheta_0^v)\Cc_{t}^\prime(\vtheta_{0}^c,\vtheta_0^v)\big]
		\end{pmatrix}= \begin{pmatrix}
			\mV_n & \vzeros\\
			\vzeros & \mC_{n}
		\end{pmatrix}
	\end{equation}
	is positive definite by Assumption~\ref{ass:an} \ref{it:pd}. 
	For the above display, we refer to \eqref{eq:V simpl} and \eqref{eq:B simpl} (in Sections~\ref{add results VaR norm} and \ref{add results CoVaR norm}, respectively) for the upper left and lower right block matrices.
	The zero matrices in \eqref{eq:(C.4+)} arise because
	\begin{align*}
		\E\big[\Cc_{t}(\vtheta_{0}^c,\vtheta_0^v)\Vv_{t}^\prime(\vtheta_0^v)\big] &= \E\Big[ \nabla c_t(\vtheta_0^c)\nabla^\prime v_t(\vtheta_0^v)\1_{\{X_t>v_t(\vtheta_0^v)\}} \big(\1_{\{X_t\leq v_t(\vtheta_0^v)\}} - \beta\big)\big(\1_{\{Y_t\leq c_t(\vtheta_0^v)\}} - \alpha\big)\Big]\\
		&= -\beta\E\bigg\{ \nabla c_t(\vtheta_0^c)\nabla^\prime v_t(\vtheta_0^v) \E_{t-1}\Big[\1_{\{X_t>v_t(\vtheta_0^v)\}}\big(\1_{\{Y_t\leq c_t(\vtheta_0^v)\}} - \alpha\big)\Big]\bigg\}\\
		&= -\beta\E\bigg\{ \nabla c_t(\vtheta_0^c)\nabla^\prime v_t(\vtheta_0^v) \Big[ \P_{t-1}\big\{X_t>v_t(\vtheta_0^v),\ Y_t\leq c_t(\vtheta_0^v)\big\} -\alpha(1-\beta) \Big]\bigg\}\\
		&=  -\beta\E\bigg\{ \nabla c_t(\vtheta_0^c)\nabla^\prime v_t(\vtheta_0^v) \Big[ \P_{t-1}\big\{Y_t\leq c_t(\vtheta_0^v)\mid X_t>v_t(\vtheta_0^v)\big\}(1-\beta) -\alpha(1-\beta) \Big]\bigg\}\\
		&=  -\beta\E\bigg\{ \nabla c_t(\vtheta_0^c)\nabla^\prime v_t(\vtheta_0^v) \Big[ \alpha(1-\beta) -\alpha(1-\beta) \Big]\bigg\}=\vzeros
	\end{align*}
	and, similarly, $\E\big[\Vv_{t}(\vtheta_0^v)\Cc_{t}^\prime(\vtheta_{0}^c,\vtheta_0^v)\big]=\vzeros$. 
	We obtain from Lemma~V.\ref{lem:1} and \eqref{eq:(C.2p)} that
	\begin{align}
		\sqrt{n}\begin{pmatrix}
			\widehat{\vtheta}_n^v - \vtheta_0^v\\
			\widehat{\vtheta}_n^c - \vtheta_0^c
		\end{pmatrix} &= \Bigg[\begin{pmatrix}
			-\mLambda_{n}^{-1} & \vzero\\
			\mLambda_{n,(1)}^{-1}\mLambda_{n,(2)}\mLambda_{n}^{-1} & -\mLambda_{n,(1)}^{-1}
		\end{pmatrix} + o_{\P}(1)\Bigg]\notag\\
		&\hspace{4cm}\times\Bigg[\mM_n^{1/2}n^{-1/2}\mM_n^{-1/2}\sum_{t=1}^{n}\begin{pmatrix}
			\Vv_{t}(\vtheta_0^v)\\
			\Cc_{t}(\vtheta_{0,n}^c,\widehat{\vtheta}_n^v)
		\end{pmatrix} + o_{\P}(1)\Bigg]\notag\\
		&=-\overline{\mGamma}_n^{-1}\mM_n^{1/2}n^{-1/2}\mM_n^{-1/2}\sum_{t=1}^{n}\begin{pmatrix}
			\Vv_{t}(\vtheta_0^v)\\
			\Cc_{t}(\vtheta_{0,n}^c,\widehat{\vtheta}_n^v)
		\end{pmatrix} + o_{\P}(1),\label{eq:tbmt}
	\end{align}
	where we exploited that (by definition of $\overline{\mGamma}_n$)
	\begin{equation}\label{eq:(C.6p)}
		\overline{\mGamma}_n^{-1}=\begin{pmatrix}
			\mLambda_{n}^{-1} & \vzero\\
			-\mLambda_{n,(1)}^{-1}\mLambda_{n,(2)}\mLambda_{n}^{-1} & \mLambda_{n,(1)}^{-1}
		\end{pmatrix}.
	\end{equation}
	By Assumption~\ref{ass:an}~\ref{it:pd}, we may premultiply \eqref{eq:tbmt} through with $\mM_n^{-1/2}\overline{\mGamma}_n$ to obtain that, as $n\to\infty$,
	\begin{align*}
		\sqrt{n}\mM_n^{-1/2}\overline{\mGamma}_n\begin{pmatrix}
			\widehat{\vtheta}_n^v - \vtheta_0^v\\
			\widehat{\vtheta}_n^c - \vtheta_0^c
		\end{pmatrix}  &= -n^{-1/2}\mM_n^{-1/2}\sum_{t=1}^{n}\begin{pmatrix}
			\Vv_{t}(\vtheta_0^v)\\
			\Cc_{t}(\vtheta_{0,n}^c,\widehat{\vtheta}_n^v)
		\end{pmatrix} + o_{\P}(1)\\
		&\overset{d}{\longrightarrow}N(\vzero,\mI),
	\end{align*}
	where the convergence follows from \eqref{eq:(C.2pp)}.
	This, however, is just the conclusion.
\end{proof}

\section{Proofs of Lemmas~V.\ref{lem:1} and V.\ref{lem:7}}\label{add results VaR norm}

Before proving Lemmas~V.\ref{lem:1} and V.\ref{lem:7}, we first need to introduce some notation.
Recall from \eqref{eq:gt prev} that
\begin{equation}\label{eq:gt}
	\Vv_t(\vtheta^v)= \nabla v_t(\vtheta^v)\big[\1_{\{X_t\leq v_t(\vtheta^v)\}}-\beta\big].
\end{equation}
Thus, by the LIE and Assumption~\ref{ass:cons} \ref{it:smooth},
\begin{align*}
	\E\big[\Vv_{t}(\vtheta^v)\big] &= \E\Big\{ \nabla v_t(\vtheta^v)\E_{t-1}\big[\1_{\{X_t\leq v_t(\vtheta^v)\}}-\beta\big] \Big\}\\	
	&= \E\Big\{ \nabla v_t(\vtheta^v)\big[\P_{t-1}\{X_t\leq v_t(\vtheta^v)\}-\beta\big] \Big\}\\	
	&=\E\Big\{\nabla v_t(\vtheta^v)\big[F_t^{X}\big(v_t(\vtheta^v)\big)-\beta\big]\Big\}.
\end{align*}
Assumption~\ref{ass:an} and the dominated convergence theorem (DCT) allow us to interchange differentiation and expectation to yield that
\begin{equation}
	\frac{\partial}{\partial \vtheta^v}\E\big[\Vv_{t}(\vtheta^v)\big] =\E\Big\{\nabla^2 v_t(\vtheta^v)\big[F_t^{X}\big(v_t(\vtheta^v)\big)-\beta\big] + \nabla v_t(\vtheta^v)\nabla^\prime v_t(\vtheta^v)f_t^{X}\big(v_t(\vtheta^v)\big)\Big\}.\label{eq:Lambda v}
\end{equation}
Evaluating this quantity at the true parameter gives
\begin{equation*}
	\frac{\partial}{\partial \vtheta^v}\E\big[\Vv_{t}(\vtheta^v)\big]\Big\vert_{\vtheta^v=\vtheta_0^v} =\E\Big[\nabla v_t(\vtheta_0^v)\nabla^\prime v_t(\vtheta_0^v)f_t^{X}\big(v_t(\vtheta_0^v)\big)\Big].
\end{equation*}
We define
\begin{align}
	\vlambda_n(\vtheta^v) &= \frac{1}{n}\sum_{t=1}^{n}\E\big[\Vv_{t}(\vtheta^v)\big],\notag\\
	\mLambda_n(\vtheta^\ast) &= \frac{1}{n}\sum_{t=1}^{n}\frac{\partial}{\partial\vtheta^v}\E\big[\Vv_{t}(\vtheta^v)\big]\Big\vert_{\vtheta^v=\vtheta^\ast}.\label{eq:(C.41)}
\end{align}
With $\mLambda_{n}$ as defined in Assumption~\ref{ass:an}~\ref{it:pd}, it then holds that
\[
\mLambda_n = \mLambda_n(\vtheta_0^v) = \frac{1}{n}\sum_{t=1}^{n}\E\Big[\nabla v_t(\vtheta_0^v)\nabla^\prime v_t(\vtheta_0^v)f_t^{X}\big(v_t(\vtheta_0^v)\big)\Big].
\]
We are now in a position to prove Lemma~V.\ref{lem:1}.

\begin{proof}[{\textbf{Proof of Lemma~V.\ref{lem:1}:}}]
	The mean value theorem (MVT) implies that for all $i=1,\ldots,p$,
	\[
	\lambda_n^{(i)}(\widehat{\vtheta}_n^{v}) = \lambda_n^{(i)}(\vtheta_0^{v}) + \mLambda_n^{(i,\cdot)}(\vtheta^\ast_{i}) (\widehat{\vtheta}_n^v- \vtheta_0^v),
	\]
	where $\vlambda_n(\cdot)=\big(\lambda_n^{(1)}(\cdot),\ldots,\lambda_n^{(p)}(\cdot)\big)^\prime$, $\mLambda_n^{(i,\cdot)}(\cdot)$ denotes the $i$-th row of $\mLambda_n(\cdot)$ and $\vtheta^\ast_{i}$ lies on the line connecting $\vtheta_0^v$ and $\widehat{\vtheta}_n^v$. To economize on notation, we shall slightly abuse notation (here and elsewhere) by writing this as
	\begin{equation}\label{eq:NE MVT}
		\vlambda_n(\widehat{\vtheta}_n^{v}) = \vlambda_n(\vtheta_0^{v}) + \mLambda_n(\vtheta^\ast) (\widehat{\vtheta}_n^v- \vtheta_0^v)
	\end{equation}
	for some $\vtheta^\ast$ between $\widehat{\vtheta}_n^{v}$ and $\vtheta_0^{v}$; keeping in mind that the value of $\vtheta^\ast$ is in fact different from row to row in $\mLambda_n(\vtheta^\ast)$. However, this does not change any of the subsequent arguments. Interpreted verbatim, \eqref{eq:NE MVT} would be an instance of what \citet{Fea13} call the \textit{non-existent mean value theorem}, which is widely applied in statistics; e.g., by \citet{EM04} and \citet{PZC19}.
	
	We have that
	\begin{equation}\label{eq:lambda nought}
		\vlambda_n(\vtheta_0^{v}) = \frac{1}{n}\sum_{t=1}^{n}\E\Big\{\nabla v_t(\vtheta_0^{v})\big[F_t^{X}\big(v_t(\vtheta_0^{v})\big)-\beta\big]\Big\} =\vzeros,
	\end{equation}
	since $F_t^{X}\big(v_t(\vtheta_0^{v})\big)=\beta$ by correct specification. Plugging this into \eqref{eq:NE MVT} gives
	\[
	\vlambda_n(\widehat{\vtheta}_n^{v}) = \mLambda_n(\vtheta^\ast) (\widehat{\vtheta}_n^v- \vtheta_0^v).
	\]
	To establish the claim of the lemma, we therefore only have to show that
	\begin{enumerate}
		\item[(i)] $\mLambda_n^{-1}(\vtheta^\ast)-\mLambda_n^{-1} = o_{\P}(1)$;
		\item[(ii)] $\sqrt{n}\vlambda_n(\widehat{\vtheta}_n^{v})=-\frac{1}{\sqrt{n}}\sum_{t=1}^{n}\Vv_{t}(\vtheta_0^v) + o_{\P}(1)$.
	\end{enumerate}

	Claim (i) is verified in Lemma~V.\ref{lem:1+}. For this, note that since $\vtheta^\ast$ is a mean value between $\widehat{\vtheta}_n^v$ and $\vtheta_0^v$, and $\widehat{\vtheta}_n^v\overset{\P}{\longrightarrow}\vtheta_0^v$ (from Theorem~\ref{thm:cons}), it also follows that $\vtheta^\ast\overset{\P}{\longrightarrow}\vtheta_0^v$.

	To prove (ii), by Lemma~A.1 in \citet{Wei91}, it suffices to show that
	\begin{enumerate}
		\item[(ii.a)] conditions (N1)--(N5) in the notation of \citet{Wei91} hold;
		\item[(ii.b)] $\frac{1}{\sqrt{n}}\sum_{t=1}^{n}\Vv_{t}(\widehat{\vtheta}_n^{v})=o_{\P}(1)$;
		\item[(ii.c)] $\widehat{\vtheta}_n^{v}\overset{\P}{\longrightarrow}\vtheta_0^v$.
	\end{enumerate}
	For (ii.a), note that (N1) is shown in Lemma~V.\ref{lem:0} and (N2) follows from \eqref{eq:lambda nought}. The mixing condition (N5) is implied by our Assumption~\ref{ass:an} \ref{it:mixing}. Condition (N3) is verified in Lemmas~V.\ref{lem:3}--V.\ref{lem:5} below and the remaining condition (N4) in Lemma~V.\ref{lem:6}. The result in (ii.b) follows from Lemma~V.\ref{lem:2} and (ii.c) follows from Theorem~\ref{thm:cons}. In sum, the desired result follows.
\end{proof}

\setcounter{lemV}{2}

\begin{lemV}\label{lem:1+}
	Suppose Assumptions~\ref{ass:cons} and \ref{ass:an} hold. Then, as $n\to\infty$, $\mLambda_n^{-1}(\vtheta^\ast)-\mLambda_n^{-1}\overset{\P}{\longrightarrow}\vzero$ for any $\vtheta^\ast$ with $\vtheta^\ast\overset{\P}{\longrightarrow}\vtheta_0^v$.
\end{lemV}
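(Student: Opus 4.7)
My plan is to reduce the claim to the continuity of a deterministic matrix-valued function at $\vtheta_0^v$ and then apply the continuous mapping theorem twice (once for continuity in $\vtheta^v$, once for matrix inversion).

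First, I would observe that by strict stationarity (Assumption~\ref{ass:cons}~\ref{it:str stat}) every summand in the definition of $\mLambda_n(\vtheta^v)$ has the same expectation, so that in fact
\[
\mLambda_n(\vtheta^v) \;=\; \mathcal{L}(\vtheta^v) \;:=\; \E\Big\{\nabla^2 v_t(\vtheta^v)\big[F_t^{X}(v_t(\vtheta^v))-\beta\big] + \nabla v_t(\vtheta^v)\nabla^\prime v_t(\vtheta^v)\,f_t^{X}(v_t(\vtheta^v))\Big\}
\]
is a deterministic function of $\vtheta^v$ alone, independent of $n$. Moreover, by correct specification, $F_t^X(v_t(\vtheta_0^v)) = \beta$ a.s., so $\mathcal{L}(\vtheta_0^v) = \mLambda$. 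The task thus reduces to showing that $\mathcal{L}(\cdot)$ is continuous at $\vtheta_0^v$.

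Second, I would establish pointwise continuity of $\mathcal{L}$ at $\vtheta_0^v$ via the dominated convergence theorem. On a neighborhood of $\vtheta_0^v$ (as in Assumptions~\ref{ass:cons}~\ref{it:bound} and~\ref{ass:an}~\ref{it:bound2.1}), the integrand is dominated by $V_2(\mathcal{F}_{t-1}) + K\,V_1^2(\mathcal{F}_{t-1})$ (using also the bound $f_t^X(\cdot)\le K$ from Assumption~\ref{ass:cons}~\ref{it:cond dist}), which is integrable by Assumption~\ref{ass:an}~\ref{it:mom bounds cons2} together with Cauchy--Schwarz/H\"older applied to $V_1^2$. The a.s.\ continuity of $\vtheta^v\mapsto \nabla v_t(\vtheta^v)$ and $\vtheta^v\mapsto\nabla^2 v_t(\vtheta^v)$ follows from Assumption~\ref{ass:an}~\ref{it:diff}--\ref{it:bound2.1}, while the continuity of $\vtheta^v\mapsto F_t^{X}(v_t(\vtheta^v))$ and $\vtheta^v\mapsto f_t^{X}(v_t(\vtheta^v))$ follows from the Lipschitz conditions in Assumption~\ref{ass:an}~\ref{it:Lipschitz} combined with the continuity of $v_t(\cdot)$. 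The DCT then yields $\mathcal{L}(\vtheta^v)\to\mathcal{L}(\vtheta_0^v)=\mLambda$ as $\vtheta^v\to\vtheta_0^v$.

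Third, combining pointwise continuity with $\vtheta^\ast \overset{\P}{\longrightarrow} \vtheta_0^v$ gives $\mLambda_n(\vtheta^\ast) = \mathcal{L}(\vtheta^\ast) \overset{\P}{\longrightarrow} \mLambda$ by the continuous mapping theorem. Since $\mLambda$ is positive definite (hence invertible) by Assumption~\ref{ass:an}~\ref{it:pd}, a second application of the CMT -- now to the map $\mA\mapsto\mA^{-1}$, which is continuous at invertible $\mA$ -- delivers the desired $\mLambda_n^{-1}(\vtheta^\ast)\overset{\P}{\longrightarrow}\mLambda^{-1}$. The only mildly subtle point is the passage from pointwise continuity of $\mathcal{L}$ at $\vtheta_0^v$ to convergence at the random point $\vtheta^\ast$; this is standard and handled by fixing $\varepsilon>0$, choosing $\delta>0$ with $\|\mathcal{L}(\vtheta^v)-\mLambda\|<\varepsilon$ for $\|\vtheta^v-\vtheta_0^v\|<\delta$, and invoking $\P\{\|\vtheta^\ast-\vtheta_0^v\|<\delta\}\to 1$.
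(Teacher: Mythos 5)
Your proof is correct, but it takes a genuinely different route from the paper's. The paper does not argue via dominated convergence; instead it establishes the quantitative local Lipschitz bound $\norm{\mLambda_{n}(\vtau) - \mLambda_{n}(\vtheta)} \leq C \norm{\vtau - \vtheta}$ on a neighborhood of $\vtheta_0^v$ by mean value expansions of each piece of the integrand (splitting off $\nabla^2 v_t(\vtau)[F_t^X(v_t(\vtau))-F_t^X(v_t(\vtheta))]$, the $\nabla^2 v_t$ difference, and the $\nabla v_t\nabla'v_t f_t^X$ difference) and controlling the resulting products $V_1V_2$, $V_1^3$ via H\"older and Assumption~\ref{ass:an}~\ref{it:mom bounds cons2}--\ref{it:bound2.1}; the final $\varepsilon$--$\delta$ step and the CMT for matrix inversion are then the same as yours. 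What your DCT argument buys is economy: for this lemma you only need integrability of $V_2$ and $V_1^2$ plus a.s.\ continuity of the integrand, so you avoid the third-moment bookkeeping and the Lipschitz condition on $\nabla^2 v_t$ entirely. What the paper's approach buys is reusability: the explicit rate $\mLambda_n(\vtheta^\ast)=\mLambda+O(\norm{\vtheta^\ast-\vtheta_0^v})$ is needed again in Lemma~V.\ref{lem:3} to show $\norm{\vlambda_n(\vtheta)}\geq a\norm{\vtheta-\vtheta_0^v}$, where mere continuity would not suffice, so the Lipschitz bound is not wasted effort in the larger argument. Two trivial attribution points in your write-up: integrability of $V_1^2$ follows directly from $\E[V_1^{3+\iota}]\leq K$ by Lyapunov's inequality (no H\"older needed), and the continuity of $\vtheta^v\mapsto F_t^X(v_t(\vtheta^v))$ should be credited to the density bound $f_t^X(\cdot)\leq K$ in Assumption~\ref{ass:cons}~\ref{it:cond dist} rather than to Assumption~\ref{ass:an}~\ref{it:Lipschitz}, which concerns $f_t^X$ and the partial derivatives of $F_t$, not $F_t^X$ itself.
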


\begin{proof}
	We first show that $\big\Vert\mLambda_{n}(\vtau) - \mLambda_{n}(\vtheta)\big\Vert \leq C \big\Vert\vtau - \vtheta\big\Vert$ for all $\vtau,\vtheta\in\mathcal{N}(\vtheta_0^v)$, where $\mathcal{N}(\vtheta_0^v)$ is some neighborhood of $\vtheta_0^v$ such that Assumptions~\ref{ass:cons} \ref{it:bound} and \ref{ass:an} \ref{it:bound2.1} hold.
	
	%
	
	Use \eqref{eq:Lambda v} and \eqref{eq:(C.41)} to write
	\begin{align}
		\big\Vert\mLambda_{n}(\vtau) - \mLambda_{n}(\vtheta)\big\Vert &= \bigg\Vert \frac{1}{n}\sum_{t=1}^{n} \E\Big\{\nabla^2 v_t(\vtau)\big[F_t^{X}\big(v_t(\vtau)\big)-\beta\big] - \nabla^2 v_t(\vtheta)\big[F_t^{X}\big(v_t(\vtheta)\big)-\beta\big]\notag\\
		&	\hspace{2cm} + \nabla v_t(\vtau)\nabla^\prime v_t(\vtau)f_t^{X}\big(v_t(\vtau)\big) - \nabla v_t(\vtheta)\nabla^\prime v_t(\vtheta)f_t^{X}\big(v_t(\vtheta)\big)\Big\}\bigg\Vert\notag\\
		&\leq \frac{1}{n}\sum_{t=1}^{n} \E\Big\Vert\nabla^2 v_t(\vtau)\Big[F_t^{X}\big(v_t(\vtau)\big)-F_t^{X}\big(v_t(\vtheta)\big)\Big]\Big\Vert\notag\\
		& \hspace{2cm} + \E\Big\Vert F_t^{X}\big(v_t(\vtheta)\big)\big[\nabla^2 v_t(\vtau) - \nabla^2 v_t(\vtheta)\big]\Big\Vert\notag\\
		& \hspace{2cm} +  \beta \E\Big\Vert \nabla^2 v_t(\vtau) - \nabla^2 v_t(\vtheta)\Big\Vert\notag\\
		&	\hspace{2cm} + \E\Big\Vert\nabla v_t(\vtau)\nabla^\prime v_t(\vtau)f_t^{X}\big(v_t(\vtau)\big)- \nabla v_t(\vtheta)\nabla^\prime v_t(\vtheta)f_t^{X}\big(v_t(\vtheta)\big)\Big\Vert.\label{eq:decomp Lambda}
	\end{align}
	By Assumption~\ref{ass:an} \ref{it:bound2.1}, we have that
	\begin{align}
		\E\Big\Vert F_t^{X}\big(v_t(\vtheta)\big)\big[\nabla^2 v_t(\vtau) - \nabla^2 v_t(\vtheta)\big]\Big\Vert
		&\leq \E\Big[\big\Vert F_t^{X}\big(v_t(\vtheta)\big)\big\Vert\cdot \big\Vert\nabla^2 v_t(\vtau) - \nabla^2 v_t(\vtheta)\big\Vert\Big]\notag\\	
		&\leq \E\big[V_3(\mathcal{F}_{t-1})\big] \big\Vert\vtau-\vtheta\big\Vert\leq C\big\Vert\vtau-\vtheta\big\Vert,\label{eq:(D.7m)}\\
		\beta \E\big\Vert \nabla^2 v_t(\vtau) - \nabla^2 v_t(\vtheta)\big\Vert & \leq \beta \E\big[V_3(\mathcal{F}_{t-1})\big]\big\Vert\vtau - \vtheta\big\Vert\leq C \Vert\vtau - \vtheta\Vert.\label{eq:nought term}
	\end{align}
	The other two terms in \eqref{eq:decomp Lambda} can be dealt with as follows. First, the MVT and Assumption~\ref{ass:an} imply that
	\begin{align}
		\E\Big\Vert\nabla^2 v_t(\vtau)\big[ F_t^{X}\big(v_t(\vtau)\big)-F_t^{X}\big(v_t(\vtheta)\big)\big]\Big\Vert & \leq \E\Big[\big\Vert\nabla^2 v_t(\vtau)\big\Vert \cdot \big\Vert F_t^{X}\big(v_t(\vtau)\big)-F_t^{X}\big(v_t(\vtheta)\big)\big\Vert\Big]\notag\\	
		&\leq \E\Big[ V_2(\mathcal{F}_{t-1})f_t^{X}\big(v_t(\vtheta^{\ast})\big)\big|\nabla^\prime v_t(\vtheta^{\ast})(\vtau-\vtheta)\big|\Big]\notag\\
		&\leq K \E\big[V_1(\mathcal{F}_{t-1})V_2(\mathcal{F}_{t-1})\big]\Vert\vtau-\vtheta\Vert\notag\\
		&\leq K \Big\{\E\big[V_1^3(\mathcal{F}_{t-1})\big]\Big\}^{1/3}\Big\{\E\big[V_2^{3/2}(\mathcal{F}_{t-1})\big]\Big\}^{2/3}\Vert\vtau-\vtheta\Vert\notag\\
		&\leq C \Vert\vtau-\vtheta\Vert,\label{eq:first term}
	\end{align}
	where $\vtheta^{\ast}$ is some mean value on the line connecting $\vtau$ and $\vtheta$, and the penultimate step uses H\"{o}lder's inequality.
	
	Second, we again use the MVT to obtain for some $\vtheta^{\ast}$ between $\vtau$ and $\vtheta$ (where $\vtheta^{\ast}$ may vary from line to line) that
	\begin{align}
		&\E\Big\Vert\nabla v_t(\vtau)\nabla^\prime v_t(\vtau)f_t^{X}\big(v_t(\vtau)\big) - \nabla v_t(\vtheta)\nabla^\prime v_t(\vtheta)f_t^{X}\big(v_t(\vtheta)\big)\Big\Vert\notag\\
		&=\bigg\Vert\E\Big[\nabla v_t(\vtau)\nabla^\prime v_t(\vtau)f_t^{X}\big(v_t(\vtau)\big) - \nabla v_t(\vtheta)\nabla^\prime v_t(\vtau)f_t^{X}\big(v_t(\vtau)\big)\notag\\
		&\hspace{1.8cm} + \nabla v_t(\vtheta)\nabla^\prime v_t(\vtau)f_t^{X}\big(v_t(\vtau)\big) - \nabla v_t(\vtheta)\nabla^\prime v_t(\vtheta)f_t^{X}\big(v_t(\vtau)\big)\notag\\
		&\hspace{1.8cm} + \nabla v_t(\vtheta)\nabla^\prime v_t(\vtheta)f_t^{X}\big(v_t(\vtau)\big) - \nabla v_t(\vtheta)\nabla^\prime v_t(\vtheta)f_t^{X}\big(v_t(\vtheta)\big)\Big] \bigg\Vert\notag\\
		&=\bigg\Vert \E\Big[\nabla^2 v_t(\vtheta^{\ast})(\vtau - \vtheta)\nabla^\prime v_t(\vtau)f_t^{X}\big(v_t(\vtau)\big)\notag\\
		&\hspace{1.8cm} + \nabla v_t(\vtheta)(\vtau - \vtheta)^\prime [\nabla^2 v_t(\vtheta^{\ast})]^\prime f_t^{X}\big(v_t(\vtau)\big) \notag\\
		&\hspace{1.8cm} + \nabla v_t(\vtheta)\nabla^\prime v_t(\vtheta)\big\{f_t^{X}\big(v_t(\vtau)\big) - f_t^{X}\big(v_t(\vtheta)\big)\big\}\Big] \bigg\Vert\notag\\
		&\leq \E\big[KV_2(\mathcal{F}_{t-1})V_1(\mathcal{F}_{t-1}) + KV_1(\mathcal{F}_{t-1})V_2(\mathcal{F}_{t-1}) + K V_1^3(\mathcal{F}_{t-1}) \big]\Vert\vtau - \vtheta\Vert\notag\\
		&\leq K\bigg(2\Big\{\E\big[V_1^3(\mathcal{F}_{t-1})\big]\Big\}^{1/3}\Big\{\E\big[V_2^{3/2}(\mathcal{F}_{t-1})\big]\Big\}^{2/3}  + \Big\{\E\big[V_1^3(\mathcal{F}_{t-1})\big]\Big\}\bigg)\Vert\vtau - \vtheta\Vert\notag\\
		&\leq C\Vert\vtau - \vtheta\Vert,\label{eq:second term}
	\end{align}
	where we used Assumption~\ref{ass:cons} to conclude that
	\[
	\big|f_t^{X}\big(v_t(\vtau)\big) - f_t^{X}\big(v_t(\vtheta)\big)\big| \leq K \big|v_t(\vtau) - v_t(\vtheta)\big|\leq K\big|\nabla^\prime v_t(\vtheta^\ast)(\vtau - \vtheta)\big|\leq K V_1(\mathcal{F}_{t-1})\big\Vert\vtau - \vtheta\big\Vert
	\]
	for the first inequality.

	Plugging \eqref{eq:(D.7m)}--\eqref{eq:second term} into \eqref{eq:decomp Lambda}, we get that
	\begin{equation}\label{eq:cty Lambda_n}
		\big\Vert\mLambda_{n}(\vtau) - \mLambda_{n}(\vtheta)\big\Vert \leq C \Vert\vtau - \vtheta\Vert.
	\end{equation}
	
	Using this, we obtain for $\delta>0$ with $\big\{\vtheta\in\mathbb{R}^{p}\colon\Vert\vtheta-\vtheta_0^v\Vert\leq\delta\big\}\subset\mathcal{N}(\vtheta_0^v)$ that
	\begin{align*}
		\P\Big\{\big\Vert\mLambda_n(\vtheta^\ast) -\mLambda_n \big\Vert >\varepsilon\Big\}&=\P\Big\{\big\Vert\mLambda_n(\vtheta^\ast) -\mLambda_n(\vtheta_0^v)\big\Vert >\varepsilon\Big\}\\
		&\leq \P\Big\{\big\Vert\mLambda_n(\vtheta^\ast) -\mLambda_n(\vtheta_0^v)\big\Vert >\varepsilon,\ \big\Vert\vtheta^\ast - \vtheta_0^v\big\Vert\leq\delta\Big\} + \P\Big\{\big\Vert\vtheta^\ast - \vtheta_0^v\big\Vert>\delta\Big\}\\
		&\leq \P\bigg\{\sup_{\Vert\vtheta-\vtheta_0^v\Vert\leq\delta}\big\Vert\mLambda_n(\vtheta) -\mLambda_n(\vtheta_0^v)\big\Vert >\varepsilon\bigg\} + o(1)\\
		&\leq \P\bigg\{\sup_{\Vert\vtheta-\vtheta_0^v\Vert\leq\delta}C\big\Vert\vtheta -\vtheta_0^v\big\Vert >\varepsilon\bigg\} + o(1)\\
		&=0+o(1),
	\end{align*}
	where the final line additionally requires $\delta<\varepsilon/C$. 
	This shows that $\mLambda_n(\vtheta^\ast)-\mLambda_n\overset{\P}{\longrightarrow}\vzero$.

	By Assumption~\ref{ass:an} \ref{it:pd}, $\mLambda_{n}(\vtheta_0^v)=\mLambda_n$ is non-singular.
	Continuity of $\mLambda_n(\cdot)$ (recall \eqref{eq:cty Lambda_n}) therefore implies that $\mLambda_n(\vtheta^v)$ is non-singular in a neighborhood of $\vtheta_0^v$. 
	Hence, continuity of the inverse combined with \citet[Lemma~4.29]{Whi01} and $\mLambda_n(\vtheta^\ast)-\mLambda_n\overset{\P}{\longrightarrow}\vzero$ implies that $\mLambda_n^{-1}(\vtheta^\ast)-\mLambda_n^{-1}\overset{\P}{\longrightarrow}\vzero$, as desired.
\end{proof}

\begin{lemV}\label{lem:0}
	Suppose Assumptions~\ref{ass:cons} and \ref{ass:an} hold. Then, condition (N1) of \citet{Wei91} holds, i.e., for all $t\in\mathbb{N}$, the stochastic process $\Omega\times\mTheta^v\ni(\omega,\vtheta^v)\mapsto \Vv_t(\vtheta^v)$ is separable in the sense of \citet[pp.~51--52]{Doo53}, and $\Vv_t(\vtheta^v)$ is measurable for each $\vtheta^v\in\mTheta^v$.
\end{lemV}

\begin{proof}
	From \eqref{eq:gt} it follows that $\Vv_t(\vtheta^v)$ is a function of measurable random variables (by Assumption~\ref{ass:cons}~\ref{it:smooth}) and, therefore, is itself measurable.
	
	It remains to establish separability.
	By definition \citep[cf.~also ][Chapter~III.2]{GikhmanSkorokhod2004}, we have to prove that for any open set $G\subset\mTheta^v$ and any closed set $F\subset\mathbb{R}^p$, the sets
	\begin{align*}
		&\big\{\omega\in\Omega\colon \Vv_t(\vtheta^v)\in F\ \text{for all}\ \vtheta^v\in G\big\}\quad\text{and}\\
		&\big\{\omega\in\Omega\colon \Vv_t(\vtheta^v)\in F\ \text{for all}\ \vtheta^v\in G\cap\mathbb{Q}^{p}\big\}
	\end{align*}
	differ from each other by at most a subset of some null set $N\subset\Omega$. 
	To do so, we show that the two sets coincide outside of the null set where $\nabla v_t(\cdot)=\vzero$ (see Assumption~\ref{ass:an}~\ref{it:diff}).
	The inclusion ``$\subset$'' is immediate, so we only prove ``$\supset$''. 
	
	To that end, fix some closed $F\subset\mathbb{R}^p$ and some open $G\subset\mTheta^v$, and let $\omega\in\Omega$ be such, that $\Vv_t(\vtheta^v)\in F$ for all $\vtheta^v\in G\cap\mathbb{Q}^{p}$ and $\nabla v_t(\vtheta^v)\neq\vzero$ for all $\vtheta^v\in G\cap\mathbb{Q}^{p}$ (such that we are outside of the null set where $\nabla v_t(\cdot)=\vzero$). 
	Then, we have to show that 
	\begin{equation*}
		\Vv_t(\vtheta^v)\in F\qquad\text{for all }\vtheta^v\in G\setminus\mathbb{Q}^{p}.
	\end{equation*}
	Fix some $\vtheta^v\in G\setminus\mathbb{Q}^{p}$. 
	
	If $\vtheta^v$ is a continuity point of $\Vv_t(\cdot)$, then $\Vv_t(\vtheta^v)\in F$ follows from arguments used to prove Proposition~C.2 in \citet{DimiBayer2019}.
	
	It remains to consider the case where $\vtheta^v$ is a point of discontinuity of $\Vv_t(\cdot)$. 
	From \eqref{eq:gt} and the continuous differentiability of $v_t(\cdot)$ (see Assumption~\ref{ass:an}~\ref{it:diff}) there is a discontinuity of $\Vv_t(\cdot)$ in $\vtheta^v$ only if for every neighborhood $\mathcal{N}(\vtheta^v)$ of $\vtheta^v$ there exist $\underline{\vtheta}^v,\overline{\vtheta}^v\in \mathcal{N}(\vtheta^v)$, such that
	\begin{equation}\label{eq:ineq sep}
		v_t(\underline{\vtheta}^v)<X_t=v_t(\vtheta^v)\leq v_t(\overline{\vtheta}^v)
	\end{equation}
	(because in that case there is a discontinuity in the indicator in \eqref{eq:gt}).
	In principle, two cases can occur: The first where there is a saddle point of $v_t(\cdot)$ in $\vtheta^v$ (such that \eqref{eq:ineq sep} only holds for $\overline{\vtheta}^v=\vtheta^v$ if the neighborhood is chosen sufficiently small), and the second where this is not the case. 
	However, the first case cannot occur, because if there were a saddle point of $v_t(\cdot)$ in $\vtheta^v$ then necessarily $\nabla v_t(\vtheta^v)=\vzero$, which is ruled out by our choice of $\omega$.


	Therefore, we only consider the second case, where there necessarily exists some $\overline{\vtheta}^v\in\mathcal{N}(\vtheta^v)\setminus\{\vtheta^v\}$, such that the right-hand side inequality in \eqref{eq:ineq sep} is strict.
	Since $G$ is open, the neighborhood can be chosen to lie in $G$, i.e., $\mathcal{N}(\vtheta^v)\subset G$.
	By continuity of $v_t(\cdot)$ (see Assumption~\ref{ass:cons}~\ref{it:diff c}) and \eqref{eq:ineq sep}, there exists in $\mathcal{N}(\vtheta^v)$ a connected set of points $\mathcal{L}$ connecting $\overline{\vtheta}^v$ and $\vtheta^v$, such that
	\[
	X_t=v_t(\vtheta^v)< v_t(\overline{\vtheta}^v_{\mathcal{L}})\qquad\text{for all}\quad \overline{\vtheta}^v_{\mathcal{L}}\in\mathcal{L}.
	\]
	Since $\mathbb{Q}^{p}$ is dense in $\mathbb{R}^p$, there exists a sequence $\vtheta_k^v$ in $\mathcal{L}\cap\mathbb{Q}^{p}\subset G\cap\mathbb{Q}^p$, such that $\vtheta_k^v\longrightarrow\vtheta^v$, as $k\to\infty$. 
	By continuity of $\Vv_t(\cdot)$ on $\mathcal{L}$, it holds that $\Vv_t(\vtheta_k^v)\longrightarrow\Vv_t(\vtheta^v)$, as $k\to\infty$. 
	From $\vtheta_k^v\in G\cap\mathbb{Q}^p$, it follows by assumption that $\Vv_t(\vtheta_k^v)\in F$. Because $F$ is a closed set, the limit of the convergent series $\Vv_t(\vtheta_k^v)$ necessarily lies in $F$, whence $\Vv_t(\vtheta^v)\in F$.
\end{proof}

\begin{lemV}\label{lem:3}
	Suppose Assumptions~\ref{ass:cons} and \ref{ass:an} hold. Then, condition (N3) (i) of \citet{Wei91} holds, i.e.,
	\[
	\big\Vert\vlambda_n(\vtheta)\big\Vert \geq a\big\Vert\vtheta-\vtheta_0^v\big\Vert\qquad\text{for}\ \big\Vert\vtheta-\vtheta_0^v\big\Vert\leq d_0
	\]
	for sufficiently large $n$ and some $a>0$ and $d_0>0$.
\end{lemV}

\begin{proof}
	Choose $d_0>0$ sufficiently small, such that $\big\{\vtheta\in\mTheta^v:\ \big\Vert\vtheta-\vtheta_0^v\big\Vert<d_0\big\}$ is a subset of the neighborhoods of Assumptions~\ref{ass:cons} \ref{it:bound} and \ref{ass:an} \ref{it:bound2.1}. 
	Similarly as in the proof of Lemma~V.\ref{lem:1}, the MVT and \eqref{eq:lambda nought} imply that
	\begin{align}
		\vlambda_n(\vtheta) &= \vlambda_n(\vtheta_0^{v}) + \mLambda_n(\vtheta^\ast)(\vtheta-\vtheta_0^v)\notag\\
		&= \mLambda_n(\vtheta^\ast)(\vtheta-\vtheta_0^v)\label{eq:(C.11+)}
	\end{align}
	for some $\vtheta^\ast$ between $\vtheta$ and $\vtheta_0^v$. (Again, to be precise we should allow for the mean value $\vtheta^\ast$ to vary across rows of $\mLambda_n(\vtheta^\ast)$; however, for the subsequent argument this does not matter.)
	Since $\mLambda_n=\mLambda_{n}(\vtheta_0^v)$, \eqref{eq:cty Lambda_n} implies that 
	\begin{equation}\label{eq:(D.10m)}
		\big\Vert\mLambda_n(\vtheta^{\ast}) - \mLambda_n\big\Vert\leq C \big\Vert\vtheta^\ast - \vtheta_0^v\big\Vert
	\end{equation}
	for some large enough $C>0$. 
	By Assumption~\ref{ass:an} \ref{it:pd}, $\mLambda_n$ has eigenvalues bounded below uniformly by some constant $a>0$. 
	Thus, since $\mLambda_n$ is symmetric, the eigenvalues of $\mLambda_n^\prime\mLambda_n=\mLambda_n\mLambda_n$ are bounded below by $a^2$.
	Theorem~1.13 and in particular Exercise 1.14.1 in \citet{MN19} then imply
	\begin{align}
		\vx^\prime \mLambda_n^\prime\mLambda_n \vx &\geq a^2 \vx^\prime\vx\qquad\text{for all }\vx\in\mathbb{R}^p\notag\\
		\intertext{or, equivalently,}
		\big\Vert\mLambda_n\vx \big\Vert^2 &\geq a^2\Vert\vx\Vert^2\qquad\text{for all }\vx\in\mathbb{R}^p.\label{eq:(D.15p)}
	\end{align}
	Therefore,
	\begin{align*}
		\big\Vert\vlambda_n(\vtheta)\big\Vert &= \big\Vert\mLambda_n(\vtheta^\ast)(\vtheta - \vtheta_0^v)\big\Vert\\
		&=\Big\Vert\mLambda_n(\vtheta - \vtheta_0^v) - \big[\mLambda_n - \mLambda_n(\vtheta^\ast)\big](\vtheta - \vtheta_0^v)\Big\Vert\\
		&\geq \big\Vert\mLambda_n(\vtheta - \vtheta_0^v)\big\Vert - \Big\Vert\big[\mLambda_n - \mLambda_n(\vtheta^\ast)\big](\vtheta - \vtheta_0^v)\Big\Vert\\
		&\geq a\big\Vert\vtheta-\vtheta_0^v\big\Vert - \big\Vert\mLambda_n - \mLambda_n(\vtheta^\ast)\big\Vert \cdot\big\Vert\vtheta - \vtheta_0^v\big\Vert\\
		&\geq \Big(a - C\big\Vert\vtheta - \vtheta_0^v\big\Vert\Big)\big\Vert\vtheta - \vtheta_0^v\big\Vert,
	\end{align*}
	where the first step follows from \eqref{eq:(C.11+)}, the third step from the triangle inequality, the fourth step from \eqref{eq:(D.15p)} and submultiplicativity of the Frobenius norm, and the final step from \eqref{eq:(D.10m)} combined with the fact that $\Vert\vtheta^\ast-\vtheta_0^v\Vert\leq\Vert\vtheta-\vtheta_0^v\Vert$ (since $\vtheta^\ast$ is a mean value between $\vtheta$ and $\vtheta_0^v$).
	The conclusion follows upon choosing $d_0$ small enough, such that $a - C\big\Vert\vtheta - \vtheta_0^v\big\Vert>0$.
\end{proof}

\begin{lemV}\label{lem:4}
	Suppose Assumptions~\ref{ass:cons} and \ref{ass:an} hold, and define
	\[
	\mu_t(\vtheta, d) =\sup_{\Vert\vtau-\vtheta\Vert\leq d}\big\Vert\Vv_{t}(\vtau) - \Vv_{t}(\vtheta)\big\Vert.
	\]
	Then, condition (N3) (ii) of \citet{Wei91} holds, i.e.,
	\[
	\E\big[\mu_t(\vtheta, d)\big] \leq bd\qquad\text{for}\ \big\Vert\vtheta-\vtheta_0^v\big\Vert + d\leq d_0
	\]
	for sufficiently large $n$ and some strictly positive $b$, $d$, $d_0$.
\end{lemV}

\begin{proof}
	Choose $d_0>0$ sufficiently small, such that $\big\{\vtheta\in\mTheta^v:\ \big\Vert\vtheta-\vtheta_0^v\big\Vert<d_0\big\}$ is a subset of the neighborhoods of Assumptions~\ref{ass:cons} \ref{it:bound} and \ref{ass:an} \ref{it:bound2.1}. Recalling the definition of $\Vv_{t}(\vtheta^v)$ from \eqref{eq:gt}, we decompose
	\begin{align*}
		\mu_t(\vtheta, d) &= \sup_{\Vert\vtau-\vtheta\Vert\leq d}\Big\Vert \nabla v_t(\vtau)\1_{\{X_t\leq v_t(\vtau)\}} - \nabla v_t(\vtheta)\1_{\{X_t\leq v_t(\vtheta)\}} + \beta\big[\nabla v_t(\vtheta)- \nabla v_t(\vtau)\big]\Big\Vert\\
		&\leq  \sup_{\Vert\vtau-\vtheta\Vert\leq d}\Big\Vert \nabla v_t(\vtau)\1_{\{X_t\leq v_t(\vtau)\}} - \nabla v_t(\vtheta)\1_{\{X_t\leq v_t(\vtheta)\}}\Big\Vert + \beta\sup_{\Vert\vtau-\vtheta\Vert\leq d}\big\Vert\nabla v_t(\vtau)-\nabla v_t(\vtheta)\big\Vert\\
		&=:\mu_{1t}(\vtheta, d) + \mu_{2t}(\vtheta, d).
	\end{align*}
	Similarly as in the proof of Theorem~\ref{thm:cons}, we define the $\mathcal{F}_{t-1}$-measurable quantities
	\begin{align*}
		\underline{\vtau} &:= \argmin_{\norm{\vtau-\vtheta}\leq d}v_t(\vtau),\\
		\overline{\vtau}  &:= \argmax_{\norm{\vtau-\vtheta}\leq d}v_t(\vtau),
	\end{align*}
	which exist by continuity of $v_t(\cdot)$.
	
	We first consider $\mu_{1t}(\vtheta, d)$. To be able to take the indicators out of the supremum in $\mu_{1t}(\vtheta, d)$, we distinguish two cases:
	
	\textbf{Case 1: $X_t\leq v_t(\vtheta)$}
	
	We further distinguish two cases (a)--(b):
	
	(a) If $X_t<v_t(\underline{\vtau})$, then both indicators in $\mu_{1t}(\vtheta,d)$ are equal to one, such that
	\[
	\mu_{1t}(\vtheta, d) = \sup_{\norm{\vtau-\vtheta}\leq d}\norm{\nabla v_t(\vtau) - \nabla v_t(\vtheta)}.
	\]
	
	(b) If $v_t(\underline{\vtau})\leq X_t \leq v_t(\overline{\vtau})$, then
	\begin{align}
		\mu_{1t}(\vtheta, d) &=\max\bigg\{\sup_{\substack{\norm{\vtau-\vtheta}\leq d\\ X_t\leq v_t(\vtau)}}\norm{\nabla v_t(\vtau) - \nabla v_t(\vtheta)},\ \norm{\nabla v_t(\vtheta)}\bigg\}\notag\\
		&\leq \sup_{\norm{\vtau-\vtheta}\leq d}\norm{\nabla v_t(\vtau) - \nabla v_t(\vtheta)} + \norm{\nabla v_t(\vtheta)}.\notag
	\end{align}
	
	(Notice that the third case that $X_t>v_t(\overline{\vtau})$ cannot occur, because already $X_t\leq v_t(\vtheta)$.)
	
	Together, (a) and (b) yield that
	\begin{align}
		\mu_{1t}(\vtheta, d) &\leq \sup_{\norm{\vtau-\vtheta}\leq d}\norm{\nabla v_t(\vtau) - \nabla v_t(\vtheta)} + \1_{\{v_t(\underline{\vtau})\leq X_t \leq v_t(\vtheta)\}}\norm{\nabla v_t(\vtheta)}\notag\\
		&\leq \sup_{\norm{\vtau-\vtheta}\leq d}\norm{\nabla v_t(\vtau) - \nabla v_t(\vtheta)} + \1_{\{v_t(\underline{\vtau})\leq X_t \leq v_t(\vtheta)\}}\sup_{\Vert\vtau-\vtheta_0^v\Vert\leq d_0}\norm{\nabla v_t(\vtau)},\label{eq:(p.9.1)}
	\end{align}
	where we used that $\big\Vert\vtheta-\vtheta_0^v\big\Vert\leq d_0-d\leq d_0$ to bound $\norm{\nabla v_t(\vtheta)}$ in the second line.
	
	\textbf{Case 2: $X_t> v_t(\vtheta)$}
	
	In this case,
	\begin{align}
		\mu_{1t}(\vtheta, d) &= \1_{\{X_t\leq v_t(\overline{\vtau})\}}\sup_{\substack{\Vert\vtau-\vtheta\Vert\leq d\\ X_t\leq v_t(\vtau)}}\norm{\nabla v_t(\vtau)}\notag\\
		&\leq \1_{\{X_t\leq v_t(\overline{\vtau})\}}\sup_{\Vert\vtau-\vtheta\Vert\leq d}\norm{\nabla v_t(\vtau)}.\label{eq:(N.12.2)}
	\end{align}
	Note that $\norm{\vtau - \vtheta}\leq d$ and our assumption $\norm{\vtheta-\vtheta_0^v}+d\leq d_0$ together imply that $\vtau$ is in a $d_0$-neighborhood of $\vtheta_0^v$, because $\norm{\vtau-\vtheta_0^v}=\norm{\vtau-\vtheta + \vtheta-\vtheta_0^v}\leq \norm{\vtau-\vtheta} + \norm{\vtheta-\vtheta_0^v}\leq d + (d_0-d)=d_0$.
	Hence, from \eqref{eq:(N.12.2)},
	\begin{equation}\label{eq:(p.9.2)}
		\mu_{1t}(\vtheta, d) \leq \1_{\{v_t(\vtheta)<X_t\leq v_t(\overline{\vtau})\}}\sup_{\Vert\vtau-\vtheta_0^v\Vert\leq d_0}\norm{\nabla v_t(\vtau)}.
	\end{equation}
	
	Combining the results from Cases 1 and 2 (i.e., \eqref{eq:(p.9.1)} and \eqref{eq:(p.9.2)}), we deduce that
	\begin{multline}
		\mu_{1t}(\vtheta, d) \leq \Big[ \1_{\{v_t(\underline{\vtau})\leq X_t\leq v_t(\vtheta)\}} + \1_{\{v_t(\vtheta)<X_t\leq v_t(\overline{\vtau})\}}\Big] \sup_{\Vert\vtau - \vtheta_0^v\Vert\leq d_0}\norm{\nabla v_t(\vtau)}\\
		+ \sup_{\Vert\vtau-\vtheta\Vert\leq d}\norm{\nabla v_t(\vtau) - \nabla v_t(\vtheta)}.\label{eq:mu t 1 decomp}
	\end{multline}
	By Assumptions~\ref{ass:cons}~\ref{it:cond dens} and \ref{it:bound} we have
	\begin{align}
		\E_{t-1}\Big[\1_{\{v_t(\underline{\vtau})\leq X_t\leq v_t(\vtheta)\}}\Big] &=\int_{v_t(\underline{\vtau})}^{v_t(\vtheta)}f_t^{X}(x)\D x\notag\\
		&\leq K\big|v_t(\vtheta) - v_t(\underline{\vtau})\big|\notag\\
		&= K\big|\nabla v_t(\vtheta^\ast)(\vtheta-\underline{\vtau})\big|\notag\\
		&\leq K V_1(\mathcal{F}_{t-1})\big\Vert\vtheta - \underline{\vtau}\big\Vert \notag\\
		&\leq K V_1(\mathcal{F}_{t-1}) d,\label{eq:bound 1}
	\end{align}
	and, similarly,
	\begin{equation}
		\E_{t-1}\Big[\1_{\{v_t(\vtheta)<X_t\leq v_t(\overline{\vtau})\}}\Big]  \leq K V_1(\mathcal{F}_{t-1}) d.\label{eq:bound 3}
	\end{equation}
	Moreover, the MVT implies that for some $\vtheta^\ast$ on the line connecting $\vtau$ and $\vtheta$,
	\begin{align}
		\sup_{\norm{\vtau-\vtheta}\leq d}\big\Vert\nabla v_t(\vtau) - \nabla v_t(\vtheta)\big\Vert &= \sup_{\norm{\vtau-\vtheta}\leq d}\big\Vert\nabla^2 v_t(\vtheta^\ast)(\vtau-\vtheta)\big\Vert\notag\\
		&\leq V_2(\mathcal{F}_{t-1})\norm{\vtau - \vtheta}\notag\\
		&\leq V_2(\mathcal{F}_{t-1}) d.\label{eq:(N.14.1)}
	\end{align}
	Therefore, utilizing \eqref{eq:mu t 1 decomp}--\eqref{eq:(N.14.1)},
	\[
	\E\big[\mu_{1t}(\vtheta,d)\big] \leq 2\E\big[K V_1^2(\mathcal{F}_{t-1})\big]d + \E\big[V_2(\mathcal{F}_{t-1})\big]d\leq Cd.
	\]
	
	By arguments leading to \eqref{eq:(N.14.1)}, we also have that
	\[
	\E\big[\mu_{2t}(\vtheta,d)\big]\leq\beta\E\big[V_2(\mathcal{F}_{t-1})\big]d\leq Cd.
	\]
	
	Overall, 
	\[
	\E\big[\mu_t(\vtheta,d)\big]\leq \E\big[\mu_{1t}(\vtheta,d)\big] + \E\big[\mu_{2t}(\vtheta,d)\big]\leq bd
	\]
	for some suitable $b>0$, as desired.
\end{proof}

\begin{lemV}\label{lem:5}
	Suppose Assumptions~\ref{ass:cons} and \ref{ass:an} hold. Then, condition (N3) (iii) of \citet{Wei91} holds, i.e.,
	\[
	\E\big[\mu_t^r(\vtheta, d)\big] \leq cd\qquad\text{for}\ \norm{\vtheta-\vtheta_0^v} + d\leq d_0
	\]
	for sufficiently large $n$ and some $c>0$, $d\geq0$, $d_0>0$ and $r>2$.
\end{lemV}

\begin{proof}
	The arguments resemble those in the proof of Lemma~V.\ref{lem:4}. We again pick $d_0>0$ sufficiently small, such that $\big\{\vtheta\in\mTheta^v:\ \norm{\vtheta-\vtheta_0^v}<d_0\big\}$ is a subset of the neighborhoods of Assumptions~\ref{ass:cons} \ref{it:bound} and \ref{ass:an} \ref{it:bound2.1}. We also work with $\mu_{it}(\vtheta, d)$ ($i=1,2$), $\underline{\vtau}$ and $\overline{\vtau}$ as defined in the proof of Lemma~V.\ref{lem:4}.
	
	By the $c_r$-inequality \citep[e.g.,][Theorem~9.28]{Dav94} and the fact that $\mu_t(\vtheta, d)\leq \mu_{1t}(\vtheta, d) + \mu_{2t}(\vtheta, d)$, we get that 
	\[
	\E\big[\mu_t^r(\vtheta, d)\big] \leq 2^{r-1}\sum_{i=1}^{2}\E\big[\mu_{it}^{r}(\vtheta, d)\big].
	\]
	Hence, to prove the claim, it suffices to show that $\E\big[\mu_{it}^{2+\iota}(\vtheta, d)\big]\leq cd$ for $\iota>0$ (from Assumption~\ref{ass:an} \ref{it:mom bounds cons2}) and some $c>0$.
	
	\textbf{First term ($\E\big[\mu_{1t}^{2+\iota}(\vtheta, d)\big]$):} Following the same arguments that led to \eqref{eq:mu t 1 decomp}, we obtain that
	\begin{multline*}
		\mu_{1t}^{2+\iota}(\vtheta, d) \leq \Big[\1_{\{v_t(\underline{\vtau})\leq X_t\leq v_t(\vtheta)\}} + \1_{\{v_t(\vtheta)<X_t\leq v_t(\overline{\vtau})\}}\Big]\sup_{\Vert\vtau - \vtheta_0^v\Vert\leq d_0}\norm{\nabla v_t(\vtau)}^{2+\iota}\\
		+ \sup_{\Vert\vtau-\vtheta\Vert\leq d}\norm{\nabla v_t(\vtau) - \nabla v_t(\vtheta)}^{2+\iota}.
	\end{multline*}
	By the LIE,
	\begin{multline}
		\E\big[\mu_{1t}^{2+\iota}(\vtheta, d)\big] \leq \E\bigg\{ \E_{t-1}\Big[\1_{\{v_t(\underline{\vtau})\leq X_t\leq v_t(\vtheta)\}} + \1_{\{v_t(\vtheta)<X_t\leq v_t(\overline{\vtau})\}}\Big]\sup_{\Vert\vtau - \vtheta_0^v\Vert\leq d_0}\norm{\nabla v_t(\vtau)}^{2+\iota}\bigg\}\\
		+ \E\bigg[\sup_{\Vert\vtau-\vtheta\Vert\leq d}\norm{\nabla v_t(\vtau) - \nabla v_t(\vtheta)}^{2+\iota}\bigg].\label{eq:mu t d bound}
	\end{multline}
	From \eqref{eq:bound 1}--\eqref{eq:bound 3},
	\[
	\E_{t-1}\Big[\1_{\{v_t(\underline{\vtau})\leq X_t\leq v_t(\vtheta)\}} + \1_{\{v_t(\vtheta)<X_t\leq v_t(\overline{\vtau})\}}\Big]\leq 2K V_1(\mathcal{F}_{t-1})d.
	\]
	From Assumption~\ref{ass:cons} \ref{it:bound},
	\[
	\sup_{\Vert\vtau - \vtheta_0^v\Vert\leq d_0}\norm{\nabla v_t(\vtau)}^{2+\iota}\leq V_1^{2+\iota}(\mathcal{F}_{t-1}),
	\]
	and from \eqref{eq:(N.14.1)}
	\begin{align}
		\sup_{\norm{\vtau-\vtheta}\leq d}\norm{\nabla v_t(\vtau) - \nabla v_t(\vtheta)}^{2+\iota}&= \sup_{\norm{\vtau-\vtheta}\leq d}\norm{\nabla v_t(\vtau) - \nabla v_t(\vtheta)}\times\sup_{\norm{\vtau-\vtheta}\leq d}\norm{\nabla v_t(\vtheta) - \nabla v_t(\vtau)}^{1+\iota}\notag\\
		&\leq V_2(\mathcal{F}_{t-1})d \times 2^\iota\bigg(\big\Vert \nabla v_t(\vtheta)\big\Vert^{1+\iota} + \sup_{\norm{\vtau-\vtheta}\leq d}\big\Vert\nabla v_t(\vtau) \big\Vert^{1+\iota}\bigg)\notag\\
		&\leq V_2(\mathcal{F}_{t-1})d \times 2^{1+\iota} V_1^{1+\iota}(\mathcal{F}_{t-1})\notag\\
		&=	2^{1+\iota}V_1^{1+\iota}(\mathcal{F}_{t-1})V_2(\mathcal{F}_{t-1})d\label{eq:(N.14.X)},
	\end{align}
	where we used the $c_r$-inequality in the second line.
	Inserting the above three relations into \eqref{eq:mu t d bound}, we get (using Assumption~\ref{ass:an} \ref{it:mom bounds cons2}) that
	\begin{align*}
		\E\big[\mu_{1t}^{2+\iota}(\vtheta, d)\big] &\leq \E\big[ 2K V_1(\mathcal{F}_{t-1}) V_1^{2+\iota}(\mathcal{F}_{t-1})d\big] + \E\big[2^{1+\iota}V_1^{1+\iota}(\mathcal{F}_{t-1})V_2(\mathcal{F}_{t-1})d\big]\\
		&\leq 2K \E\big[  V_1^{3+\iota}(\mathcal{F}_{t-1})\big] d  + 2^{1+\iota} \Big\{\E\big[V_1^{3+\iota}(\mathcal{F}_{t-1})\big]\Big\}^{(1+\iota)/(3+\iota)}\Big\{\E\big[V_2^{(3+\iota)/2}(\mathcal{F}_{t-1})\big]\Big\}^{2/(3+\iota)} d\\
		&\leq Cd,
	\end{align*}
	where the penultimate step follows from H\"{o}lder's inequality.
	
	\textbf{Second term ($\E\big[\mu_{2t}^{2+\iota}(\vtheta, d)\big]$):} By definition of $\mu_{2t}(\vtheta, d)$ and exploiting \eqref{eq:(N.14.X)}, we have that
	\begin{align*}
		\E\big[\mu_{2t}^{2+\iota}(\vtheta, d)\big]&= \beta^{2+\iota} \E\bigg[\sup_{\norm{\vtau-\vtheta}\leq d}\norm{\nabla v_t(\vtau) - \nabla v_t(\vtheta)}^{2+\iota}\bigg]\\
		&\leq \beta^{2+\iota} \E\big[2^{1+\iota}V_1^{1+\iota}(\mathcal{F}_{t-1})V_2(\mathcal{F}_{t-1})d\big]\\
		&\leq 2^{1+\iota}\beta^{2+\iota}\Big\{\E\big[V_1^{3+\iota}(\mathcal{F}_{t-1})\big]\Big\}^{(1+\iota)/(3+\iota)}\Big\{\E\big[V_2^{(3+\iota)/2}(\mathcal{F}_{t-1})\big]\Big\}^{2/(3+\iota)} d\\
		&\leq Cd.
	\end{align*}
	
	Combining the results for the first and second term, the conclusion follows.
\end{proof}

\begin{lemV}\label{lem:6}
	Suppose Assumptions~\ref{ass:cons} and \ref{ass:an} hold. Then,
	\[
	\E\norm{\Vv_{t}(\vtheta_0^v)}^{2+\iota}\leq C\qquad\text{for all }t\in\mathbb{N}.
	\]
	In particular, condition (N4) of \citet{Wei91} holds.
\end{lemV}

\begin{proof}
	The claim follows easily from the definition of $\Vv_{t}(\cdot)$ in \eqref{eq:gt}, which implies that
	\begin{align*}
		\E\norm{\Vv_{t}(\vtheta_0^v)}^{2+\iota} &\leq \E\norm{\nabla v_t(\vtheta_0^v)}^{2+\iota}\\
		&\leq \E\big[V_1^{2+\iota}(\mathcal{F}_{t-1})\big]\leq K
	\end{align*}
	by Assumption~\ref{ass:cons} \ref{it:bound} and Assumption~\ref{ass:an} \ref{it:mom bounds cons2}.
\end{proof}

\begin{lemV}\label{lem:2}
	Suppose Assumptions~\ref{ass:cons} and \ref{ass:an} hold. Then, as $n\to\infty$,
	\[
	\frac{1}{\sqrt{n}}\sum_{t=1}^{n}\Vv_{t}(\widehat{\vtheta}_n^{v})=o_{\P}(1).
	\]
\end{lemV}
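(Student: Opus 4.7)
The strategy is to exploit an approximate first-order optimality condition at the interior minimizer $\widehat{\vtheta}_n^v$. By Theorem~\ref{thm:cons} and Assumption~\ref{ass:an} \ref{it:int}, $\widehat{\vtheta}_n^v \in \inter(\mTheta^v)$ with probability approaching one, so I can restrict attention to this event. Although $Q_n^v(\vtheta^v) := n^{-1}\sum_{t=1}^n S^{\VaR}(v_t(\vtheta^v), X_t)$ fails to be everywhere differentiable, it is a.s.\ one-sided directionally differentiable. Let $\vg_n(\vtheta^v) := n^{-1}\sum_t \Vv_t(\vtheta^v)$ denote the \emph{formal} empirical gradient that ignores the indicator jumps. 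The plan is to relate the one-sided directional derivatives of $Q_n^v$ at $\widehat{\vtheta}_n^v$ to $\vg_n(\widehat{\vtheta}_n^v)$ and then use the interior-minimum condition $D^+_{\vu} Q_n^v(\widehat{\vtheta}_n^v) \ge 0$ to force $\vg_n(\widehat{\vtheta}_n^v)$ to be small.

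For a unit vector $\vu$, I would compute $\frac{d}{ds}\big|_{s=0^+} S^{\VaR}(v_t(\widehat{\vtheta}_n^v + s\vu), X_t)$ term by term, splitting into the three cases $X_t < v_t(\widehat{\vtheta}_n^v)$, $X_t > v_t(\widehat{\vtheta}_n^v)$, and the tied case $X_t = v_t(\widehat{\vtheta}_n^v)$. The non-tied indices reproduce $\vu^\prime \Vv_t(\widehat{\vtheta}_n^v)$ exactly, whereas a tied index contributes a value that depends on the sign of $\vu^\prime \nabla v_t(\widehat{\vtheta}_n^v)$. This yields the representation $D^+_{\vu} Q_n^v(\widehat{\vtheta}_n^v) = \vu^\prime \vg_n(\widehat{\vtheta}_n^v) + R_n^+(\vu)$, where $R_n^+(\vu) \ge 0$ is supported on the tied indices and obeys $R_n^+(\vu) \le n^{-1}\norm{\vu} \sum_{t : X_t = v_t(\widehat{\vtheta}_n^v)} \norm{\nabla v_t(\widehat{\vtheta}_n^v)}$. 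The crucial point, which I expect to be the main technical obstacle, is the case-by-case analysis establishing the definite sign of $R_n^+$. Applying the representation at both $\vu$ and $-\vu$, and using $D^+_{\pm\vu} Q_n^v(\widehat{\vtheta}_n^v) \ge 0$, gives a two-sided bound on $\vu^\prime \vg_n(\widehat{\vtheta}_n^v)$, which upon optimizing over $\vu$ yields
\[
\norm{\vg_n(\widehat{\vtheta}_n^v)} \le \frac{1}{n}\sum_{t : X_t = v_t(\widehat{\vtheta}_n^v)} \norm{\nabla v_t(\widehat{\vtheta}_n^v)}.
\]

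To finish, I would bound the right-hand side using Assumption~\ref{ass:an} \ref{it:eq bound}, which forces the number of tied indices to be $O(1)$ a.s., together with $\norm{\nabla v_t(\widehat{\vtheta}_n^v)} \le V_1(\mathcal{F}_{t-1})$ on the high-probability event that $\widehat{\vtheta}_n^v$ lies in the neighborhood of Assumption~\ref{ass:cons} \ref{it:bound}. A routine Markov plus union-bound argument, leveraging $\E\big[V_1^{3+\iota}(\mathcal{F}_{t-1})\big] \le K$ from Assumption~\ref{ass:an} \ref{it:mom bounds cons2}, yields $\max_{t \le n} V_1(\mathcal{F}_{t-1}) = O_{\P}(n^{1/(3+\iota)})$. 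Hence $\sqrt{n}\,\norm{\vg_n(\widehat{\vtheta}_n^v)} = O_{\P}(n^{-1/2 + 1/(3+\iota)}) = o_{\P}(1)$, since $\iota > 0$ gives $1/(3+\iota) < 1/3 < 1/2$. The take-away is that the sign-constrained structure of the tie correction is exactly what allows one to turn the one-sided FOC into a useful two-sided bound on the formal gradient, and the modest moment condition $\iota > 0$ then suffices to kill the finite tie contribution at the $\sqrt{n}$-rate.
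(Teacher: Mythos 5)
Your proposal is correct and follows essentially the same route as the paper's proof: both exploit the one-sided first-order optimality condition at the minimizer of the non-smooth objective, observe that the gap between the one-sided directional derivatives and the formal gradient $\frac{1}{\sqrt{n}}\sum_t \Vv_t(\widehat{\vtheta}_n^v)$ is supported on the tied indices $\{X_t = v_t(\widehat{\vtheta}_n^v)\}$, and then combine the a.s.\ $O(1)$ bound on the number of ties from Assumption~\ref{ass:an}~\ref{it:eq bound} with a Markov bound on $n^{-1/2}\max_{t\le n} V_1(\mathcal{F}_{t-1})$ using the moments of Assumption~\ref{ass:an}~\ref{it:mom bounds cons2}. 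The only differences are cosmetic (general directions $\vu$ with an explicit sign analysis of the tie correction, versus the paper's coordinate directions $\ve_j$ and the bound $|G_{j,n}(0)| \le G_{j,n}(\xi) - G_{j,n}(-\xi)$ followed by $\xi \downarrow 0$).
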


\begin{proof}
	Recall from Assumption~\ref{ass:cons} \ref{it:compact} that $\mTheta^{v}\subset\mathbb{R}^{p}$, such that $\vtheta^v=(\theta_1^v,\ldots,\theta_p^{v})^\prime$. Let $\ve_1,\ldots,\ve_p$ denote the standard basis of $\mathbb{R}^{p}$. Define
	\[
	S_{j,n}^{\VaR}(a) := \frac{1}{\sqrt{n}}\sum_{t=1}^{n}S^{\VaR}\big(v_t(\widehat{\vtheta}_n^{v}+a \ve_j), X_t\big),\qquad j=1,\ldots,p,
	\]
	where $a\in\mathbb{R}$. Let $G_{j,n}(a)$ be the right partial derivative of $S_{j,n}^{\VaR}(a)$, such that (see \eqref{eq:gt})
	\[
	G_{j,n}(a)=\frac{1}{\sqrt{n}}\sum_{t=1}^{n}\nabla_j v_t(\widehat{\vtheta}_n^{v}+a \ve_j)\big[\1_{\{X_t\leq v_t(\widehat{\vtheta}_n^{v}+a \ve_j)\}}-\beta\big],
	\]
	where $\nabla_j v_t(\cdot)$ is the $j$-th component of $\nabla v_t(\cdot)$. Then, $G_{j,n}(0)=\lim_{\xi\downarrow0}G_{j,n}(\xi)$ is the right partial derivative of
	\[
	S_n^{\VaR}(\vtheta^{v}) := \frac{1}{\sqrt{n}}\sum_{t=1}^{n}S^{\VaR}\big(v_t(\vtheta^v), X_t\big)
	\]
	at $\widehat{\vtheta}_n^{v}$ in the direction $\theta_j^{v}$.
	Correspondingly, $\lim_{\xi\downarrow0}G_{j,n}(-\xi)$ is the left partial derivative. Because $S_n^{\VaR}(\cdot)$ achieves its minimum at $\widehat{\vtheta}_n^{v}$, the left derivative must be non-positive and the right derivative must be non-negative. Thus, for sufficiently small $\xi>0$,
	\begin{align*}
		\big|G_{j,n}(0)\big| &\leq G_{j,n}(\xi) - G_{j,n}(-\xi)\\
		&= \frac{1}{\sqrt{n}}\sum_{t=1}^{n}\nabla_j v_t(\widehat{\vtheta}_n^{v}+\xi \ve_j)\big[\1_{\{X_t\leq v_t(\widehat{\vtheta}_n^{v}+\xi \ve_j)\}}-\beta\big]\\
		&\hspace{2cm} - \frac{1}{\sqrt{n}}\sum_{t=1}^{n}\nabla_j v_t(\widehat{\vtheta}_n^{v}-\xi \ve_j)\big[\1_{\{X_t\leq v_t(\widehat{\vtheta}_n^{v}-\xi \ve_j)\}}-\beta\big].
	\end{align*}
	By continuity of $\nabla v_t(\cdot)$ (see Assumption~\ref{ass:an} \ref{it:diff}) it follows upon letting $\xi\to0$ that
	\begin{align}
		\big|G_{j,n}(0)\big| &\leq \frac{1}{\sqrt{n}}\sum_{t=1}^{n}\big|\nabla_j v_t(\widehat{\vtheta}_n^{v})\big|\1_{\{X_t= v_t(\widehat{\vtheta}_n^{v})\}}\notag\\
		&\leq \frac{1}{\sqrt{n}}\Big[\max_{t=1,\ldots,n}V_1(\mathcal{F}_{t-1})\Big]\sum_{t=1}^{n}\1_{\{X_t= v_t(\widehat{\vtheta}_n^{v})\}}.\label{eq:(N.7.1)}
	\end{align}
	Subadditivity, Markov's inequality and Assumption~\ref{ass:an} \ref{it:mom bounds cons2} imply
	\begin{align*}
		\P\Big\{n^{-1/2}\max_{t=1,\ldots,n}V_1(\mathcal{F}_{t-1})>\varepsilon\Big\} &\leq \sum_{t=1}^{n}\P\Big\{V_1(\mathcal{F}_{t-1})>\varepsilon n^{1/2}\Big\}\\
		&\leq \sum_{t=1}^{n}\varepsilon^{-3}n^{-3/2}\E\big[V_1^3(\mathcal{F}_{t-1})\big]\\
		& =O(n^{-1/2})=o(1).
	\end{align*}
	Combining this with Assumption~\ref{ass:an} \ref{it:eq bound}, we obtain from \eqref{eq:(N.7.1)} that
	\[
	\big|G_{j,n}(0)\big| \overset{\text{a.s.}}{=} o_{\P}(1)O(1)=o_{\P}(1).
	\]
	As this holds for every $j=1,\ldots,p$, we get that 
	\[
	\frac{1}{\sqrt{n}}\sum_{t=1}^{n}\Vv_{t}(\widehat{\vtheta}_n^{v})=\frac{1}{\sqrt{n}}\sum_{t=1}^{n}\nabla v_t(\widehat{\vtheta}_n^{v})\big[\1_{\{X_t\leq v_t(\widehat{\vtheta}_n^{v})\}} - \beta\big]=o_{\P}(1),
	\]
	which is just the conclusion.
\end{proof}

We have now established all lemmas that were used in the proof of Lemma~V.\ref{lem:1}.
Therefore, we may now turn to verifying Lemma~V.\ref{lem:7}:

\begin{proof}[{\textbf{Proof of Lemma~V.\ref{lem:7}:}}]
	By a Cram\'{e}r--Wold device, it suffices to show the univariate convergence
	\[
	n^{-1/2}\sum_{t=1}^{n}\vlambda^\prime\mV_n^{-1/2} \Vv_{t}(\vtheta_0^v)\overset{d}{\longrightarrow}N(0, 1),\qquad n\to\infty,
	\]
	for any $\vlambda\in\mathbb{R}^p$ with $\vlambda^\prime\vlambda=1$. This, however, follows easily from a standard central limit theorem for mixing random variables \citep[e.g.,][Theorem~5.20]{Whi01}, since $\E\big[\vlambda^\prime\mV_n^{-1/2} \Vv_{t}(\vtheta_0^v)\big]=0$, Lemma~V.\ref{lem:6} holds, the array $\big\{\vlambda^\prime\mV_n^{-1/2} \Vv_{t}(\vtheta_0^v)\big\}_{n\in\mathbb{N},\, t=1,\ldots,n}$ is (row-wise) $\alpha$-mixing of size $-\tilde q/( \tilde q-2)$ with $\tilde q>2$ from Assumption~\ref{ass:an}~\ref{it:mixing} and
	\begin{align*}
		\Var\bigg(n^{-1/2}\sum_{t=1}^{n}\vlambda^\prime\mV_n^{-1/2} \Vv_{t}(\vtheta_0^v)\bigg) &= \vlambda^\prime\mV_n^{-1/2}\bigg\{\frac{1}{n}\sum_{t=1}^{n}\Var\big( \Vv_{t}(\vtheta_0^v)\big)\bigg\}\mV_n^{-1/2}\vlambda\\
		&= \vlambda^\prime\mV_n^{-1/2}\bigg\{\frac{1}{n}\sum_{t=1}^{n}\E\big[ \Vv_{t}(\vtheta_0^v)\Vv_{t}^\prime(\vtheta_0^v)\big]\bigg\}\mV_n^{-1/2}\vlambda\\
		&= \vlambda^\prime\mV_n^{-1/2}\mV_n\mV_n^{-1/2}\vlambda\\
		&=\vlambda^\prime\vlambda=1>0.
	\end{align*}
	Here, the first equality exploits the fact that $\big\{\Vv_{t}(\vtheta_0^v)\big\}_{t\in\mathbb{N}}$ is a martingale difference sequence (because $\E_{t-1}\big[\Vv_{t}(\vtheta_0^v)\big]=\nabla v_t(\vtheta_0^v)\E_{t-1}\big[\1_{\{X_t\leq v_t(\vtheta_0^v)\}}-\beta\big]=\vzero$) and, hence, uncorrelated.
	For the third equality, note that
	\begin{align}
		\E\big[ \Vv_{t}(\vtheta_0^v)\Vv_{t}^\prime(\vtheta_0^v)\big]	&= \E\Big[ \nabla v_{t}(\vtheta_0^v)\nabla^\prime v_{t}(\vtheta_0^v)\big(\1_{\{X_t\leq v_t(\vtheta_0^v)\}} - \beta\big)^2\Big]\notag\\
		&= \E\bigg\{ \nabla v_{t}(\vtheta_0^v)\nabla^\prime v_{t}(\vtheta_0^v)\E_{t-1}\Big[\big(\1_{\{X_t\leq v_t(\vtheta_0^v)\}} - \beta\big)^2\Big]\bigg\}\notag\\
		&= \E\bigg\{ \nabla v_{t}(\vtheta_0^v)\nabla^\prime v_{t}(\vtheta_0^v)\E_{t-1}\Big[\big(\1_{\{X_t\leq v_t(\vtheta_0^v)\}} -  2\beta\1_{\{X_t\leq v_t(\vtheta_0^v)\}} +  \beta^2\big)\Big]\bigg\}\notag\\
		&= \E\bigg\{ \nabla v_{t}(\vtheta_0^v)\nabla^\prime v_{t}(\vtheta_0^v)\Big[(\beta - 2\beta^2 + \beta^2)\Big]\bigg\}\notag\\
		&=\beta(1-\beta)\E\big[\nabla v_t(\vtheta_0^v)\nabla^\prime v_t(\vtheta_0^v)\big].\label{eq:V simpl}
	\end{align}
	Theorem~5.20 of \citet{Whi01} now implies the desired univariate convergence, which concludes the proof.
\end{proof}

\section{Proofs of Lemmas~C.\ref{lem:1 tilde} and C.\ref{lem:7 tilde}}\label{add results CoVaR norm}

Before proving Lemmas~C.\ref{lem:1 tilde} and C.\ref{lem:7 tilde}, we have to introduce some additional notation.
Recall from \eqref{eq:(12+) prev} that
\begin{equation}\label{eq:(12+)}
	\Cc_{t}(\vtheta^c, \vtheta^v)= \1_{\{X_t> v_t(\vtheta^v)\}}\nabla c_t(\vtheta^c)\big[\1_{\{Y_t\leq c_t(\vtheta^c)\}}-\alpha\big].
\end{equation}
The LIE and Assumption~\ref{ass:cons} \ref{it:smooth} imply that
\begin{align*}
	\E\big[\Cc_{t}(\vtheta^c, \vtheta^v)\big] &= \E\bigg\{\nabla c_t(\vtheta^c)\E_{t-1}\big[\1_{\{X_t>v_t(\vtheta^v),\, Y_t\leq c_t(\vtheta^c)\}}-\alpha \1_{\{X_t>v_t(\vtheta^v)\}}\big]\bigg\}\\	
	&= \E\bigg\{\nabla c_t(\vtheta^c) \Big[\P_{t-1}\big\{X_t> v_t(\vtheta^v), Y_t\leq c_t(\vtheta^c)\big\}-\alpha \P_{t-1}\big\{X_t> v_t(\vtheta^v)\big\}\Big]\bigg\}\\
	&= \E\bigg\{\nabla c_t(\vtheta^c) \Big[ F_t^{Y}\big(c_t(\vtheta^c)\big) - F_{t}\big(v_t(\vtheta^v), c_t(\vtheta^c)\big)-\alpha \big\{ 1-F_{t}^{X}\big(v_t(\vtheta^v)\big)\big\}\Big]\bigg\}.
\end{align*}
Assumptions~\ref{ass:cons}--\ref{ass:an} and the DCT allow us to interchange differentiation and expectation to yield that
\begin{align}
	\frac{\partial}{\partial \vtheta^c}\E\big[\Cc_{t}(\vtheta^c, \vtheta^v)\big] &= \E\bigg\{\nabla^2 c_t(\vtheta^c) \Big[ F_t^{Y}\big(c_t(\vtheta^c)\big) - F_{t}\big(v_t(\vtheta^v), c_t(\vtheta^c)\big)-\alpha \big\{ 1-F_{t}^{X}\big(v_t(\vtheta^v)\big)\big\}\Big]\bigg\}\notag\\
	&\hspace{2cm} + \E\bigg\{\nabla c_t(\vtheta^c)\nabla^\prime c_t(\vtheta^c) \Big[  f_t^{Y}\big(c_t(\vtheta^c)\big) - \partial_2 F_{t}\big(v_t(\vtheta^v), c_t(\vtheta^c)\big)\Big]\bigg\},\label{eq:(p.15)}\\
	\frac{\partial}{\partial \vtheta^v}\E\big[\Cc_{t}(\vtheta^c, \vtheta^v)\big] &= \E\bigg\{\nabla c_t(\vtheta^c)\nabla^\prime v_t(\vtheta^v)\Big[\alpha f_t^X\big(v_t(\vtheta^v)\big) - \partial_1 F_t\big(v_t(\vtheta^v), c_t(\vtheta^c)\big)\Big]\bigg\},\label{eq:(p.16)}
\end{align}
where $\partial_i F_t(\cdot,\cdot)$ denotes the partial derivative with respect to the $i$-th component ($i=1,2$). Evaluating these quantities at the true parameters gives
\begin{align}
	\frac{\partial}{\partial \vtheta^c}\E\big[\Cc_{t}(\vtheta^c, \vtheta^v)\big]\Big\vert_{\substack{\vtheta^c=\vtheta_0^c\\ \vtheta^v=\vtheta_0^v}} &= \E\bigg\{\nabla c_t(\vtheta_0^c)\nabla^\prime c_t(\vtheta_0^c) \Big[  f_t^{Y}\big(c_t(\vtheta_0^c)\big) - \partial_2 F_{t}\big(v_t(\vtheta_0^v), c_t(\vtheta_0^c)\big)\Big]\bigg\},\notag\\
	\frac{\partial}{\partial \vtheta^v}\E\big[\Cc_{t}(\vtheta^c, \vtheta^v)\big]\Big\vert_{\substack{\vtheta^c=\vtheta_0^c\\ \vtheta^v=\vtheta_0^v}} &= \E\bigg\{\nabla c_t(\vtheta_0^c)\nabla^\prime v_t(\vtheta_0^v)\Big[\alpha f_t^X\big(v_t(\vtheta_0^v)\big) - \partial_1 F_t\big(v_t(\vtheta_0^v), c_t(\vtheta_0^c)\big)\Big]\bigg\}.\notag
\end{align}
Define
\begin{align*}
	\vlambda_n(\vtheta^c,\vtheta^v) &= \frac{1}{n}\sum_{t=1}^{n}\E\big[\Cc_{t}(\vtheta^c,\vtheta^v)\big],\\
	\mLambda_{n,(1)}(\vtheta^{c\ast},\vtheta^{v\ast}) &= \frac{1}{n}\sum_{t=1}^{n}\frac{\partial}{\partial\vtheta^c}\E\big[\Cc_{t}(\vtheta^c,\vtheta^v)\big]\Big\vert_{\substack{\vtheta^c=\vtheta^{c\ast}\\ \vtheta^v=\vtheta^{v\ast}}},\\
	\mLambda_{n,(2)}(\vtheta^{c\ast},\vtheta^{v\ast}) &= \frac{1}{n}\sum_{t=1}^{n}\frac{\partial}{\partial\vtheta^v}\E\big[\Cc_{t}(\vtheta^c,\vtheta^v)\big]\Big\vert_{\substack{\vtheta^c=\vtheta^{c\ast}\\  \vtheta^v=\vtheta^{v\ast}}}.
\end{align*}
With $\mLambda_{n,(1)}$ and $\mLambda_{n,(2)}$ as defined in Assumption~\ref{ass:an}~\ref{it:pd}, it then holds that
\begin{align*}
	\mLambda_{n,(1)}&=\mLambda_{n,(1)}(\vtheta_0^{c},\vtheta_0^{v}) = \frac{1}{n}\sum_{t=1}^{n}\E\bigg\{\nabla c_t(\vtheta_0^c)\nabla^\prime c_t(\vtheta_0^c) \Big[  f_t^{Y}\big(c_t(\vtheta_0^c)\big) - \partial_2 F_{t}\big(v_t(\vtheta_0^v), c_t(\vtheta_0^c)\big)\Big]\bigg\},\\
	\mLambda_{n,(2)}&=\mLambda_{n,(2)}(\vtheta_0^{c},\vtheta_0^{v}) = \frac{1}{n}\sum_{t=1}^{n}\E\bigg\{\nabla c_t(\vtheta_0^c)\nabla^\prime v_t(\vtheta_0^v)\Big[\alpha f_t^X\big(v_t(\vtheta_0^v)\big) - \partial_1 F_t\big(v_t(\vtheta_0^v), c_t(\vtheta_0^c)\big)\Big]\bigg\}.
\end{align*}

\begin{proof}[{\textbf{Proof of Lemma~C.\ref{lem:1 tilde}:}}]
	The MVT (applied twice) and $\vlambda_n(\vtheta_0^{c}, \vtheta_0^{v})=\vzero$ imply that
	\begin{align*}
		\vlambda_n(\widehat{\vtheta}_n^{c}, \widehat{\vtheta}_n^{v}) &= \vlambda_n(\vtheta_0^{c}, \widehat{\vtheta}_n^{v}) + \mLambda_{n,(1)}(\vtheta^{c\ast}, \widehat{\vtheta}_n^{v}) (\widehat{\vtheta}_n^c- \vtheta_0^c)\\
		&=\vlambda_n(\vtheta_0^{c}, \vtheta_0^{v}) + \mLambda_{n,(2)}(\vtheta_0^c,\vtheta^{v\ast})(\widehat{\vtheta}_n^v- \vtheta_0^v) + \mLambda_{n,(1)}(\vtheta^{c\ast}, \widehat{\vtheta}_n^{v}) (\widehat{\vtheta}_n^c- \vtheta_0^c)\\
		&=\mLambda_{n,(2)}(\vtheta_0^c,\vtheta^{v\ast})(\widehat{\vtheta}_n^v- \vtheta_0^v) + \mLambda_{n,(1)}(\vtheta^{c\ast}, \widehat{\vtheta}_n^{v}) (\widehat{\vtheta}_n^c- \vtheta_0^c)
	\end{align*}
	for some $\vtheta^{c\ast}$ ($\vtheta^{v\ast}$) on the line connecting $\widehat{\vtheta}_n^{c}$ and $\vtheta_0^{c}$ ($\widehat{\vtheta}_n^{v}$ and $\vtheta_0^{v}$). (Recall our convention to abuse notation slightly when applying the mean value theorem to multivariate functions. Again, the argument goes through componentwise, yet the notation would be more involved.) Thus, 
	\begin{align*}
		\mLambda_{n,(1)}(\vtheta^{c\ast}, \widehat{\vtheta}_n^{v})\sqrt{n}(\widehat{\vtheta}_n^{c} -  \vtheta_0^c) &=   \sqrt{n}\vlambda_n(\widehat{\vtheta}_n^{c}, \widehat{\vtheta}_n^{v}) - \mLambda_{n,(2)}(\vtheta_0^c,\vtheta^{v\ast})\sqrt{n}(\widehat{\vtheta}_n^v- \vtheta_0^v)\\
		&=  \sqrt{n}\vlambda_n(\widehat{\vtheta}_n^{c}, \widehat{\vtheta}_n^{v}) + \mLambda_{n,(2)}(\vtheta_0^c,\vtheta^{v\ast})\times\\
		&\hspace{4cm}\times\big[\mLambda_n^{-1}+o_{\P}(1)\big]\bigg[\frac{1}{\sqrt{n}}\sum_{t=1}^{n}\Vv_{t}(\vtheta_0^v) + o_{\P}(1)\bigg],
	\end{align*}
	where we used Lemma~V.\ref{lem:1} in the second step. In light of this, we only have to show that
	\begin{enumerate}
		\item[(i)] $\mLambda_{n,(1)}^{-1}(\vtheta^{c\ast}, \widehat{\vtheta}_n^{v})=\mLambda_{n,(1)}^{-1} + o_{\P}(1)$;
		\item[(ii)] $\mLambda_{n,(2)}(\vtheta_0^c,\vtheta^{v\ast})=\mLambda_{n,(2)} + o_{\P}(1)$;
		\item[(iii)] $\sqrt{n}\vlambda_n(\widehat{\vtheta}_n^{c},\widehat{\vtheta}_n^{v})=-\frac{1}{\sqrt{n}}\sum_{t=1}^{n}\Cc_{t}(\vtheta_{0,n}^c,\widehat{\vtheta}_n^{v}) + o_{\P}(1)$;
	\end{enumerate}
	where $\vtheta_{0,n}^c$ is defined below in this proof. 
	
	Items (i) and (ii) are established in Lemma~C.\ref{lem:1+ tilde}. 
	For this, note that $\vtheta^{c\ast}\overset{\P}{\longrightarrow}\vtheta_0^c$ ($\vtheta^{v\ast}\overset{\P}{\longrightarrow}\vtheta_0^v$), because $\vtheta^{c\ast}$ ($\vtheta^{v\ast}$) lies on the line connecting $\widehat{\vtheta}_n^{c}$ and $\vtheta_0^{c}$ ($\widehat{\vtheta}_n^{v}$ and $\vtheta_0^{v}$), and $\widehat{\vtheta}_n^{c}\overset{\P}{\longrightarrow}\vtheta_0^c$ ($\widehat{\vtheta}_n^{v}\overset{\P}{\longrightarrow}\vtheta_0^v$) from Theorem~\ref{thm:cons}.
	
	To prove (iii), by Lemma~A.1 in \citet{Wei91}, it suffices to show that
	\begin{enumerate}
		\item[(iii.a)] conditions (N1)--(N5) in the notation of \citet{Wei91} hold;
		\item[(iii.b)] $\frac{1}{\sqrt{n}}\sum_{t=1}^{n}\Cc_{t}(\widehat{\vtheta}_n^{c},\widehat{\vtheta}_n^{v})=o_{\P}(1)$;
		\item[(iii.c)] $\widehat{\vtheta}_n^{c}-\vtheta_{0,n}^c\overset{\P}{\longrightarrow}\vzero$.
	\end{enumerate}
	For (iii.a), (N1) is shown in Lemma~C.\ref{lem:0 tilde}, and the mixing condition (N5) follows from Assumption~\ref{ass:an}~\ref{it:mixing}. 
	Establishing (N2) requires some more work than in the proof of Lemma~V.\ref{lem:1}. 
	To do so, we have to show that for each $n$ there exists some $\vtheta_{0,n}^{c}\in\mTheta^{c}$, such that $\vlambda_n(\vtheta_{0,n}^{c},\widehat{\vtheta}_n^v)=\vzero$. 
	Fix $n\in\mathbb{N}$ and recall from the LIE that $\vlambda_n(\vtheta_0^c,\vtheta_0^v)=\vzero$.
	Next, we apply the implicit function theorem \citep[e.g.,][Theorem~9.2]{Mun91}. This is possible because of the invertibility of 
	\[
	\mLambda_{n,(1)}=\mLambda_{n,(1)}(\vtheta_0^c, \vtheta_0^v) = \frac{1}{n}\sum_{t=1}^{n}  \frac{\partial}{\partial \vtheta^c}\E\big[\Cc_{t}(\vtheta^c, \vtheta^v)\big]\Big\vert_{\substack{\vtheta^c=\vtheta_0^c\\ \vtheta^v=\vtheta_0^v}}
	\]
	(by Assumption~\ref{ass:an} \ref{it:pd}) and the continuous differentiability of 
	\[
	\vlambda_n(\cdot,\cdot)=\frac{1}{n}\sum_{t=1}^{n}\E\big[\Cc_{t}(\cdot,\cdot)\big]
	\]
	from \eqref{eq:(p.15)}, \eqref{eq:(p.16)}, Assumption~\ref{ass:cons}~\ref{it:Lipschitz cons} and Assumption~\ref{ass:an}~\ref{it:diff}.
	An application of the implicit function theorem now implies that there exist a neighborhood $\mathcal{N}(\vtheta_0^v)\subset\mTheta^v$ of the interior point $\vtheta_0^v$ and a (unique) continuously differentiable function $\vtheta_n^c(\cdot)$ satisfying $\vtheta_n^c(\vtheta_0^v)=\vtheta_0^c$, such that
	\[
	\vlambda_n\big(\vtheta_n^c(\vtheta^v), \vtheta^v\big)=\vzero
	\]
	holds for all $\vtheta^v\in\mathcal{N}(\vtheta_0^v)$. 
	The continuity of $\vtheta^c_n(\cdot)$ ensures that we can choose $\mathcal{N}(\vtheta_0^v)$ such that the image of the map $\mathcal{N}(\vtheta_0^v)\ni\vtheta\mapsto\vtheta_n^c(\vtheta)$ is in $\mTheta^c$ (using also the fact that $\vtheta_0^c$ is in the interior of $\mTheta^c$). We conclude that on the set $\big\{\omega\in\Omega\colon\Vert\widehat{\vtheta}_n^v-\vtheta_0^v\Vert\leq\varepsilon_0\big\}$ with $\varepsilon_0>0$ chosen such that $\widehat{\vtheta}_n^v\in \mathcal{N}(\vtheta_0^v)$, there exists $\vtheta_{0,n}^c\in\mTheta^c$ (viz.~$\vtheta_{0,n}^{c}=\vtheta_n^c(\widehat{\vtheta}_n^v)$) with 
	\begin{equation}\label{eq:(2.1)}
		\vlambda_n(\vtheta_{0,n}^c, \widehat{\vtheta}_n^v)=\vzero.
	\end{equation}
	Hence, (N2) holds.
	
	Note that $\varepsilon_0$ can be chosen independently of $n$. To see this, observe that in the notation of the proof of the implicit function theorem in \citet[Theorem~9.2]{Mun91},
	\[
	F(\vx,\vy)=\begin{pmatrix}
		\vx\\
		f(\vx,\vy)
	\end{pmatrix} = \begin{pmatrix}
		\vtheta^v\\
		\vlambda_n(\vtheta^c,\vtheta^v)
	\end{pmatrix},
	\]
	where the last equality is taken to imply that $\vx$, $\vy$ and $f(\vx,\vy)$ in \citeauthor{Mun91}' \citeyearpar{Mun91} notation play the role of $\vtheta^v$, $\vtheta^c$ and $\vlambda_n(\vtheta^c,\vtheta^v)$, respectively.
	The Jacobian of $F(\vx,\vy)$ is given by
	\[
	D\,F = \begin{pmatrix}
		\mI & \vzero\\
		\partial f/\partial \vx & \partial f/\partial \vy
	\end{pmatrix}=\begin{pmatrix}
		\mI & \vzero\\
		\partial\vlambda_n/\partial\vtheta^v & \partial\vlambda_n/\partial\vtheta^c 
	\end{pmatrix}.
	\]
	In the proof of the implicit function theorem in \citet{Mun91}, the neighborhood $\mathcal{N}(\vtheta_0^v)$ is chosen such that
	\begin{equation}\label{eq:det}
		\det(D\,F) = \det\begin{pmatrix}
			\mI & \vzero\\
			\partial\vlambda_n/\partial\vtheta^v & \partial\vlambda_n/\partial\vtheta^c 
		\end{pmatrix}\neq0
	\end{equation}
	\citep[cf.~the proof of Theorem~8.3 in][]{Mun91} and, for a certain small constant $c>0$,
	\begin{equation}\label{eq:ct DF}
		\Bigg\Vert \begin{pmatrix}
			\mI & \vzero\\
			\partial\vlambda_n(\vtheta^c,\vtheta^v)/\partial\vtheta^v & \partial\vlambda_n(\vtheta^c,\vtheta^v)/\partial\vtheta^c 
		\end{pmatrix} - \begin{pmatrix}
			\mI & \vzero\\
			\partial\vlambda_n(\vtheta_0^c,\vtheta_0^v)/\partial\vtheta^v & \partial\vlambda_n(\vtheta_0^c,\vtheta_0^v)/\partial\vtheta^c 
		\end{pmatrix} \Bigg\Vert<c
	\end{equation}
	\citep[cf.~the proof of Lemma~8.1 in][]{Mun91} for all $(\vtheta^{c}, \vtheta^{v})\in\mathcal{N}(\vtheta_0^c)\times \mathcal{N}(\vtheta_0^v)$, where $\mathcal{N}(\vtheta_0^c)$ is some neighborhood of $\vtheta_0^c$.
	
	Regarding \eqref{eq:det}, note that it is equivalent to
	\begin{align}
		0 &\neq \det(\mI)\det\big(\partial\vlambda_n/\partial\vtheta^c\big)\notag\\
		&=\det\big(\partial\vlambda_n(\vtheta^c,\vtheta^v)/\partial\vtheta^c\big)\notag\\
		&=\det\big(\mLambda_{n,(1)}(\vtheta^c,\vtheta^v)\big)\notag\\
		&=\det\big(\mLambda_{n,(1)}(\vtheta^c,\vtheta^v)-\mLambda_{n,(1)}(\vtheta_0^c,\vtheta_0^v) + \mLambda_{n,(1)}(\vtheta_0^c,\vtheta_0^v)\big).\label{eq:det input}
	\end{align}
	As shown in \eqref{eq:cty lambda1} below,
	\[
	\big\Vert\mLambda_{n,(1)}(\vtheta^c,\vtheta^v)-\mLambda_{n,(1)}(\vtheta_0^c,\vtheta_0^v)\big\Vert\leq C \big\Vert\vtheta^c-\vtheta_0^c\big\Vert + C \big\Vert\vtheta^v-\vtheta_0^v\big\Vert
	\]
	for some $C>0$ independent of $n$. 
	Moreover, by Assumption~\ref{ass:an}~\ref{it:pd}, $\det\big(\mLambda_{n,(1)}(\vtheta_0^c,\vtheta_0^v)\big)\geq a>0$ for some $a>0$ not depending on $n$.
	The continuity of the determinant in \eqref{eq:det input} therefore implies that the neighborhoods of $\vtheta_0^c$ and $\vtheta_0^v$ where \eqref{eq:det} holds can be chosen independently of $n$.
	
	Regarding \eqref{eq:ct DF}, observe that by the triangle inequality, \eqref{eq:cty lambda2} and \eqref{eq:cty lambda1},
	\begin{align*}
		&\Bigg\Vert \begin{pmatrix}
			\mI & \vzero\\
			\partial\vlambda_n(\vtheta^c,\vtheta^v)/\partial\vtheta^v & \partial\vlambda_n(\vtheta^c,\vtheta^v)/\partial\vtheta^c 
		\end{pmatrix} - \begin{pmatrix}
			\mI & \vzero\\
			\partial\vlambda_n(\vtheta_0^c,\vtheta_0^v)/\partial\vtheta^v & \partial\vlambda_n(\vtheta_0^c,\vtheta_0^v)/\partial\vtheta^c 
		\end{pmatrix} \Bigg\Vert\\
		&\leq \big\Vert \mLambda_{n,(1)}(\vtheta^c,\vtheta^v) - \mLambda_{n,(1)}(\vtheta_0^c,\vtheta_0^v)\big\Vert + \big\Vert \mLambda_{n,(2)}(\vtheta^c,\vtheta^v) - \mLambda_{n,(2)}(\vtheta_0^c,\vtheta_0^v)\big\Vert\\
		&\leq C \big\Vert \vtheta^c - \vtheta_0^c \big\Vert + C \big\Vert \vtheta^v - \vtheta_0^v\big\Vert,
	\end{align*}
	where $C$ is independent of $n$.
	Therefore, the neighborhoods of $\vtheta_0^c$ and $\vtheta_0^v$ where \eqref{eq:ct DF} holds can be chosen independently of $n$.
	
	Overall, it follows that it is possible to choose a universal $\varepsilon_0>0$ that works for all $n$.

	Without further mention, we work on the set $\big\{\omega\in\Omega\colon\Vert\widehat{\vtheta}_n^v-\vtheta_0^v\Vert\leq\varepsilon_0\big\}$ to verify the remaining conditions used to establish (iii) in the following, such that the existence of $\vtheta_{0,n}^c$ satisfying \eqref{eq:(2.1)} is guaranteed. 
	This is possible, because
	\begin{align*}
		&\P\bigg\{\Big\Vert \sqrt{n}\vlambda_n(\widehat{\vtheta}_n^{c},\widehat{\vtheta}_n^{v})+\frac{1}{\sqrt{n}}\sum_{t=1}^{n}\Cc_{t}(\vtheta_{0,n}^c,\widehat{\vtheta}_n^{v}) \Big\Vert>\varepsilon \bigg\}\\
		&\leq \P\bigg\{\Big\Vert \sqrt{n}\vlambda_n(\widehat{\vtheta}_n^{c},\widehat{\vtheta}_n^{v})+\frac{1}{\sqrt{n}}\sum_{t=1}^{n}\Cc_{t}(\vtheta_{0,n}^c,\widehat{\vtheta}_n^{v}) \Big\Vert>\varepsilon,\  \Vert\widehat{\vtheta}_n^v-\vtheta_0^v\Vert\leq\varepsilon_0\bigg\}
		+  \P\big\{\Vert\widehat{\vtheta}_n^v-\vtheta_0^v\Vert>\varepsilon_0\big\}\\
		&= \P\bigg\{\Big\Vert \sqrt{n}\vlambda_n(\widehat{\vtheta}_n^{c},\widehat{\vtheta}_n^{v})+\frac{1}{\sqrt{n}}\sum_{t=1}^{n}\Cc_{t}(\vtheta_{0,n}^c,\widehat{\vtheta}_n^{v}) \Big\Vert>\varepsilon,\  \Vert\widehat{\vtheta}_n^v-\vtheta_0^v\Vert\leq\varepsilon_0\bigg\}
		+  o(1),
	\end{align*}
	where we used Theorem~\ref{thm:cons} in the final step. 
	Thus, for the purposes of verifying (iii), we may assume that $\Vert\widehat{\vtheta}_n^v-\vtheta_0^v\Vert\leq\varepsilon_0$.
	Hence, it is for $\widehat{\vtheta}_n^v$ with $\Vert\widehat{\vtheta}_n^v-\vtheta_0^v\Vert\leq\varepsilon_0$ that we establish (N3)--(N4).
	Specifically, condition (N3) is verified in Lemmas~C.\ref{lem:3 tilde}--C.\ref{lem:5 tilde} below, and the remaining condition (N4) in Lemma~C.\ref{lem:6 tilde}. This shows (iii.a). 
	The result in (iii.b) follows from Lemma~C.\ref{lem:2 tilde}.
	Finally, (iii.c) follows from Theorem~\ref{thm:cons} in combination with $\vtheta_{0,n}^c\overset{\P}{\longrightarrow}\vtheta_0^c$, which follows from Assumption~\ref{ass:an}~\ref{it:lambda func} via
	\[
	\big\Vert \vtheta_{0,n}^c - \vtheta_0^c\big\Vert = \big\Vert \vtheta_{n}^c(\widehat{\vtheta}_n^v) - \vtheta_n^c(\vtheta_0^v)\big\Vert \leq K\big\Vert \widehat{\vtheta}_n^v - \vtheta_0^v\big\Vert=o_{\P}(1).
	\]
	In sum, we obtain the desired result.
\end{proof}

\setcounter{lemC}{2}

\begin{lemC}\label{lem:1+ tilde}
	Suppose Assumptions~\ref{ass:cons} and \ref{ass:an} hold. Then, as $n\to\infty$, 
	\begin{align*}
		\mLambda_{n,(1)}^{-1}(\vtheta^{c\ast},\widehat{\vtheta}_n^v)-\mLambda_{n,(1)}^{-1}	&\overset{\P}{\longrightarrow}\vzero\qquad\text{for any}\ \vtheta^{c\ast}\overset{\P}{\longrightarrow}\vtheta_0^c\quad \text{and}\quad \widehat{\vtheta}_n^v\overset{\P}{\longrightarrow}\vtheta_0^v, \\
		\mLambda_{n,(2)}(\vtheta_0^c,\vtheta^{v\ast})-\mLambda_{n,(2)}	&\overset{\P}{\longrightarrow}\vzero\qquad\text{for any}\ \vtheta^{v\ast}\overset{\P}{\longrightarrow}\vtheta_0^v.
	\end{align*}
\end{lemC}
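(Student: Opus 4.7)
The plan is to mimic the strategy used to prove Lemma~V.\ref{lem:1+}. The key step is to establish a local Lipschitz bound on the maps $\mLambda_{n,(i)}(\cdot,\cdot)$ in a neighborhood $\mathcal{N}$ of $(\vtheta_0^c,\vtheta_0^v)$ chosen small enough that Assumptions~\ref{ass:cons}~\ref{it:bound} and \ref{ass:an}~\ref{it:bound2.1}--\ref{it:bound2.2} apply, namely
\[
\bigl\lVert\mLambda_{n,(i)}(\vtau^c,\vtau^v) - \mLambda_{n,(i)}(\vtheta^c,\vtheta^v)\bigr\rVert \leq C\bigl(\lVert\vtau^c-\vtheta^c\rVert + \lVert\vtau^v-\vtheta^v\rVert\bigr), \qquad i=1,2,
\]
uniformly for $(\vtau^c,\vtau^v),(\vtheta^c,\vtheta^v)\in\mathcal{N}$. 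Once this is available, both convergences follow exactly as in Lemma~V.\ref{lem:1+}: for any $\varepsilon>0$ pick $\delta>0$ with $2C\delta<\varepsilon$, then
\[
\P\Big\{\bigl\lVert\mLambda_{n,(1)}(\vtheta^\ast,\widehat{\vtheta}_n^v)-\mLambda_{(1)}\bigr\rVert>\varepsilon\Big\} \leq \P\Big\{\lVert\vtheta^\ast-\vtheta_0^c\rVert + \lVert\widehat{\vtheta}_n^v-\vtheta_0^v\rVert >\delta\Big\} = o(1),
\]
by hypothesis, and analogously for $\mLambda_{n,(2)}(\vtheta_0^c,\vtheta^\ast)$. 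The claim $\mLambda_{n,(1)}^{-1}(\vtheta^\ast,\widehat{\vtheta}_n^v)\overset{\P}{\longrightarrow}\mLambda_{(1)}^{-1}$ then follows from the continuous mapping theorem, using that $\mLambda_{(1)}$ is invertible by Assumption~\ref{ass:an}~\ref{it:pd}, together with the continuity of $\mLambda_{n,(1)}(\cdot,\cdot)$ established by the Lipschitz bound (so that the inverse exists on a neighborhood of $(\vtheta_0^c,\vtheta_0^v)$ with high probability for $n$ large).

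For the Lipschitz bound, we use the explicit formulas from \eqref{eq:(p.15)} and \eqref{eq:(p.16)}. The bound on $\mLambda_{n,(2)}$ is the simpler of the two: we decompose
\[
\nabla c_t(\vtau^c)\nabla' v_t(\vtau^v)\bigl[\alpha f_t^X(v_t(\vtau^v))-\partial_1 F_t(v_t(\vtau^v),c_t(\vtau^c))\bigr] - \nabla c_t(\vtheta^c)\nabla' v_t(\vtheta^v)\bigl[\alpha f_t^X(v_t(\vtheta^v))-\partial_1 F_t(v_t(\vtheta^v),c_t(\vtheta^c))\bigr]
\]
by telescoping one factor at a time (adding and subtracting mixed terms), applying the MVT to each gradient using the Hessian bounds $V_2,C_2$ from Assumption~\ref{ass:an}~\ref{it:bound2.1}--\ref{it:bound2.2}, and handling the bracketed density/partial-CDF differences via the Lipschitz condition of Assumption~\ref{ass:an}~\ref{it:Lipschitz} combined with a further MVT in $v_t$ or $c_t$. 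Hölder's inequality with conjugate exponents $(3+\iota, (3+\iota)/2)$, applied as in the proof of Lemma~V.\ref{lem:1+} (see \eqref{eq:second term}), then yields finite expectations using Assumption~\ref{ass:an}~\ref{it:mom bounds cons2}, together with the bound $|\partial_1 F_t|\leq K$ from~\ref{it:bound cdf}.

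The main obstacle lies in the second term of $\mLambda_{n,(1)}$, which by \eqref{eq:(p.15)} contains the summand
\[
\nabla^2 c_t(\vtheta^c)\bigl[F_t^Y(c_t(\vtheta^c)) - F_t(v_t(\vtheta^v),c_t(\vtheta^c)) - \alpha\{1-F_t^X(v_t(\vtheta^v))\}\bigr]
\]
as well as $\nabla c_t(\vtheta^c)\nabla' c_t(\vtheta^c)\bigl[f_t^Y(c_t(\vtheta^c)) - \partial_2 F_t(v_t(\vtheta^v),c_t(\vtheta^c))\bigr]$. The bracketed factors vanish at $(\vtheta_0^c,\vtheta_0^v)$ but not in general, so we cannot discard them; instead we control them via telescoping and apply the MVT to the CDFs (producing factors $f_t^X$, $\partial_2 F_t$ etc.\ bounded by $K$) times increments of $v_t,c_t$ bounded by $V_1,C_1$. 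The Lipschitz continuity of $\nabla^2 c_t(\cdot)$ provided by $C_3$ in Assumption~\ref{ass:an}~\ref{it:bound2.2} handles the difference in $\nabla^2 c_t$ itself, while differences of the densities and $\partial_2 F_t$ are absorbed via Assumption~\ref{ass:an}~\ref{it:Lipschitz}. After telescoping, every term is bounded in expectation by a product of moments of $V_1,V_2,V_3,C_1,C_2,C_3$; Hölder's inequality with the same exponent $3+\iota$ as in Lemma~V.\ref{lem:1+} then yields finite expectations from Assumption~\ref{ass:an}~\ref{it:mom bounds cons2}, giving the required bound with a constant $C$ not depending on $n$ (by stationarity, Assumption~\ref{ass:cons}~\ref{it:str stat}). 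This completes the argument.
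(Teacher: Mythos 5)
Your proposal is correct and follows essentially the same route as the paper: a local Lipschitz bound on $\mLambda_{n,(1)}(\cdot,\cdot)$ and $\mLambda_{n,(2)}(\cdot,\cdot)$ obtained by telescoping one factor at a time, mean value expansions, and H\"older's inequality under Assumptions~\ref{ass:cons} and \ref{ass:an}, followed by consistency of the plug-in arguments and the continuous mapping theorem for the inverse. The only cosmetic differences are that the paper splits the estimate into separate one-argument Lipschitz bounds (its terms $A_{3n}$ and $B_{3n}$) rather than stating a joint bound in both arguments, and that the H\"older exponents actually used are the conjugate pair $(3,3/2)$ rather than your $(3+\iota,(3+\iota)/2)$, with the $(3+\iota)$-moments of Assumption~\ref{ass:an}~\ref{it:mom bounds cons2} merely guaranteeing finiteness of the required lower-order moments.
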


\begin{proof}
	We begin with the second statement. Our first goal is to show that
	\begin{equation}\label{eq:cty lambda2}
		\big\Vert\mLambda_{n,(2)}(\vtau^c,\vtau^v) - \mLambda_{n,(2)}\big\Vert \leq C \big\Vert\vtau^c - \vtheta_0^c\big\Vert + C \big\Vert\vtau^v - \vtheta_0^v\big\Vert
	\end{equation}
	for any $\vtau^v$ in some neighborhood $\mathcal{N}(\vtheta_0^v)$ ensuring Assumption~\ref{ass:cons}~\ref{it:bound} and Assumption~\ref{ass:an}~\ref{it:bound2.1} hold, and any $\vtau^c$ in some neighborhood $\mathcal{N}(\vtheta_0^c)$ ensuring that Assumption~\ref{ass:an}~\ref{it:bound2.2} holds.
	By the triangle inequality,
	\[
	\big\Vert\mLambda_{n,(2)}(\vtau^c,\vtau^v) - \mLambda_{n,(2)}\big\Vert\leq \big\Vert\mLambda_{n,(2)}(\vtau^c,\vtau^v) - \mLambda_{n,(2)}(\vtheta_0^c, \vtau^v)\big\Vert + \big\Vert\mLambda_{n,(2)}(\vtheta_0^c, \vtau^v) - \mLambda_{n,(2)}\big\Vert.
	\]
	To establish \eqref{eq:cty lambda2}, it therefore suffices to show that
	\begin{align}
		\big\Vert \mLambda_{n,(2)}(\vtau^c,\vtau^v) - \mLambda_{n,(2)}(\vtheta_0^c, \vtau^v)\big\Vert & \leq C \big\Vert \vtau^c - \vtheta_0^c \big\Vert,\label{eq:(E.9)}\\
		\big\Vert \mLambda_{n,(2)}(\vtheta_0^c,\vtau^v) - \mLambda_{n,(2)}\big\Vert & \leq C \big\Vert \vtau^v - \vtheta_0^v \big\Vert.\label{eq:(E.10)}
	\end{align}
	First, we prove \eqref{eq:(E.10)}. 
	Exploit \eqref{eq:(p.16)} to write 
	\begin{align*}
		&\big\Vert\mLambda_{n,(2)}(\vtheta_0^c,\vtau^v) - \mLambda_{n,(2)}\big\Vert =\big\Vert\mLambda_{n,(2)}(\vtheta_0^c,\vtau^v) - \mLambda_{n,(2)}(\vtheta_0^c, \vtheta_0^v)\big\Vert\\
		&= \bigg\Vert\frac{1}{n}\sum_{t=1}^{n} \E\bigg\{\nabla c_t(\vtheta_0^c)\nabla^\prime v_t(\vtau^v)\Big[\alpha f_t^X\big(v_t(\vtau^v)\big) - \partial_1 F_t\big(v_t(\vtau^v), c_t(\vtheta_0^c)\big)\Big]\bigg\}\\
		&\hspace{1.27cm} - \E\bigg\{\nabla c_t(\vtheta_0^c)\nabla^\prime v_t(\vtheta_0^v)\Big[\alpha f_t^X\big(v_t(\vtheta_0^v)\big) - \partial_1 F_t\big(v_t(\vtheta_0^v), c_t(\vtheta_0^c)\big)\Big]\bigg\} \bigg\Vert\\
		&\leq \frac{\alpha}{n}\sum_{t=1}^{n} \E\Big\Vert\nabla c_t(\vtheta_0^c)\nabla^\prime v_t(\vtau^v)f_t^X\big(v_t(\vtau^v)\big) - \nabla c_t(\vtheta_0^c)\nabla^\prime v_t(\vtheta_0^v)f_t^X\big(v_t(\vtheta_0^v)\big)\Big\Vert\\
		&\hspace{1.27cm}+\frac{1}{n}\sum_{t=1}^{n}\E\Big\Vert\nabla c_t(\vtheta_0^c)\nabla^\prime v_t(\vtau^v)\partial_1 F_t\big(v_t(\vtau^v), c_t(\vtheta_0^c)\big) - \nabla c_t(\vtheta_0^c)\nabla^\prime v_t(\vtheta_0^v)\partial_1 F_t\big(v_t(\vtheta_0^v), c_t(\vtheta_0^c)\big)\Big\Vert\\
		&=:A_{2n} + B_{2n}.
	\end{align*}
	We only consider $A_{2n}$, because $B_{2n}$ can be dealt with similarly. A mean value expansion around $\vtheta_0^v$ implies
	\begin{align*}
		A_{2n} &=  \frac{\alpha}{n}\sum_{t=1}^{n} \E\bigg\Vert\nabla c_t(\vtheta_0^c)\big[\nabla^\prime v_t(\vtau^v) - \nabla^\prime v_t(\vtheta_0^v)\big] f_t^X\big(v_t(\vtau^v)\big) \\
		& \hspace{3cm} + \nabla c_t(\vtheta_0^c)\nabla^\prime v_t(\vtheta_0^v)\Big[f_t^X\big(v_t(\vtau^v) - f_t^X\big(v_t(\vtheta_0^v)\big)\Big]\bigg\Vert\\
		&\leq \frac{\alpha}{n}\sum_{t=1}^{n} \E\Big\Vert\nabla c_t(\vtheta_0^c)\big[\vtau^v - \vtheta_0^v\big]^\prime \big[\nabla^2 v_t(\vtau^{\ast})\big]^\prime f_t^X\big(v_t(\vtau^v)\big)\Big\Vert \\
		& \hspace{3cm} + K\E\Big\Vert\nabla c_t(\vtheta_0^c)\nabla^\prime v_t(\vtheta_0^v)\big|v_t(\vtau^v) - v_t(\vtheta_0^v)\big|\Big\Vert\\
		&\leq \frac{\alpha}{n}\sum_{t=1}^{n} K\E\big[C_1(\mathcal{F}_{t-1}) V_2(\mathcal{F}_{t-1})\big] \big\Vert\vtau^v - \vtheta_0^v\big\Vert \\
		& \hspace{3cm} + K\E\big[C_1(\mathcal{F}_{t-1}) V_1^2(\mathcal{F}_{t-1})\big] \big\Vert\vtau^v - \vtheta_0^v\big\Vert\\
		&\leq \frac{C}{n}\sum_{t=1}^{n} \Big\{\E\big[C_1^3(\mathcal{F}_{t-1})\big]\Big\}^{1/3} \Big\{\E\big[V_2^{3/2}(\mathcal{F}_{t-1})\big]\Big\}^{2/3} \big\Vert\vtau^v - \vtheta_0^v\big\Vert \\
		&\hspace{3cm} + \Big\{\E\big[C_1^3(\mathcal{F}_{t-1})\big]\Big\}^{1/3} \Big\{\E\big[V_1^{3}(\mathcal{F}_{t-1})\big]\Big\}^{2/3} \big\Vert\vtau^v - \vtheta_0^v\big\Vert \\
		&\leq C \norm{\vtau^v - \vtheta_0^v},
	\end{align*}
	where $\vtau^{\ast}$ is some mean value between $\vtau^v$ and $\vtheta_0^v$. Using identical arguments, we may show that 
	\[
	B_{2n} \leq C \norm{\vtau^v - \vtheta_0^v}.
	\]
	Hence, \eqref{eq:(E.10)} follows.
	
	To prove \eqref{eq:(E.9)}, we exploit \eqref{eq:(p.16)} to obtain
	\begin{align*}
		&\big\Vert \mLambda_{n,(2)}(\vtau^c,\vtau^v)-\mLambda_{n,(2)}(\vtheta_0^c,\vtau^v)  \big\Vert\\
		&=\bigg\Vert \frac{1}{n}\sum_{t=1}^{n} \E\bigg\{\nabla c_t(\vtau^c)\nabla^\prime v_t(\vtau^v)\Big[\alpha f_t^X\big(v_t(\vtau^v)\big) - \partial_1 F_t\big(v_t(\vtau^v), c_t(\vtau^c)\big)\Big]\bigg\}\\
		&\hspace{2cm} - \E\bigg\{\nabla c_t(\vtheta_0^c)\nabla^\prime v_t(\vtau^v)\Big[\alpha f_t^X\big(v_t(\vtau^v)\big) - \partial_1 F_t\big(v_t(\vtau^v), c_t(\vtheta_0^c)\big)\Big]\bigg\}  \bigg\Vert\\
		&=\bigg\Vert \frac{1}{n}\sum_{t=1}^{n} \E\bigg\{\big[\nabla c_t(\vtau^c)-\nabla c_t(\vtheta_0^c)\big]\nabla^\prime v_t(\vtau^v)\alpha f_t^X\big(v_t(\vtau^v)\big)\bigg\}\\
		&\hspace{2cm} - \E\bigg\{\big[\nabla c_t(\vtau^c)-\nabla c_t(\vtheta_0^c)\big]\nabla^\prime v_t(\vtau^v)\partial_1 F_t\big(v_t(\vtau^v), c_t(\vtau^c)\big)\bigg\}\\
		&\hspace{2cm} - \E\bigg\{\nabla c_t(\vtheta_0^c)\nabla^\prime v_t(\vtau^v)\Big[\partial_1 F_t\big(v_t(\vtau^v), c_t(\vtau^c)\big) - \partial_1 F_t\big(v_t(\vtau^v), c_t(\vtheta_0^c)\big)\Big]\bigg\}  \bigg\Vert\\
		&\leq \frac{1}{n}\sum_{t=1}^{n} \E\Big\Vert\big[\nabla c_t(\vtau^c)-\nabla c_t(\vtheta_0^c)\big]\nabla^\prime v_t(\vtau^v)\alpha f_t^X\big(v_t(\vtau^v)\big)\Big\Vert\\
		&\hspace{2cm} + \frac{1}{n}\sum_{t=1}^{n} \E\Big\Vert\big[\nabla c_t(\vtau^c)-\nabla c_t(\vtheta_0^c)\big]\nabla^\prime v_t(\vtau^v)\partial_1 F_t\big(v_t(\vtau^v), c_t(\vtau^c)\big)\Big\Vert\\
		&\hspace{2cm} + \frac{1}{n}\sum_{t=1}^{n} \E\bigg\Vert\nabla c_t(\vtheta_0^c)\nabla^\prime v_t(\vtau^v)\Big[\partial_1 F_t\big(v_t(\vtau^v), c_t(\vtau^c)\big) - \partial_1 F_t\big(v_t(\vtau^v), c_t(\vtheta_0^c)\big)\Big]\bigg\Vert.
	\end{align*}
	The first term can easily be bounded by
	\begin{align*}
		\frac{1}{n}\sum_{t=1}^{n} \E\Big[\big\Vert\nabla c_t(\vtau^c)-\nabla c_t(\vtheta_0^c)\big\Vert V_1(\mathcal{F}_{t-1})\alpha K\Big]&=\frac{\alpha K}{n}\sum_{t=1}^{n} \E\Big[\big\Vert \nabla^2 c_t(\vtau^{c\ast}) (\vtau^c - \vtheta_0^c)\big\Vert V_1(\mathcal{F}_{t-1})\Big]\\
		&\leq \frac{\alpha K}{n}\sum_{t=1}^{n} \E\Big[C_2(\mathcal{F}_{t-1}) \big\Vert \vtau^c - \vtheta_0^c\big\Vert  V_1(\mathcal{F}_{t-1})\Big]\\
		&\leq C \big\Vert \vtau^c - \vtheta_0^c\big\Vert,
	\end{align*}
	the second term can be bounded similarly, and the third term is bounded by
	\begin{align*}
		&\frac{1}{n}\sum_{t=1}^{n} \E\bigg[\big\Vert\nabla c_t(\vtheta_0^c)\big\Vert \cdot \big\Vert\nabla^\prime v_t(\vtau^v)\big\Vert \cdot \Big|\partial_1\partial_2 F_t\big(v_t(\vtau^v), c_t(\vtau^{c\ast})\big) \big\{c_t(\vtau^{c}) - c_t(\vtheta_0^{c})\big\}\Big|\bigg] \\
		&\leq \frac{1}{n}\sum_{t=1}^{n} \E\bigg[C_1(\mathcal{F}_{t-1}) V_1(\mathcal{F}_{t-1}) f_t\big(v_t(\vtau^v), c_t(\vtau^{c\ast})\big) \Big| \nabla^\prime c_t(\vtau^{c\ast})\big\{\vtau^{c} - \vtheta_0^{c}\big\}\Big|\bigg] \\
		&\leq \frac{1}{n}\sum_{t=1}^{n} \E\big[C_1^2(\mathcal{F}_{t-1}) V_1(\mathcal{F}_{t-1}) K \big] \big\Vert \vtau^{c} - \vtheta_0^{c}\big\Vert\\
		&\leq \frac{K}{n}\sum_{t=1}^{n} \Big\{\E\big[C_1^3(\mathcal{F}_{t-1})\big]\Big\}^{2/3} \Big\{\E\big[V_1^3(\mathcal{F}_{t-1}) \big]\Big\}^{1/3} \big\Vert \vtau^{c} - \vtheta_0^{c}\big\Vert\\
		&\leq C \big\Vert \vtau^{c} - \vtheta_0^{c}\big\Vert.
	\end{align*}
	Overall, \eqref{eq:(E.9)} follows.
	
	Since $\vtheta^{v\ast} \overset{\P}{\longrightarrow} \vtheta_0^v$, the second statement of the lemma easily follows from \eqref{eq:(E.10)}.
	
	To prove the first claim of the lemma, decompose
	\begin{align}
		\big\Vert\mLambda_{n,(1)} - \mLambda_{n,(1)}(\vtau^c,\vtau^v)\big\Vert &\leq \big\Vert\mLambda_{n,(1)} - \mLambda_{n,(1)}(\vtheta_0^c,\vtau^v)\big\Vert + \big\Vert\mLambda_{n,(1)}(\vtheta_0^c,\vtau^v) - \mLambda_{n,(1)}(\vtau^c,\vtau^v)\big\Vert\notag\\
		&=:A_{3n} + B_{3n}. \label{eq:(3.1)}
	\end{align}
	First, consider $A_{3n}$. Use \eqref{eq:(p.15)} to write
	\begin{align*}
		A_{3n} &= \big\Vert \mLambda_{n,(1)} - \mLambda_{n,(1)}(\vtheta_{0}^c,\vtau^v)\big\Vert=\big\Vert \mLambda_{n,(1)}(\vtheta_{0}^c,\vtheta_0^v) - \mLambda_{n,(1)}(\vtheta_{0}^c,\vtau^v)\big\Vert\\
		&= \bigg\Vert \frac{1}{n}\sum_{t=1}^{n} \E\bigg\{\nabla c_t(\vtheta_0^c)\nabla^\prime c_t(\vtheta_0^c) \Big[  f_t^{Y}\big(c_t(\vtheta_0^c)\big) - \partial_2 F_{t}\big(v_t(\vtheta_0^v), c_t(\vtheta_0^c)\big)\Big]\bigg\}\\
		&\hspace{1.3cm} -  \E\bigg\{\nabla c_t(\vtheta_0^c)\nabla^\prime c_t(\vtheta_0^c) \Big[  f_t^{Y}\big(c_t(\vtheta_0^c)\big) - \partial_2 F_{t}\big(v_t(\vtau^v), c_t(\vtheta_0^c)\big)\Big]\bigg\}\\
		&\hspace{1.3cm} + \E\bigg\{\nabla^2 c_t(\vtheta_0^c) \Big[ F_t^{Y}\big(c_t(\vtheta_0^c)\big) - F_{t}\big(v_t(\vtheta_0^v), c_t(\vtheta_0^c)\big)-\alpha \big\{ 1-F_{t}^{X}\big(v_t(\vtheta_0^v)\big)\big\}\Big]\bigg\}\\
		&\hspace{1.3cm} - \E\bigg\{\nabla^2 c_t(\vtheta_0^c) \Big[ F_t^{Y}\big(c_t(\vtheta_0^c)\big) - F_{t}\big(v_t(\vtau^v), c_t(\vtheta_0^c)\big)-\alpha \big\{ 1-F_{t}^{X}\big(v_t(\vtau^v)\big)\big\}\Big]\bigg\} \bigg\Vert\\
		&= \bigg\Vert \frac{1}{n}\sum_{t=1}^{n} \E\bigg\{\nabla c_t(\vtheta_0^c)\nabla^\prime c_t(\vtheta_0^c) \Big[  \partial_2 F_{t}\big(v_t(\vtau^v), c_t(\vtheta_0^c)\big) - \partial_2 F_{t}\big(v_t(\vtheta_0^v), c_t(\vtheta_0^c)\big)\Big]\bigg\}\\
		&\hspace{1.3cm} + \E\bigg\{\nabla^2 c_t(\vtheta_0^c) \Big[ F_{t}\big(v_t(\vtau^v), c_t(\vtheta_0^c)\big) - F_{t}\big(v_t(\vtheta_0^v), c_t(\vtheta_0^c)\big)\Big]\bigg\}\\
		&\hspace{1.3cm}-\alpha\E\bigg\{\nabla^2 c_t(\vtheta_0^c)  \Big[ F_{t}^{X}\big(v_t(\vtau^v)\big) - F_{t}^{X}\big(v_t(\vtheta_0^v)\big) \Big]\bigg\} \bigg\Vert.
	\end{align*}
	
	\textbf{First term ($A_{3n}$):} A mean value expansion around $\vtheta_0^v$ gives
	\begin{align*}
		&\norm{\E\bigg\{\nabla c_t(\vtheta_0^c)\nabla^\prime c_t(\vtheta_0^c) \Big[  \partial_2 F_{t}\big(v_t(\vtau^v), c_t(\vtheta_0^c)\big) - \partial_2 F_{t}\big(v_t(\vtheta_0^v), c_t(\vtheta_0^c)\big)\Big]\bigg\}}\\
		&\leq \E\norm{\nabla c_t(\vtheta_0^c)\nabla^\prime c_t(\vtheta_0^c)  \partial_1\partial_2 F_{t}\big(v_t(\vtau^{v\ast}), c_t(\vtheta_0^c)\big) \nabla^\prime v_t(\vtau^{v\ast}) \big(\vtau^v - \vtheta_0^v\big) }\\
		&\leq  \E\Big[C_1^2(\mathcal{F}_{t-1}) f_t\big(v_t(\vtau^{v\ast}), c_t(\vtheta_0^c)\big) V_1(\mathcal{F}_{t-1})\Big] \big\Vert\vtau^v - \vtheta_0^v\big\Vert\\
		&\leq K \E\Big[C_1^2(\mathcal{F}_{t-1}) V_1(\mathcal{F}_{t-1})\Big] \big\Vert\vtau^v - \vtheta_0^v\big\Vert\\
		&\leq K \Big\{\E\big[C_1^3(\mathcal{F}_{t-1})\big]\Big\}^{2/3} \Big\{\E\big[V_1^3(\mathcal{F}_{t-1})\big]\Big\}^{1/3} \big\Vert\vtau^v - \vtheta_0^v\big\Vert\\
		&\leq C \big\Vert\vtau^v - \vtheta_0^v\big\Vert,
	\end{align*}
	where $\vtau^{v\ast}$ is some value on the line connecting $\vtheta_0^v$ and $\vtau^v$, and the penultimate step follows from H\"{o}lder's inequality.

	\textbf{Second term ($A_{3n}$):} A mean value expansion around $\vtheta_0^v$ gives
	\begin{align*}
		&\norm{\E\bigg\{\nabla^2 c_t(\vtheta_0^c) \Big[ F_{t}\big(v_t(\vtau^v), c_t(\vtheta_0^c)\big) - F_{t}\big(v_t(\vtheta_0^v), c_t(\vtheta_0^c)\big)\Big]\bigg\}}\\
		&\leq \E\norm{\nabla^2 c_t(\vtheta_0^c) \partial_1 F_{t}\big(v_t(\vtau^{v\ast}), c_t(\vtheta_0^c)\big) \nabla^\prime v_t(\vtau^{v\ast}) \big(\vtau^v - \vtheta_0^v\big)}\\
		&\leq K\E\Big[C_2(\mathcal{F}_{t-1}) V_1(\mathcal{F}_{t-1})\Big] \big\Vert\vtau^v - \vtheta_0^v\big\Vert\\
		&\leq K\Big\{\E\big[C_2^{3/2}(\mathcal{F}_{t-1})\big]\Big\}^{2/3}\Big\{\E\big[V_1^3(\mathcal{F}_{t-1})\big]\Big\}^{1/3} \big\Vert\vtau^v - \vtheta_0^v\big\Vert\\
		&\leq C \big\Vert\vtau^v - \vtheta_0^v\big\Vert,
	\end{align*}
	where $\vtau^{v\ast}$ is some value on the line connecting $\vtheta_0^v$ and $\vtau^{v}$, and the penultimate step follows from H\"{o}lder's inequality.

	\textbf{Third term ($A_{3n}$):} A mean value expansion around $\vtheta_0^v$ gives
	\begin{align*}
		&\norm{\E\bigg\{\nabla^2 c_t(\vtheta_0^c) \Big[ F_{t}^{X}\big(v_t(\vtau^{v})\big) - F_{t}^{X}\big(v_t(\vtheta_0^v)\big)\Big]\bigg\}}\\
		&\leq \E\norm{\nabla^2 c_t(\vtheta_0^c) f_{t}^{X}\big(v_t(\vtau^{v\ast})\big) \nabla^\prime v_t(\vtau^{v\ast}) \big(\vtau^{v} - \vtheta_0^v\big)}\\
		&\leq K\E\Big[C_2(\mathcal{F}_{t-1}) V_1(\mathcal{F}_{t-1})\Big] \big\Vert\vtau^{v} - \vtheta_0^v\big\Vert\\
		&\leq K\Big\{\E\big[C_2^{3/2}(\mathcal{F}_{t-1})\big]\Big\}^{2/3}\Big\{\E\big[V_1^{3}(\mathcal{F}_{t-1})\big]\Big\}^{1/3} \big\Vert\vtau^{v} - \vtheta_0^v\big\Vert\\
		&\leq C \big\Vert\vtau^{v} - \vtheta_0^v\big\Vert,
	\end{align*}
	where $\vtau^{v\ast}$ is some value on the line connecting $\vtheta_0^v$ and $\vtau^{v}$, and the penultimate step follows from H\"{o}lder's inequality. 
	
	Combining the results for the different terms of $A_{3n}$, we conclude that
	\begin{equation}\label{eq:A34}
		A_{3n} \leq C \big\Vert\vtau^{v} - \vtheta_0^v\big\Vert.
	\end{equation}
	
	Now, consider $B_{3n}$. For this, use \eqref{eq:(p.15)} to write
	\begin{align*}
		& \mLambda_{n,(1)}(\vtau^c,\vtau^v) - \mLambda_{n,(1)}(\vtheta_{0}^c,\vtau^v) \\
		&= \frac{1}{n}\sum_{t=1}^{n}\E\bigg\{\nabla^2 c_t(\vtau^c) \Big[ F_t^{Y}\big(c_t(\vtau^c)\big) - F_{t}\big(v_t(\vtau^v), c_t(\vtau^c)\big)-\alpha \big\{ 1-F_{t}^{X}\big(v_t(\vtau^v)\big)\big\}\Big]\bigg\}\\
		& \ -\frac{1}{n}\sum_{t=1}^{n}\E\bigg\{\nabla^2 c_t(\vtheta_{0}^c) \Big[ F_t^{Y}\big(c_t(\vtheta_{0}^c)\big) - F_{t}\big(v_t(\vtau^v), c_t(\vtheta_{0}^c)\big)-\alpha \big\{ 1-F_{t}^{X}\big(v_t(\vtau^v)\big)\big\}\Big]\bigg\}\\
		&\hspace{2cm} + \frac{1}{n}\sum_{t=1}^{n}\E\bigg\{\nabla c_t(\vtau^c)\nabla^\prime c_t(\vtau^c) f_t^{Y}\big(c_t(\vtau^c)\big) \bigg\}\\
		&\hspace{2cm} - \frac{1}{n}\sum_{t=1}^{n}\E\bigg\{\nabla c_t(\vtheta_{0}^c)\nabla^\prime c_t(\vtheta_{0}^c) f_t^{Y}\big(c_t(\vtheta_{0}^c)\big)\bigg\}\\
		&\hspace{4cm} - \frac{1}{n}\sum_{t=1}^{n}\E\bigg\{\nabla c_t(\vtau^c)\nabla^\prime c_t(\vtau^c) \partial_2 F_{t}\big(v_t(\vtau^v), c_t(\vtau^c)\big)\bigg\}\\
		&\hspace{4cm} + \frac{1}{n}\sum_{t=1}^{n}\E\bigg\{\nabla c_t(\vtheta_{0}^c)\nabla^\prime c_t(\vtheta_{0}^c) \partial_2 F_{t}\big(v_t(\vtau^v), c_t(\vtheta_{0}^c)\big)\bigg\}.
	\end{align*}

	\textbf{First term ($B_{3n}$):} Using a mean value expansion around $\vtheta_{0}^c$ and Assumption~\ref{ass:an}, we obtain that
	\begin{align*}
		&\Bigg\Vert\frac{1}{n}\sum_{t=1}^{n}\E\bigg\{\nabla^2 c_t(\vtau^c) \Big[ F_t^{Y}\big(c_t(\vtau^c)\big) - F_{t}\big(v_t(\vtau^v), c_t(\vtau^c)\big)-\alpha \big\{ 1-F_{t}^{X}\big(v_t(\vtau^v)\big)\big\}\Big]\bigg\}\\
		& -\frac{1}{n}\sum_{t=1}^{n}\E\bigg\{\nabla^2 c_t(\vtheta_{0}^c) \Big[ F_t^{Y}\big(c_t(\vtheta_{0}^c)\big) - F_{t}\big(v_t(\vtau^v), c_t(\vtheta_{0}^c)\big)-\alpha \big\{ 1-F_{t}^{X}\big(v_t(\vtau^v)\big)\big\}\Big]\bigg\}\Bigg\Vert \\
		&\leq \frac{1}{n}\sum_{t=1}^{n}\E\norm{\nabla^2 c_t(\vtau^c)F_t^{Y}\big(c_t(\vtau^c)\big) - \nabla^2 c_t(\vtheta_{0}^c)F_t^{Y}\big(c_t(\vtheta_{0}^c)\big)}\\
		& \hspace{2cm} + \frac{1}{n}\sum_{t=1}^{n}\E\norm{\nabla^2 c_t(\vtau^c)F_{t}\big(v_t(\vtau^v), c_t(\vtau^c)\big) - \nabla^2 c_t(\vtheta_{0}^c)F_{t}\big(v_t(\vtau^v), c_t(\vtheta_{0}^c)\big)}\\
		& \hspace{2cm} + \frac{\alpha}{n}\sum_{t=1}^{n}\E\norm{\big\{\nabla^2 c_t(\vtau^c) - \nabla^2 c_t(\vtheta_{0}^c)\big\} \big\{ 1-F_{t}^{X}\big(v_t(\vtau^v)\big)\big\}}\\
		&\leq \frac{1}{n}\sum_{t=1}^{n}\E\norm{\big[\nabla^2 c_t(\vtau^c) -\nabla^2 c_t(\vtheta_{0}^c)\big]  F_t^{Y}\big(c_t(\vtau^c)\big)}  \\
		& \hspace{2cm}+ \frac{1}{n}\sum_{t=1}^{n}\E\norm{\nabla^2 c_t(\vtheta_{0}^c)\big[F_t^{Y}\big(c_t(\vtau^c)\big) - F_t^{Y}\big(c_t(\vtheta_{0}^c)\big)\big]}\\
		& \hspace{2cm} + \frac{1}{n}\sum_{t=1}^{n}\E\norm{\big[\nabla^2 c_t(\vtau^c) - \nabla^2 c_t(\vtheta_{0}^c)\big]F_{t}\big(v_t(\vtau^v), c_t(\vtau^c)\big)}\\
		& \hspace{2cm} + \frac{1}{n}\sum_{t=1}^{n}\E\norm{ \nabla^2 c_t(\vtheta_{0}^c)\big[F_{t}\big(v_t(\vtau^v), c_t(\vtau^c)\big) - F_{t}\big(v_t(\vtau^v), c_t(\vtheta_{0}^c)\big)\big]}\\
		& \hspace{2cm} + \frac{\alpha}{n}\sum_{t=1}^{n}\E\big[C_3(\mathcal{F}_{t-1})\big]\norm{\vtau^c - \vtheta_{0}^c}\\
		&\leq \frac{1}{n}\sum_{t=1}^{n}\E\big[C_3(\mathcal{F}_{t-1})\big]\norm{\vtau^c - \vtheta_{0}^c}  \\
		& \hspace{2cm}+ \frac{1}{n}\sum_{t=1}^{n}\E\norm{\nabla^2 c_t(\vtheta_{0}^c) \nabla^\prime c_t(\vtau^{c\ast}) f_t^{Y}\big(c_t(\vtau^{c\ast})\big) \big(\vtau^c - \vtheta_{0}^c\big)}\\
		& \hspace{2cm} + \frac{1}{n}\sum_{t=1}^{n}\E\big[C_3(\mathcal{F}_{t-1})\big]\norm{\vtau^c - \vtheta_{0}^c}\\
		& \hspace{2cm} + \frac{1}{n}\sum_{t=1}^{n}\E\norm{ \nabla^2 c_t(\vtheta_{0}^c) \nabla^\prime c_t(\vtau^{c\ast}) \partial_2 F_{t}\big(v_t(\vtau^v), c_t(\vtau^{c\ast})\big) \big(\vtau^c - \vtheta_{0}^c\big)}\\
		& \hspace{2cm} + \frac{\alpha}{n}\sum_{t=1}^{n}\E\big[C_3(\mathcal{F}_{t-1})\big]\norm{\vtau^c - \vtheta_{0}^c}\\
		&\leq \frac{1}{n}\sum_{t=1}^{n}\E\big[C_3(\mathcal{F}_{t-1})\big]\norm{\vtau^c - \vtheta_{0}^c}  \\
		& \hspace{2cm}+ \frac{K}{n}\sum_{t=1}^{n}\E\big[C_2(\mathcal{F}_{t-1}) C_1(\mathcal{F}_{t-1})\big]\norm{\vtau^c - \vtheta_{0}^c}\\
		& \hspace{2cm} + \frac{1}{n}\sum_{t=1}^{n}\E\big[C_3(\mathcal{F}_{t-1})\big]\norm{\vtau^c - \vtheta_{0}^c}\\
		& \hspace{2cm} + \frac{K}{n}\sum_{t=1}^{n}\E\big[C_2(\mathcal{F}_{t-1}) C_1(\mathcal{F}_{t-1})\big]\norm{\vtau^c - \vtheta_{0}^c}\\
		& \hspace{2cm} + \frac{\alpha}{n}\sum_{t=1}^{n}\E\big[C_3(\mathcal{F}_{t-1})\big]\norm{\vtau^c - \vtheta_{0}^c}\\
		&\leq \frac{2+\alpha}{n}\sum_{t=1}^{n}\E\big[C_3(\mathcal{F}_{t-1})\big]\norm{\vtau^c - \vtheta_{0}^c} + \frac{2K}{n}\sum_{t=1}^{n}\Big\{\E\big[C_2^{3/2}(\mathcal{F}_{t-1})\big]\Big\}^{2/3} \Big\{\E\big[C_1^{3}(\mathcal{F}_{t-1})\big]\Big\}^{1/3}\norm{\vtau^c - \vtheta_{0}^c}\\
		&\leq C \big\Vert\vtau^c - \vtheta_{0}^c\big\Vert,
	\end{align*}
	where $\vtau^{c\ast}$ is some mean value between $\vtau^c$ and $\vtheta_{0}^c$ that may change from line to line.

	\textbf{Second term ($B_{3n}$):} This term is dealt with similarly as the term in \eqref{eq:second term}. Expand
	\begin{align*}
		&\bigg\Vert\frac{1}{n}\sum_{t=1}^{n}\E\Big[\nabla c_t(\vtau^c)\nabla^\prime c_t(\vtau^c)f_t^{Y}\big(c_t(\vtau^c)\big) - \nabla c_t(\vtheta_{0}^{c})\nabla^\prime c_t(\vtheta_{0}^{c})f_t^{Y}\big(c_t(\vtheta_{0}^c)\big)\Big]\bigg\Vert\\
		&=\bigg\Vert \frac{1}{n}\sum_{t=1}^{n}\E\Big[\nabla c_t(\vtau^c)\nabla^\prime c_t(\vtau^c)f_t^{Y}\big(c_t(\vtau^c)\big) - \nabla c_t(\vtheta_{0}^c)\nabla^\prime c_t(\vtau^c)f_t^{Y}\big(c_t(\vtau^c)\big)\\
		&\hspace{1.8cm} + \nabla c_t(\vtheta_{0}^c)\nabla^\prime c_t(\vtau^c)f_t^{Y}\big(c_t(\vtau^c)\big) - \nabla c_t(\vtheta_{0}^c)\nabla^\prime c_t(\vtheta_{0}^c)f_t^{Y}\big(c_t(\vtau^c)\big)\\
		&\hspace{1.8cm} + \nabla c_t(\vtheta_{0}^c)\nabla^\prime c_t(\vtheta_{0}^c)f_t^{Y}\big(c_t(\vtau^c)\big) - \nabla c_t(\vtheta_{0}^c)\nabla^\prime c_t(\vtheta_{0}^c)f_t^{Y}\big(c_t(\vtheta_{0}^c)\big)\Big] \bigg\Vert\\
		&=\bigg\Vert \frac{1}{n}\sum_{t=1}^{n}\E\Big[\nabla^2 c_t(\vtau^{c\ast})(\vtau^c - \vtheta_{0}^c)\nabla^\prime c_t(\vtau^c)f_t^{Y}\big(c_t(\vtau^c)\big)\\
		&\hspace{1.8cm} + \nabla c_t(\vtheta_{0}^c)(\vtau^c - \vtheta_{0}^c)^\prime [\nabla^2 c_t(\vtau^{c\ast})]^\prime f_t^{Y}\big(c_t(\vtau^c)\big) \\
		&\hspace{1.8cm} + \nabla c_t(\vtheta_{0}^c)\nabla^\prime c_t(\vtheta_{0}^c)\big\{f_t^{Y}\big(c_t(\vtau^c)\big) - f_t^{Y}\big(c_t(\vtheta_{0}^c)\big)\big\}\Big] \bigg\Vert\\
		&\leq \frac{1}{n}\sum_{t=1}^{n}\E\big[KC_2(\mathcal{F}_{t-1})C_1(\mathcal{F}_{t-1}) + KC_1(\mathcal{F}_{t-1})C_2(\mathcal{F}_{t-1}) + K C_1^3(\mathcal{F}_{t-1}) \big]\norm{\vtau^c - \vtheta_{0}^c}\\
		&\leq \frac{1}{n}\sum_{t=1}^{n}\bigg(2K\Big\{\E\big[C_1^{3}(\mathcal{F}_{t-1})\big]\Big\}^{1/3}\Big\{\E\big[C_2^{3/2}(\mathcal{F}_{t-1})\big]\Big\}^{2/3} + K \E\big[C_1^3(\mathcal{F}_{t-1}) \big]\bigg)\norm{\vtau^c - \vtheta_{0}^c}\\
		&\leq C\big\Vert\vtau^c - \vtheta_{0}^c\big\Vert,
	\end{align*}
	where $\vtau^{c\ast}$ is some mean value between $\vtau^c$ and $\vtheta_{0}^{c}$ that may be different in different places.
	
	\textbf{Third term ($B_{3n}$):} The third term can be dealt with similarly as the second term to show that 
	\begin{multline*}
		\bigg\Vert\frac{1}{n}\sum_{t=1}^{n}\E\Big[\nabla c_t(\vtau^c)\nabla^\prime c_t(\vtau^c)\partial_2 F_{t}\big(v_t(\vtau^v), c_t(\vtau^c)\big) - \nabla c_t(\vtheta_{0}^{c})\nabla^\prime c_t(\vtheta_{0}^{c})\partial_2 F_{t}\big(v_t(\vtau^v), c_t(\vtheta_{0}^c)\big)\Big]\bigg\Vert\\
		\leq C\big\Vert\vtau^c - \vtheta_{0}^c\big\Vert.
	\end{multline*}
	Thus, we have shown that
	\begin{equation}\label{eq:B3}
		B_{3n}\leq C \big\Vert\vtau^c - \vtheta_{0}^c\big\Vert.
	\end{equation}
	
	Plugging \eqref{eq:A34} and \eqref{eq:B3} into \eqref{eq:(3.1)} gives
	\begin{equation}\label{eq:cty lambda1}
		\big\Vert\mLambda_{n,(1)} - \mLambda_{n,(1)}(\vtau^c,\vtau^v)\big\Vert\leq C \big\Vert\vtau^v - \vtheta_0^v\big\Vert + C \big\Vert\vtau^c - \vtheta_{0}^c\big\Vert.
	\end{equation}
	Therefore, the assumed consistency of the parameter estimators implies that $\mLambda_{n,(1)}(\vtheta^{c\ast},\widehat{\vtheta}_n^v)-\mLambda_{n,(1)}\overset{\P}{\longrightarrow}\vzero$.
	By positive definiteness of $\mLambda_{n,(1)}=\mLambda_{n,(1)}(\vtheta_0^c, \vtheta_0^v)$ and continuity of $\mLambda_{n,(1)}(\cdot,\cdot)$, $\mLambda_{n,(1)}^{-1}(\vtheta^{c\ast},\widehat{\vtheta}_n^v)-\mLambda_{n,(1)}^{-1}\overset{\P}{\longrightarrow}\vzero$ follows from this by Lemma~4.29 of \citet{Whi01}.
\end{proof}

\begin{lemC}\label{lem:0 tilde}
	Suppose Assumptions~\ref{ass:cons} and \ref{ass:an} hold. Then, condition (N1) of \citet{Wei91} holds, i.e., for all $t\in\mathbb{N}$, the stochastic process $\Omega\times\mTheta\ni\big(\omega,\vtheta=(\vtheta^{v},\vtheta^{c})\big)\mapsto \Cc_t(\vtheta^c,\vtheta^v)$ is separable in the sense of \citet[pp.~51--52]{Doo53}, and $\Cc_t(\vtheta^c,\vtheta^v)$ is measurable for all $(\vtheta^c,\vtheta^v)\in\mTheta$.
\end{lemC}

\begin{proof}
	From \eqref{eq:(12+)} it follows that $\Cc_t(\vtheta^c, \vtheta^v)$ is a function of measurable random variables (by Assumption~\ref{ass:cons}~\ref{it:smooth}) and, therefore, is itself measurable.
	
	Similarly as in the proof of Lemma~V.\ref{lem:0}, we show that for any open set $G\subset\mTheta$ and any closed set $F\subset\mathbb{R}^q$, the sets
	\begin{align*}
		&\big\{\omega\in\Omega\colon \Cc_t(\vtheta^c,\vtheta^v)\in F\ \text{for all}\ \vtheta=(\vtheta^{v},\vtheta^{c})\in G\big\}\quad\text{and}\\
		&\big\{\omega\in\Omega\colon \Cc_t(\vtheta^c,\vtheta^v)\in F\ \text{for all}\ \vtheta=(\vtheta^{v},\vtheta^{c})\in G\cap\mathbb{Q}^{p+q}\big\}
	\end{align*}
	coincide outside of the null set where $\nabla v_t(\cdot)=\vzero$ or $\nabla c_t(\cdot)=\vzero$.
	The inclusion ``$\subset$'' is immediate, so we only prove ``$\supset$''. 
	
	To that end, fix some closed $F\subset\mathbb{R}^{q}$ and some open $G\subset\mTheta$, and let $\omega\in\Omega$ be such, that $\Cc_t(\vtheta^c,\vtheta^v)\in F$ for all $\vtheta\in G\cap\mathbb{Q}^{p+q}$, and $\nabla v_t(\vtheta^v)\neq\vzero$ and $\nabla c_t(\vtheta^c)\neq\vzero$ for all $\vtheta\in G\cap\mathbb{Q}^{p+q}$ (such that we are outside of the null set where $\nabla v_t(\cdot)=\vzero$ or $\nabla c_t(\cdot)=\vzero$).
	Then, we have to show that 
	\begin{equation*}
		\Cc_t(\vtheta^v,\vtheta^c)\in F\qquad\text{for all }\vtheta\in G\setminus\mathbb{Q}^{p+q}.
	\end{equation*}
	Fix some $\vtheta=(\vtheta^v,\vtheta^c)\in G\setminus\mathbb{Q}^{p+q}$. 
	
	If $\vtheta$ is a continuity point of $\Cc_t(\cdot,\cdot)$, then $\Cc_t(\vtheta^c,\vtheta^v)\in F$ follows from arguments used to prove Proposition~C.2 in \citet{DimiBayer2019}.
	
	It remains to consider the case where $\vtheta$ is a point of discontinuity of $\Cc_t(\cdot,\cdot)$. 
	From \eqref{eq:(12+)} and the continuous differentiability of $v_t(\cdot)$ and $c_t(\cdot)$ (see Assumption~\ref{ass:an}~\ref{it:diff}) there is a discontinuity of $\Cc_t(\cdot,\cdot)$ in $\vtheta$ only if for every neighborhood $\mathcal{N}(\vtheta)$ of $\vtheta$ there exist $\underline{\vtheta}=(\underline{\vtheta}^{v},\underline{\vtheta}^{c}),\overline{\vtheta}=(\overline{\vtheta}^{v},\overline{\vtheta}^{c})\in \mathcal{N}(\vtheta)$, such that
	\begin{align}
		v_t(\underline{\vtheta}^v)<X_t&=v_t(\vtheta^v)\leq v_t(\overline{\vtheta}^v)\qquad\text{or}\label{eq:(2a)}\\
		c_t(\underline{\vtheta}^c)<Y_t&=c_t(\vtheta^c)\leq c_t(\overline{\vtheta}^c)\label{eq:(2b)}
	\end{align}
	(because in that case there is a discontinuity in one of the indicators in \eqref{eq:(12+)}).
	Three cases can be distinguished: There is a discontinuity of $\Cc_t(\cdot,\cdot)$ only in $\vtheta^v$ (i.e., \eqref{eq:(2a)} holds but not \eqref{eq:(2b)}), there is a discontinuity only in $\vtheta^c$ (i.e., \eqref{eq:(2b)} holds but not \eqref{eq:(2a)}) and there is a discontinuity in both $\vtheta^v$ and $\vtheta^c$ (i.e., \eqref{eq:(2a)} and \eqref{eq:(2b)} hold).
	
	We only consider the case where \eqref{eq:(2a)} and \eqref{eq:(2b)} hold, as the others can be treated similarly. As in the proof of Lemma~V.\ref{lem:0}, there cannot be saddle points in $v_t(\cdot)$ and $c_t(\cdot)$ by Assumption~\ref{ass:an}~\ref{it:diff}. 
	Therefore, there exists some $\widetilde{\vtheta}=(\widetilde{\vtheta}^{v},\widetilde{\vtheta}^{c})\in\mathcal{N}(\vtheta)\setminus\{\vtheta\}$ with $\widetilde{\vtheta}^v\neq\vtheta^v$ and $\widetilde{\vtheta}^c\neq\vtheta^c$, such that the right-hand side inequality in \eqref{eq:(2a)} is strict for $\overline{\vtheta}^v=\widetilde{\vtheta}^v$ and the right-hand side inequality in \eqref{eq:(2b)} is strict for $\overline{\vtheta}^c=\widetilde{\vtheta}^c$.
	Here, the neighborhood $\mathcal{N}(\vtheta)$ is chosen sufficiently small to lie in $G$, which is possible since $G$ is open.
	By \eqref{eq:(12+)} and the continuity of $v_t(\cdot)$ and $c_t(\cdot)$, there exists in $\mathcal{N}(\vtheta)$ a connected set of points $\mathcal{L}$ connecting $\overline{\vtheta}$ and $\vtheta$, such that
	\[
	X_t=v_t(\vtheta^v)< v_t(\overline{\vtheta}^v_{\mathcal{L}})\qquad\text{and}\qquad Y_t=c_t(\vtheta^v)< c_t(\overline{\vtheta}^c_{\mathcal{L}})\quad\text{for all }\overline{\vtheta}_{\mathcal{L}} = (\overline{\vtheta}_{\mathcal{L}}^{v}, \overline{\vtheta}_{\mathcal{L}}^{c})\in\mathcal{L}.
	\]
	Since $\mathbb{Q}^{p+q}$ is dense in $\mathbb{R}^{p+q}$, there exists a sequence $\vtheta_k=(\vtheta_k^{v}, \vtheta_k^{c})$ in $\mathcal{L}\cap\mathbb{Q}^{p+q}\subset G\cap\mathbb{Q}^{p+q}$, such that $\vtheta_k\longrightarrow\vtheta$, as $k\to\infty$.
	By continuity of $\Cc_t(\cdot,\cdot)$ on $\mathcal{L}$ it holds that $\Cc_t(\vtheta_k^c,\vtheta_k^v)\longrightarrow \Cc_t(\vtheta^c,\vtheta^v)$, as $k\to\infty$. 
	Since $\vtheta_k\in G\cap\mathbb{Q}^{p+q}$, $\Cc_t(\vtheta_k^c, \vtheta_k^v)\in F$ by assumption.
	Since $F$ is closed, the limit of the convergent series $\Cc_t(\vtheta_k^c, \vtheta_k^v)$ necessarily lies in $F$, whence $\Cc_t(\vtheta^c, \vtheta^v)\in F$.
\end{proof}

\begin{lemC}\label{lem:3 tilde}
	Suppose Assumptions~\ref{ass:cons} and \ref{ass:an} hold. Then, condition (N3) (i) of \citet{Wei91} holds, i.e.,
	\[
	\big\Vert\vlambda_n(\vtheta,\widehat{\vtheta}_n^{v})\big\Vert \geq a\big\Vert\vtheta-\vtheta_{0,n}^c\big\Vert\qquad\text{for}\ \big\Vert\vtheta-\vtheta_{0,n}^c\big\Vert\leq d_0
	\]
	for sufficiently large $n$ and some $a>0$ and $d_0>0$.
\end{lemC}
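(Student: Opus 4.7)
The plan is to mimic the argument for Lemma~V.\ref{lem:3}, but expanding around the implicitly-defined $\vtheta_{0,n}^c$ rather than $\vtheta_0^c$, and treating $\widehat{\vtheta}_n^v$ as a fixed (random) second argument. Recall from the proof of Lemma~C.\ref{lem:1 tilde} that we work on the event $\{\|\widehat{\vtheta}_n^v - \vtheta_0^v\| < \varepsilon_0\}$, on which $\vtheta_{0,n}^c$ exists and satisfies $\vlambda_n(\vtheta_{0,n}^c,\widehat{\vtheta}_n^v) = \vzeros$. Consistency $\widehat{\vtheta}_n^v \overset{\P}{\longrightarrow} \vtheta_0^v$ together with the continuity of $\vtheta^v \mapsto \vtheta_n^c(\vtheta^v)$ obtained there also yields $\vtheta_{0,n}^c \overset{\P}{\longrightarrow} \vtheta_0^c$.

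The first step would be to apply the MVT in the first argument, which gives
\[
\vlambda_n(\vtheta,\widehat{\vtheta}_n^v) = \vlambda_n(\vtheta_{0,n}^c,\widehat{\vtheta}_n^v) + \mLambda_{n,(1)}(\vtheta^\ast,\widehat{\vtheta}_n^v)(\vtheta - \vtheta_{0,n}^c) = \mLambda_{n,(1)}(\vtheta^\ast,\widehat{\vtheta}_n^v)(\vtheta - \vtheta_{0,n}^c)
\]
for some $\vtheta^\ast$ on the segment between $\vtheta$ and $\vtheta_{0,n}^c$ (with the usual slight abuse of notation). The second step is to control $\mLambda_{n,(1)}(\vtheta^\ast,\widehat{\vtheta}_n^v) - \mLambda_{(1)}$. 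Combining the two Lipschitz-type bounds \eqref{eq:A34} and \eqref{eq:B3} established in the proof of Lemma~C.\ref{lem:1+ tilde} yields
\[
\bigl\Vert \mLambda_{n,(1)}(\vtheta^\ast,\widehat{\vtheta}_n^v) - \mLambda_{(1)} \bigr\Vert \leq C\bigl(\|\vtheta^\ast - \vtheta_0^c\| + \|\widehat{\vtheta}_n^v - \vtheta_0^v\|\bigr).
\]
Since $\|\vtheta^\ast - \vtheta_0^c\| \leq \|\vtheta^\ast - \vtheta_{0,n}^c\| + \|\vtheta_{0,n}^c - \vtheta_0^c\| \leq d_0 + \|\vtheta_{0,n}^c - \vtheta_0^c\|$, and since both $\|\vtheta_{0,n}^c - \vtheta_0^c\|$ and $\|\widehat{\vtheta}_n^v - \vtheta_0^v\|$ can be made arbitrarily small (with probability approaching one, and without loss for the statement of the lemma, by further shrinking $\varepsilon_0$ and restricting to a subset of the already-introduced event), we can ensure $\bigl\Vert \mLambda_{n,(1)}(\vtheta^\ast,\widehat{\vtheta}_n^v) - \mLambda_{(1)} \bigr\Vert \leq C d_0 + o_{\P}(1)$.

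The third and final step is to invoke positive definiteness: by Assumption~\ref{ass:an}~\ref{it:pd}, all eigenvalues of $\mLambda_{(1)}$ are bounded below by some $2a > 0$, so $\|\mLambda_{(1)}\vu\| \geq 2a\|\vu\|$ for every $\vu$. The triangle inequality then gives
\begin{align*}
\bigl\Vert \vlambda_n(\vtheta,\widehat{\vtheta}_n^v) \bigr\Vert
&= \bigl\Vert \mLambda_{n,(1)}(\vtheta^\ast,\widehat{\vtheta}_n^v)(\vtheta - \vtheta_{0,n}^c) \bigr\Vert \\
&\geq \bigl\Vert \mLambda_{(1)} (\vtheta - \vtheta_{0,n}^c) \bigr\Vert - \bigl\Vert \bigl[\mLambda_{(1)} - \mLambda_{n,(1)}(\vtheta^\ast,\widehat{\vtheta}_n^v)\bigr](\vtheta - \vtheta_{0,n}^c) \bigr\Vert \\
&\geq \bigl(2a - C d_0 - o_{\P}(1)\bigr) \|\vtheta - \vtheta_{0,n}^c\|,
\end{align*}
and choosing $d_0$ small enough (so that $C d_0 \leq a/2$) and $n$ large enough (so that the $o_{\P}(1)$ term is at most $a/2$ with high probability) delivers the desired bound with constant $a$.

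The main obstacle will be bookkeeping the fact that $\vtheta_{0,n}^c$ is random and depends on $\widehat{\vtheta}_n^v$: the Lipschitz bound from Lemma~C.\ref{lem:1+ tilde} is expressed in terms of distances to $\vtheta_0^c$ and $\vtheta_0^v$, not to $\vtheta_{0,n}^c$, so a small triangle-inequality step (as above) is needed to convert the radius-$d_0$ neighborhood of $\vtheta_{0,n}^c$ into a region where that Lipschitz bound is useful. Apart from this, everything is a direct two-argument analogue of Lemma~V.\ref{lem:3}.
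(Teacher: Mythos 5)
Your proposal is correct and follows essentially the same route as the paper: a mean value expansion around $\vtheta_{0,n}^c$ using $\vlambda_n(\vtheta_{0,n}^c,\widehat{\vtheta}_n^v)=\vzeros$, followed by the lower eigenvalue bound on $\mLambda_{(1)}$ and a triangle inequality, with the perturbation $\|\mLambda_{(1)}-\mLambda_{n,(1)}(\vtheta^\ast,\widehat{\vtheta}_n^v)\|$ controlled via the Lipschitz bounds \eqref{eq:A34} and \eqref{eq:B3} from Lemma~C.\ref{lem:1+ tilde}. The only cosmetic difference is that the paper splits this perturbation into three pieces through the intermediate point $\vtheta_{0,n}^c$ (its terms $A_{4n}$, $B_{4n}$, $C_{4n}$), whereas you reach $\vtheta_0^c$ directly and absorb $\|\vtheta_{0,n}^c-\vtheta_0^c\|$ by a triangle inequality — the same estimates in slightly different bookkeeping.
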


\begin{proof}
	Since we implicitly work with $\widehat{\vtheta}_n^v$ satisfying $\big\Vert\widehat{\vtheta}_n^v-\vtheta_0^v\big\Vert\leq\varepsilon_0$ (see the proof of Lemma~C.\ref{lem:1 tilde}), $\vtheta_{0,n}^c$ satisfying \eqref{eq:(2.1)} exists.
	A mean value expansion around $\vtheta_{0,n}^c$ and \eqref{eq:(2.1)} imply
	\begin{align*}
		\vlambda_n(\vtheta,\widehat{\vtheta}_n^v) &= \vlambda_n(\vtheta_{0,n}^c,\widehat{\vtheta}_n^v) + \mLambda_{n,(1)}(\vtheta^\ast,\widehat{\vtheta}_n^v)(\vtheta-\vtheta_{0,n}^c)\\
		&= \mLambda_{n,(1)}(\vtheta^\ast,\widehat{\vtheta}_n^v)(\vtheta-\vtheta_{0,n}^c)
	\end{align*}
	for some $\vtheta^\ast$ on the line connecting $\vtheta_{0,n}^{c}$ and $\vtheta$.
	By Assumption~\ref{ass:an} \ref{it:pd}, $\mLambda_{n,(1)}$ has eigenvalues bounded below by some $a>0$, such that (similarly as in the proof of Lemma~V.\ref{lem:3})
	\begin{align}
		\big\Vert\vlambda_n(\vtheta,\widehat{\vtheta}_n^{v})\big\Vert &= \big\Vert\mLambda_{n,(1)}(\vtheta^\ast,\widehat{\vtheta}_n^v)(\vtheta-\vtheta_{0,n}^c)\big\Vert\notag\\
		&= \big\Vert\mLambda_{n,(1)}(\vtheta-\vtheta_{0,n}^c) - \big[\mLambda_{n,(1)} - \mLambda_{n,(1)}(\vtheta^\ast,\widehat{\vtheta}_n^v)\big](\vtheta-\vtheta_{0,n}^c)\big\Vert\notag\\
		&\geq \big\Vert\mLambda_{n,(1)}(\vtheta-\vtheta_{0,n}^c)\big\Vert - \big\Vert\big[\mLambda_{n,(1)} - \mLambda_{n,(1)}(\vtheta^\ast,\widehat{\vtheta}_n^v)\big](\vtheta-\vtheta_{0,n}^c)\big\Vert\notag\\
		&\geq a\big\Vert\vtheta-\vtheta_{0,n}^c\big\Vert - \big\Vert\mLambda_{n,(1)} - \mLambda_{n,(1)}(\vtheta^\ast,\widehat{\vtheta}_n^v)\big\Vert\cdot \big\Vert\vtheta-\vtheta_{0,n}^c\big\Vert.\label{eq:(31-)}
	\end{align}
	Our goal in the following is to show that  for sufficiently large $n$,
	\begin{equation}\label{eq:to show an}
		\big\Vert\mLambda_{n,(1)} - \mLambda_{n,(1)}(\vtheta^\ast,\widehat{\vtheta}_n^v)\big\Vert \leq c
	\end{equation}
	for some $c\in(0,a)$, because then the lemma is established. We show \eqref{eq:to show an} by decomposing the left-hand side into the three terms
	\begin{align}
		&\big\Vert\mLambda_{n,(1)} - \mLambda_{n,(1)}(\vtheta^\ast,\widehat{\vtheta}_n^v)\big\Vert \notag\\
		&= \Big\Vert \big[\mLambda_{n,(1)} - \mLambda_{n,(1)}(\vtheta_{0}^c,\widehat{\vtheta}_n^v)\big] + \big[ \mLambda_{n,(1)}(\vtheta_{0}^{c},\widehat{\vtheta}_n^v) - \mLambda_{n,(1)}(\vtheta_{0,n}^{c},\widehat{\vtheta}_n^v)\big] \notag\\
		&\hspace{8cm}+ \big[ \mLambda_{n,(1)}(\vtheta_{0,n}^{c},\widehat{\vtheta}_n^v) - \mLambda_{n,(1)}(\vtheta^{\ast},\widehat{\vtheta}_n^v)\big] \Big\Vert\notag\\ 
		&\leq \big\Vert \mLambda_{n,(1)} - \mLambda_{n,(1)}(\vtheta_{0}^c,\widehat{\vtheta}_n^v)\big\Vert + \big\Vert\mLambda_{n,(1)}(\vtheta_{0}^{c},\widehat{\vtheta}_n^v) - \mLambda_{n,(1)}(\vtheta_{0,n}^{c},\widehat{\vtheta}_n^v)\big\Vert \notag\\
		&\hspace{8cm} + \big\Vert\mLambda_{n,(1)}(\vtheta_{0,n}^{c},\widehat{\vtheta}_n^v) - \mLambda_n^{(1)}(\vtheta^{\ast},\widehat{\vtheta}_n^v)\big\Vert\notag\\
		&=:A_{4n} + B_{4n} + C_{4n}.\label{eq:(32)}
	\end{align}
	
	The term $A_{4n}$ is similar to $A_{3n}$ from the proof of Lemma~C.\ref{lem:1+ tilde}.
	We deduce from \eqref{eq:A34} and $\big\Vert\widehat{\vtheta}_n^v - \vtheta_0^v\big\Vert\leq\varepsilon_0$ that
	\[
	A_{4n}\leq C\big\Vert\widehat{\vtheta}_n^v - \vtheta_0^v\big\Vert\leq C\varepsilon_0.
	\]
	By a suitable choice of $\varepsilon_0>0$, $A_{4n}$ can be made arbitrarily small.

	Now, consider $C_{4n}$. Reasoning similarly as for \eqref{eq:B3} in the proof of Lemma~C.\ref{lem:1+ tilde}, we obtain that 
	\begin{equation*}
		C_{4n}=\big\Vert\mLambda_{n,(1)}(\vtheta^\ast,\widehat{\vtheta}_n^v) - \mLambda_{n,(1)}(\vtheta_{0,n}^c,\widehat{\vtheta}_n^v)\big\Vert\leq C \big\Vert\vtheta^\ast - \vtheta_{0,n}^c\big\Vert\leq C\big\Vert \vtheta - \vtheta_{0,n}^c\big\Vert\leq C d_0,
	\end{equation*}
	where the second inequality follows from the fact that $\vtheta^\ast$ lies on the line connecting $\vtheta$ and $\vtheta_{0,n}^c$.
	By a suitable choice of $d_0>0$, $C_{4n}$ can be made arbitrarily small.
	
	Similarly, we may show that
	\begin{equation*}
		B_{4n}=\big\Vert\mLambda_n^{(1)}(\vtheta_{0}^{c},\widehat{\vtheta}_n^v) - \mLambda_n^{(1)}(\vtheta_{0,n}^{c},\widehat{\vtheta}_n^v)\big\Vert\leq C \big\Vert\vtheta_{0}^{c} - \vtheta_{0,n}^{c}\big\Vert,
	\end{equation*}
	which again may be made arbitrarily small because---by definition of $\vtheta_{0,n}^{c}$---Assumption~\ref{ass:an}~\ref{it:lambda func} implies
	\begin{align}
		\big\Vert\vtheta_{0}^{c} - \vtheta_{0,n}^{c}\big\Vert &= \big\Vert\vtheta_{n}^{c}(\vtheta_{0}^{v}) - \vtheta_{n}^{c}(\widehat{\vtheta}_{n}^{v})\big\Vert\notag\\
		&\leq K \big\Vert \widehat{\vtheta}_{n}^{v} - \vtheta_{0}^{v}\big\Vert\label{eq:(E.21m)}\\
		&\leq K\varepsilon_0.\label{eq:bound theta0nc}
	\end{align}
	Therefore, $B_{4n}$ can be made arbitrarily small by a suitable choice of $\varepsilon_0>0$.

	In total, we have shown that $A_{4n}+B_{4n}+C_{4n}$ can be made arbitrarily small, which---by \eqref{eq:(32)}---establishes \eqref{eq:to show an}. Combining this with \eqref{eq:(31-)}, the conclusion follows.
\end{proof}

\begin{lemC}\label{lem:4 tilde}
	Suppose Assumptions~\ref{ass:cons} and \ref{ass:an} hold, and define
	\[
	\mu_t(\vtheta, d) =\sup_{\norm{\vtau-\vtheta}\leq d}\big\Vert\Cc_{t}(\vtau,\widehat{\vtheta}_n^v) - \Cc_{t}(\vtheta,\widehat{\vtheta}_n^v)\big\Vert.
	\]
	Then, condition (N3) (ii) of \citet{Wei91} holds, i.e.,
	\[
	\E\big[\mu_t(\vtheta, d)\big] \leq bd\qquad\text{for}\ \big\Vert\vtheta-\vtheta_{0,n}^c\big\Vert + d\leq d_0
	\]
	for sufficiently large $n$ and some strictly positive $b$, $d$, $d_0$.
\end{lemC}

\begin{proof}
	Choose $d_0>0$ and $d>0$ such that $\mathcal{N}(\vtheta_{0,n}^c)=\big\{\vtheta\in\mTheta^c:\ \big\Vert\vtheta-\vtheta_{0,n}^c\big\Vert+d\leq d_0\big\}$ is a subset of the neighborhood of Assumption~\ref{ass:an} \ref{it:bound2.2}. This is possible because for all $\vtheta\in\mathcal{N}(\vtheta_{0,n}^c)$, we have that
	\begin{align*}
		\big\Vert\vtheta - \vtheta_{0}^c\big\Vert&= \big\Vert\vtheta - \vtheta_{0,n}^c + \vtheta_{0,n}^c - \vtheta_{0}^c\big\Vert\\
		&\leq  \big\Vert\vtheta - \vtheta_{0,n}^c\big\Vert + \big\Vert\vtheta_{0,n}^c - \vtheta_{0}^c\big\Vert\\
		&\leq d_0 + K\varepsilon_0,
	\end{align*}
	where we used \eqref{eq:bound theta0nc}.
	Therefore, by a suitable choice of $d_0$ and $\varepsilon_0>0$, we can ensure that any $\vtheta\in\mathcal{N}(\vtheta_{0,n}^c)$ also lies in the neighborhood of $\vtheta_0^c$ of Assumption~\ref{ass:an} \ref{it:bound2.2}.
	
	Recalling the definition of $\Cc_{t}(\vtheta^c, \vtheta^v)$ from \eqref{eq:(12+)}, we decompose
	\begin{align*}
		\mu_t(\vtheta, d) &= \sup_{\norm{\vtau-\vtheta}\leq d}\norm{ \1_{\{X_t> v_t(\widehat{\vtheta}_n^v)\}}\nabla c_t(\vtau) \big[\1_{\{Y_t\leq c_t(\vtau)\}} - \alpha\big] - \1_{\{X_t> v_t(\widehat{\vtheta}_n^v)\}}\nabla c_t(\vtheta) \big[\1_{\{Y_t\leq c_t(\vtheta)\}} - \alpha\big]}\\
		&=  \sup_{\norm{\vtau-\vtheta}\leq d}\Big\Vert \1_{\{X_t> v_t(\widehat{\vtheta}_n^v)\}}\big[\nabla c_t(\vtau) \1_{\{Y_t\leq c_t(\vtau)\}} - \nabla c_t(\vtheta) \1_{\{Y_t\leq c_t(\vtheta)\}}\big] \\
		&\hspace{8cm} + \alpha \1_{\{X_t> v_t(\widehat{\vtheta}_n^v)\}}\big[\nabla c_t(\vtheta) - \nabla c_t(\vtau)\big]\Big\Vert\\
		&\leq \sup_{\norm{\vtau-\vtheta}\leq d}\big\Vert \nabla c_t(\vtau) \1_{\{Y_t\leq c_t(\vtau)\}} - \nabla c_t(\vtheta) \1_{\{Y_t\leq c_t(\vtheta)\}}\big\Vert \\
		&\hspace{8cm} + \alpha \sup_{\norm{\vtau-\vtheta}\leq d}\big\Vert\nabla c_t(\vtau) - \nabla c_t(\vtheta)\big\Vert\\
		&=:\mu_{1t}(\vtheta, d) + \mu_{2t}(\vtheta, d).
	\end{align*}
	The remainder of the proof is almost identical to that of Lemma~V.\ref{lem:4}. One merely has to replace $\vtheta_0^v$ by $\vtheta_{0,n}^{c}$, $v_t(\cdot)$ by $c_t(\cdot)$, and $X_t$ by $Y_t$. We nonetheless carry out the steps here to demonstrate where in the proof the various conditions of Assumption~\ref{ass:an} come into play.
	
	Define the $\mathcal{F}_{t-1}$-measurable quantities
	\begin{align*}
		\underline{\vtau} &:= \argmin_{\norm{\vtau-\vtheta}\leq d}c_t(\vtau),\\
		\overline{\vtau} &:= \argmax_{\norm{\vtau-\vtheta}\leq d}c_t(\vtau),
	\end{align*}
	which exist by continuity of $c_t(\cdot)$.
	
	We first consider $\mu_{1t}(\vtheta, d)$. To take the indicators out of the supremum in $\mu_{1t}(\vtheta, d)$, we distinguish two cases:
	
	\textbf{Case 1: $Y_t\leq c_t(\vtheta)$}
	
	We further distinguish two cases (a)--(b).
	
	(a) If $Y_t<c_t(\underline{\vtau})$, then both indicators in $\mu_{1t}(\vtheta, d)$ equal one, such that
	\[
	\mu_{1t}(\vtheta, d) = \sup_{\norm{\vtau-\vtheta}\leq d}\norm{\nabla c_t(\vtau) - \nabla c_t(\vtheta)}.
	\]
	
	(b) If $c_t(\underline{\vtau})\leq Y_t \leq c_t(\overline{\vtau})$, then
	\begin{align}
		\mu_{1t}(\vtheta, d) &=\max\bigg\{\sup_{\substack{\norm{\vtau-\vtheta}\leq d\\ Y_t\leq c_t(\vtau)}}\norm{\nabla c_t(\vtau) - \nabla c_t(\vtheta)},\ \norm{\nabla c_t(\vtheta)}\bigg\}\notag\\
		&\leq \sup_{\norm{\vtau-\vtheta}\leq d}\norm{\nabla c_t(\vtau) - \nabla c_t(\vtheta)} + \norm{\nabla c_t(\vtheta)}.\notag
	\end{align}
	
	(Notice that the third case that $Y_t>c_t(\overline{\vtau})$ cannot occur, because already $Y_t\leq c_t(\vtheta)$.)
	
	Together, (a) and (b) yield that
	\begin{align}
		\mu_{1t}(\vtheta,d) &\leq \sup_{\norm{\vtau-\vtheta}\leq d}\norm{\nabla c_t(\vtau) - \nabla c_t(\vtheta)} + \1_{\{c_t(\underline{\vtau})\leq Y_t\leq c_t(\vtheta)\}}\norm{\nabla c_t(\vtheta)}\notag\\
		&\leq \sup_{\norm{\vtau-\vtheta}\leq d}\norm{\nabla c_t(\vtau) - \nabla c_t(\vtheta)} + \1_{\{c_t(\underline{\vtau})\leq Y_t\leq c_t(\vtheta)\}}\sup_{\Vert\vtau-\vtheta_{0,n}^c\Vert\leq d_0}\norm{\nabla c_t(\vtau)}, \label{eq:(p.26.1)}
	\end{align}
	where we used that $\vtheta$ is an a $d_0$-neighborhood of $\vtheta_{0,n}^c$ by assumption.
	
	\textbf{Case 2: $Y_t> c_t(\vtheta)$}
	
	In this case,
	\begin{align}
		\mu_{1t}(\vtheta, d) &= \1_{\{Y_t\leq c_t(\overline{\vtau})\}}\sup_{\substack{\norm{\vtau-\vtheta}\leq d\\ Y_t\leq c_t(\vtau)}}\norm{\nabla c_t(\vtau)}\notag\\
		&\leq \1_{\{Y_t\leq c_t(\overline{\vtau})\}}\sup_{\norm{\vtau-\vtheta}\leq d}\norm{\nabla c_t(\vtau)}.\label{eq:(N.12.2) tilde}
	\end{align}
	Note that $\norm{\vtau - \vtheta}\leq d$ and our assumption $\big\Vert\vtheta-\vtheta_{0,n}^{c}\big\Vert+d\leq d_0$ together imply that $\vtau$ is in a $d_0$-neighborhood of $\vtheta_{0,n}^c$, because $\big\Vert\vtau-\vtheta_{0,n}^c\big\Vert=\big\Vert\vtau-\vtheta + \vtheta-\vtheta_{0,n}^c\big\Vert\leq \norm{\vtau-\vtheta} + \big\Vert\vtheta-\vtheta_{0,n}^c\big\Vert\leq d + (d_0-d)=d_0$.)
	Hence, from \eqref{eq:(N.12.2) tilde},
	\begin{equation}\label{eq:(p.26.2)}
		\mu_{1t}(\vtheta, d) \leq \1_{\{c_t(\vtheta)<Y_t\leq c_t(\overline{\vtau})\}}\sup_{\Vert\vtau-\vtheta_{0,n}^c\Vert\leq d_0}\norm{\nabla c_t(\vtau)}.
	\end{equation}
	
	Combining the results from Cases 1 and 2 (i.e., \eqref{eq:(p.26.1)} and \eqref{eq:(p.26.2)}), we obtain that
	\begin{multline}\label{eq:(E.24p)}
		\mu_{1t}(\vtheta, d) \leq \Big[ \1_{\{c_t(\underline{\vtau})\leq Y_t\leq c_t(\vtheta)\}} + \1_{\{c_t(\vtheta)<Y_t\leq c_t(\overline{\vtau})\}}\Big] \sup_{\Vert\vtau - \vtheta_{0,n}^c\Vert\leq d_0}\norm{\nabla c_t(\vtau)}\\
		+ \sup_{\norm{\vtau-\vtheta}\leq d}\norm{\nabla c_t(\vtau) - \nabla c_t(\vtheta)}.
	\end{multline}
	By Assumption~\ref{ass:an} \ref{it:bound cdf} we have
	\begin{align*}
		\E_{t-1}\Big[\1_{\{c_t(\underline{\vtau})\leq Y_t\leq c_t(\vtheta)\}}\Big] &=\int_{c_t(\underline{\vtau})}^{c_t(\vtheta)}f_t^{Y}(y)\D y\notag\\
		&\leq K\big|c_t(\vtheta) - c_t(\underline{\vtau})\big|\\
		&= K\big|\nabla^\prime c_t(\vtheta^\ast)(\vtheta - \underline{\vtau})\big|\notag\\
		&\leq K C_1(\mathcal{F}_{t-1})\norm{\vtheta - \underline{\vtau}} \\
		&\leq K C_1(\mathcal{F}_{t-1}) d,
	\end{align*}
	and, similarly,
	\begin{equation*}
		\E_{t-1}\Big[\1_{\{c_t(\vtheta)<Y_t\leq c_t(\overline{\vtau})\}}\Big]  \leq K C_1(\mathcal{F}_{t-1}) d.
	\end{equation*}
	Moreover, we have by the MVT that for some $\vtheta^\ast$ on the line connecting $\vtau$ and $\vtheta$,
	\begin{align}
		\sup_{\norm{\vtau-\vtheta}\leq d}\big\Vert\nabla c_t(\vtau) - \nabla c_t(\vtheta)\big\Vert &= \sup_{\norm{\vtau-\vtheta}\leq d}\big\Vert\nabla^2 c_t(\vtheta^\ast)(\vtau-\vtheta)\big\Vert\notag\\
		&\leq C_2(\mathcal{F}_{t-1})\big\Vert\vtau - \vtheta\big\Vert\\
		&\leq C_2(\mathcal{F}_{t-1}) d.\label{eq:(N.14.1) tilde}
	\end{align}
	Therefore, using \eqref{eq:(E.24p)},
	\[
	\E\big[\mu_{1t}(\vtheta,d)\big] \leq \E\big[K C_1^2(\mathcal{F}_{t-1})\big]d + \E\big[C_2(\mathcal{F}_{t-1})\big]d\leq Cd.
	\]

	By arguments leading to \eqref{eq:(N.14.1) tilde}, we also have that
	\[
	\E\big[\mu_{2t}(\vtheta,d)\big]\leq\alpha\E\big[C_2(\mathcal{F}_{t-1})\big]d\leq Cd.
	\]
	
	Overall, 
	\[
	\E\big[\mu_t(\vtheta,d)\big]\leq \E\big[\mu_{1t}(\vtheta,d)\big] + \E\big[\mu_{2t}(\vtheta,d)\big]\leq bd
	\]
	for some suitable $b>0$, as desired.
\end{proof}

\begin{lemC}\label{lem:5 tilde}
	Suppose Assumptions~\ref{ass:cons} and \ref{ass:an} hold. Then, condition (N3) (iii) of \citet{Wei91} holds, i.e.,
	\[
	\E\big[\mu_t^r(\vtheta, d)\big] \leq cd\qquad\text{for}\ \big\Vert\vtheta-\vtheta_{0,n}^c\big\Vert + d\leq d_0
	\]
	for sufficiently large $n$ and some $c>0$, $d\geq0$, $d_0>0$ and $r>2$.
\end{lemC}

\begin{proof}
	\textit{Mutatis mutandis} the proof follows the same lines as that of Lemma~V.\ref{lem:5} and, hence, is omitted.
\end{proof}

\begin{lemC}\label{lem:6 tilde}
	Suppose Assumptions~\ref{ass:cons} and \ref{ass:an} hold. Then,
	\[
	\E\big\Vert\Cc_{t}(\vtheta_{0,n}^c, \widehat{\vtheta}_n^v)\big\Vert^{2+\iota}\leq C\qquad\text{for all }t\in\mathbb{N}.
	\]
	In particular, condition (N4) of \citet{Wei91} holds.
\end{lemC}

\begin{proof}
	Note that from \eqref{eq:(E.21m)}, 
	\begin{equation*}
		\big\Vert\vtheta_{0}^{c} - \vtheta_{0,n}^{c}\big\Vert \leq K \big\Vert \widehat{\vtheta}_{n}^{v}- \vtheta_{0}^{v}\big\Vert,
	\end{equation*}
	such that, for $\widehat{\vtheta}_{n}^{v}$ satisfying$\big\Vert\widehat{\vtheta}_n^v - \vtheta_0^v\big\Vert\leq\varepsilon_0$, $\vtheta_{0,n}^{c}$ is in the neighborhood of $\vtheta_0^c$ of Assumption~\ref{ass:an}~\ref{it:bound2.2} for a suitable small $\varepsilon_0>0$. Therefore, from Assumption~\ref{ass:an}~\ref{it:bound2.2}--\ref{it:mom bounds cons2} and \eqref{eq:(12+)}, we deduce that
	\begin{equation*}
		\E\big\Vert\Cc_{t}(\vtheta_{0,n}^c, \widehat{\vtheta}_n^v)\big\Vert^{2+\iota} \leq \E\big\Vert\nabla c_t(\vtheta_{0,n}^c)\big\Vert^{2+\iota} \leq \E\big[C_1^{2+\iota}(\mathcal{F}_{t-1})\big]\leq C.
	\end{equation*}
	This ends the proof.
\end{proof}

\begin{lemC}\label{lem:2 tilde}
	Suppose Assumptions~\ref{ass:cons} and \ref{ass:an} hold. Then, as $n\to\infty$,
	\[
	\frac{1}{\sqrt{n}}\sum_{t=1}^{n}\Cc_{t}(\widehat{\vtheta}_n^{c}, \widehat{\vtheta}_n^{v})=o_{\P}(1).
	\]
\end{lemC}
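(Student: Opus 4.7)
The plan is to mimic the argument used for Lemma~V.\ref{lem:2}, exploiting the fact that $\widehat{\vtheta}_n^c$ minimizes the empirical CoVaR score with $\widehat{\vtheta}_n^v$ plugged in. The key observation is that the indicator $\1_{\{X_t>v_t(\widehat{\vtheta}_n^v)\}}$ appearing in $S^{\CoVaR}$ does \emph{not} depend on $\vtheta^c$, so directional differentiation in $\vtheta^c$ interacts only with $c_t(\vtheta^c)$ and the indicator $\1_{\{Y_t\leq c_t(\vtheta^c)\}}$, exactly as in the VaR case.

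First, let $\ve_1,\dots,\ve_q$ be the standard basis of $\mathbb{R}^q$ and define, for $j=1,\dots,q$,
\[
S_{j,n}^{\CoVaR}(a):=\frac{1}{\sqrt{n}}\sum_{t=1}^{n}S^{\CoVaR}\Big(\big(v_t(\widehat{\vtheta}_n^v),\,c_t(\widehat{\vtheta}_n^c+a\ve_j)\big)^\prime,(X_t,Y_t)^\prime\Big),\qquad a\in\mathbb{R}.
\]
Its right partial derivative (see \eqref{eq:(12+)}) is
\[
G_{j,n}(a)=\frac{1}{\sqrt{n}}\sum_{t=1}^{n}\1_{\{X_t>v_t(\widehat{\vtheta}_n^v)\}}\nabla_j c_t(\widehat{\vtheta}_n^c+a\ve_j)\Big[\1_{\{Y_t\leq c_t(\widehat{\vtheta}_n^c+a\ve_j)\}}-\alpha\Big].
\]
Because $\widehat{\vtheta}_n^c$ minimizes $S_{j,n}^{\CoVaR}(\cdot)$ at $a=0$ (along the $j$-th direction), the right derivative $G_{j,n}(0^+)$ is non-negative and the left derivative $G_{j,n}(0^-)$ is non-positive. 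Arguing as in Lemma~V.\ref{lem:2} and invoking continuity of $\nabla c_t(\cdot)$ (Assumption~\ref{ass:an}~\ref{it:diff}), the jump across $a=0$ yields
\[
|G_{j,n}(0)|\leq G_{j,n}(0^+)-G_{j,n}(0^-)\leq \frac{1}{\sqrt{n}}\sum_{t=1}^{n}\1_{\{X_t>v_t(\widehat{\vtheta}_n^v)\}}\,\big|\nabla_j c_t(\widehat{\vtheta}_n^c)\big|\,\1_{\{Y_t=c_t(\widehat{\vtheta}_n^c)\}}.
\]
Dropping the harmless indicator $\1_{\{X_t>v_t(\widehat{\vtheta}_n^v)\}}\leq 1$ and using Assumption~\ref{ass:cons}~\ref{it:bound}\ refined by Assumption~\ref{ass:an}~\ref{it:bound2.2} to bound $|\nabla_j c_t(\cdot)|\leq C_1(\mathcal{F}_{t-1})$, we obtain
\[
|G_{j,n}(0)|\leq \frac{1}{\sqrt{n}}\Big[\max_{t=1,\dots,n}C_1(\mathcal{F}_{t-1})\Big]\sum_{t=1}^{n}\1_{\{Y_t=c_t(\widehat{\vtheta}_n^c)\}}.
\]

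Second, I will control each of the two factors. By subadditivity, Markov's inequality, and Assumption~\ref{ass:an}~\ref{it:mom bounds cons2} (which gives $\E[C_1^{3+\iota}(\mathcal{F}_{t-1})]\leq K$), for any $\varepsilon>0$,
\[
\P\Big\{n^{-1/2}\max_{t=1,\dots,n}C_1(\mathcal{F}_{t-1})>\varepsilon\Big\}\leq \sum_{t=1}^{n}\varepsilon^{-3}n^{-3/2}\E\big[C_1^3(\mathcal{F}_{t-1})\big]=O(n^{-1/2})=o(1),
\]
so $n^{-1/2}\max_t C_1(\mathcal{F}_{t-1})=o_{\P}(1)$. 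By Assumption~\ref{ass:an}~\ref{it:eq bound}, $\sup_{\vtheta^c\in\mTheta^c}\sum_{t=1}^{n}\1_{\{Y_t=c_t(\vtheta^c)\}}=O(1)$ a.s., so in particular the sum evaluated at $\widehat{\vtheta}_n^c$ is $O(1)$ a.s. Combining, $|G_{j,n}(0)|=o_{\P}(1)O(1)=o_{\P}(1)$ for every $j=1,\dots,q$.

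Finally, stacking the componentwise bounds and recognizing that the vector with $j$-th entry $G_{j,n}(0)$ is precisely $n^{-1/2}\sum_{t=1}^{n}\Cc_t(\widehat{\vtheta}_n^c,\widehat{\vtheta}_n^v)$, the conclusion follows. The main obstacle is purely notational: one must carefully distinguish the left and right derivatives because of the discontinuity of the indicator, and one must verify that the presence of the second estimator $\widehat{\vtheta}_n^v$ in the indicator $\1_{\{X_t>v_t(\widehat{\vtheta}_n^v)\}}$ does not interfere with the directional minimization argument—which it does not, since this factor is constant in $\vtheta^c$.
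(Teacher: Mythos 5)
Your proposal is correct and follows essentially the same route as the paper's proof of Lemma~C.\ref{lem:2 tilde}: directional one-sided derivatives of the empirical CoVaR score at the minimizer $\widehat{\vtheta}_n^{c}$, bounding the jump by $n^{-1/2}\max_t C_1(\mathcal{F}_{t-1})$ times the number of exact equalities $Y_t=c_t(\widehat{\vtheta}_n^{c})$, and controlling these via Markov's inequality with Assumption~\ref{ass:an}~\ref{it:mom bounds cons2} and Assumption~\ref{ass:an}~\ref{it:eq bound}. Your observation that the indicator $\1_{\{X_t>v_t(\widehat{\vtheta}_n^{v})\}}$ is constant in $\vtheta^{c}$ and can simply be bounded by one is exactly the step the paper takes as well.
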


\begin{proof}
	Recall from Assumption~\ref{ass:cons} \ref{it:compact} that $\mTheta^{c}\subset\mathbb{R}^{q}$, such that $\vtheta^c=(\theta_1^c,\ldots,\theta_q^c)^\prime$. Let $\ve_1,\ldots,\ve_q$ denote the standard basis of $\mathbb{R}^{q}$. Define
	\[
	S_{j,n}^{\CoVaR}(a) := \frac{1}{\sqrt{n}}\sum_{t=1}^{n}S^{\CoVaR}\Big(\big(v_t(\widehat{\vtheta}_n^{v}), c_t(\widehat{\vtheta}_n^{c}+a \ve_j)\big)^\prime, (X_t,Y_t)^\prime\Big),\qquad j=1,\ldots,q,
	\]
	where $a\in\mathbb{R}$. Let $G_{j,n}(a)$ be the right partial derivative of $S_{j,n}^{\CoVaR}(a)$, such that (see \eqref{eq:(12+)})
	\[
	G_{j,n}(a)=\frac{1}{\sqrt{n}}\sum_{t=1}^{n}\1_{\{X_t> v_t(\widehat{\vtheta}_n^{v})\}}\nabla_{j} c_t(\widehat{\vtheta}_n^{c}+a \ve_j)\big[\1_{\{Y_t\leq c_t(\widehat{\vtheta}_n^{c}+a \ve_j)\}}-\alpha\big],
	\]
	where $\nabla_j c_t(\cdot)$ is the $j$-th component of $\nabla c_t(\cdot)$. Then, $G_{j,n}(0)=\lim_{\xi\downarrow0}G_{j,n}(\xi)$ is the right partial derivative of
	\[
	S_n^{\CoVaR}(\vtheta^{c}) = \frac{1}{\sqrt{n}}\sum_{t=1}^{n}S^{\CoVaR}\Big(\big(v_t(\widehat{\vtheta}_n^v), c_t(\vtheta^c)\big)^\prime, (X_t,Y_t)^\prime\Big)
	\]
	at $\widehat{\vtheta}_n^{c}$ in the direction $\theta_j^{c}$.
	Correspondingly, $\lim_{\xi\downarrow0}G_{j,n}(-\xi)$ is the left partial derivative. Because $S_n^{\CoVaR}(\cdot)$ achieves its minimum at $\widehat{\vtheta}_n^{c}$, the left derivative must be non-positive and the right derivative must be non-negative. Thus, for sufficiently small $\xi>0$,
	\begin{align*}
		\big|G_{j,n}(0)\big| &\leq G_{j,n}(\xi) - G_{j,n}(-\xi)\\
		&= \frac{1}{\sqrt{n}}\sum_{t=1}^{n}\1_{\{X_t> v_t(\widehat{\vtheta}_n^{v})\}}\nabla_{j} c_t(\widehat{\vtheta}_n^{c}+\xi \ve_j)\big[\1_{\{Y_t\leq c_t(\widehat{\vtheta}_n^{c}+\xi \ve_j)\}}-\alpha\big]\\
		&\hspace{2cm} - \frac{1}{\sqrt{n}}\sum_{t=1}^{n}\1_{\{X_t> v_t(\widehat{\vtheta}_n^{v})\}}\nabla_{j} c_t(\widehat{\vtheta}_n^{c}-\xi \ve_j)\big[\1_{\{Y_t\leq c_t(\widehat{\vtheta}_n^{c}-\xi \ve_j)\}}-\alpha\big].
	\end{align*}
	By continuity of $\nabla c_t(\cdot)$ (see Assumption~\ref{ass:an} \ref{it:diff}) it follows upon letting $\xi\to0$ that
	\begin{align}
		\big|G_{j,n}(0)\big| &\leq \frac{1}{\sqrt{n}}\sum_{t=1}^{n}\1_{\{X_t> v_t(\widehat{\vtheta}_n^{v})\}}\big|\nabla_{j} c_t(\widehat{\vtheta}_n^{c})\big|\1_{\{Y_t= c_t(\widehat{\vtheta}_n^{c})\}}\notag\\
		&\leq \frac{1}{\sqrt{n}}\sum_{t=1}^{n}\big|\nabla_{j} c_t(\widehat{\vtheta}_n^{c})\big|\1_{\{Y_t= c_t(\widehat{\vtheta}_n^{c})\}}\notag\\	
		&\leq \frac{1}{\sqrt{n}}\Big[\max_{t=1,\ldots,n}C_1(\mathcal{F}_{t-1})\Big]\sum_{t=1}^{n}\1_{\{Y_t= c_t(\widehat{\vtheta}_n^{c})\}}.\label{eq:(N.7.1) tilde}
	\end{align}
	Subadditivity, Markov's inequality and Assumption~\ref{ass:an}~\ref{it:mom bounds cons2} imply
	\begin{align*}
		\P\Big\{n^{-1/2}\max_{t=1,\ldots,n}C_1(\mathcal{F}_{t-1})>\varepsilon\Big\} &\leq \sum_{t=1}^{n}\P\Big\{C_1(\mathcal{F}_{t-1})>\varepsilon n^{1/2}\Big\}\\
		&\leq \sum_{t=1}^{n}\varepsilon^{-3}n^{-3/2}\E\big[C_1^3(\mathcal{F}_{t-1})\big]\\
		& =O(n^{-1/2})=o(1).
	\end{align*}
	Combining this with Assumption~\ref{ass:an} \ref{it:eq bound}, we obtain from \eqref{eq:(N.7.1) tilde} that
	\[
	\big|G_{j,n}(0)\big| \overset{\text{a.s.}}{=} o_{\P}(1)O(1)=o_{\P}(1).
	\]
	As this holds for every $j=1,\ldots,q$, we get that 
	\[
	\frac{1}{\sqrt{n}}\sum_{t=1}^{n}\Cc_{t}(\widehat{\vtheta}_n^{c}, \widehat{\vtheta}_n^{v})\overset{\eqref{eq:(12+)}}{=}\frac{1}{\sqrt{n}}\sum_{t=1}^{n}\1_{\{X_t> v_t(\widehat{\vtheta}_n^{v})\}}\nabla c_t(\widehat{\vtheta}_n^{c})\big[\1_{\{Y_t\leq c_t(\widehat{\vtheta}_n^{c})\}}-\alpha\big]=o_{\P}(1),
	\]
	which is just the conclusion.
\end{proof}

We have now established all lemmas used in the proof of Lemma~C.\ref{lem:1 tilde}.
It only remains to prove Lemma~C.\ref{lem:7 tilde}.

\begin{proof}[{\textbf{Proof of Lemma~C.\ref{lem:7 tilde}:}}]
	Similarly as in Lemma~C.\ref{lem:6 tilde}, it may be shown that
	\begin{equation}\label{eq:(p.36)}
		\E\norm{\Cc_{t}(\vtheta_{0}^c, \vtheta_0^v)}^{2+\iota}\leq C\qquad\text{for all }t\in\mathbb{N}.
	\end{equation}
	Hence, following similar steps used in the proof of Lemma~V.\ref{lem:7}, we can show that, as $n\to\infty$,
	\[
	n^{-1/2}\mC_{n}^{-1/2}\sum_{t=1}^{n}\Cc_{t}(\vtheta_{0}^c, \vtheta_0^v)\overset{d}{\longrightarrow}N(\vzeros,\mI),
	\]
	where, by the LIE and the definition of the CoVaR $c_t(\vtheta_0^c)$,
	\begin{align}
		\E&\big[ \Cc_{t}(\vtheta_0^c,\vtheta_0^v)\Cc_{t}^\prime(\vtheta_0^c, \vtheta_0^v)\big]\notag\\
		&= \E\Big[ \nabla c_t(\vtheta_0^c)\nabla^\prime c_t(\vtheta_0^c) \1_{\{X_t>v_t(\vtheta_0^v)\}}\big(\1_{\{Y_t\leq c_t(\vtheta_0^c)\}} - \alpha\big)^2\Big]\notag\\
		&= \E\Big\{ \nabla c_t(\vtheta_0^c)\nabla^\prime c_t(\vtheta_0^c) \E_{t-1}\big[\1_{\{X_t>v_t(\vtheta_0^v),\ Y_t\leq c_t(\vtheta_0^v)\}} - 2\alpha \1_{\{X_t>v_t(\vtheta_0^v),\ Y_t\leq c_t(\vtheta_0^v)\}} +\alpha^2\1_{\{X_t>v_t(\vtheta_0^v)\}} \big]\Big\}\notag\\
		&= \E\bigg\{ \nabla c_t(\vtheta_0^c)\nabla^\prime c_t(\vtheta_0^c) \Big[\P_{t-1}\big\{X_t>v_t(\vtheta_0^v),\ Y_t\leq c_t(\vtheta_0^c)\big\} \notag\\
		&\hspace{5cm}- 2\alpha \P_{t-1}\big\{X_t>v_t(\vtheta_0^v),\ Y_t\leq c_t(\vtheta_0^c)\big\} +\alpha^2\P_{t-1}\big\{X_t>v_t(\vtheta_0^v)\big\} \Big]\bigg\}\notag\\
		&= \E\bigg\{ \nabla c_t(\vtheta_0^c)\nabla^\prime c_t(\vtheta_0^c) \Big[\P_{t-1}\big\{Y_t\leq c_t(\vtheta_0^c)\mid X_t>v_t(\vtheta_0^v)\big\}(1-\beta) \notag\\
		&\hspace{5cm}- 2\alpha \P_{t-1}\big\{Y_t\leq c_t(\vtheta_0^c)\mid X_t>v_t(\vtheta_0^v)\big\}(1-\beta)+\alpha^2(1-\beta) \Big]\bigg\}\notag\\
		&= \E\bigg\{ \nabla c_t(\vtheta_0^c)\nabla^\prime c_t(\vtheta_0^c) \Big[\alpha(1-\beta)- 2\alpha^2(1-\beta)+\alpha^2(1-\beta) \Big]\bigg\}\notag\\
		&=\alpha(1-\alpha)(1-\beta)\E\big[\nabla c_t(\vtheta_0^c)\nabla^\prime c_t(\vtheta_0^c)\big].\label{eq:B simpl}
	\end{align}
	Thus, to establish the lemma, we only have to show that
	\[
	\frac{1}{\sqrt{n}}\sum_{t=1}^{n}\big[\Cc_{t}(\vtheta_{0}^c, \vtheta_0^v) - \Cc_{t}(\vtheta_{0,n}^c, \widehat{\vtheta}_n^v)\big] = o_{\P}(1).
	\]
	Decompose
	\begin{align*}
		\frac{1}{\sqrt{n}}&\sum_{t=1}^{n}\big[\Cc_{t}(\vtheta_{0}^c, \vtheta_0^v) - \Cc_{t}(\vtheta_{0,n}^c, \widehat{\vtheta}_n^v)\big]\\
		&= \frac{1}{\sqrt{n}}\sum_{t=1}^{n}\Big\{\Cc_{t}(\vtheta_{0}^c, \vtheta_0^v) - \E\big[\Cc_{t}(\vtheta_{0}^c, \vtheta_0^v)\big]\Big\} - \Big\{\Cc_{t}(\vtheta_{0}^c, \widehat{\vtheta}_n^v) - \E\big[\Cc_{t}(\vtheta_{0}^c, \widehat{\vtheta}_n^v)\big]\Big\}\\
		&\hspace{1cm} + \frac{1}{\sqrt{n}}\sum_{t=1}^{n}\Big\{\Cc_{t}(\vtheta_{0}^c, \widehat{\vtheta}_n^v) - \E\big[\Cc_{t}(\vtheta_{0}^c, \widehat{\vtheta}_n^v)\big]\Big\} - \Big\{\Cc_{t}(\vtheta_{0,n}^c, \widehat{\vtheta}_n^v) - \E\big[\Cc_{t}(\vtheta_{0,n}^c, \widehat{\vtheta}_n^v)\big]\Big\}\\
		&=:A_{5n} + B_{5n},
	\end{align*}
	where we used that $\E\big[\Cc_{t}(\vtheta_{0}^c, \vtheta_0^v)\big]=\vzero$ and $\vzero=\vlambda_n(\vtheta_{0,n}^c, \widehat{\vtheta}_n^v)=(1/n)\sum_{t=1}^{n}\E\big[\Cc_{t}(\vtheta_{0,n}^c, \widehat{\vtheta}_n^v)\big]$. 
	
	Our first goal is to show that $B_{5n}=o_{\P}(1)$. To task this, note from $\sqrt{n}$-consistency of $\widehat{\vtheta}_n^v$ (from Section~\ref{Asymptotic Normality of the VaR Parameter Estimator} in the main paper) and \eqref{eq:(E.21m)} that
	\begin{equation}\label{eq:root n theta0n}
		\sqrt{n}\big\Vert\vtheta_{0,n}^c - \vtheta_0^c\big\Vert=O_{\P}(1).
	\end{equation}
	Now, we can exploit \eqref{eq:root n theta0n} to prove $B_{5n}=o_{\P}(1)$. Recall from the proof of Lemma~C.\ref{lem:1 tilde} that conditions (N1)--(N5) of \citet{Wei91} are satisfied. Therefore, Lemma~A.1 of \citet{Wei91} implies 
	\begin{equation}\label{eq:1st impl Wei91}
		\sup_{\Vert\vtheta-\vtheta_{0,n}^c\Vert\leq d_0}\frac{\norm{\frac{1}{\sqrt{n}}\sum_{t=1}^{n}\Big\{\Cc_{t}(\vtheta, \widehat{\vtheta}_n^v) - \E\big[\Cc_{t}(\vtheta, \widehat{\vtheta}_n^v)\big]\Big\} - \Big\{\Cc_{t}(\vtheta_{0,n}^c, \widehat{\vtheta}_n^v) - \E\big[\Cc_{t}(\vtheta_{0,n}^c, \widehat{\vtheta}_n^v)\big]\Big\}}}{1+\sqrt{n}\norm{\frac{1}{n}\sum_{t=1}^{n}\E\big[\Cc_{t}(\vtheta, \widehat{\vtheta}_n^v)\big] }}=o_{\P}(1).
	\end{equation}
	We can bound the term in the denominator (using a mean value expansion around $\vtheta_{0,n}^{c}$, \eqref{eq:(p.15)} and the fact that, by construction of $\vtheta_{0,n}^c$, we have $\frac{1}{n}\sum_{t=1}^{n}\E\big[\Cc_{t}(\vtheta_{0,n}^c, \widehat{\vtheta}_n^v)\big]=\vzero$) as follows:
	\begin{align}
		\bigg\Vert\frac{1}{n}&\sum_{t=1}^{n}\E\big[\Cc_{t}(\vtheta, \widehat{\vtheta}_n^v)\big]\bigg\Vert = \bigg\Vert\frac{1}{n}\sum_{t=1}^{n}\E\big[\Cc_{t}(\vtheta_{0,n}^c, \widehat{\vtheta}_n^v)\big] + \frac{1}{n}\sum_{t=1}^{n}\frac{\partial}{\partial \vtheta^c}\E\big[ \Cc_{t}(\vtheta^c, \vtheta^v)\big]\bigg\vert_{\substack{\vtheta^c=\vtheta^\ast\\ \vtheta^v = \widehat{\vtheta}_n^v}} \big(\vtheta - \vtheta_{0,n}^c\big)\bigg\Vert\notag\\
		&= \bigg\Vert\frac{1}{n}\sum_{t=1}^{n}\E\bigg\{\nabla^2 c_t(\vtheta^\ast) \Big[ F_t^{Y}\big(c_t(\vtheta^\ast)\big) - F_{t}\big(v_t(\widehat{\vtheta}_n^v), c_t(\vtheta^\ast)\big)-\alpha \big\{ 1-F_{t}^{X}\big(v_t(\widehat{\vtheta}_n^v)\big)\big\}\Big]\bigg\} \big(\vtheta - \vtheta_{0,n}^c\big)\notag\\
		&\hspace{1cm} + \frac{1}{n}\sum_{t=1}^{n}\E\bigg\{\nabla c_t(\vtheta^\ast)\nabla^\prime c_t(\vtheta^\ast) \Big[  f_t^{Y}\big(c_t(\vtheta^\ast)\big) - \partial_2 F_{t}\big(v_t(\widehat{\vtheta}_n^v), c_t(\vtheta^\ast)\big)\Big]\bigg\} \big(\vtheta - \vtheta_{0,n}^c\big)\bigg\Vert\notag\\
		&\leq \frac{1}{n}\sum_{t=1}^{n}3\E\big[C_2(\mathcal{F}_{t-1})\big] \big\Vert\vtheta - \vtheta_{0,n}^c\big\Vert + \frac{1}{n}\sum_{t=1}^{n}2K \E\big[C_1^2(\mathcal{F}_{t-1})\big] \big\Vert\vtheta - \vtheta_{0,n}^c\big\Vert\notag\\
		&\leq C_{0} d_0,\label{eq:bound denom}
	\end{align}
	where we also used Assumption~\ref{ass:an} in the final two steps, and $C_0$ is a large positive constant.
	For any $\varepsilon>0$, write
	\begin{align*}
		\P\Big\{\norm{B_{5n}}>\varepsilon\Big\} &\leq \P\Big\{\norm{B_{5n}}>\varepsilon,\ \big\Vert\vtheta_0^c - \vtheta_{0,n}^c\big\Vert \leq C/\sqrt{n}\Big\} + \P\Big\{\big\Vert\vtheta_0^c - \vtheta_{0,n}^c\big\Vert > C/\sqrt{n}\Big\}
	\end{align*}
	By \eqref{eq:root n theta0n}, for any $\delta>0$ we may choose $C>0$ sufficiently large, such that $\P\Big\{\big\Vert\vtheta_0^c - \vtheta_{0,n}^c\big\Vert > C/\sqrt{n}\Big\}<\delta/2$. Therefore, $B_{5n}=o_{\P}(1)$ follows if we can show that the first term on the right-hand side of the above display is also bounded by $\delta/2$. To task this, write
	\begin{align*}
		&\P\Big\{\norm{B_{5n}}>\varepsilon,\ \big\Vert\vtheta_0^c - \vtheta_{0,n}^c\big\Vert \leq C/\sqrt{n}\Big\}\\
		&\leq \P\Bigg\{\sup_{\Vert\vtheta - \vtheta_{0,n}^c\Vert\leq C/\sqrt{n}}\bigg\Vert\frac{1}{\sqrt{n}}\sum_{t=1}^{n}\Big\{\Cc_{t}(\vtheta, \widehat{\vtheta}_n^v) - \E\big[\Cc_{t}(\vtheta, \widehat{\vtheta}_n^v)\big]\Big\} - \Big\{\Cc_{t}(\vtheta_{0,n}^c, \widehat{\vtheta}_n^v) - \E\big[\Cc_{t}(\vtheta_{0,n}^c, \widehat{\vtheta}_n^v)\big]\Big\}\bigg\Vert>\varepsilon\Bigg\}\\
		&\leq \P\Bigg\{\sup_{\Vert\vtheta - \vtheta_{0,n}^c\Vert\leq C/\sqrt{n}}\frac{\norm{\frac{1}{\sqrt{n}}\sum_{t=1}^{n}\Big\{\Cc_{t}(\vtheta, \widehat{\vtheta}_n^v) - \E\big[\Cc_{t}(\vtheta, \widehat{\vtheta}_n^v)\big]\Big\} - \Big\{\Cc_{t}(\vtheta_{0,n}^c, \widehat{\vtheta}_n^v) - \E\big[\Cc_{t}(\vtheta_{0,n}^c, \widehat{\vtheta}_n^v)\big]\Big\}}}{1+\sqrt{n}\norm{\frac{1}{n}\sum_{t=1}^{n}\E\big[\Cc_{t}(\vtheta, \widehat{\vtheta}_n^v)\big] }}\\
		&\hspace{7.3cm} \times \sup_{\Vert\vtheta - \vtheta_{0,n}^c\Vert\leq C/\sqrt{n}}\bigg[1+\sqrt{n}\Big\Vert\frac{1}{n}\sum_{t=1}^{n}\E\big[\Cc_{t}(\vtheta, \widehat{\vtheta}_n^v)\big] \Big\Vert\bigg] >\varepsilon\Bigg\}\\
		&\leq\P\Big\{o_{\P}(1)\times \big[1+\sqrt{n} C_{0}(C/\sqrt{n})\big]>\varepsilon\Big\}\\
		&<\delta/2
	\end{align*}
	for sufficiently large $n$, where the penultimate step follows from \eqref{eq:1st impl Wei91} and \eqref{eq:bound denom}. This proves $B_{5n}=o_{\P}(1)$.
	
	It remains to show that $A_{5n}=o_{\P}(1)$. To do so, we again verify conditions (N1)--(N5) of \citet{Wei91} for 
	\begin{align*}
		\vlambda_n(\vtheta^v) &:= \frac{1}{n}\sum_{t=1}^{n}\E\big[\Cc_{t}(\vtheta_0^c, \vtheta^v)\big],\\
		\mu_t(\vtheta,d) &:= \sup_{\norm{\vtau-\vtheta}\leq d}\norm{\Cc_{t}(\vtheta_0^c,\vtau) - \Cc_{t}(\vtheta_0^c,\vtheta)}.
	\end{align*}
	To promote flow, we do this in Section~\ref{Supplementary Proofs}. Lemma~A.1 of \citet{Wei91} then implies that
	\begin{equation*}
		\sup_{\Vert\vtheta-\vtheta_{0}^v\Vert\leq d_0}\frac{\norm{\frac{1}{\sqrt{n}}\sum_{t=1}^{n}\Big\{\Cc_{t}(\vtheta_0^c, \vtheta) - \E\big[\Cc_{t}(\vtheta_0^c,\vtheta)\big]\Big\} - \Big\{\Cc_{t}(\vtheta_0^c, \vtheta_0^v) - \E\big[\Cc_{t}(\vtheta_0^c, \vtheta_0^v)\big]\Big\}}}{1+\sqrt{n}\norm{\frac{1}{n}\sum_{t=1}^{n}\E\big[\Cc_{t}(\vtheta_0^c, \vtheta)\big] }}=o_{\P}(1).
	\end{equation*}
	We can bound the term in the denominator (using a mean value expansion around $\vtheta_{0}^{v}$, \eqref{eq:(p.16)} and $\E\big[\Cc_{t}(\vtheta_0^c, \vtheta_0^v)\big]=\vzero$) as follows:
	\begin{align}
		\bigg\Vert\frac{1}{n}&\sum_{t=1}^{n}\E\big[\Cc_{t}(\vtheta_0^c, \vtheta)\big]\bigg\Vert = \bigg\Vert\frac{1}{n}\sum_{t=1}^{n}\E\big[\Cc_{t}(\vtheta_0^c, \vtheta_0^v)\big] + \frac{1}{n}\sum_{t=1}^{n}\frac{\partial}{\partial \vtheta^v}\E\big[ \Cc_{t}(\vtheta^c, \vtheta^v)\big]\bigg\vert_{\substack{\vtheta^c=\vtheta_0^c\\ \vtheta^v = \vtheta^\ast}} \big(\vtheta - \vtheta_{0}^v\big)\bigg\Vert\notag\\
		&= \bigg\Vert\frac{1}{n}\sum_{t=1}^{n}\E\bigg\{\nabla c_t(\vtheta_0^c)\nabla^\prime v_t(\vtheta^\ast)\Big[\alpha f_t^X\big(v_t(\vtheta^\ast)\big) - \partial_1 F_t\big(v_t(\vtheta^\ast), c_t(\vtheta_0^c)\big)\Big]\bigg\}\big(\vtheta - \vtheta_0^v\big)\bigg\Vert\notag\\
		&\leq \frac{1}{n}\sum_{t=1}^{n}(\alpha K + K) \E\big[V_1(\mathcal{F}_{t-1}) C_1(\mathcal{F}_{t-1})\big] \big\Vert\vtheta - \vtheta_0^v\big\Vert\notag\\
		&\leq \frac{1}{n}\sum_{t=1}^{n}(\alpha K + K) \Big\{\E\big[V_1^2(\mathcal{F}_{t-1})\big]\Big\}^{1/2} \Big\{\E\big[C_1^2(\mathcal{F}_{t-1})\big]\Big\}^{1/2} \big\Vert\vtheta - \vtheta_0^v\big\Vert\notag\\
		&\leq C_{0} d_0,\notag
	\end{align}
	where we also used Assumption~\ref{ass:an} in the final two steps, and $C_0$ is a large positive constant. Now, $A_{5n}=o_{\P}(1)$ follows by similar steps used below \eqref{eq:bound denom} (combined with the fact that $\sqrt{n}\big\Vert\widehat{\vtheta}_n^v-\vtheta_0^v\big\Vert=O_{\P}(1)$ holds instead of \eqref{eq:root n theta0n}).
\end{proof}

\subsection{Supplementary Proofs}\label{Supplementary Proofs}

To complete the proof of Lemma~C.\ref{lem:7 tilde}, this section verifies conditions (N1)--(N5) of \citet{Wei91} for 
\begin{align*}
	\vlambda_n(\vtheta^v) &:= \frac{1}{n}\sum_{t=1}^{n}\E\big[\Cc_{t}(\vtheta_0^c, \vtheta^v)\big],\\
	\mu_t(\vtheta,d) &:= \sup_{\norm{\vtau-\vtheta}\leq d}\norm{\Cc_{t}(\vtheta_0^c,\vtau) - \Cc_{t}(\vtheta_0^c,\vtheta)}.
\end{align*}

The mixing condition of (N5) follows from Assumption~\ref{ass:an} \ref{it:mixing}.
Condition (N2) follows from correct specification of the VaR and CoVaR model such that
\begin{equation}\label{eq:0 cond}
	\vlambda_n(\vtheta_0^v)=\frac{1}{n}\sum_{t=1}^{n}\E\big[\Cc_{t}(\vtheta_0^c, \vtheta_0^v)\big]=\vzero.
\end{equation}
The separability condition (N1) follows from the next lemma:

\begin{lem}\label{lem:0 tilde noch}
	Suppose Assumptions~\ref{ass:cons} and \ref{ass:an} hold. Then, condition (N1) of \citet{Wei91} holds, i.e., for all $t\in\mathbb{N}$, the stochastic process $\Omega\times\mTheta^v\ni(\omega,\vtheta^v)\mapsto \Cc_t(\vtheta_0^c,\vtheta^v)$ is separable in the sense of \citet[pp.~51--52]{Doo53}, and $\Cc_t(\vtheta_0^c,\vtheta^v)$ is measurable for all $\vtheta^v\in\mTheta^v$.
\end{lem}

\begin{proof}
	The proof is similar to that of Lemma~C.\ref{lem:0 tilde} but simpler and, hence, is omitted.
\end{proof}

Next, condition (N3) is verified in Lemmas~\ref{lem:1 supp}--\ref{lem:3 supp}, and condition (N4) in Lemma~\ref{lem:4 supp}.

\begin{lem}\label{lem:1 supp}
	Suppose Assumptions~\ref{ass:cons} and \ref{ass:an} hold. Then, condition (N3) (i) of \citet{Wei91} holds, i.e.,
	\[
	\big\Vert\vlambda_n(\vtheta)\big\Vert \geq a\big\Vert\vtheta-\vtheta_{0}^v\big\Vert\qquad\text{for}\ \big\Vert\vtheta-\vtheta_{0}^v\big\Vert\leq d_0
	\]
	for sufficiently large $n$ and some $a>0$ and $d_0>0$.
\end{lem}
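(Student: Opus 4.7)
The plan is to mimic the argument of Lemma V.\ref{lem:3}, adapting it to the present setting where $\vlambda_n$ is viewed as a function of $\vtheta^v$ (rather than of $\vtheta^c$), with $\vtheta_0^c$ held fixed in the first argument. The key identification ingredient will be the full-rank condition on $\mLambda_{(2)}$ in Assumption~\ref{ass:an}~\ref{it:pd}, which must play the role that positive definiteness of $\mLambda$ played in Lemma V.\ref{lem:3}. I take $d_0 > 0$ small enough that the $d_0$-ball around $\vtheta_0^v$ lies inside the neighborhood on which the smoothness conditions of Assumptions~\ref{ass:cons} and \ref{ass:an} hold.

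First, I apply the (entrywise) mean value theorem to $\vlambda_n$. Using \eqref{eq:0 cond}, this gives
\[
\vlambda_n(\vtheta) \;=\; \vlambda_n(\vtheta_0^v) + \mLambda_{n,(2)}(\vtheta_0^c, \vtheta^\ast)(\vtheta - \vtheta_0^v) \;=\; \mLambda_{n,(2)}(\vtheta_0^c, \vtheta^\ast)(\vtheta - \vtheta_0^v)
\]
for some $\vtheta^\ast$ on the segment between $\vtheta_0^v$ and $\vtheta$ (with the usual abuse of notation that $\vtheta^\ast$ may differ across rows). I then write $\mLambda_{n,(2)}(\vtheta_0^c, \vtheta^\ast) = \mLambda_{(2)} + [\mLambda_{n,(2)}(\vtheta_0^c, \vtheta^\ast) - \mLambda_{(2)}]$ and apply the reverse triangle inequality:
\[
\big\Vert \vlambda_n(\vtheta)\big\Vert \;\geq\; \big\Vert \mLambda_{(2)}(\vtheta - \vtheta_0^v)\big\Vert - \big\Vert \mLambda_{n,(2)}(\vtheta_0^c, \vtheta^\ast) - \mLambda_{(2)}\big\Vert \cdot \big\Vert \vtheta - \vtheta_0^v\big\Vert.
\]

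The key step is then to lower-bound each of the two terms. For the first, Assumption~\ref{ass:an}~\ref{it:pd} asserts that the $q \times p$ matrix $\mLambda_{(2)}$ has full rank; interpreting this as full column rank (which is what is implicitly needed here and in Theorem~\ref{thm:an}), the smallest singular value of $\mLambda_{(2)}$ is some $a' > 0$, whence $\Vert \mLambda_{(2)} \vx \Vert \geq a' \Vert \vx \Vert$ for all $\vx \in \mathbb{R}^p$. For the second term, I reuse the Lipschitz-type bound \eqref{eq:(2.11)} already established inside the proof of Lemma~C.\ref{lem:1+ tilde}, namely $\Vert \mLambda_{n,(2)}(\vtheta_0^c, \vtheta^\ast) - \mLambda_{(2)}\Vert \leq C \Vert \vtheta^\ast - \vtheta_0^v\Vert \leq C \Vert \vtheta - \vtheta_0^v\Vert \leq C d_0$. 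Combining these yields $\Vert \vlambda_n(\vtheta)\Vert \geq (a' - Cd_0)\Vert \vtheta - \vtheta_0^v\Vert$, and choosing $d_0 < a'/(2C)$ gives the conclusion with $a = a'/2$.

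The only substantive obstacle is the interpretation of the full-rank condition on $\mLambda_{(2)}$: the argument above uses $\mLambda_{(2)}$ having full column rank so as to bound $\Vert \mLambda_{(2)} \vx \Vert$ from below by $a' \Vert \vx \Vert$. This is consistent with how $\mLambda_{(2)}$ enters the asymptotic variance $\mGamma \mC \mGamma^\prime$ in Theorem~\ref{thm:an}, so no new assumption is required; everything else is routine, being a direct adaptation of the corresponding steps in the proof of Lemma~V.\ref{lem:3}.
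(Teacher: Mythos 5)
Your proof is correct and follows essentially the same route as the paper's: a mean value expansion combined with $\vlambda_n(\vtheta_0^v)=\vzeros$, the reverse triangle inequality, and the Lipschitz bound \eqref{eq:(2.11)} from Lemma~C.\ref{lem:1+ tilde} to absorb $\mLambda_{n,(2)}(\vtheta_0^c,\vtheta^\ast)-\mLambda_{(2)}$ by shrinking $d_0$. If anything you are more careful than the paper at the key step: the paper justifies $\Vert\mLambda_{(2)}(\vtheta-\vtheta_0^v)\Vert\geq a\Vert\vtheta-\vtheta_0^v\Vert$ by saying the Frobenius norm of $\mLambda_{(2)}$ is bounded below, whereas what is actually needed (and what you correctly invoke) is that the full-rank condition be read as full column rank so that the smallest singular value of $\mLambda_{(2)}$ is bounded away from zero.
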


\begin{proof}
	The MVT and \eqref{eq:0 cond} imply
	\begin{align}
		\vlambda_n(\vtheta)&= \vlambda_n(\vtheta_0^v) + \mLambda_{n,(2)}(\vtheta_0^c,\vtheta^\ast)(\vtheta- \vtheta_0^v)\notag\\
		& = \mLambda_{n,(2)}(\vtheta_0^c,\vtheta^\ast)(\vtheta- \vtheta_0^v)\label{eq:lambda c decomp}
	\end{align}
	for some $\vtheta^\ast$ on the line connecting $\vtheta_0^v$ and $\vtheta$. (Again, this is an instance of the non-existent mean value theorem.) 
	
	Assumption~\ref{ass:an}~\ref{it:pd} and \citet[Theorem~4.2.1]{GV13} imply that $\mLambda_{n,(2)}^\prime\mLambda_{n,(2)}$ is uniformly positive definite.
	Therefore, the sequence $\mLambda_{n,(2)}^\prime\mLambda_{n,(2)}$ has the smallest eigenvalue bounded uniformly below by some $a>0$. 
	By \citet[Theorem~1.13 and in particular Exercise 1.14.1]{MN19} this yields that
	\begin{align}
		\vx^\prime\mLambda_{n,(2)}^\prime\mLambda_{n,(2)}\vx & \geq a\vx^\prime\vx\qquad\text{for all }\vx\in\mathbb{R}^p\notag
		\intertext{or, equivalently,}
		\big\Vert\mLambda_{n,(2)}\vx\big\Vert^2 & \geq a \big\Vert\vx\big\Vert^2\qquad\text{for all }\vx\in\mathbb{R}^p.\label{eq:eucl bound}
	\end{align}
	
	Exploiting \eqref{eq:lambda c decomp} and \eqref{eq:eucl bound} gives
	\begin{align}
		\big\Vert\vlambda_n(\vtheta)\big\Vert &= \big\Vert\mLambda_{n,(2)}(\vtheta_0^c,\vtheta^\ast)(\vtheta- \vtheta_0^v)\big\Vert\notag\\
		&= \Big\Vert\mLambda_{n,(2)}(\vtheta- \vtheta_0^v) - \big[\mLambda_{n,(2)} - \mLambda_{n,(2)}(\vtheta_0^c,\vtheta^\ast)\big](\vtheta- \vtheta_0^v)\Big\Vert\notag\\
		&\geq \Big\Vert\mLambda_{n,(2)}(\vtheta- \vtheta_0^v)\Big\Vert - \Big\Vert\big[\mLambda_{n,(2)} - \mLambda_{n,(2)}(\vtheta_0^c,\vtheta^\ast)\big](\vtheta- \vtheta_0^v)\Big\Vert\notag\\
		&\geq \sqrt{a}\big\Vert\vtheta- \vtheta_0^v\big\Vert - \big\Vert\mLambda_{n,(2)} - \mLambda_{n,(2)}(\vtheta_0^c,\vtheta^\ast)\big\Vert\cdot \big\Vert\vtheta- \vtheta_0^v\big\Vert\notag\\
		&=\Big(\sqrt{a} - \big\Vert\mLambda_{n,(2)} - \mLambda_{n,(2)}(\vtheta_0^c,\vtheta^\ast)\big\Vert\Big)\big\Vert\vtheta- \vtheta_0^v\big\Vert.\label{eq:(p.32)}
	\end{align}
	By \eqref{eq:(E.10)}, we have that
	\[
	\big\Vert\mLambda_{(2)} - \mLambda_{n,(2)}(\vtheta_0^c,\vtheta^\ast)\big\Vert\leq C\norm{\vtheta^\ast-\vtheta_0^v}\leq C\norm{\vtheta-\vtheta_0^v}\leq C d_0,
	\]
	where the second inequality follows because $\vtheta^\ast$ lies on the line connecting $\vtheta_0^v$ and $\vtheta$.
	In particular, choosing $d_0>0$ such that $C d_0<\sqrt{a}$, the desired conclusion follows from \eqref{eq:(p.32)}.
\end{proof}

\begin{lem}\label{lem:2 supp}
	Suppose Assumptions~\ref{ass:cons} and \ref{ass:an} hold, and define
	\[
	\mu_t(\vtheta, d) =\sup_{\norm{\vtau-\vtheta}\leq d}\norm{\Cc_{t}(\vtheta_0^c,\vtau) - \Cc_{t}(\vtheta_0^c,\vtheta)}.
	\]
	Then, condition (N3) (ii) of \citet{Wei91} holds, i.e.,
	\[
	\E\big[\mu_t(\vtheta, d)\big] \leq bd\qquad\text{for}\ \big\Vert\vtheta-\vtheta_{0}^v\big\Vert + d\leq d_0
	\]
	for sufficiently large $n$ and some strictly positive $b$, $d$, $d_0$.
\end{lem}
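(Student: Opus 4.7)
My plan is to exploit the structural simplification that comes from fixing $\vtheta_0^c$ in the CoVaR parameter slot and varying only the VaR parameter. Writing out the definition in \eqref{eq:(12+)}, one immediately obtains the factorization
\[
	\Cc_{t}(\vtheta_0^c,\vtau) - \Cc_{t}(\vtheta_0^c,\vtheta) = \nabla c_t(\vtheta_0^c)\bigl[\1_{\{Y_t\leq c_t(\vtheta_0^c)\}}-\alpha\bigr]\bigl[\1_{\{X_t>v_t(\vtau)\}}-\1_{\{X_t>v_t(\vtheta)\}}\bigr],
\]
so that the gradient factor and the $Y_t$-indicator are $\mathcal{F}_{t-1}$-measurable (resp.\ do not depend on $\vtau$) and can be pulled out of the supremum. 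Since $|\1_{\{Y_t\leq c_t(\vtheta_0^c)\}}-\alpha|\leq 1$, this gives
\[
	\mu_t(\vtheta,d) \leq \norm{\nabla c_t(\vtheta_0^c)}\cdot \sup_{\norm{\vtau-\vtheta}\leq d}\bigl|\1_{\{X_t>v_t(\vtau)\}}-\1_{\{X_t>v_t(\vtheta)\}}\bigr|.
\]
This is simpler than the analogous Lemma~V.\ref{lem:4}, because there is no ``derivative-of-the-gradient'' term $\mu_t^{(2)}$ to handle separately.

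Next, mimicking the proof of Lemma~V.\ref{lem:4}, I would introduce the $\mathcal{F}_{t-1}$-measurable quantities $\underline{\vtau}=\argmin_{\norm{\vtau-\vtheta}\leq d}v_t(\vtau)$ and $\overline{\vtau}=\argmax_{\norm{\vtau-\vtheta}\leq d}v_t(\vtau)$, which exist by continuity of $v_t(\cdot)$ (Assumption~\ref{ass:cons}~\ref{it:smooth}). The supremum of indicator differences is then bounded pointwise by $\1_{\{v_t(\underline{\vtau})\leq X_t\leq v_t(\overline{\vtau})\}}$. Conditioning on $\mathcal{F}_{t-1}$ and using Assumption~\ref{ass:cons}~\ref{it:cond dist} ($f_t^X\leq K$) together with a mean value expansion in the same spirit as \eqref{eq:bound 1}, I obtain
\[
	\E_{t-1}\bigl[\1_{\{v_t(\underline{\vtau})\leq X_t\leq v_t(\overline{\vtau})\}}\bigr] \leq K\bigl|v_t(\overline{\vtau})-v_t(\underline{\vtau})\bigr| \leq K V_1(\mathcal{F}_{t-1})\norm{\overline{\vtau}-\underline{\vtau}} \leq 2K V_1(\mathcal{F}_{t-1})d,
\]
where the bound $\norm{\nabla v_t(\vtau^\ast)}\leq V_1(\mathcal{F}_{t-1})$ from Assumption~\ref{ass:cons}~\ref{it:bound} applies because the restriction $\norm{\vtheta-\vtheta_0^v}+d\leq d_0$ keeps both $\vtheta$ and $\vtau$ in the relevant neighborhood of $\vtheta_0^v$.

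Combining the two displays and applying the tower property yields
\[
	\E\bigl[\mu_t(\vtheta,d)\bigr] \leq 2Kd\,\E\bigl[C_1(\mathcal{F}_{t-1})V_1(\mathcal{F}_{t-1})\bigr] \leq 2Kd\,\bigl\{\E[C_1^2(\mathcal{F}_{t-1})]\bigr\}^{1/2}\bigl\{\E[V_1^2(\mathcal{F}_{t-1})]\bigr\}^{1/2} \leq bd
\]
by the Cauchy--Schwarz inequality and the moment bounds from Assumption~\ref{ass:an}~\ref{it:mom bounds cons2} (which imply finite second moments of $V_1$ and $C_1$ a fortiori). I do not foresee any real obstacle here: the argument is a simplified one-term version of Lemma~V.\ref{lem:4}, and the only point requiring mild care is tracking that the neighborhood constraint $\norm{\vtheta-\vtheta_0^v}+d\leq d_0$ ensures both $\vtheta$ and $\vtau$ lie in the domain on which the bound $V_1(\mathcal{F}_{t-1})$ applies.
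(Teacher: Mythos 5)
Your proposal is correct and follows essentially the same route as the paper: factor out the $\mathcal{F}_{t-1}$-measurable term $\nabla c_t(\vtheta_0^c)\big[\1_{\{Y_t\leq c_t(\vtheta_0^c)\}}-\alpha\big]$, bound the supremum of the indicator differences by $\1_{\{v_t(\underline{\vtau})\leq X_t\leq v_t(\overline{\vtau})\}}$ via the extremizers $\underline{\vtau},\overline{\vtau}$, and then apply the conditional density bound, a mean value expansion, the tower property and Cauchy--Schwarz. The paper reaches the same indicator bound through a slightly more explicit case distinction on whether $X_t\leq v_t(\vtheta)$, but the resulting estimate $\E[\mu_t(\vtheta,d)]\leq 2Kd\,\E[V_1(\mathcal{F}_{t-1})C_1(\mathcal{F}_{t-1})]$ is identical.
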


\begin{proof}
	Choose $d_0>0$ sufficiently small, such that $\big\{\vtheta\in\mTheta^v:\ \norm{\vtheta-\vtheta_0^v}<d_0\big\}$ is a subset of the neighborhoods of Assumptions~\ref{ass:cons} \ref{it:bound} and \ref{ass:an} \ref{it:bound2.1}.
	Recalling the definition of $\Cc_{t}(\vtheta^c, \vtheta^v)$ from \eqref{eq:(12+)}, we decompose
	\begin{equation*}
		\mu_t(\vtheta, d) = \sup_{\norm{\vtau-\vtheta}\leq d}\norm{ \big[\1_{\{X_t> v_t(\vtau)\}} - \1_{\{X_t> v_t(\vtheta)\}}\big]\nabla c_t(\vtheta_0^c) \big[\1_{\{Y_t\leq c_t(\vtheta_0^c)\}} - \alpha\big]}.
	\end{equation*}
	The remainder of the proof is similar to that of Lemma~V.\ref{lem:4} and Lemma~C.\ref{lem:4 tilde}.
	
	Define the $\mathcal{F}_{t-1}$-measurable quantities
	\begin{align*}
		\underline{\vtau} &:= \argmin_{\norm{\vtau-\vtheta}\leq d}v_t(\vtau),\\
		\overline{\vtau}  &:= \argmax_{\norm{\vtau-\vtheta}\leq d}v_t(\vtau),
	\end{align*}
	which exist by continuity of $v_t(\cdot)$.
	
	To take the indicators out of the supremum in $\mu_t(\vtheta, d)$, we again consider two cases:
	
	\textbf{Case 1: $X_t\leq v_t(\vtheta)$}
	
	In this case,
	\begin{align*}
		\mu_t(\vtheta, d) &= \1_{\{X_t> v_t(\underline{\vtau})\}} \sup_{\norm{\vtau-\vtheta}\leq d}\norm{\nabla c_t(\vtheta_0^c) \big[\1_{\{Y_t\leq c_t(\vtheta_0^c)\}} - \alpha\big]}\\
		&=  \1_{\{X_t> v_t(\underline{\vtau})\}} \norm{\nabla c_t(\vtheta_0^c) \big[\1_{\{Y_t\leq c_t(\vtheta_0^c)\}} - \alpha\big]}.
	\end{align*}

	\textbf{Case 2: $X_t> v_t(\vtheta)$}
	
	In this case,
	\begin{align*}
		\mu_t(\vtheta, d) &= \1_{\{X_t\leq v_t(\overline{\vtau})\}}\sup_{\norm{\vtau-\vtheta}\leq d}\Big\Vert\nabla c_t(\vtheta_0^c) \big[\1_{\{Y_t\leq c_t(\vtheta_0^c)\}} - \alpha\big]\Big\Vert\\
		&= \1_{\{X_t\leq v_t(\overline{\vtau})\}}\Big\Vert\nabla c_t(\vtheta_0^c) \big[\1_{\{Y_t\leq c_t(\vtheta_0^c)\}} - \alpha\big]\Big\Vert.
	\end{align*}
	(Notice that the third case that $X_t<v_t(\underline{\vtau})$ cannot occur, because already $X_t> v_t(\vtheta)$.)

	Combining the results from Cases 1 and 2 gives
	\begin{equation}\label{eq:(5.1)}
		\mu_t(\vtheta, d) \leq \Big[\1_{\{v_t(\underline{\vtau})< X_t\leq v_t(\vtheta)\}} + \1_{\{v_t(\vtheta)<X_t\leq v_t(\overline{\vtau})\}}\Big] \norm{\nabla c_t(\vtheta_0^c) \big[\1_{\{Y_t\leq c_t(\vtheta_0^c)\}} - \alpha\big]}.
	\end{equation}
	Note that---similarly as below \eqref{eq:(N.12.2)}---$\vtheta$ and $\vtau$ are in a $d_0$-neighborhood of $\vtheta_0^v$, such that the bound for $\nabla v_t(\cdot)$ from Assumption~\ref{ass:cons}~\ref{it:bound} applies.
	Then, using the LIE and \eqref{eq:bound 1}--\eqref{eq:bound 3},
	\begin{align*}
		\E\big[\mu_t(\vtheta, d)\big] &\leq \E\bigg\{\E_{t-1}\Big[\1_{\{v_t(\underline{\vtau})< X_t\leq v_t(\vtheta)\}} + \1_{\{v_t(\vtheta)<X_t\leq v_t(\overline{\vtau})\}}\Big]\norm{\nabla c_t(\vtheta_0^c)}\bigg\}\\
		&\leq \E\big[2K V_1(\mathcal{F}_{t-1})C_1(\mathcal{F}_{t-1})d\big]\\
		&\leq 2K\Big\{\E\big[ V_1^2(\mathcal{F}_{t-1})\big]\Big\}^{1/2}\Big\{\E\big[ C_1^2(\mathcal{F}_{t-1})\big]\Big\}^{1/2}d\\
		&\leq bd
	\end{align*}
	for some sufficiently large $b>0$.
\end{proof}

\begin{lem}\label{lem:3 supp}
	Suppose Assumptions~\ref{ass:cons} and \ref{ass:an} hold. Then, condition (N3) (iii) of \citet{Wei91} holds, i.e.,
	\[
	\E\big[\mu_t^q(\vtheta, d)\big] \leq cd\qquad\text{for}\ \big\Vert\vtheta-\vtheta_0^v\big\Vert + d\leq d_0
	\]
	for sufficiently large $n$ and some $c>0$, $d\geq0$, $d_0>0$ and $q>2$.
\end{lem}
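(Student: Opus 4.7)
The plan is to follow the template of Lemma~V.\ref{lem:5} and Lemma~C.\ref{lem:5 tilde}, but to exploit the crucial simplification that Lemma~\ref{lem:2 supp} already gives us a pointwise bound on $\mu_t(\vtheta,d)$ in which the $\vtau$-dependent pieces appear only through a sum of indicators on disjoint $\mathcal{F}_{t-1}$-conditional events. Concretely, choose the same $d_0>0$ as in Lemma~\ref{lem:2 supp}, so that Assumptions~\ref{ass:cons}~\ref{it:bound} and \ref{ass:an}~\ref{it:bound2.1} are in force on the relevant neighborhood of $\vtheta_0^v$. Starting from the bound \eqref{eq:(5.1)},
\[
\mu_t(\vtheta, d) \leq \Big[\1_{\{v_t(\underline{\vtau})< X_t\leq v_t(\vtheta)\}} + \1_{\{v_t(\vtheta)<X_t\leq v_t(\overline{\vtau})\}}\Big]\cdot\Big\Vert\nabla c_t(\vtheta_0^c)\big[\1_{\{Y_t\leq c_t(\vtheta_0^c)\}} - \alpha\big]\Big\Vert,
\]
note that the two indicator events are disjoint, so their sum $A_t$ is itself an indicator, satisfying $A_t^{q}=A_t$ for every $q\geq0$. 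Moreover, since $|\1_{\{Y_t\leq c_t(\vtheta_0^c)\}} - \alpha|\leq 1$, the second factor is dominated by $\norm{\nabla c_t(\vtheta_0^c)}\leq C_1(\mathcal{F}_{t-1})$ under Assumption~\ref{ass:an}~\ref{it:bound2.2}.

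Consequently, for any $q>2$,
\[
\mu_t^q(\vtheta,d)\leq A_t\cdot C_1^{q}(\mathcal{F}_{t-1}).
\]
Choose $q=2+\iota$ with $\iota>0$ as in Assumption~\ref{ass:an}~\ref{it:mom bounds cons2}. Applying the law of iterated expectations, then the bounds \eqref{eq:bound 1}--\eqref{eq:bound 3} (which rely on Assumption~\ref{ass:cons}~\ref{it:cond dist} and Assumption~\ref{ass:cons}~\ref{it:bound}) gives
\[
\E_{t-1}[A_t]\leq 2K V_1(\mathcal{F}_{t-1})d,
\]
so that
\[
\E\big[\mu_t^q(\vtheta,d)\big]\leq 2Kd\cdot\E\big[C_1^{2+\iota}(\mathcal{F}_{t-1})\,V_1(\mathcal{F}_{t-1})\big].
\]
A final application of H\"older's inequality with conjugate exponents $(3+\iota)/(2+\iota)$ and $3+\iota$ yields
\[
\E\big[C_1^{2+\iota}(\mathcal{F}_{t-1})\,V_1(\mathcal{F}_{t-1})\big]\leq\Big\{\E\big[C_1^{3+\iota}(\mathcal{F}_{t-1})\big]\Big\}^{(2+\iota)/(3+\iota)}\Big\{\E\big[V_1^{3+\iota}(\mathcal{F}_{t-1})\big]\Big\}^{1/(3+\iota)}\leq C,
\]
using the finite $(3+\iota)$-th moments of $C_1$ and $V_1$ in Assumption~\ref{ass:an}~\ref{it:mom bounds cons2}. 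Combining these estimates gives $\E[\mu_t^q(\vtheta,d)]\leq cd$ with $c=2KC$, as required.

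I do not expect any real obstacle here: unlike in Lemmas~V.\ref{lem:5} and C.\ref{lem:5 tilde}, there is no need to split $\mu_t$ into two additive pieces, because the gradient factor in \eqref{eq:(5.1)} is evaluated at the fixed point $\vtheta_0^c$ and so contributes no $\vtau$-dependence. The only care needed is the routine bookkeeping of (a) verifying that the sum of the two indicators takes only the values $0$ and $1$ so that powers do not blow it up, and (b) matching the exponents in H\"older's inequality to the available moment bounds of Assumption~\ref{ass:an}~\ref{it:mom bounds cons2}, both of which are straightforward.
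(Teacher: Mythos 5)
Your proof is correct and follows essentially the same route as the paper's: both start from the bound \eqref{eq:(5.1)}, use that the disjoint-indicator sum is unaffected by raising to the power $q=2+\iota$, apply the law of iterated expectations with \eqref{eq:bound 1}--\eqref{eq:bound 3} to get the factor $2KV_1(\mathcal{F}_{t-1})d$, and finish with H\"older's inequality at exponents $(3+\iota)/(2+\iota)$ and $3+\iota$ against the moment bounds of Assumption~\ref{ass:an}~\ref{it:mom bounds cons2}. The only difference is that you make explicit the step $A_t^q=A_t$, which the paper leaves implicit.
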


\begin{proof}
	As in the proof of Lemma~\ref{lem:2 supp}, we again pick $d_0>0$ sufficiently small, such that $\big\{\vtheta\in\mTheta^v:\ \norm{\vtheta-\vtheta_0^v}<d_0\big\}$ is a subset of the neighborhoods of Assumptions~\ref{ass:cons} \ref{it:bound} and \ref{ass:an} \ref{it:bound2.1}. 
	We also work with $\underline{\vtau}$ and $\overline{\vtau}$ as defined in the proof of Lemma~\ref{lem:2 supp}. For $\iota>0$ from Assumption~\ref{ass:an} \ref{it:mom bounds cons2}, we get from \eqref{eq:(5.1)}, \eqref{eq:bound 1}--\eqref{eq:bound 3} and the LIE that
	\begin{align*}
		\E\big[\mu_t^{2+\iota}(\vtheta, d)\big] &\leq \E\bigg\{\E_{t-1}\Big[\1_{\{v_t(\underline{\vtau})< X_t\leq v_t(\vtheta)\}} + \1_{\{v_t(\vtheta)<X_t\leq v_t(\overline{\vtau})\}}\Big]\norm{\nabla c_t(\vtheta_0^c)}^{2+\iota}\bigg\}\\
		&\leq \E\big[2K V_1(\mathcal{F}_{t-1})C_1^{2+\iota}(\mathcal{F}_{t-1})d\big]\\
		&\leq 2K\Big\{\E\big[V_1^{3+\iota}(\mathcal{F}_{t-1})\big]\Big\}^{1/(3+\iota)} \Big\{\E\big[C_1^{3+\iota}(\mathcal{F}_{t-1})\big]\Big\}^{(2+\iota)/(3+\iota)}d\\
		&\leq cd
	\end{align*}
	for some sufficiently large $c>0$.
\end{proof}

\begin{lem}\label{lem:4 supp}
	Suppose Assumptions~\ref{ass:cons} and \ref{ass:an} hold. Then, condition (N4) of \citet{Wei91} holds, i.e.,
	\[
	\E\norm{\Cc_{t}(\vtheta_{0}^c, \vtheta_0^v)}^{2}\leq C\qquad\text{for all }t\in\mathbb{N}.
	\]
\end{lem}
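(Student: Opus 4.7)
The plan is to exploit the fact that the quantity $\Cc_t(\vtheta_0^c,\vtheta_0^v) = \1_{\{X_t>v_t(\vtheta_0^v)\}}\nabla c_t(\vtheta_0^c)[\1_{\{Y_t\leq c_t(\vtheta_0^c)\}}-\alpha]$ factors into a uniformly bounded scalar part and the gradient $\nabla c_t(\vtheta_0^c)$, so all the work reduces to bounding a second moment of $\|\nabla c_t(\vtheta_0^c)\|$ via the assumptions already in place. This is essentially the analogue of Lemma~C.\ref{lem:6 tilde} at the true parameter $\vtheta_0^v$ rather than $\widehat{\vtheta}_n^v$, only easier because no neighborhood argument or mean value bookkeeping is needed.

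First, observe that the two indicator factors are both in $[0,1]$, so $\bigl|\1_{\{X_t>v_t(\vtheta_0^v)\}}\bigr|\leq 1$ and $\bigl|\1_{\{Y_t\leq c_t(\vtheta_0^c)\}}-\alpha\bigr|\leq \max(\alpha,1-\alpha)\leq 1$. Submultiplicativity of the Frobenius norm then yields the a.s.\ pointwise bound $\|\Cc_t(\vtheta_0^c,\vtheta_0^v)\|^2\leq \|\nabla c_t(\vtheta_0^c)\|^2$. Second, since $\vtheta_0^c$ trivially lies in the neighborhood of itself referenced in Assumption~\ref{ass:an} \ref{it:bound2.2}, we have $\|\nabla c_t(\vtheta_0^c)\|\leq C_1(\mathcal{F}_{t-1})$, hence $\E\|\Cc_t(\vtheta_0^c,\vtheta_0^v)\|^2\leq \E\bigl[C_1^2(\mathcal{F}_{t-1})\bigr]$.

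Third, I invoke the moment condition in Assumption~\ref{ass:an} \ref{it:mom bounds cons2}, namely $\E\bigl[C_1^{3+\iota}(\mathcal{F}_{t-1})\bigr]\leq K$ for some $\iota>0$. Since $2<3+\iota$, Jensen's (or Lyapunov's) inequality gives $\E\bigl[C_1^2(\mathcal{F}_{t-1})\bigr]\leq \bigl\{\E\bigl[C_1^{3+\iota}(\mathcal{F}_{t-1})\bigr]\bigr\}^{2/(3+\iota)}\leq K^{2/(3+\iota)}$, a finite constant independent of $t$. Taking $C=K^{2/(3+\iota)}$ delivers $\E\|\Cc_t(\vtheta_0^c,\vtheta_0^v)\|^2\leq C$, which is precisely the uniform second-moment bound required by condition (N4) of \citet{Wei91}. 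There is no genuine obstacle here; the only thing to watch is that the two indicator factors are multiplied (not summed), so no $c_r$-inequality is needed and the $\max(\alpha,1-\alpha)\leq 1$ argument applies directly.
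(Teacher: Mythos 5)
Your proof is correct and follows essentially the same route as the paper: the paper simply cites its earlier bound $\E\norm{\Cc_{t}(\vtheta_{0}^c, \vtheta_0^v)}^{2+\iota}\leq C$ (equation \eqref{eq:(p.36)}), which is itself obtained exactly as you do it --- bounding both indicator factors by one, reducing to $\norm{\nabla c_t(\vtheta_0^c)}\leq C_1(\mathcal{F}_{t-1})$, and invoking the moment condition in Assumption~\ref{ass:an}~\ref{it:mom bounds cons2}. Your direct application of Lyapunov's inequality to pass from the $(3+\iota)$-th moment to the second moment is a cosmetic variation, not a substantive difference.
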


\begin{proof}
	The bound follows immediately from \eqref{eq:(p.36)}.
\end{proof}

\section{Proof of Theorem~\ref{thm:avar}}
\label{sec:thm3}

\begin{proof}[{\textbf{Proof of Theorem~\ref{thm:avar}:}}]
	The two convergences $\widehat{\mV}_n - \mV_n\overset{\P}{\longrightarrow}\vzero$ and $\widehat{\mLambda}_n - \mLambda_n\overset{\P}{\longrightarrow}\vzero$ are shown similarly as in \citet[Theorem~3]{EM04}. Similarly as the proof of $\widehat{\mV}_n - \mV_n\overset{\P}{\longrightarrow}\vzero$, the proof of $\widehat{\mC}_n - \mC_{n}\overset{\P}{\longrightarrow}\vzero$ is standard and, hence, is also omitted. It remains to show that $\widehat{\mLambda}_{n,(1)} - \mLambda_{n,(1)}\overset{\P}{\longrightarrow}\vzero$ and $\widehat{\mLambda}_{n,(2)} - \mLambda_{n,(2)}\overset{\P}{\longrightarrow}\vzero$. Since the proofs are very similar, we only show the latter convergence, that is,
	\begin{multline*}
		\frac{1}{n}\sum_{t=1}^{n} \nabla c_t(\widehat{\vtheta}_n^c)\nabla^\prime v_t(\widehat{\vtheta}_n^v) (2\widehat{b}_{n,x})^{-1}\Big[ \alpha \1_{\big\{|X_t-v_t(\widehat{\vtheta}_n^v)|<\widehat{b}_{n,x}\big\}} - \1_{\big\{|X_t-v_t(\widehat{\vtheta}_n^v)|<\widehat{b}_{n,x},\ Y_t\leq c_t(\widehat{\vtheta}_n^c)\big\}} \Big]\\
		- \frac{1}{n}\sum_{t=1}^{n}\E\bigg\{\nabla c_t(\vtheta_0^c)\nabla^\prime v_t(\vtheta_0^v)\Big[\alpha f_t^X\big(v_t(\vtheta_0^v)\big) - \partial_1 F_t\big(v_t(\vtheta_0^v), c_t(\vtheta_0^c)\big)\Big]\bigg\} =o_{\P}(1).
	\end{multline*}
	Here, we only prove that
	\begin{multline}\label{eq:(C.1m)}
		\widehat{\mD}_n - \mD_{n,0}:=\frac{1}{n}\sum_{t=1}^{n} \nabla c_t(\widehat{\vtheta}_n^c)\nabla^\prime v_t(\widehat{\vtheta}_n^v) (2\widehat{b}_{n,x})^{-1}\1_{\big\{|X_t-v_t(\widehat{\vtheta}_n^v)|<\widehat{b}_{n,x},\ Y_t\leq c_t(\widehat{\vtheta}_n^c)\big\}} \\
		- \frac{1}{n}\sum_{t=1}^{n}\E\Big[\nabla c_t(\vtheta_0^c)\nabla^\prime v_t(\vtheta_0^v)\partial_1 F_t\big(v_t(\vtheta_0^v), c_t(\vtheta_0^c)\big)\Big] =o_{\P}(1),
	\end{multline}
	because 
	\begin{multline*}
		\frac{1}{n}\sum_{t=1}^{n} \nabla c_t(\widehat{\vtheta}_n^c)\nabla^\prime v_t(\widehat{\vtheta}_n^v) (2\widehat{b}_{n,x})^{-1} \alpha \1_{\big\{|X_t-v_t(\widehat{\vtheta}_n^v)|<\widehat{b}_{n,x}\big\}} \\
		- \frac{1}{n}\sum_{t=1}^{n}\E\Big[\nabla c_t(\vtheta_0^c)\nabla^\prime v_t(\vtheta_0^v)\alpha f_t^X\big(v_t(\vtheta_0^v)\big)\Big] =o_{\P}(1)
	\end{multline*}
	can be shown similarly. Define
	\[
	\widetilde{\mD}_{n} := \frac{1}{n}\sum_{t=1}^{n} \nabla c_t(\vtheta_0^c)\nabla^\prime v_t(\vtheta_0^v) (2b_{n,x})^{-1}\1_{\big\{|X_t-v_t(\vtheta_0^v)|<b_{n,x},\ Y_t\leq c_t(\vtheta_0^c)\big\}}. 
	\]
	Then, to establish \eqref{eq:(C.1m)} it suffices to show that
	\begin{align}
		\widehat{\mD}_{n} - \widetilde{\mD}_{n} & =o_{\P}(1),\label{eq:(1)}\\
		\widetilde{\mD}_{n} - \mD_{n,0} & =o_{\P}(1).\label{eq:(2)}
	\end{align}
	
	We first prove \eqref{eq:(1)}. Observe that
	\begin{align}
		&\big\Vert\widehat{\mD}_{n} - \widetilde{\mD}_{n}\big\Vert \notag\\
		&=\frac{b_{n,x}}{\widehat{b}_{n,x}} \Big\Vert (2b_{n,x} n)^{-1}\sum_{t=1}^{n}\big[\1_{\big\{|X_t-v_t(\widehat{\vtheta}_n^v)|<\widehat{b}_{n,x},\ Y_t\leq c_t(\widehat{\vtheta}_n^c)\big\}} - \1_{\big\{|X_t-v_t(\vtheta_0^v)|<b_{n,x},\ Y_t\leq c_t(\vtheta_0^c)\big\}}\big]\nabla c_t(\widehat{\vtheta}_n^c)\nabla^\prime v_t(\widehat{\vtheta}_n^v) \notag\\
		&\hspace{5cm} + \1_{\big\{|X_t-v_t(\vtheta_0^v)|<b_{n,x},\ Y_t\leq c_t(\vtheta_0^c)\big\}}\big[\nabla c_t(\widehat{\vtheta}_n^c) - \nabla c_t(\vtheta_0^c)\big]\nabla^\prime v_t(\widehat{\vtheta}_n^v)\notag\\
		&\hspace{5cm} + \1_{\big\{|X_t-v_t(\vtheta_0^v)|<b_{n,x},\ Y_t\leq c_t(\vtheta_0^c)\big\}}\nabla c_t(\vtheta_0^c)\big[\nabla^\prime v_t(\widehat{\vtheta}_n^v) - \nabla^\prime v_t(\vtheta_0^v)\big] \notag\\
		&\hspace{5cm} + \frac{b_{n,x}-\widehat{b}_{n,x}}{b_{n,x}}\1_{\big\{|X_t-v_t(\vtheta_0^v)|<b_{n,x},\ Y_t\leq c_t(\vtheta_0^c)\big\}}\nabla c_t(\vtheta_0^c) \nabla^\prime v_t(\vtheta_0^v)\Big\Vert.\label{eq:Dn hat}
	\end{align}
	Write the indicators in the second line of the above display as
	\begin{align}
		&\1_{\big\{|X_t-v_t(\widehat{\vtheta}_n^v)|<\widehat{b}_{n,x},\ Y_t\leq c_t(\widehat{\vtheta}_n^c)\big\}} - \1_{\big\{|X_t-v_t(\vtheta_0^v)|<b_{n,x},\ Y_t\leq c_t(\vtheta_0^c)\big\}}\notag\\
		&\hspace{1cm}= \Big[\1_{\big\{|X_t-v_t(\widehat{\vtheta}_n^v)|<\widehat{b}_{n,x},\ Y_t\leq c_t(\widehat{\vtheta}_n^c)\big\}} - \1_{\big\{|X_t-v_t(\vtheta_0^v)|<b_{n,x},\ Y_t\leq c_t(\widehat{\vtheta}_n^c)\big\}}\Big]\notag\\
		&\hspace{2cm} +\Big[ \1_{\big\{|X_t-v_t(\vtheta_0^v)|<b_{n,x},\ Y_t\leq c_t(\widehat{\vtheta}_n^c)\big\}} - \1_{\big\{|X_t-v_t(\vtheta_0^v)|<b_{n,x},\ Y_t\leq c_t(\vtheta_0^c)\big\}}\Big]\notag\\
		&\hspace{1cm}=:A_{1t} + B_{1t}.\label{eq:A2 B2}
	\end{align}
	For $A_{1t}$, note that for any set $\mathsf{A}\in\mathcal{F}$,
	\begin{multline}
		\big|\1_{\big\{|X_t-v_t(\widehat{\vtheta}_n^v)|<\widehat{b}_{n,x},\ \mathsf{A}\big\}} - \1_{\big\{|X_t-v_t(\vtheta_0^v)|<b_{n,x},\ \mathsf{A}\big\}}\big| \\
		\leq \1_{\big\{|X_t-v_t(\vtheta_0^v) - b_{n,x}|<|v_t(\widehat{\vtheta}_n^v)-v_t(\vtheta_0^v)| + |\widehat{b}_{n,x}-b_{n,x}|,\ \mathsf{A}\big\}} + \1_{\big\{|X_t-v_t(\vtheta_0^v) + b_{n,x}|<|v_t(\widehat{\vtheta}_n^v)-v_t(\vtheta_0^v)| + |\widehat{b}_{n,x}-b_{n,x}|,\ \mathsf{A}\big\}}.\label{eq:(C.2.1)}
	\end{multline}
	To see this, note that the difference of the indicators in the first line of \eqref{eq:(C.2.1)} can only be non-zero if \textit{either} $|X_t-v_t(\widehat{\vtheta}_n^v)|<\widehat{b}_{n,x}$ and $|X_t-v_t(\vtheta_0^v)|\geq b_{n,x}$ \textit{or} $|X_t-v_t(\widehat{\vtheta}_n^v)|\geq \widehat{b}_{n,x}$ and $|X_t-v_t(\vtheta_0^v)|<b_{n,x}$. Consider the former case; the latter case can be dealt with similarly. It holds that
	\begin{align}
		&\big|X_t-v_t(\widehat{\vtheta}_n^v)\big|<\widehat{b}_{n,x}\notag\\
		& \Longleftrightarrow\ -\widehat{b}_{n,x}<X_t-v_t(\widehat{\vtheta}_n^v)<\widehat{b}_{n,x}\notag\\
		& \Longleftrightarrow\ v_t(\widehat{\vtheta}_n^v) - v_t(\vtheta_0^v)-\widehat{b}_{n,x}-b_{n,x}<X_t-v_t(\vtheta_0^v) - b_{n,x}<v_t(\widehat{\vtheta}_n^v) - v_t(\vtheta_0^v) + \widehat{b}_{n,x} - b_{n,x}\label{eq:(1)impl}\\
		& \Longleftrightarrow\ v_t(\widehat{\vtheta}_n^v) - v_t(\vtheta_0^v) - \widehat{b}_{n,x} + b_{n,x} <X_t-v_t(\vtheta_0^v) + b_{n,x}<v_t(\widehat{\vtheta}_n^v) - v_t(\vtheta_0^v) + \widehat{b}_{n,x} + b_{n,x}\label{eq:(2)impl}
	\end{align}
	and
	\begin{equation}\label{eq:(C.8m)}
		\big|X_t-v_t(\vtheta_0^v)\big|\geq b_{n,x}\qquad \Longleftrightarrow\qquad X_t-v_t(\vtheta_0^v)\geq b_{n,x}\quad \text{or}\quad X_t-v_t(\vtheta_0^v)\leq-b_{n,x}.
	\end{equation}
	The inequality $X_t-v_t(\vtheta_0^v)\geq b_{n,x}$ from \eqref{eq:(C.8m)} together with the second inequality of \eqref{eq:(1)impl} yields
	\[
	0\leq X_t-v_t(\vtheta_0^v) - b_{n,x}<v_t(\widehat{\vtheta}_n^v) - v_t(\vtheta_0^v) + \widehat{b}_{n,x} - b_{n,x},
	\]
	which in turn implies that
	\[
	\big|X_t-v_t(\vtheta_0^v) - b_{n,x}\big| < \big|v_t(\widehat{\vtheta}_n^v) - v_t(\vtheta_0^v)\big| + \big| \widehat{b}_{n,x} - b_{n,x}\big|.
	\]
	Therefore, when $X_t-v_t(\vtheta_0^v)\geq b_{n,x}$,
	\begin{equation}\label{eq:C51}
		\big|\1_{\big\{|X_t-v_t(\widehat{\vtheta}_n^v)|<\widehat{b}_{n,x},\ \mathsf{A}\big\}} - \1_{\big\{|X_t-v_t(\vtheta_0^v)|<b_{n,x},\ \mathsf{A}\big\}}\big| \\
		\leq \1_{\big\{|X_t-v_t(\vtheta_0^v) - b_{n,x}|<|v_t(\widehat{\vtheta}_n^v)-v_t(\vtheta_0^v)| + | \widehat{b}_{n,x} - b_{n,x}|,\ \mathsf{A}\big\}}.
	\end{equation}
	But also the inequality $X_t-v_t(\vtheta_0^v)\leq-b_{n,x}$ from \eqref{eq:(C.8m)} together with the first inequality of \eqref{eq:(2)impl} yields
	\[
	v_t(\widehat{\vtheta}_n^v) - v_t(\vtheta_0^v) + b_{n,x} - \widehat{b}_{n,x}<X_t-v_t(\vtheta_0^v) + b_{n,x}\leq 0,
	\]
	which in turn implies that
	\[
	\big|X_t-v_t(\vtheta_0^v) + b_{n,x}\big| < \big|v_t(\vtheta_n^v) - v_t(\vtheta_0^v)\big| + \big| \widehat{b}_{n,x} - b_{n,x}\big|.
	\]
	Therefore, when $X_t-v_t(\vtheta_0^v)\leq-b_{n,x}$,
	\begin{equation}\label{eq:C52}
		\big|\1_{\big\{|X_t-v_t(\widehat{\vtheta}_n^v)|<\widehat{b}_{n,x},\ \mathsf{A}\big\}} - \1_{\big\{|X_t-v_t(\vtheta_0^v)|<b_{n,x},\ \mathsf{A}\big\}}\big| \\
		\leq \1_{\big\{|X_t-v_t(\vtheta_0^v) + b_{n,x}|<|v_t(\widehat{\vtheta}_n^v)-v_t(\vtheta_0^v)| + |\widehat{b}_{n,x} - b_{n,x}|,\ \mathsf{A}\big\}}.
	\end{equation}
	Combining \eqref{eq:C51} and \eqref{eq:C52} gives \eqref{eq:(C.2.1)}. 
	
	For $B_{1t}$, note that for any set $\mathsf{A}\in\mathcal{F}$,
	\begin{equation}
		\big|\1_{\big\{\mathsf{A},\ Y_t\leq c_t(\widehat{\vtheta}_n^c)\big\}} - \1_{\big\{\mathsf{A},\ Y_t\leq c_t(\vtheta_0^c)\big\}}\big|\leq \1_{\big\{\mathsf{A},\ |Y_t-c_t(\vtheta_0^c)|\leq |c_t(\widehat{\vtheta}_n^c) - c_t(\vtheta_0^c)|\big\}}.\label{eq:(C.2.2)}
	\end{equation}
	To see this, note that the difference of the indicators on the left-hand side can only be non-zero when \textit{either} $Y_t\leq c_t(\widehat{\vtheta}_n^c)$ and $Y_t> c_t(\vtheta_0^c)$ \textit{or} $Y_t> c_t(\widehat{\vtheta}_n^c)$ and $Y_t\leq c_t(\vtheta_0^c)$. Again, we only consider the former case, where
	\begin{align*}
		Y_t\leq c_t(\widehat{\vtheta}_n^c)\quad \text{and}\quad Y_t> c_t(\vtheta_0^c)	& \hspace{0.5cm}\Longleftrightarrow\quad c_t(\vtheta_0^c) < Y_t \leq c_t(\widehat{\vtheta}_n^c)\\
		& \hspace{0.5cm}\Longleftrightarrow\quad 0 < Y_t - c_t(\vtheta_0^c) \leq c_t(\widehat{\vtheta}_n^c) - c_t(\vtheta_0^c)\\
		& \hspace{0.5cm}\Longrightarrow\quad |Y_t - c_t(\vtheta_0^c)| \leq |c_t(\widehat{\vtheta}_n^c) - c_t(\vtheta_0^c)|.
	\end{align*}
	This implies \eqref{eq:(C.2.2)}.
	
	Exploiting \eqref{eq:(C.2.1)} and \eqref{eq:(C.2.2)} for $A_{1t}$ and $B_{1t}$, we get from \eqref{eq:A2 B2} that
	\begin{align*}
		&\big|\1_{\big\{|X_t-v_t(\widehat{\vtheta}_n^v)|<\widehat{b}_{n,x},\ Y_t\leq c_t(\widehat{\vtheta}_n^c)\big\}} - \1_{\big\{|X_t-v_t(\vtheta_0^v)|<b_{n,x},\ Y_t\leq c_t(\vtheta_0^c)\big\}}\big|\\
		& \hspace{1cm}\leq \1_{\big\{|X_t-v_t(\vtheta_0^v) - b_{n,x}|<|v_t(\widehat{\vtheta}_n^v)-v_t(\vtheta_0^v)| +| \widehat{b}_{n,x} - b_{n,x}|,\ Y_t\leq c_t(\widehat{\vtheta}_n^c)\big\}} \\
		& \hspace{2cm}+ \1_{\big\{|X_t-v_t(\vtheta_0^v) + b_{n,x}|<|v_t(\widehat{\vtheta}_n^v)-v_t(\vtheta_0^v)| +| \widehat{b}_{n,x} - b_{n,x}|,\ Y_t\leq c_t(\widehat{\vtheta}_n^c)\big\}}\\
		&\hspace{2cm} + \1_{\big\{|X_t-v_t(\vtheta_0^v)|<b_{n,x},\ |Y_t-c_t(\vtheta_0^c)|\leq |c_t(\widehat{\vtheta}_n^c) - c_t(\vtheta_0^c)|\big\}}.
	\end{align*}
	Using this, the MVT and Assumptions~\ref{ass:cons}--\ref{ass:an}, we obtain from \eqref{eq:Dn hat} that
	\begin{align}
		&\big\Vert\widehat{\mD}_{n} - \widetilde{\mD}_{n}\big\Vert \notag\\
		& \hspace{0.5cm}\leq\frac{b_{n,x}}{\widehat{b}_{n,x}}\Big\Vert (2b_{n,x} n)^{-1}\sum_{t=1}^{n}\Big[\1_{\big\{|X_t-v_t(\vtheta_0^v) - b_{n,x}|<|v_t(\widehat{\vtheta}_n^v)-v_t(\vtheta_0^v)|+|\widehat{b}_{n,x} - b_{n,x}|,\ Y_t\leq c_t(\widehat{\vtheta}_n^c)\big\}}\notag\\
		&\hspace{4cm} + \1_{\big\{|X_t-v_t(\vtheta_0^v) + b_{n,x}|<|v_t(\widehat{\vtheta}_n^v)-v_t(\vtheta_0^v)|+|\widehat{b}_{n,x} - b_{n,x}|,\ Y_t\leq c_t(\widehat{\vtheta}_n^c)\big\}} \notag\\
		&\hspace{4cm} + \1_{\big\{|X_t-v_t(\vtheta_0^v)|<b_{n,x},\ |Y_t-c_t(\vtheta_0^c)|\leq |c_t(\widehat{\vtheta}_n^c) - c_t(\vtheta_0^c)|\big\}}\Big]\nabla c_t(\widehat{\vtheta}_n^c)\nabla^\prime v_t(\widehat{\vtheta}_n^v) \notag\\
		&\hspace{4cm} + \1_{\big\{|X_t-v_t(\vtheta_0^v)|<b_{n,x},\ Y_t\leq c_t(\vtheta_0^c)\big\}}\nabla^2 c_t(\vtheta^\ast)\big(\widehat{\vtheta}_n^c - \vtheta_0^c\big)\nabla^\prime v_t(\widehat{\vtheta}_n^v)\notag\\
		&\hspace{4cm} + \1_{\big\{|X_t-v_t(\vtheta_0^v)|<b_{n,x},\ Y_t\leq c_t(\vtheta_0^c)\big\}}\nabla c_t(\vtheta_0^c)\big(\widehat{\vtheta}_n^v - \vtheta_0^v\big)^\prime\big[\nabla^2 v_t(\vtheta^\ast)\big]^\prime \notag\\
		&\hspace{4cm} + \frac{b_{n,x}-\widehat{b}_{n,x}}{b_{n,x}}\1_{\big\{|X_t-v_t(\vtheta_0^v)|<b_{n,x},\ Y_t\leq c_t(\vtheta_0^c)\big\}}\nabla c_t(\vtheta_0^c)\nabla^\prime v_t(\vtheta_0^v)\Big\Vert\notag\\
		& \hspace{0.5cm}\leq \frac{b_{n,x}}{\widehat{b}_{n,x}}(2b_{n,x} n)^{-1}\sum_{t=1}^{n}\Big[\1_{\big\{|X_t-v_t(\vtheta_0^v) - b_{n,x}|<|v_t(\widehat{\vtheta}_n^v)-v_t(\vtheta_0^v)|+|\widehat{b}_{n,x} - b_{n,x}|,\ Y_t\leq c_t(\widehat{\vtheta}_n^c)\big\}}\notag\\
		&\hspace{4cm} + \1_{\big\{|X_t-v_t(\vtheta_0^v) + b_{n,x}|<|v_t(\widehat{\vtheta}_n^v)-v_t(\vtheta_0^v)|+|\widehat{b}_{n,x} - b_{n,x}|,\ Y_t\leq c_t(\widehat{\vtheta}_n^c)\big\}} \notag\\
		&\hspace{4cm} + \1_{\big\{|X_t-v_t(\vtheta_0^v)|<b_{n,x},\ |Y_t-c_t(\vtheta_0^c)|\leq |c_t(\widehat{\vtheta}_n^c) - c_t(\vtheta_0^c)|\big\}}\Big]C_1(\mathcal{F}_{t-1})V_1(\mathcal{F}_{t-1})  \notag\\
		&\hspace{4cm} + \1_{\big\{|X_t-v_t(\vtheta_0^v)|<b_{n,x},\ Y_t\leq c_t(\vtheta_0^c)\big\}}C_2(\mathcal{F}_{t-1})V_1(\mathcal{F}_{t-1})\big\Vert\widehat{\vtheta}_n^c - \vtheta_0^c\big\Vert \notag\\
		&\hspace{4cm} + \1_{\big\{|X_t-v_t(\vtheta_0^v)|<b_{n,x},\ Y_t\leq c_t(\vtheta_0^c)\big\}}C_1(\mathcal{F}_{t-1}) V_2(\mathcal{F}_{t-1})\big\Vert\widehat{\vtheta}_n^v - \vtheta_0^v\big\Vert\notag\\
		&\hspace{4cm} + \frac{|\widehat{b}_{n,x}-b_{n,x}|}{b_{n,x}}\1_{\big\{|X_t-v_t(\vtheta_0^v)|<b_{n,x},\ Y_t\leq c_t(\vtheta_0^c)\big\}}C_1(\mathcal{F}_{t-1})V_1(\mathcal{F}_{t-1})\notag\\
		&=:\frac{b_{n,x}}{\widehat{b}_{n,x}}\big[A_{6n} + B_{6n} + C_{6n} + D_{6n}\big],\label{eq:(C.7m)}
	\end{align}
	where $\vtheta^\ast$ is some mean value that may differ from line to line. We show in turn that $A_{6n}=o_{\P}(1)$, $B_{6n}=o_{\P}(1)$, $C_{6n}=o_{\P}(1)$ and $D_{6n}=o_{\P}(1)$. 
	
	Theorem~\ref{thm:an} and Assumption~\ref{ass:avar} \ref{it:bw} imply that for any $d>0$
	\[
	b_{n,x}^{-1}\big\Vert\widehat{\vtheta}_n^v - \vtheta_0^v\big\Vert \leq d \qquad\text{and}\qquad b_{n,x}^{-1}\big\Vert\widehat{\vtheta}_n^c - \vtheta_0^c\big\Vert \leq d\qquad\text{and}\qquad \frac{\big|\widehat{b}_{n,x} - b_{n,x}\big|}{b_{n,x}}\leq d
	\]
	occur with probability approaching 1, as $n\to\infty$.
	This means that $\P\{\mathsf{E}_n^{C}\}\longrightarrow0$, as $n\to\infty$, where
	\[
	\mathsf{E}_n := \Big\{b_{n,x}^{-1}\big\Vert\widehat{\vtheta}_n^v - \vtheta_0^v\big\Vert<d,\ b_{n,x}^{-1}\big\Vert\widehat{\vtheta}_n^c - \vtheta_0^c\big\Vert<d,\ b_{n,x}^{-1}\big|\widehat{b}_{n,x} - b_{n,x}\big|\leq d\Big\}
	\]
	Therefore, it suffices to consider $A_{6n}$, $B_{6n}$, $C_{6n}$ and $D_{6n}$ on this set, because (e.g.) $\P\{A_{6n}>\varepsilon\}=\P\{A_{6n}>\varepsilon,\ \mathsf{E}_n\} + \P\{A_{6n}>\varepsilon,\ \mathsf{E}_n^{C}\}=\P\{A_{6n}>\varepsilon,\ \mathsf{E}_n\} + o(1)$.
	
	We first consider $A_{6n}$. On the set $\mathsf{E}_n$, we may bound $A_{6n}$ as follows:
	\begin{align*}
		A_{6n} &\leq  (2b_{n,x} n)^{-1}\sum_{t=1}^{n}\Big[\1_{\big\{|X_t-v_t(\vtheta_0^v) - b_{n,x}|<\norm{\nabla v_t(\vtheta^\ast)} \cdot\, \big\Vert\widehat{\vtheta}_n^v - \vtheta_0^v\big\Vert + |\widehat{b}_{n,x}-b_{n,x}|,\ Y_t\leq c_t(\widehat{\vtheta}_n^c)\big\}}\\
		&\hspace{3cm}+ \1_{\big\{|X_t-v_t(\vtheta_0^v) + b_{n,x}|< \norm{\nabla v_t(\vtheta^\ast)}\cdot\,\big\Vert\widehat{\vtheta}_n^v - \vtheta_0^v\big\Vert + |\widehat{b}_{n,x}-b_{n,x}|,\ Y_t\leq c_t(\widehat{\vtheta}_n^c)\big\}} \\
		&\hspace{3cm} + \1_{\big\{|X_t-v_t(\vtheta_0^v)|<b_{n,x},\ |Y_t-c_t(\vtheta_0^c)|\leq \norm{\nabla c_t(\vtheta^\ast)}\cdot\,\big\Vert\widehat{\vtheta}_n^c - \vtheta_0^c\big\Vert\big\}}\Big]C_1(\mathcal{F}_{t-1})V_1(\mathcal{F}_{t-1}) \\
		&\leq (2b_{n,x} n)^{-1}\sum_{t=1}^{n}\Big[\1_{\big\{|X_t-v_t(\vtheta_0^v) - b_{n,x}|<[V_1(\mathcal{F}_{t-1})+1]  d b_{n,x},\ Y_t\leq C(\mathcal{F}_{t-1})\big\}}\\
		&\hspace{3cm}+ \1_{\big\{|X_t-v_t(\vtheta_0^v) + b_{n,x}|< [V_1(\mathcal{F}_{t-1})+1]  d b_{n,x},\ Y_t\leq C(\mathcal{F}_{t-1})\big\}} \\
		&\hspace{3cm} + \1_{\big\{|X_t-v_t(\vtheta_0^v)|<b_{n,x},\ |Y_t-c_t(\vtheta_0^c)|\leq C_1(\mathcal{F}_{t-1})  d b_{n,x}\big\}}\Big]C_1(\mathcal{F}_{t-1})V_1(\mathcal{F}_{t-1})\\
		&=:A_{61n} + A_{62n} + A_{63n}.
	\end{align*}
	We show that each of these terms is $o_{\P}(1)$. Use the LIE to obtain that
	\begin{equation*}
		\E[A_{61n}] \leq (2b_{n,x} n)^{-1}\sum_{t=1}^{n}\E\bigg\{C_1(\mathcal{F}_{t-1})V_1(\mathcal{F}_{t-1})\E_{t-1}\Big[\1_{\big\{|X_t-v_t(\vtheta_0^v) - b_{n,x}|<[V_1(\mathcal{F}_{t-1})+1]  d b_{n,x},\ Y_t\leq C(\mathcal{F}_{t-1})\big\}}\Big]\bigg\}.
	\end{equation*}
	Write
	\begin{align*}
		\E_{t-1}&\Big[\1_{\big\{|X_t-v_t(\vtheta_0^v) - b_{n,x}|<[V_1(\mathcal{F}_{t-1})+1]  d b_{n,x},\ Y_t\leq C(\mathcal{F}_{t-1})\big\}}\Big]\\
		&= \P_{t-1}\Big\{-[V_1(\mathcal{F}_{t-1})+1]  d b_{n,x}<X_t-v_t(\vtheta_0^v) - b_{n,x}<[V_1(\mathcal{F}_{t-1})+1]  d b_{n,x},\ Y_t\leq C(\mathcal{F}_{t-1})\Big\}\\
		&=F_t\big(b_{n,x}+v_t(\vtheta_0^v) + [V_1(\mathcal{F}_{t-1}) +1] d b_{n,x}, C(\mathcal{F}_{t-1})\big) \\
		&\hspace{7cm}- F_t\big(b_{n,x}+v_t(\vtheta_0^v) - [V_1(\mathcal{F}_{t-1})+1]  d b_{n,x}, C(\mathcal{F}_{t-1})\big)\\
		& \leq \sup_{x\in\mathbb{R}} \Big|\partial_1 F_t\big(x, C(\mathcal{F}_{t-1})\big)\Big|2\big[V_1(\mathcal{F}_{t-1})+1\big]  d b_{n,x}\\
		&\leq 2K \big[V_1(\mathcal{F}_{t-1})+1\big]  d b_{n,x}.
	\end{align*}
	Hence, also using the Cauchy--Schwarz inequality and Assumption~\ref{ass:avar} \ref{it:mom bounds cons3},
	\begin{align*}
		0\leq \E[A_{61n}] &\leq (2b_{n,x} n)^{-1}\sum_{t=1}^{n}\E\Big[C_1(\mathcal{F}_{t-1})V_1(\mathcal{F}_{t-1})2K \big\{V_1(\mathcal{F}_{t-1})+1\big\}  d b_{n,x}\Big]\\
		&=Kd \frac{1}{n}\sum_{t=1}^{n}\E\big[C_1(\mathcal{F}_{t-1})V_1^2(\mathcal{F}_{t-1}) \big]\\
		&\leq Kd \frac{1}{n}\sum_{t=1}^{n}\sqrt{\E\big[C_1^2(\mathcal{F}_{t-1})\big]\E\big[V_1^4(\mathcal{F}_{t-1}) \big]}\\
		&\leq C d.
	\end{align*}
	Since $d>0$ can be chosen arbitrarily small, Markov's inequality implies that $A_{61n}=o_{\P}(1)$. 
	
	The proof that $A_{62n}=o_{\P}(1)$ is almost identical and, hence, is omitted. 
	
	It remains to show $A_{63n}=o_{\P}(1)$. Again, we use the LIE to get that
	\begin{equation*}
		\E[A_{63n}] \leq (2b_{n,x} n)^{-1}\sum_{t=1}^{n}\E\bigg\{C_1(\mathcal{F}_{t-1})V_1(\mathcal{F}_{t-1})\E_{t-1}\Big[\1_{\big\{|X_t-v_t(\vtheta_0^v)|<b_{n,x},\ |Y_t-c_t(\vtheta_0^c)|\leq C_1(\mathcal{F}_{t-1})  d b_{n,x}\big\}}\Big]\bigg\}.
	\end{equation*}
	Use Assumption~\ref{ass:an} \ref{it:bound cdf} to write
	\begin{align*}
		\E_{t-1}&\Big[\1_{\big\{|X_t-v_t(\vtheta_0^v)|<b_{n,x},\ |Y_t-c_t(\vtheta_0^c)|\leq C_1(\mathcal{F}_{t-1})  d b_{n,x}\big\}}\Big]\\
		&= \int_{c_t(\vtheta_0^c) - C_1(\mathcal{F}_{t-1})  d b_{n,x}}^{c_t(\vtheta_0^c) + C_1(\mathcal{F}_{t-1})  d b_{n,x}}\int_{v_t(\vtheta_0^v) - b_{n,x}}^{v_t(\vtheta_0^v) + b_{n,x}} f_t(x,y)\D x\D y\\
		&\leq K \big[2b_{n,x}\big] \big[2 C_1(\mathcal{F}_{t-1}) d b_{n,x}\big]\\
		& \leq C b_{n,x}^2 C_1(\mathcal{F}_{t-1})d.
	\end{align*}
	Thus, 
	\begin{align*}
		0\leq \E[A_{63n}] &\leq (2b_{n,x} n)^{-1}\sum_{t=1}^{n}\E\big[C_1(\mathcal{F}_{t-1})V_1(\mathcal{F}_{t-1})C b_{n,x}^2 C_1(\mathcal{F}_{t-1})d\big]\\
		&= \frac{C b_{n,x} d}{n}\sum_{t=1}^{n}\E\big[C_1^2(\mathcal{F}_{t-1})V_1(\mathcal{F}_{t-1}) \big]\\
		&\leq C b_{n,x} d=o(1)d=o(1).
	\end{align*}
	Markov's inequality then implies that $A_{63n}=o_{\P}(1)$. Overall, we obtain that $A_{6n}=o_{\P}(1)$.
	
	On the set $\mathsf{E}_n$, we may bound $B_{6n}$ and $C_{6n}$ as follows:
	\begin{align*}
		0\leq B_{6n}&\leq (2b_{n,x} n)^{-1}\sum_{t=1}^{n}\1_{\big\{|X_t-v_t(\vtheta_0^v)|<b_{n,x},\ Y_t\leq c_t(\vtheta_0^c)\big\}}C_2(\mathcal{F}_{t-1})V_1(\mathcal{F}_{t-1})d b_{n,x},\\
		0\leq C_{6n}&\leq (2b_{n,x} n)^{-1}\sum_{t=1}^{n} \1_{\big\{|X_t-v_t(\vtheta_0^v)|<b_{n,x},\ Y_t\leq c_t(\vtheta_0^c)\big\}}C_1(\mathcal{F}_{t-1}) V_2(\mathcal{F}_{t-1})d b_{n,x},
	\end{align*}
	such that
	\begin{align*}
		0\leq\E\big[B_{6n}\big]&\leq (2b_{n,x} n)^{-1}\sum_{t=1}^{n}\E\big[C_2(\mathcal{F}_{t-1})V_1(\mathcal{F}_{t-1})d b_{n,x}\big]\leq C d,\\
		0\leq\E\big[C_{6n}\big]&\leq (2b_{n,x} n)^{-1}\sum_{t=1}^{n}\E\big[C_1(\mathcal{F}_{t-1}) V_2(\mathcal{F}_{t-1})d b_{n,x}\big]\leq C d.
	\end{align*}
	Again, since $d>0$ can be chosen arbitrarily small, Markov's inequality gives that $B_{6n}=o_{\P}(1)$ and $C_{6n}=o_{\P}(1)$. 
	
	It remains to show that $D_{6n}=o_{\P}(1)$. 
	Use Assumption~\ref{ass:avar}~\ref{it:bw} to bound
	\begin{align*}
		D_{6n}&=(2b_{n,x} n)^{-1}\sum_{t=1}^{n}\frac{|\widehat{b}_{n,x} - b_{n,x}|}{b_{n,x}}\1_{\big\{|X_t-v_t(\vtheta_0^v)|<b_{n,x},\ Y_t\leq c_t(\vtheta_0^c)\big\}}C_1(\mathcal{F}_{t-1})V_1(\mathcal{F}_{t-1}) \\
		&\leq o_{\P}(1)(2b_{n,x} n)^{-1}\sum_{t=1}^{n}\1_{\big\{|X_t-v_t(\vtheta_0^v)|<b_{n,x},\ Y_t\leq C(\mathcal{F}_{t-1})\big\}}C_1(\mathcal{F}_{t-1})V_1(\mathcal{F}_{t-1}).
	\end{align*}
	Since
	\begin{align*}
		\E_{t-1}&\Big[\1_{\big\{|X_t-v_t(\vtheta_0^v)|<b_{n,x},\ Y_t\leq C(\mathcal{F}_{t-1})\big\}}\Big]\\
		&=\P_{t-1}\big\{v_t(\vtheta_0^v)-b_{n,x}<X_t<v_t(\vtheta_0^v)+b_{n,x},\ Y_t\leq C(\mathcal{F}_{t-1})\big\}\\
		&=F_t\big(v_t(\vtheta_0^v)+b_{n,x}, C(\mathcal{F}_{t-1})\big) - F_t\big(v_t(\vtheta_0^v)-b_{n,x}, C(\mathcal{F}_{t-1})\big)\\
		&\leq\sup_{x\in\mathbb{R}}\big|\partial_1 F_t\big(x, C(\mathcal{F}_{t-1})\big)\big|2b_{n,x}\\
		&\leq C b_{n,x},
	\end{align*}
	we obtain from the LIE that
	\begin{align*}
		0&\leq \E\Big[(2b_{n,x} n)^{-1}\sum_{t=1}^{n}\1_{\big\{|X_t-v_t(\vtheta_0^v)|<b_{n,x},\ Y_t\leq C(\mathcal{F}_{t-1})\big\}}C_1(\mathcal{F}_{t-1})V_1(\mathcal{F}_{t-1})\Big]\\
		&= (2b_{n,x} n)^{-1}\sum_{t=1}^{n}\E\bigg\{C_1(\mathcal{F}_{t-1})V_1(\mathcal{F}_{t-1})\E_{t-1}\Big[\1_{\big\{|X_t-v_t(\vtheta_0^v)|<b_{n,x},\ Y_t\leq C(\mathcal{F}_{t-1})\big\}}\Big]\bigg\}\\
		&\leq (2b_{n,x} n)^{-1}\sum_{t=1}^{n}C b_{n,x}\sqrt{\E\big[C_1^2(\mathcal{F}_{t-1})\big]\E\big[V_1^2(\mathcal{F}_{t-1})\big]}\leq C.
	\end{align*}
	Thus, $D_{6n}=o_{\P}(1)O_{\P}(1)=o_{\P}(1)$. We therefore obtain \eqref{eq:(1)} from \eqref{eq:(C.7m)}. 
	
	It remains to show \eqref{eq:(2)}, i.e., $\widetilde{\mD}_n-\mD_{n,0}=o_{\P}(1)$. Write this as
	\begin{align*}
		\widetilde{\mD}_n-\mD_{n,0} &=(2b_{n,x} n)^{-1}\sum_{t=1}^{n}\bigg\{ \nabla c_t(\vtheta_0^c)\nabla^\prime v_t(\vtheta_0^v) \Big[\1_{\big\{|X_t-v_t(\vtheta_0^v)|<b_{n,x},\ Y_t\leq c_t(\vtheta_0^c)\big\}} \\
		&\hspace{7cm} - \E_{t-1}\big[\1_{\big\{|X_t-v_t(\vtheta_0^v)|<b_{n,x},\ Y_t\leq c_t(\vtheta_0^c)\big\}}\big] \Big]\bigg\}\\
		&\hspace{1.3cm} + \frac{1}{n}\sum_{t=1}^{n}\bigg\{\nabla c_t(\vtheta_0^c)\nabla^\prime v_t(\vtheta_0^v) \Big[(2b_{n,x})^{-1}\E_{t-1}\big[\1_{\big\{|X_t-v_t(\vtheta_0^v)|<b_{n,x},\ Y_t\leq c_t(\vtheta_0^c)\big\}}\big]\\
		&\hspace{7cm} - \partial_1 F_t\big(v_t(\vtheta_0^v), c_t(\vtheta_0^c)\big)\Big]\bigg\}\\
		&\hspace{1.3cm} + \frac{1}{n}\sum_{t=1}^{n}\bigg\{\nabla c_t(\vtheta_0^c)\nabla^\prime v_t(\vtheta_0^v)\partial_1 F_t\big(v_t(\vtheta_0^v), c_t(\vtheta_0^c)\big)\\
		&\hspace{7cm} - \E\Big[\nabla c_t(\vtheta_0^c)\nabla^\prime v_t(\vtheta_0^v)\partial_1 F_t\big(v_t(\vtheta_0^v), c_t(\vtheta_0^c)\big)\Big]\bigg\}\\
		&=:A_{7n} + B_{7n} + C_{7n}.
	\end{align*}
	
	Next, we show that each of these terms vanishes in probability. Observe that $\E[A_{7n}]=\vzeros$ and for any $(i,j)$-th element of $A_{7n}$, denoted by $A_{7n,ij}$, we obtain
	\begin{align*}
		\Var(A_{7n,ij}) &=\E\big[A_{7n,ij}^2\big]\\
		&=(2b_{n,x} n)^{-2}\E\bigg[\sum_{t=1}^{n} \nabla_i c_t(\vtheta_0^c)\nabla_j v_t(\vtheta_0^v) \Big\{\1_{\big\{|X_t-v_t(\vtheta_0^v)|<b_{n,x},\ Y_t\leq c_t(\vtheta_0^c)\big\}} \\
		&\hspace{7cm} - \E_{t-1}\big[\1_{\big\{|X_t-v_t(\vtheta_0^v)|<b_{n,x},\ Y_t\leq c_t(\vtheta_0^c)\big\}}\big] \Big\}\bigg]^2\\
		&=(2b_{n,x} n)^{-2}\sum_{t=1}^{n}\E\bigg[ \Big\{\nabla_i c_t(\vtheta_0^c)\nabla_j v_t(\vtheta_0^v) \Big\}^2\Big\{\1_{\big\{|X_t-v_t(\vtheta_0^v)|<b_{n,x},\ Y_t\leq c_t(\vtheta_0^c)\big\}} \\
		&\hspace{7cm} - \E_{t-1}\big[\1_{\big\{|X_t-v_t(\vtheta_0^v)|<b_{n,x},\ Y_t\leq c_t(\vtheta_0^c)\big\}}\big] \Big\}^2\bigg]\\
		&\leq (2b_{n,x} n)^{-2}\sum_{t=1}^{n}\E\big[C_1^2(\mathcal{F}_{t-1})V_1^2(\mathcal{F}_{t-1})\big]\\
		&\leq (2b_{n,x} n)^{-2}\sum_{t=1}^{n}\Big\{\E\big[C_1^4(\mathcal{F}_{t-1})\big]\Big\}^{1/2}\Big\{\E\big[V_1^4(\mathcal{F}_{t-1})\big]\Big\}^{1/2}\\
		&\leq C b_{n,x}^{-2}n^{-1}=o(1),
	\end{align*}
	where we used for the third equality that the expectations of all cross-products are zero by the LIE, and the final inequality exploits Assumption~\ref{ass:avar}~\ref{it:mom bounds cons3}. That $A_{7n}=o_{\P}(1)$ now follows from Chebyshev's inequality. 
	
	For $B_{7n}$, we obtain that
	\begin{align*}
		\E\norm{B_{7n}} &\leq \frac{1}{n}\sum_{t=1}^{n}\E\bigg\Vert\nabla c_t(\vtheta_0^c)\nabla^\prime v_t(\vtheta_0^v) \Big\{(2b_{n,x})^{-1}\E_{t-1}\big[\1_{\big\{|X_t-v_t(\vtheta_0^v)|<b_{n,x},\ Y_t\leq c_t(\vtheta_0^c)\big\}}\big]\\
		&\hspace{10.5cm} - \partial_1 F_t\big(v_t(\vtheta_0^v), c_t(\vtheta_0^c)\big)\Big\}\bigg\Vert\\
		&\leq \frac{1}{n}\sum_{t=1}^{n}\E\bigg[C_1(\mathcal{F}_{t-1})V_1(\mathcal{F}_{t-1}) \Big|(2b_{n,x})^{-1}\E_{t-1}\big[\1_{\big\{|X_t-v_t(\vtheta_0^v)|<b_{n,x},\ Y_t\leq c_t(\vtheta_0^c)\big\}}\big]\\
		&\hspace{10.5cm} - \partial_1 F_t\big(v_t(\vtheta_0^v), c_t(\vtheta_0^c)\big)\Big|\bigg]\\
		&\leq \frac{1}{n}\sum_{t=1}^{n}\E\bigg[C_1(\mathcal{F}_{t-1})V_1(\mathcal{F}_{t-1}) \Big|(2b_{n,x})^{-1}\Big\{F_t\big(v_t(\vtheta_0^v) + b_{n,x}, c_t(\vtheta_0^c)\big) - F_t\big(v_t(\vtheta_0^v) - b_{n,x}, c_t(\vtheta_0^c)\big)\Big\}\\
		&\hspace{10.5cm} - \partial_1 F_t\big(v_t(\vtheta_0^v), c_t(\vtheta_0^c)\big)\Big|\bigg]\\
		&\leq \frac{1}{n}\sum_{t=1}^{n}\E\bigg[C_1(\mathcal{F}_{t-1})V_1(\mathcal{F}_{t-1}) \sup_{x\in[v_t(\vtheta_0^v)-b_{n,x},\; v_t(\vtheta_0^v) + b_{n,x}]}\Big|(2b_{n,x})^{-1}\partial_1 F_t\big(x, c_t(\vtheta_0^c)\big) 2b_{n,x} \\
		&\hspace{10.5cm} - \partial_1 F_t\big(v_t(\vtheta_0^v), c_t(\vtheta_0^c)\big)\Big|\bigg]\\
		&\leq \frac{1}{n}\sum_{t=1}^{n}\E\Big[C_1(\mathcal{F}_{t-1})V_1(\mathcal{F}_{t-1}) \sup_{x\in[v_t(\vtheta_0^v)-b_{n,x},\; v_t(\vtheta_0^v) + b_{n,x}]}K\big|x-v_t(\vtheta_0^v)\big|\Big]\\
		&\leq \frac{1}{n}\sum_{t=1}^{n}Kb_{n,x} \E\big[C_1(\mathcal{F}_{t-1})V_1(\mathcal{F}_{t-1})\big]\\
		&\leq C b_{n,x}=o(1),
	\end{align*}
	such that Markov's inequality implies $B_{7n}=o_{\P}(1)$.
	
	Finally, Assumption~\ref{ass:avar} \ref{it:Lambda2} implies that $C_{7n}=o_{\P}(1)$.
	
	Overall, \eqref{eq:(2)} follows, ending the proof.
\end{proof}

\section{Verifying Assumptions~\ref{ass:cons} and \ref{ass:an} for CCC--GARCH Models}
\label{sec:verification}

In this section, we verify our main Assumptions~\ref{ass:cons} and \ref{ass:an} for a CCC--GARCH process \citep{Bol90}. Subsection~\ref{CCC-GARCH} presents the model and Subsection~\ref{PR} collects some properties of the CCC--GARCH model that will be helpful in verifying Assumptions~\ref{ass:cons} and \ref{ass:an} in Subsections~\ref{Ass1} and \ref{Ass2}, respectively.

\subsection{The CCC--GARCH Model}\label{CCC-GARCH}

The CCC--GARCH model we consider throughout this section is given by the recursion
\begin{equation}\label{eq:CG}
	\begin{pmatrix}
		X_t\\ Y_t
	\end{pmatrix} = 
	\begin{pmatrix}
		\sigma_{X,t}(\vtheta^X)\varepsilon_{X,t}\\ 
		\sigma_{Y,t}(\vtheta^Y)\varepsilon_{Y,t}
	\end{pmatrix},\qquad
	\begin{pmatrix}
		\sigma_{X,t}(\vtheta^X)\\ 
		\sigma_{Y,t}(\vtheta^Y)
	\end{pmatrix}
	=\begin{pmatrix}
		\omega_X + \alpha_X|X_{t-1}| + \beta_X\sigma_{X,t-1}(\vtheta^X)\\ 
		\omega_Y + \alpha_Y|Y_{t-1}| + \beta_Y\sigma_{Y,t-1}(\vtheta^Y)
	\end{pmatrix},\quad t\in\mathbb{Z},
\end{equation}
where we denote the true parameter values by $\vtheta_0^{X}=(\omega_{X,0},\alpha_{X,0},\beta_{X,0})^\prime$ and $\vtheta_0^Y=(\omega_{Y,0},\alpha_{Y,0},\beta_{Y,0})^\prime$.
As in Example~\ref{ex:1}, we do not consider the standard GARCH model that uses squares, but we consider an absolute value specification.
In fact, \eqref{eq:CG} arises as a special case of model \eqref{eqn:ECCCmodel} with diagonal $\widetilde{\mA}$ and $\widetilde{\mB}$.

We make the following assumptions:

\begin{CGassumption}\label{CGass:error}
	The innovations $(\varepsilon_{X,t}, \varepsilon_{Y,t})^\prime$ are i.i.d., independent of $\mathcal{F}_{t-1}=\sigma\big\{(X_{t-1}, Y_{t-1})^\prime,(X_{t-2}, Y_{t-2})^\prime,\ldots \big\}$, and have mean zero and correlation matrix $\mR$. 
	The support of $(\varepsilon_{X,t}, \varepsilon_{Y,t})^\prime$ (defined by its strictly positive Lebesgue density $f_{\varepsilon_X,\varepsilon_Y}(\cdot,\cdot)$) contains an open set and zero. 
	Moreover, the Lebesgue density is positive for all $(x,y)^\prime\in\mathbb{R}^2$ such that $F_{\varepsilon_X,\varepsilon_Y}(x,y)\in(0,1)$. 
	Finally, 
	\begin{align*}
		f_{\varepsilon_X}(\cdot)&\leq K,& f_{\varepsilon_Y}(\cdot)&\leq K,\\
		\big|\partial_i F_{\varepsilon_X,\varepsilon_Y}(\cdot,\cdot)\big|&\leq K\ (i=1,2),& f_{\varepsilon_X,\varepsilon_Y}(\cdot,\cdot)&\leq K,\\
		\big|f_{\varepsilon_X}(x)-f_{\varepsilon_X}(x^\prime)\big|&\leq K |x-x^\prime|,& \big|f_{\varepsilon_Y}(y)-f_{\varepsilon_Y}(y^\prime)\big|&\leq K |y-y^\prime|,\\
		\big|\partial_1F_{\varepsilon_X,\varepsilon_Y}(x,y)-\partial_1 F_{\varepsilon_X,\varepsilon_Y}(x^\prime,y)\big|&\leq K |x-x^\prime|,& \big|\partial_2F_{\varepsilon_X,\varepsilon_Y}(x,y)-\partial_2 F_{\varepsilon_X,\varepsilon_Y}(x,y^\prime)\big|&\leq K |y-y^\prime|.
	\end{align*}
\end{CGassumption}

CCC--GARCH Assumption~\ref{CGass:error} ensures that the $\sigma_{W,t}(\vtheta^W)$ ($W\in\{X,Y\}$) are the conditional standard deviations of the $W_t$. The requirement that the support contains an open set and zero is needed to verify certain mixing properties; see the proof of Lemma~\ref{lem:sd} below.

\begin{CGassumption}\label{CGass:param}
	All parameters from the compact parameter space $\mTheta^{(X,Y)}$ satisfy the following constraints for some $\delta_1\in(0,1/2)$:
	\begin{align}
		&0\leq\delta_1\leq\omega_X,\omega_Y\leq K_{\omega}<\infty,\quad \delta_1\leq\alpha_X, \alpha_Y, \beta_X, \beta_Y \leq 1-\delta_1,\notag\\
		&\E\Big[\max\big\{\alpha_X|\varepsilon_{X,t}|+\beta_X, \alpha_Y|\varepsilon_{Y,t}|+\beta_Y\big\}^2\Big]<1.\label{eq:str stat para}
	\end{align}
	Furthermore, the true parameter vector $\vtheta_0=(\vtheta_0^{X\prime}, \vtheta_0^{Y\prime})^\prime$ lies in the interior of $\mTheta^{(X,Y)}$.
\end{CGassumption}

Among other purposes, CCC--GARCH Assumption~\ref{CGass:param} is required to show that the $(X_t,Y_t)^\prime$ are strictly stationary. 
Use the shorthand $a\vee b :=\max\{a,b\}$, $\alpha_{\ast}:=\alpha_X\vee\alpha_Y$, $\beta_{\ast}:=\beta_X\vee\beta_Y$. 
Then, an easily verifiable sufficient condition for \eqref{eq:str stat para} is $2(\alpha_{\ast}+\beta_{\ast})^2-\beta_{\ast}<1$, because
\begin{align*}
	\E\Big[\max\big\{\alpha_X|\varepsilon_{X,t}|+\beta_X, \alpha_Y|\varepsilon_{Y,t}|+\beta_Y\big\}^2\Big]&\leq \E\Big[\big\{(\alpha_X\vee\alpha_Y)(|\varepsilon_{X,t}|\vee|\varepsilon_{Y,t}|)+(\beta_X\vee\beta_Y)\big\}^2\Big]\\
	&= \E\Big[\alpha_{\ast}^2(|\varepsilon_{X,t}|\vee|\varepsilon_{Y,t}|)^2+2\alpha_{\ast}\beta_{\ast}(|\varepsilon_{X,t}|\vee|\varepsilon_{Y,t}|) + \beta_{\ast}^2\Big]\\
	&\leq \alpha_{\ast}^2\big(\E|\varepsilon_{X,t}|^2 + \E|\varepsilon_{Y,t}|^2\big)+2\alpha_{\ast}\beta_{\ast}\big(\E|\varepsilon_{X,t}| + \E|\varepsilon_{Y,t}|\big) + \beta_{\ast}^2\\
	&\leq 2\alpha_{\ast}^2 + 4\alpha_{\ast}\beta_{\ast} + \beta_{\ast}^2=2(\alpha_{\ast}+\beta_{\ast})^2-\beta_{\ast}^2,
\end{align*}
where we used that $\E|\varepsilon_{W,t}|^2=\Var(\varepsilon_{W,t})=1$ and $\E|\varepsilon_{W,t}|\leq \{\E|\varepsilon_{W,t}|^2\}^{1/2}=1$ ($W\in\{X,Y\}$). For instance, this sufficient condition is satisfied for the parameters $\alpha_{\ast}=0.05$ and $\beta_{\ast}=0.9$, which are common values for financial data.

\begin{CGassumption}\label{CGass:mom}
	$\E|X_t|^{3+\iota}<\infty$ and $\E|Y_t|^{3+\iota}<\infty$ for some $\iota>0$.
\end{CGassumption}

\subsection{Preliminary Results}\label{PR}

Before showing that Assumptions~\ref{ass:cons} and \ref{ass:an} hold under CCC--GARCH Assumptions~\ref{CGass:error}--\ref{CGass:mom} in Subsections~\ref{Ass1} and \ref{Ass2}, respectively, we explore some properties of the model that will be used throughout.

Example~\ref{ex:1} shows that model \eqref{eq:CG} implies the (VaR, CoVaR) dynamics
\begin{align}
	v_t(\vtheta^v) &= \omega_v + \alpha_v|X_{t-1}| + \beta_v v_{t-1}(\vtheta^v),\label{eq:(1.1)}\\
	c_t(\vtheta^c) &= \omega_c + \alpha_c|Y_{t-1}| + \beta_c c_{t-1}(\vtheta^c),\label{eq:(1.2)}
\end{align}
where 
\begin{align}
	\vtheta^v&=(\omega_v, \alpha_v, \beta_v)^\prime = (v\omega_X, v\alpha_X, \beta_X)^\prime,&& v=\VaR_\beta(F_{\varepsilon_{X}}),\label{eq:para trafo 1}\\
	\vtheta^c&=(\omega_c, \alpha_c, \beta_c)^\prime = (c\omega_Y, c\alpha_Y, \beta_Y)^\prime,&& c=\CoVaR_{\alpha\mid\beta}(F_{\varepsilon_{X},\varepsilon_{Y}}).\label{eq:para trafo 2}
\end{align}
Here, we are specifically interested in VaR and CoVaR conditioned on the $\sigma$-field 
\[
\mathcal{F}_{t-1}=\sigma\big((X_{t-1},Y_{t-1})^\prime,\ldots,(X_{1},Y_{1})^\prime,\init\big),\quad\text{where }	\init= (X_0,Y_0,X_{-1},Y_{-1},\ldots)^\prime.
\]
While conditioning on the infinite past may be seen as unrealistic from a modeling perspective (after all no time series extends back indefinitely), it is very convenient for verifying Assumptions~\ref{ass:cons} and \ref{ass:an}, which is already fairly involved. 
One of the advantages of conditioning on the infinite past is that it renders $v_t(\vtheta_0^v)$ and $c_t(\vtheta_0^c)$ stationary, which is required (e.g.) for invoking results of \citet{CC02} below. 
Stationarity of $v_t(\vtheta_0^v)$ and $c_t(\vtheta_0^c)$ is also useful as the matrices from Assumption~\ref{ass:an} \ref{it:pd} no longer depend on $n$, such that we may write
\[
\mLambda=\mLambda_n,\quad \mLambda_{(1)}=\mLambda_{n,(1)},\quad \mV=\mV_n, \ldots.
\]
Hence, it suffices to prove positive definiteness of $\mLambda, \mLambda_{(1)},\mV,\ldots$ instead of \textit{uniform} positive definiteness of $\mLambda_{n}, \mLambda_{n,(1)}, \mV_n,\ldots $.

In view of \eqref{eq:para trafo 1} and \eqref{eq:para trafo 2}, the parameter space of the (VaR, CoVaR) model is
\begin{equation}\label{eq:(D.6p)}
	\mTheta=\mTheta^v\times\mTheta^c = \bigg\{\vtheta=(\vtheta^{v\prime}, \vtheta^{c\prime})^\prime=(\theta_1,\ldots,\theta_6)^\prime\in\mathbb{R}^{6}:\ \Big(\frac{\theta_1}{v},\frac{\theta_2}{v},\theta_3,\frac{\theta_4}{v},\frac{\theta_5}{v},\theta_6\Big)^\prime\in\mTheta^{(X,Y)}\bigg\}.
\end{equation}
In particular, 
\begin{align}
	v_t(\vtheta^v) &= v\sigma_{X,t}(\vtheta^X),\label{eq:(3.11)}\\
	c_t(\vtheta^c) &= c\sigma_{Y,t}(\vtheta^Y),\label{eq:(3.2)}
\end{align}
such that by the chain rule
\begin{align}
	\nabla v_t(\vtheta^v) &= \begin{pmatrix}\frac{\partial v_t(\vtheta^v)}{\partial \omega_v}\\ \frac{\partial v_t(\vtheta^v)}{\partial \alpha_v}\\ \frac{\partial v_t(\vtheta^v)}{\partial \beta_v}\end{pmatrix}
	= \begin{pmatrix}v\frac{\partial \sigma_{X,t}(\vtheta^X=(\omega_v/v, \alpha_v/v,\beta_v)^\prime)}{\partial \omega_v}\\ v\frac{\partial \sigma_{X,t}(\vtheta^X=(\omega_v/v, \alpha_v/v,\beta_v)^\prime)}{\partial \alpha_v}\\ v\frac{\partial \sigma_{X,t}(\vtheta^X=(\omega_v/v, \alpha_v/v,\beta_v)^\prime)}{\partial \beta_v}
	\end{pmatrix}
	=\begin{pmatrix}\frac{\partial \sigma_{X,t}(\vtheta^X)}{\partial \omega_X}\\ \frac{\partial \sigma_{X,t}(\vtheta^X)}{\partial \alpha_X}\\ v\frac{\partial \sigma_{X,t}(\vtheta^X)}{\partial \beta_X}\end{pmatrix}=\nabla \sigma_{X,t}(\vtheta^X)\odot\begin{pmatrix}1\\1\\ v\end{pmatrix},\label{eq:1.1}\\
	\nabla c_t(\vtheta^c) &= \begin{pmatrix}\frac{\partial c_t(\vtheta^c)}{\partial \omega_c}\\ \frac{\partial c_t(\vtheta^c)}{\partial \alpha_c}\\ \frac{\partial c_t(\vtheta^c)}{\partial \beta_c}\end{pmatrix}
	= \begin{pmatrix}c\frac{\partial \sigma_{Y,t}(\vtheta^Y=(\omega_c/c, \alpha_c/c,\beta_c)^\prime)}{\partial \omega_c}\\ c\frac{\partial \sigma_{Y,t}(\vtheta^Y=(\omega_v/v, \alpha_c/c,\beta_c)^\prime)}{\partial \alpha_c}\\ c\frac{\partial \sigma_{Y,t}(\vtheta^Y=(\omega_v/v, \alpha_c/c,\beta_c)^\prime)}{\partial \beta_c}
	\end{pmatrix}
	=\begin{pmatrix}\frac{\partial \sigma_{Y,t}(\vtheta^Y)}{\partial \omega_Y}\\ \frac{\partial \sigma_{Y,t}(\vtheta^Y)}{\partial \alpha_Y}\\ c\frac{\partial \sigma_{Y,t}(\vtheta^Y)}{\partial \beta_Y}\end{pmatrix}=\nabla \sigma_{Y,t}(\vtheta^Y)\odot\begin{pmatrix}1\\1\\ c\end{pmatrix},\label{eq:1.2}
\end{align}
where $\odot$ denotes the (element-by-element) Hadamard product.

\begin{lem}\label{lem:8}
	Under CCC--GARCH Assumptions~\ref{CGass:error}--\ref{CGass:mom}, it holds for $W\in\{X,Y\}$ that
	\begin{align}
		\sigma_{W,t}(\vtheta^W)&=\frac{\omega_W}{1-\beta_W} + \alpha_W W_1(t,\beta_W),\label{eq:lem51}\\
		\big\Vert\nabla\sigma_{W,t}(\vtheta^W)\big\Vert &\leq \frac{\omega_X}{1-\beta_W}\bigg[\frac{1}{\omega_X}+\frac{1}{1-\beta_X}-\frac{1}{\alpha_X}\bigg] + \frac{1}{\alpha_W}\sigma_{W,t}(\vtheta^W) + \alpha_W W_2(t,\beta_W),\notag\\
		\big\Vert\nabla^2\sigma_{W,t}(\vtheta^W)\big\Vert &\leq \frac{2}{(1-\beta_W)^2}\bigg[1+\frac{\omega_W}{1-\beta_W}\bigg] +2W_2(t,\beta_W) + \alpha_W W_3(t,\beta_W),\notag
	\end{align}
	where
	\begin{align*}
		W_1(t,\beta_W) &= \sum_{i=1}^{\infty}\beta_W^{i-1}|W_{t-i}|,\\
		W_2(t,\beta_W) &= \sum_{i=2}^{\infty}(i-1)\beta_W^{i-2}|W_{t-i}|,\\
		W_3(t,\beta_W) &= \sum_{i=3}^{\infty}(i-1)(i-2)\beta_W^{i-3}|W_{t-i}|.
	\end{align*}
\end{lem}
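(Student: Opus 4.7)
The plan is to iterate the defining recursion~\eqref{eq:CG} to obtain a closed-form series representation of $\sigma_{W,t}(\vtheta^W)$, then differentiate term-by-term and bound the resulting sums using the triangle inequality. Convergence of every series that appears is guaranteed because CCC--GARCH Assumption~\ref{CGass:param} restricts $\beta_W$ to the interval $[\delta_1,1-\delta_1]$, so that $\beta_W$ is uniformly bounded away from $1$ on $\mTheta^{(X,Y)}$, and CCC--GARCH Assumption~\ref{CGass:mom} gives $\E|W_t|<\infty$ so that the series converge almost surely (and the dominated convergence theorem justifies interchanging differentiation and summation).

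First, I would write
\[
\sigma_{W,t}(\vtheta^W)=\omega_W\sum_{i=0}^{\infty}\beta_W^{i}+\alpha_W\sum_{i=1}^{\infty}\beta_W^{i-1}|W_{t-i}|=\frac{\omega_W}{1-\beta_W}+\alpha_W W_1(t,\beta_W),
\]
which is the first claim. Differentiating this expression entrywise in $\vtheta^W=(\omega_W,\alpha_W,\beta_W)^{\prime}$ gives
\begin{align*}
\frac{\partial\sigma_{W,t}}{\partial\omega_W}&=\frac{1}{1-\beta_W},\qquad \frac{\partial\sigma_{W,t}}{\partial\alpha_W}=W_1(t,\beta_W),\\
\frac{\partial\sigma_{W,t}}{\partial\beta_W}&=\frac{\omega_W}{(1-\beta_W)^{2}}+\alpha_W W_2(t,\beta_W).
\end{align*}
Bounding the Euclidean norm by the sum of the absolute values of the three components and substituting $W_1(t,\beta_W)=\alpha_W^{-1}\bigl[\sigma_{W,t}(\vtheta^W)-\omega_W/(1-\beta_W)\bigr]$ (obtained by rearranging the closed-form expression) yields
\[
\bigl\|\nabla\sigma_{W,t}(\vtheta^W)\bigr\|\leq\frac{1}{1-\beta_W}+\frac{\omega_W}{(1-\beta_W)^{2}}-\frac{\omega_W}{\alpha_W(1-\beta_W)}+\frac{1}{\alpha_W}\sigma_{W,t}(\vtheta^W)+\alpha_W W_2(t,\beta_W),
\]
which, after factoring $\omega_W/(1-\beta_W)$ out of the first three terms, is exactly the second claim.

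The second derivatives are computed analogously by differentiating the series once more. All partials involving $\omega_W$ or $\alpha_W$ twice vanish, so only three entries of the symmetric Hessian are nonzero:
\[
\frac{\partial^{2}\sigma_{W,t}}{\partial\omega_W\partial\beta_W}=\frac{1}{(1-\beta_W)^{2}},\qquad \frac{\partial^{2}\sigma_{W,t}}{\partial\alpha_W\partial\beta_W}=W_2(t,\beta_W),\qquad \frac{\partial^{2}\sigma_{W,t}}{\partial\beta_W^{2}}=\frac{2\omega_W}{(1-\beta_W)^{3}}+\alpha_W W_3(t,\beta_W).
\]
Applying $\|\mA\|_F\leq\sum_{i,j}|A_{ij}|$ to this symmetric matrix (which contributes each off-diagonal entry twice) gives the third claim:
\[
\bigl\|\nabla^{2}\sigma_{W,t}(\vtheta^W)\bigr\|\leq\frac{2}{(1-\beta_W)^{2}}+2W_2(t,\beta_W)+\frac{2\omega_W}{(1-\beta_W)^{3}}+\alpha_W W_3(t,\beta_W),
\]
and collecting the first and third terms yields $\tfrac{2}{(1-\beta_W)^{2}}[1+\omega_W/(1-\beta_W)]$ as stated.

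I expect no genuine obstacle here: the whole argument is essentially bookkeeping for a geometric-type series. The only step requiring mild care is justifying the termwise differentiation and the convergence of the series for $\nabla^{2}\sigma_{W,t}$ (which involves factors growing polynomially in $i$ against $\beta_W^{i-3}$); compactness of $\mTheta^{(X,Y)}$ together with $\beta_W\leq 1-\delta_1<1$ makes this routine by dominated convergence, and the finite-moment assumption on $|W_{t}|$ in CCC--GARCH Assumption~\ref{CGass:mom} ensures that the resulting a.s.\ bounds are meaningful random variables.
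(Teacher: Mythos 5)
Your proposal is correct and follows essentially the same route as the paper: iterate the recursion to get the closed-form series, compute the gradient and Hessian entrywise, substitute $W_1(t,\beta_W)=\alpha_W^{-1}[\sigma_{W,t}(\vtheta^W)-\omega_W/(1-\beta_W)]$, and bound the norms by the sum of absolute entries. The only (harmless) difference is that you explicitly flag the justification for termwise differentiation and series convergence, which the paper leaves implicit.
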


\begin{proof}
	For concreteness, we let $W=X$; the proofs for $W=Y$ are virtually identical. Iterate the volatility recursion to obtain that
	\begin{align}
		\sigma_{X,t}(\vtheta^X) &= \omega_X+\alpha_X|X_{t-1}| + \beta_X \sigma_{X,t-1}(\vtheta^X)\notag\\
		&=\frac{\omega_X}{1-\beta_X} + \alpha_X\sum_{i=1}^{\infty}\beta_X^{i-1}|X_{t-i}|\notag\\
		&=\frac{\omega_X}{1-\beta_X} + \alpha_X X_1(t,\beta_X),\label{eq:(8.1)}
	\end{align}
	from which the first claim in \eqref{eq:lem51} follows.
	
	For the second claim, we use \eqref{eq:(8.1)} to derive that
	\begin{align}
		\nabla \sigma_{X,t}(\vtheta^X)&=\begin{pmatrix}\frac{\partial \sigma_{X,t}(\vtheta^X)}{\partial \omega_X}\\ \frac{\partial \sigma_{X,t}(\vtheta^X)}{\partial \alpha_X}\\ \frac{\partial \sigma_{X,t}(\vtheta^X)}{\partial \beta_X}\end{pmatrix}=\begin{pmatrix}\frac{1}{1-\beta_X}\\ \sum_{i=1}^{\infty}\beta_X^{i-1}|X_{t-i}|\\ \frac{\omega_X}{(1-\beta_X)^2}+\alpha_X\sum_{i=1}^{\infty}(i-1)\beta_X^{i-2}|X_{t-i}|\end{pmatrix}\notag\\
		&=\begin{pmatrix}\frac{1}{1-\beta_X}\\ X_1(t,\beta_X)\\ \frac{\omega_X}{(1-\beta_X)^2}+\alpha_X X_2(t,\beta_X)\end{pmatrix}.\label{eq:(8.2)}
	\end{align}
	Using this, \eqref{eq:(8.1)} and the fact that the Frobenius norm is always less than the sum of the absolute values of the entries,
	\begin{align*}
		\big\Vert\nabla \sigma_{X,t}(\vtheta^X)\big\Vert & \leq \frac{1}{1-\beta_X} + \frac{\omega_X}{(1-\beta_X)^2} + X_1(t,\beta_X) + \alpha_X X_2(t,\beta_X)\\
		&=\frac{1}{1-\beta_X} + \frac{\omega_X}{(1-\beta_X)^2} + \frac{1}{\alpha_X}\bigg[\sigma_{X,t}(\vtheta^X)- \frac{\omega_X}{1-\beta_X}\bigg] + \alpha_X X_2(t,\beta_X)\\
		&=  \frac{\omega_X}{1-\beta_X}\bigg[\frac{1}{\omega_X} + \frac{1}{1-\beta_X} - \frac{1}{\alpha_X}\bigg] + \frac{1}{\alpha_X}\sigma_{X,t}(\vtheta^X) + \alpha_X X_2(t,\beta_X).
	\end{align*}
	
	Finally, it is easy to check from \eqref{eq:(8.2)} that
	\begin{align}
		\nabla^2\sigma_{X,t}(\vtheta^X) &= \begin{pmatrix}\frac{\partial^2 \sigma_{X,t}(\vtheta^X)}{\partial \omega_X^2} & \frac{\partial^2 \sigma_{X,t}(\vtheta^X)}{\partial \omega_X\partial \alpha_X}& \frac{\partial^2 \sigma_{X,t}(\vtheta^X)}{\partial \omega_X\partial \beta_X}\\
			\frac{\partial^2 \sigma_{X,t}(\vtheta^X)}{\partial \alpha_X\partial\omega_X} & \frac{\partial^2 \sigma_{X,t}(\vtheta^X)}{\partial \alpha_X^2}& \frac{\partial^2 \sigma_{X,t}(\vtheta^X)}{\partial \alpha_X\partial \beta_X}\\
			\frac{\partial^2 \sigma_{X,t}(\vtheta^X)}{\partial\beta_X\partial \omega_X} & \frac{\partial^2 \sigma_{X,t}(\vtheta^X)}{\partial \beta_X\partial \alpha_X}& \frac{\partial^2 \sigma_{X,t}(\vtheta^X)}{\partial \beta_X^2}\end{pmatrix}\notag\\
		&=
		\begin{pmatrix}0 & 0& \frac{1}{(1-\beta_X)^2}\\
			0 & 0 & X_2(t,\beta_X)\\
			\frac{1}{(1-\beta_X)^2} & X_2(t,\beta_X) & \frac{2\omega_X}{(1-\beta_X)^3} + \alpha_X X_3(t,\beta_X)\end{pmatrix}.\label{eq:(p.48)}
	\end{align}
	Since the Frobenius norm is always less than the sum of the absolute values of the matrix entries, we have that
	\[
	\big\Vert\nabla^2\sigma_{X,t}(\vtheta^X)\big\Vert \leq \frac{2}{(1-\beta_X)^2} + 2X_2(t,\beta_X) + \frac{2\omega_X}{(1-\beta_X)^3}+\alpha_X X_3(t,\beta_X),
	\]
	which is just the conclusion.
\end{proof}

\begin{lem}\label{lem:9}
	Under CCC--GARCH Assumptions~\ref{CGass:error}--\ref{CGass:mom}, there exists some constant $M>0$, such that for all $\vtheta^v\in\mTheta^v$ and all $\vtheta^c\in\mTheta^c$,
	\begin{align*}
		\big|v_t(\vtheta^v)\big| &\leq V(\mathcal{F}_{t-1})= M\big[K_\omega \delta_1^{-1}+(1-\delta_1)\overline{X}_1(\mathcal{F}_{t-1})\big],\\
		\big|c_t(\vtheta^v)\big| &\leq C(\mathcal{F}_{t-1})= M\big[K_\omega \delta_1^{-1}+(1-\delta_1)\overline{Y}_1(\mathcal{F}_{t-1})\big],\\
		\big\Vert\nabla v_t(\vtheta^v)\big\Vert &\leq V_1(\mathcal{F}_{t-1})=M\big[3K_{\omega}\delta_1^{-2}+\overline{X}_1(\mathcal{F}_{t-1}) + (1-\delta_1)\overline{X}_2(\mathcal{F}_{t-1})\big],\\
		\big\Vert\nabla c_t(\vtheta^c)\big\Vert &\leq C_1(\mathcal{F}_{t-1})=M\big[3K_{\omega}\delta_1^{-2}+\overline{Y}_1(\mathcal{F}_{t-1}) + (1-\delta_1)\overline{Y}_2(\mathcal{F}_{t-1})\big],\\
		\big\Vert\nabla^2 v_t(\vtheta^v)\big\Vert &\leq V_2(\mathcal{F}_{t-1})=M\big[2\delta_1^{-2}(1+\delta_1^{-1}K_{\omega})+ 2 \overline{X}_2(\mathcal{F}_{t-1}) + (1-\delta_1)\overline{X}_3(\mathcal{F}_{t-1})\big],\\
		\big\Vert\nabla^2 c_t(\vtheta^c)\big\Vert &\leq C_2(\mathcal{F}_{t-1})=M\big[2\delta_1^{-2}(1+\delta_1^{-1}K_{\omega})+ 2 \overline{Y}_2(\mathcal{F}_{t-1}) + (1-\delta_1)\overline{Y}_3(\mathcal{F}_{t-1})\big],
	\end{align*}
	where $\overline{W}_i(\mathcal{F}_{t-1}) =W_i(t,1-\delta_1)$ for $W\in\{X,Y\}$ and $i=1,2,3$.
\end{lem}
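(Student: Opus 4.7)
My plan is to reduce each of the six bounds to the corresponding bound on the conditional standard deviation already established in Lemma~\ref{lem:8}, exploiting the scaling relations $v_t(\vtheta^v) = v\sigma_{X,t}(\vtheta^X)$ and $c_t(\vtheta^c) = c\sigma_{Y,t}(\vtheta^Y)$ from \eqref{eq:(3.11)}--\eqref{eq:(3.2)}, together with the parameter transformation $\vtheta^X = (\omega_v/v, \alpha_v/v, \beta_v)^\prime$ induced by \eqref{eq:para trafo 1}. Since $v = \VaR_\beta(F_{\varepsilon_{X,t}})$ and $c = \CoVaR_{\alpha\mid\beta}(F_{\varepsilon_{X,t},\varepsilon_{Y,t}})$ are fixed nonzero constants determined by the innovation distribution, I will take $M := \max\{1, |v|, |c|, 1/|v|, 1/|c|\}$, which is finite and makes the constant absorption below routine. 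It then suffices to handle the $X$/$v_t$ case; the $Y$/$c_t$ case is verbatim the same with $(X, v, \vtheta^v)$ replaced by $(Y, c, \vtheta^c)$.

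For the level bound, the triangle inequality applied to the first display of Lemma~\ref{lem:8} yields $\sigma_{X,t}(\vtheta^X)\leq \omega_X/(1-\beta_X) + \alpha_X X_1(t,\beta_X)$; the parameter restrictions $\omega_X\leq K_\omega$ and $\beta_X,\alpha_X\leq 1-\delta_1$ from CCC--GARCH Assumption~\ref{CGass:param}, together with $X_1(t,\beta_X)\leq \overline{X}_1(\mathcal{F}_{t-1})$ (monotonicity of $W_1(t,\cdot)$), produce the required bound after multiplying by $|v|\leq M$. For the gradient, \eqref{eq:1.1} writes $\nabla v_t(\vtheta^v)=\nabla\sigma_{X,t}(\vtheta^X)\odot(1,1,v)^\prime$, so $\|\nabla v_t(\vtheta^v)\|\leq \max(1,|v|)\,\|\nabla\sigma_{X,t}(\vtheta^X)\|\leq M\|\nabla\sigma_{X,t}(\vtheta^X)\|$. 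Inserting Lemma~\ref{lem:8}'s gradient bound and applying the parameter constraints gives $\|\nabla\sigma_{X,t}(\vtheta^X)\|\leq 1/\delta_1 + K_\omega/\delta_1^2 + \overline{X}_1(\mathcal{F}_{t-1}) + (1-\delta_1)\overline{X}_2(\mathcal{F}_{t-1})$; the inequality $1/\delta_1\leq K_\omega/\delta_1^2$ (which holds because $K_\omega\geq\omega_X\geq\delta_1$) collapses the first two constants into $2K_\omega/\delta_1^2\leq 3K_\omega/\delta_1^2$, yielding exactly the form of $V_1$.

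For the Hessian the main bookkeeping step is the chain rule: writing $\vtheta^X=D^{-1}\vtheta^v$ with $D=\diag(v,v,1)$, we obtain
\[
\nabla^2_{\vtheta^v} v_t(\vtheta^v)=v\,D^{-1}\bigl[\nabla^2_{\vtheta^X}\sigma_{X,t}(\vtheta^X)\bigr]D^{-1},
\]
so each entry of $\nabla^2 v_t(\vtheta^v)$ equals the corresponding entry of $\nabla^2\sigma_{X,t}(\vtheta^X)$ multiplied by a factor in $\{1/v, 1, v\}$, all bounded in absolute value by $M$. Consequently $\|\nabla^2 v_t(\vtheta^v)\|\leq M\|\nabla^2\sigma_{X,t}(\vtheta^X)\|$, and substituting Lemma~\ref{lem:8}'s Hessian bound together with $\omega_X\leq K_\omega$, $\beta_X\leq 1-\delta_1$ and $\alpha_X\leq 1-\delta_1$ delivers the claimed $V_2(\mathcal{F}_{t-1})$. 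The only non-trivial point throughout is tracking the diagonal rescaling in this second-derivative chain rule, but once the $D^{-1}(\cdot)D^{-1}$ sandwich is identified the remainder is a direct invocation of Lemma~\ref{lem:8} and absorption of the resulting deterministic constants into $M$.
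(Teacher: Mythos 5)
Your proof is correct and follows essentially the same route as the paper: reduce each bound to the corresponding bound on $\sigma_{X,t}(\vtheta^X)$ and $\sigma_{Y,t}(\vtheta^Y)$ from Lemma~\ref{lem:8} via the scaling relations \eqref{eq:(3.11)}--\eqref{eq:(3.2)} and \eqref{eq:1.1}--\eqref{eq:1.2}, then apply the parameter constraints of CCC--GARCH Assumption~\ref{CGass:param} and the monotonicity of $W_i(t,\cdot)$ in $\beta$. Your explicit $D^{-1}[\nabla^2\sigma_{X,t}]D^{-1}$ sandwich for the Hessian is a welcome elaboration of a step the paper only asserts ``as in \eqref{eq:hhh1}''.
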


\begin{proof}
	From \eqref{eq:(8.1)} and CCC--GARCH Assumption~\ref{CGass:param} it is easy to see that 
	\begin{align*}
		\sigma_{X,t}(\vtheta^X) &= \frac{\omega_X}{1-\beta_X} + \alpha_X X_1(t,\beta_X)\\
		&\leq \frac{K_{\omega}}{\delta_1} + (1-\delta_1) X_1(t,1-\delta_1)\\
		&= \frac{K_{\omega}}{\delta_1} + (1-\delta_1) \overline{X}_1(\mathcal{F}_{t-1}),
	\end{align*}
	where we used in the second step that the function $\beta_X\mapsto X_1(t,\beta_X)$ increases in $\beta_X$.
	This and \eqref{eq:(3.11)} imply the first claim, i.e.,
	\begin{equation*}
		\big|v_t(\vtheta^v)\big| = |v|\sigma_{X,t}(\vtheta^X)\leq M \big[K_{\omega}\delta_1^{-1} + (1-\delta_1) \overline{X}_1(\mathcal{F}_{t-1})\big].
	\end{equation*}
	The second claim can be established analogously using \eqref{eq:(3.2)}. 
	
	For the third claim note that from \eqref{eq:1.1}
	\begin{equation}\label{eq:hhh1}
		\big\Vert \nabla v_t(\vtheta^v) \big\Vert = \bigg\Vert \nabla \sigma_{X,t}(\vtheta^X)\odot\begin{pmatrix}1\\1\\ v\end{pmatrix} \bigg\Vert\leq M\big\Vert\nabla \sigma_{X,t}(\vtheta^X) \big\Vert.
	\end{equation}
	The bound now follows from Lemma~\ref{lem:8} because
	\begin{align*}
		\big\Vert \nabla \sigma_{X,t}(\vtheta^X) \big\Vert &\leq \frac{K_{\omega}}{\delta_1}\bigg[\frac{1}{\delta_1} + \frac{1}{\delta_1} -\frac{1}{1-\delta_1}\bigg] + \frac{1}{\alpha_X}\bigg[\frac{\omega_X}{1-\beta_X}+\alpha_X X_1(t,\beta_X)\bigg] + (1-\delta_1)X_2(t,\beta_X)\\
		&\leq \frac{K_{\omega}}{\delta_1}\bigg[\frac{1}{\delta_1} + \frac{1}{\delta_1} -\frac{1}{1-\delta_1}\bigg] + \frac{K_{\omega}}{\delta_1^2} +  X_1(t,1-\delta_1) + (1-\delta_1)X_2(t,1-\delta_1)\\
		&\leq 3 K_{\omega}\delta_1^{-2} +  \overline{X}_1(\mathcal{F}_{t-1})+ (1-\delta_1) \overline{X}_2(\mathcal{F}_{t-1}).
	\end{align*}
	The bound for $\big\Vert \nabla c_t(\vtheta^c) \big\Vert$ follows similarly using \eqref{eq:1.2} instead of \eqref{eq:1.1}.
	
	As in \eqref{eq:hhh1}, it may be shown that
	\[
	\big\Vert \nabla^2 v_t(\vtheta^v) \big\Vert \leq M\big\Vert\nabla^2 \sigma_{X,t}(\vtheta^X) \big\Vert,
	\]
	such that the second-to-last bound again follows from Lemma~\ref{lem:8}. The final bound for $\big\Vert \nabla^2 c_t(\vtheta^c) \big\Vert$ follows similarly.
\end{proof}

In connection with Lemma~\ref{lem:9}, the following lemma---taken from the supplement to \citet{PZC19}---will be useful when verifying the moment conditions in Assumptions~\ref{ass:cons} and \ref{ass:an}.

\begin{lem}[{\citealt[Lemma~10]{PZC19}}]\label{lem:10}
	Let $\{X_i\}_{i\in\mathbb{N}}$ be a sequence of random variables and define $X=\sum_{i=1}^{\infty}a_iX_i$, where $a_i>0$ for all $i\in\mathbb{N}$ and $\sum_{i=1}^{\infty}a_i<\infty$. Then, if $\sup_{i\in\mathbb{N}}\E|X_i|^{p}\leq K$ for some constant $K>0$ and some $p>1$, then $\E|X|^{p}\leq \big(\sum_{i=1}^{\infty}a_i\big)^p K$.
\end{lem}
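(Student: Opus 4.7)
The plan is to apply Minkowski's inequality (the triangle inequality in $L^p$) to the series $X = \sum_i a_i X_i$. The hypothesis $\sup_i \E|X_i|^p \le K$ gives the uniform bound $\|X_i\|_p := (\E|X_i|^p)^{1/p} \le K^{1/p}$. The strategy is to derive $\|X\|_p \le K^{1/p}\sum_i a_i$ and then raise both sides to the $p$-th power to obtain the stated conclusion.

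First I would confirm that the series defining $X$ converges in $L^p$. For $m < n$, Minkowski's inequality together with the positivity of the $a_i$ yields
\[
    \Big\|\sum_{i=m+1}^n a_i X_i\Big\|_p \le \sum_{i=m+1}^n a_i \|X_i\|_p \le K^{1/p}\sum_{i=m+1}^n a_i,
\]
which vanishes as $m,n \to \infty$ since $\sum_{i=1}^\infty a_i < \infty$. Hence the partial sums $S_n := \sum_{i=1}^n a_i X_i$ form a Cauchy sequence in $L^p$, and completeness of $L^p$ yields a limit $\widetilde X$ with $\|\widetilde X\|_p \le K^{1/p}\sum_{i=1}^\infty a_i$. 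Extracting an a.s.-convergent subsequence of $S_n$ (via convergence in probability) identifies $\widetilde X$ with the pointwise limit $X$ of the statement, so $\|X\|_p \le K^{1/p}\sum_i a_i$; raising to the power $p$ gives the claim.

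The only subtlety, and the step I would be most careful about, is confirming the precise sense in which the infinite sum $X = \sum_i a_i X_i$ is taken and reconciling it with the $L^p$-limit constructed above. To sidestep this matching issue entirely, a cleaner one-line alternative is a direct Jensen argument: setting $a := \sum_i a_i$ and $p_i := a_i/a$, the ratio $X/a = \sum_i p_i X_i$ is a convex combination with respect to the probability weights $(p_i)_{i\in\mathbb{N}}$, and convexity of $x \mapsto |x|^p$ gives the pointwise bound $|X/a|^p \le \sum_i p_i |X_i|^p$. Taking expectations, applying Fubini--Tonelli (justified by nonnegativity), and using $\E|X_i|^p \le K$ yields $\E|X|^p \le a^p \sum_i p_i K = (\sum_i a_i)^p K$, which is the desired bound.
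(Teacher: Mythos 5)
Your proposal is correct, but note that the paper itself does not prove this statement: it is imported verbatim from the supplement of \citet{PZC19}, so there is no in-paper proof to compare against. Both of your routes are standard and sound. The Jensen argument is the cleaner one and is essentially self-contained: writing $X/a=\sum_i p_iX_i$ with $p_i=a_i/a$, applying convexity of $x\mapsto|x|^p$ to the discrete probability measure $(p_i)$, and then using Tonelli on the nonnegative series gives the bound in two lines. The only point worth making explicit is that the infinite convex combination is a.s.\ well defined: by Lyapunov, $\E|X_i|\leq K^{1/p}$, so $\E\big[\sum_i a_i|X_i|\big]\leq K^{1/p}\sum_i a_i<\infty$ and the series converges absolutely a.s., which is exactly what both Jensen (pointwise) and your identification of the $L^p$-limit with the pointwise sum require. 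With that observation, the Minkowski--Cauchy route also closes: the full sequence of partial sums converges a.s.\ to $X$, an $L^p$-convergent subsequence converges a.s.\ to the $L^p$-limit, so the two limits agree a.s.\ (alternatively, Fatou applied to $\E|S_n|^p\leq\big(K^{1/p}\sum_{i\leq n}a_i\big)^p$ avoids the identification step altogether).
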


The next lemma establishes some serial dependence properties of the CCC--GARCH model. 

\begin{lem}\label{lem:sd}
	Under CCC--GARCH Assumptions~\ref{CGass:error}--\ref{CGass:mom}, the process
	\[
	\Big\{\big(X_t,\ Y_t,\ v_t(\vtheta^v),\ \nabla^\prime v_t(\vtheta^v),\ c_t(\vtheta^c),\ \nabla^\prime c_t(\vtheta^c) \big)^\prime\Big\}_{t\in\mathbb{Z}}
	\]
	is strictly stationary and $\beta$-mixing with exponential decay.
\end{lem}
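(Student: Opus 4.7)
My plan is to recast the statement as geometric ergodicity of a suitably augmented Markov chain, from which both strict stationarity and exponential $\beta$-mixing will follow. From \eqref{eq:(8.2)} the gradient $\nabla v_t(\vtheta^v)$ is a smooth function of $X_1(t,\beta_v)$ and $X_2(t,\beta_v)$, and by inspection (see Lemma~\ref{lem:8}) these satisfy the affine recursions
\[
X_1(t+1,\beta_v)=|X_t|+\beta_v X_1(t,\beta_v),\qquad X_2(t+1,\beta_v)=X_1(t,\beta_v)+\beta_v X_2(t,\beta_v),
\]
with analogous identities for $\nabla c_t(\vtheta^c)$ through $Y_1(t,\beta_c),Y_2(t,\beta_c)$. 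Collecting these auxiliary sequences together with the recursions \eqref{eq:(1.1)}--\eqref{eq:(1.2)} for $v_t,c_t$ and the CCC--GARCH volatility recursions for $\sigma_{X,t}(\vtheta_0^X),\sigma_{Y,t}(\vtheta_0^Y)$, the enlarged $\mathcal{F}_{t-1}$-measurable state
\[
\mU_t=\big(\sigma_{X,t}(\vtheta_0^X),\,\sigma_{Y,t}(\vtheta_0^Y),\,v_t(\vtheta^v),\,c_t(\vtheta^c),\,X_1(t,\beta_v),\,X_2(t,\beta_v),\,Y_1(t,\beta_c),\,Y_2(t,\beta_c)\big)^\prime
\]
evolves as a homogeneous Markov chain $\mU_{t+1}=\mathbf{F}(\mU_t,\varepsilon_{X,t},\varepsilon_{Y,t})$ on $\mathbb{R}_{+}^{8}$, driven by the i.i.d.\ innovations of CCC--GARCH Assumption~\ref{CGass:error}. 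The vector appearing in Lemma~\ref{lem:sd} is a continuous function of $(\mU_t,\varepsilon_{X,t},\varepsilon_{Y,t})$, so strict stationarity and $\beta$-mixing will transfer from $\{\mU_t\}_{t\in\mathbb{Z}}$ to the process of interest.

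To establish geometric ergodicity of $\{\mU_t\}$, I would invoke the Meyn--Tweedie drift-and-minorisation theorem. Using $|X_t|=\sigma_{X,t}(\vtheta_0^X)|\varepsilon_{X,t}|$, each coordinate of $\mU_{t+1}$ is bounded by a constant plus a linear combination of the coordinates of $\mU_t$ whose dominant multipliers are of the form $\alpha_{X,0}|\varepsilon_{X,t}|+\beta_{X,0}$ or $\alpha_{Y,0}|\varepsilon_{Y,t}|+\beta_{Y,0}$. With the Lyapunov function $V(\mathbf{u})=1+\Vert\mathbf{u}\Vert^{2}$, iterating the recursion a finite number of steps $k_0$ and invoking the stationarity condition \eqref{eq:str stat para} together with the uniform bound $\beta_X,\beta_Y\leq 1-\delta_1$ from CCC--GARCH Assumption~\ref{CGass:param} will yield the geometric drift inequality
\[
\mathbb{E}\big[V(\mU_{t+k_0})\bigm|\mU_t=\mathbf{u}\big]\leq\lambda V(\mathbf{u})+b,\qquad\lambda<1.
\]
Combined with Lebesgue-irreducibility and a small-set minorisation, the Meyn--Tweedie programme then produces a unique invariant probability measure $\pi$ and exponential convergence in total variation. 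Initialising $\mU_0\sim\pi$ renders the chain strictly stationary on $t\in\mathbb{Z}$, and geometric ergodicity of a Harris recurrent chain is well known to imply $\beta$-mixing with exponential decay.

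The main obstacle will be verifying the minorisation/small-set property, since the one-step kernel is degenerate: the coordinates $X_1,X_2,Y_1,Y_2$ are driven only linearly by $|X_t|,|Y_t|$ without independent noise of their own, so $\mathbf{F}(\mathbf{u},\cdot)$ is concentrated on a lower-dimensional affine subspace and admits no transition density on $\mathbb{R}_{+}^{8}$. My remedy is to compose the kernel $k_0\geq 3$ times so that the eight coordinates get stirred by three independent pairs of innovations; a direct Jacobian computation together with the positivity of the joint density $f_{\varepsilon_X,\varepsilon_Y}$ on an open neighbourhood of the origin (CCC--GARCH Assumption~\ref{CGass:error}) should then deliver a transition density for $P^{k_0}(\mathbf{u},\cdot)$ that is continuous in $\mathbf{u}$ and uniformly bounded below on compacts of the positive cone. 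This furnishes the required Lebesgue-irreducibility and small-set minorisation, completing the scheme above.
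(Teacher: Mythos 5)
Your route is genuinely different from the paper's: the paper writes the affine random-coefficient recursion \eqref{eq:(4.1)} for the four-dimensional hidden chain $\big(\sigma_{X,t}(\vtheta^X),\,\partial\sigma_{X,t}(\vtheta^X)/\partial\beta_X,\,\sigma_{Y,t}(\vtheta^Y),\,\partial\sigma_{Y,t}(\vtheta^Y)/\partial\beta_Y\big)^\prime$ (the $\omega$- and $\alpha$-derivatives are a constant and an affine function of $\sigma_{W,t}$ itself, so they need not be carried in the state) and then simply verifies conditions (e), $(A_0)$, $(A_1)$, $(A_2^\prime)$ of Propositions~3--4 in \citet{CC02}, whose generalized-hidden-Markov-model machinery delivers stationarity and exponential $\beta$-mixing; the spectral condition is exactly \eqref{eq:str stat para}. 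Your plan to run the Meyn--Tweedie drift-and-minorisation programme directly is essentially a from-scratch version of what that citation encapsulates, and the drift part of your argument (triangular random multipliers bounded by $\alpha_{W,0}|\varepsilon_{W,t}|+\beta_{W,0}$ or by $\beta_v,\beta_c\le 1-\delta_1$, Lyapunov function $1+\Vert\mathbf{u}\Vert^2$, condition \eqref{eq:str stat para}) is sound.

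The gap is in the minorisation step, and it is not merely the degeneracy you flag. Your eight-dimensional state contains deterministic affine redundancies that no amount of composing the kernel can remove: by \eqref{eq:lem51} and \eqref{eq:nice rep1}--\eqref{eq:nice rep2}, $v_t(\vtheta^v)=\omega_v/(1-\beta_v)+\alpha_v X_1(t,\beta_v)$ and $c_t(\vtheta^c)=\omega_c/(1-\beta_c)+\alpha_c Y_1(t,\beta_c)$, so $\mU_t$ lives for all time on a fixed six-dimensional affine subvariety of $\mathbb{R}_+^8$ and $P^{k_0}(\mathbf{u},\cdot)$ can never admit a density with respect to eight-dimensional Lebesgue measure; the claimed Lebesgue-irreducibility and small-set bound on $\mathbb{R}_+^8$ are therefore false as stated. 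The fix is to drop $v_t$ and $c_t$ from the state (they are measurable functions of the remaining coordinates) and work on the reduced six-dimensional chain $\big(\sigma_{X,t}(\vtheta_0^X),\sigma_{Y,t}(\vtheta_0^Y),X_1(t,\beta_v),X_2(t,\beta_v),Y_1(t,\beta_c),Y_2(t,\beta_c)\big)^\prime$, where $k_0=3$ gives exactly six innovations; even there you must check that the Jacobian of the innovations-to-state map is nonsingular, bearing in mind that the noise enters only through $\sigma_{W,t}(\vtheta_0^W)\,|\varepsilon_{W,t}|\ge 0$, so the reachable set is a cone and the minorisation genuinely needs the support condition of CCC--GARCH Assumption~\ref{CGass:error} (an open set containing zero). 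With those repairs your scheme goes through, but as written the key step would fail.
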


\begin{proof}
	From \eqref{eq:lem51} in Lemma~\ref{lem:8},
	\begin{align*}
		\frac{\partial \sigma_{W,t}(\vtheta^W)}{\partial \omega_W} &= \frac{1}{1-\beta_W},\\
		\frac{\partial \sigma_{W,t}(\vtheta^W)}{\partial \alpha_W} &= W_{1}(t,\beta_W) \overset{\eqref{eq:lem51}}{=}\frac{1}{\alpha_W}\bigg[\sigma_{W,t}(\vtheta^W) - \frac{\omega_W}{1-\beta_W}\bigg].
	\end{align*}
	Therefore,
	\[
	\nabla \sigma_{W,t}(\vtheta^W) = \bigg(\frac{1}{1-\beta_W},\ \frac{1}{\alpha_W}\Big[\sigma_{W,t}(\vtheta^W) - \frac{\omega_W}{1-\beta_W}\Big],\ \frac{\partial \sigma_{W,t}(\vtheta^W)}{\partial \beta_W}\bigg)^\prime,\qquad W\in\{X,Y\}.
	\]
	Combining this with \eqref{eq:(3.11)}--\eqref{eq:(3.2)} and \eqref{eq:1.1}--\eqref{eq:1.2}, it follows for the generated $\sigma$-fields that
	\begin{multline*}
		\sigma\Big\{\big(X_t,\ Y_t,\ v_t(\vtheta^v),\ \nabla^\prime v_t(\vtheta^v),\ c_t(\vtheta^c),\ \nabla^\prime c_t(\vtheta^c) \big)^\prime\Big\}\\
		\subset\sigma\bigg\{\Big(X_t,\ Y_t,\ \sigma_{X,t}(\vtheta^X),\ \frac{\partial \sigma_{X,t}(\vtheta^X)}{\partial \beta_X},\ \sigma_{Y,t}(\vtheta^Y),\ \frac{\partial \sigma_{Y,t}(\vtheta^Y)}{\partial \beta_Y} \Big)^\prime\bigg\}.
	\end{multline*}
	Clearly, 
	\begin{equation}\label{eq:GHM}
		\bigg\{\Big(X_t,\ Y_t,\ \sigma_{X,t}(\vtheta^X),\ \frac{\partial \sigma_{X,t}(\vtheta^X)}{\partial \beta_X},\ \sigma_{Y,t}(\vtheta^Y),\ \frac{\partial \sigma_{Y,t}(\vtheta^Y)}{\partial \beta_Y} \Big)^\prime\bigg\}
	\end{equation}
	is a generalized hidden Markov model in the sense of \citet[Definition~3]{CC02} with hidden chain 
	\begin{equation}\label{eq:HC}
		\bigg\{\Big(\sigma_{X,t}(\vtheta^X),\ \frac{\partial \sigma_{X,t}(\vtheta^X)}{\partial \beta_X},\ \sigma_{Y,t}(\vtheta^Y),\ \frac{\partial \sigma_{Y,t}(\vtheta^Y)}{\partial \beta_Y} \Big)^\prime\bigg\}.
	\end{equation}
	Proposition~4 of \citet{CC02} then implies that if the process in \eqref{eq:HC} is strictly stationary and $\beta$-mixing, then the process in \eqref{eq:GHM} is also strictly stationary and $\beta$-mixing with mixing coefficients bounded by those of \eqref{eq:HC}.
	
	So in a first step we use Proposition~3 of \citet{CC02} to show that \eqref{eq:HC} is strictly stationary and $\beta$-mixing with exponential rate. Write
	\begin{equation}\label{eq:(4.1)}
		\begin{pmatrix}
			\sigma_{X,t}(\vtheta^X)\\ \frac{\partial \sigma_{X,t}(\vtheta^X)}{\partial \beta_X} \\ \sigma_{Y,t}(\vtheta^Y)\\ \frac{\partial \sigma_{Y,t}(\vtheta^Y)}{\partial \beta_Y}
		\end{pmatrix}
		=\begin{pmatrix}
			\omega_X\\ 0\\ \omega_Y\\ 0
		\end{pmatrix}
		+\underbrace{\begin{pmatrix}
				\alpha_X|\varepsilon_{X,t-1}| + \beta_X & 0 &0 &0 \\
				1 & \beta_X & 0& 0\\
				0 &0 & \alpha_Y|\varepsilon_{Y,t-1}| + \beta_Y &0 \\
				0 &0 & 1& \beta_Y
		\end{pmatrix}}_{=:\mA(\varepsilon_{X,t-1}, \varepsilon_{Y,t-1})}
		\begin{pmatrix}
			\sigma_{X,t-1}(\vtheta^X)\\ \frac{\partial \sigma_{X,t-1}(\vtheta^X)}{\partial \beta_X} \\ \sigma_{Y,t-1}(\vtheta^Y)\\ \frac{\partial \sigma_{Y,t-1}(\vtheta^Y)}{\partial \beta_Y}
		\end{pmatrix}.
	\end{equation}
	We now verify the assumptions of Proposition~3 of \citet{CC02}.
	
	\textbf{Condition (e):} Immediate from our CCC--GARCH Assumption~\ref{CGass:error}.
	
	\textbf{Condition ($A_0$):} Immediate from the definition of $\mA(\varepsilon_{X,t-1}, \varepsilon_{Y,t-1})$.
	
	\textbf{Condition ($A_1$):} The spectral radius (i.e., the largest eigenvalue in absolute value) of the matrix
	\[
	\mA(0,0)=\begin{pmatrix}
		\beta_X & 0 &0 &0 \\
		1 & \beta_X & 0& 0\\
		0 &0 & \beta_Y &0 \\
		0 &0 & 1& \beta_Y
	\end{pmatrix}
	\]
	is $\max\{\beta_X, \beta_Y\}$, since the eigenvalues of a triangular matrix equal the diagonal entries. By CCC--GARCH Assumption~\ref{CGass:param}, $\max\{\beta_X, \beta_Y\}<1$, as required.
	
	\textbf{Condition ($A_2^\prime$):} By the previous argument, the spectral radius of the matrix $\mA(\varepsilon_{X,t}, \varepsilon_{Y,t})$ is
	\[
	\rho\big(\mA(\varepsilon_{X,t}, \varepsilon_{Y,t})\big) = \max\big\{\alpha_X|\varepsilon_{X,t}|+\beta_X,\ \alpha_Y|\varepsilon_{Y,t}|+\beta_Y\big\}.
	\]
	Hence, by CCC--GARCH Assumption~\ref{CGass:param}, 
	\[
	\E\Big[\rho^2\big(\mA(\varepsilon_{X,t}, \varepsilon_{Y,t})\big)\Big] = \E\Big[\max\big\{\alpha_X|\varepsilon_{X,t}|+\beta_X,\ \alpha_Y|\varepsilon_{Y,t}|+\beta_Y\big\}^2\Big]<1,
	\]
	as desired.
	
	Proposition~3 now implies that if the recursion in \eqref{eq:(4.1)} is initialized in the infinite past (or, equivalently, initialized from the stationary distribution), then the process in \eqref{eq:HC} is strictly stationary and $\beta$-mixing with exponential decay. In that case, \citet[Proposition~4]{CC02} show that the process in \eqref{eq:GHM} is also strictly stationary and $\beta$-mixing with exponential decay. 
\end{proof}

\subsection{Verification of Assumption~\ref{ass:cons}}\label{Ass1}

We verify items~\ref{it:id}--\ref{it:mom bounds cons} of Assumption~\ref{ass:cons} in this subsection. 
In doing so, we heavily draw on \citet[Appendix~SA.2]{PZC19}.

\textbf{Assumption~\ref{ass:cons}~\ref{it:id}:} Recall from Section~\ref{PR} that 
\begin{equation*}
	\begin{pmatrix}
		\VaR_{t,\beta}\\
		\CoVaR_{t,\alpha\mid\beta}
	\end{pmatrix}
	= \begin{pmatrix}
		v_t(\vtheta_{0}^{v})\\
		c_t(\vtheta_{0}^{c})
	\end{pmatrix}
\end{equation*}
for $\vtheta_0^v=\vtheta_0^{X}\odot(v,v,1)^\prime$ and $\vtheta_0^c=\vtheta_0^{Y}\odot(v,v,1)^\prime$, such that correct specification of the (VaR, CoVaR) model is immediate.

\textbf{Assumption~\ref{ass:cons}~\ref{it:str stat}:} Direct from Lemma~\ref{lem:sd}. (Note that covariates $\mZ_t$ do not appear in our current example.)

\textbf{Assumption~\ref{ass:cons}~\ref{it:cond dist}:} Exploit \eqref{eq:CG} and the fact that $\sigma_{X,t}(\vtheta_0^X)\in\mathcal{F}_{t-1}$ (from \eqref{eq:lem51}) to write 
\begin{align}
	F_t(x,y) & =\P_{t-1}\big\{X_t\leq x,\ Y_t\leq y\big\}\notag\\
	&= \P_{t-1}\big\{\varepsilon_{X,t}\leq x/\sigma_{X,t}(\vtheta_0^X),\ \varepsilon_{Y,t}\leq y/\sigma_{Y,t}(\vtheta_0^Y)\big\}\notag\\
	&= F_{\varepsilon_X,\varepsilon_Y}\big(x/\sigma_{X,t}(\vtheta_0^X),\ y/\sigma_{Y,t}(\vtheta_0^Y)\big).\label{eq:(14.1)}
\end{align}
Since $\sigma_{W,t}(\vtheta_0^W)\geq\omega_{W,0}\geq\delta_1>0$ ($W\in\{X,Y\}$), the fact that $F_t(\cdot,\cdot)$ possesses a positive Lebesgue density follows from CCC--GARCH Assumption~\ref{CGass:error}.

\textbf{Assumption~\ref{ass:cons}~\ref{it:Lipschitz cons}:} Let $\sigma_{X,t}:=\sigma_{X,t}(\vtheta_0^X)$ and $\sigma_{Y,t}:=\sigma_{Y,t}(\vtheta_0^Y)$ for brevity.
From \eqref{eq:(4.11)} below and CCC--GARCH Assumption~\ref{CGass:error} it follows that
\begin{align*}
	\big|f_t^X(x)-f_t^X(x^\prime)\big| &=\frac{1}{\sigma_{X,t}} \big|f_{\varepsilon_X}(x/\sigma_{X,t}) - f_{\varepsilon_X}(x^\prime/\sigma_{X,t})\big|\\
	&\leq \frac{K}{\sigma_{X,t}}\bigg|\frac{x}{\sigma_{X,t}}-\frac{x^\prime}{\sigma_{X,t}}\bigg|=\frac{K}{\sigma_{X,t}^2}|x-x^\prime|\\
	&\leq \frac{K}{\omega_{X,0}^2}|x-x^\prime|\\
	&\leq \frac{K}{\delta_1^2}|x-x^\prime|.
\end{align*}
The Lipschitz continuity of $f_t^{Y}(\cdot)$ follows similarly.

From \eqref{eq:(14.1)}, we have that
\[
\partial_2 F_t(x,y)=\frac{1}{\sigma_{Y,t}}\partial_2 F_{\varepsilon_X,\varepsilon_Y}\Big(\frac{x}{\sigma_{X,t}}, \frac{y}{\sigma_{Y,t}}\Big),
\]
such that, by CCC--GARCH Assumption~\ref{CGass:error},
\begin{align*}
	\big|\partial_2 F_t(x,y) - \partial_2 F_t(x,y^\prime)\big| &=\frac{1}{\sigma_{Y,t}}\bigg|\partial_2 F_{\varepsilon_X,\varepsilon_Y}\Big(\frac{x}{\sigma_{X,t}}, \frac{y}{\sigma_{Y,t}}\Big) - \partial_2 F_{\varepsilon_X,\varepsilon_Y}\Big(\frac{x}{\sigma_{X,t}}, \frac{y^\prime}{\sigma_{Y,t}}\Big)\bigg|\\
	&\leq \frac{K}{\sigma_{Y,t}}\bigg|\frac{y}{\sigma_{Y,t}}-\frac{y^\prime}{\sigma_{Y,t}}\bigg|=\frac{K}{\sigma_{Y,t}^2}|y-y^\prime| \\
	&\leq \frac{K}{\omega_{Y,0}^2}|y-y^\prime|\leq \frac{K}{\delta_1^2}|y-y^\prime|.
\end{align*}

\textbf{Assumption~\ref{ass:cons}~\ref{it:cond dens}:}
Because $F_t^{X}(x)=F_{\varepsilon_X}\big(x/\sigma_{X,t}(\vtheta_0^X)\big)$ from \eqref{eq:(14.1)}, it holds that
\begin{equation}\label{eq:(4.11)}
	f_t^X(x) = \frac{1}{\sigma_{X,t}(\vtheta_0^X)}f_{\varepsilon_X}\big(x/\sigma_{X,t}(\vtheta_0^X)\big).
\end{equation}
By \eqref{eq:CG}, $\sigma_{X,t}(\vtheta_0^X)=\omega_{X,0} + \alpha_{X,0}|X_{t-1}| + \beta_{X,0}\sigma_{X,t-1}(\vtheta_0^X)\geq\omega_{X,0}\geq\delta_1>0$, such that with $f_{\varepsilon_X}(\cdot)\leq K$ from CCC--GARCH Assumption~\ref{CGass:error} we obtain that
\[
f_t^X(x) \leq\frac{K}{\delta_1}.
\]

Moreover, choose $f_1=f_{\varepsilon_X}\big(\VaR_{\beta}(F_{\varepsilon_{X}})\big) / \VaR_{\beta_{\sigma}}\big(\sigma_{X,t}(\vtheta_0^X)\big)>0$ for some $\beta_{\sigma}\in(0,1)$, where $\VaR_{\beta_{\sigma}}\big(\sigma_{X,t}(\vtheta_0^X)\big)$ denotes the $\beta_{\sigma}$-quantile of the stationary distribution of $\sigma_{X,t}(\vtheta_0^X)$ (see the proof of Lemma~\ref{lem:sd}).
Note that because $\sigma_{X,t}(\vtheta_0^X)\geq\omega_{X,0}>0$, the $\beta_{\sigma}$-quantile $\VaR_{\beta_{\sigma}}\big(\sigma_{X,t}(\vtheta_0^X)\big)$ is necessarily positive.
Also, $f_{\varepsilon_X}\big(\VaR_{\beta}(F_{\varepsilon_{X}})\big)>0$ by CCC--GARCH Assumption~\ref{CGass:error}; specifically the assumption that $f_{\varepsilon_X,\varepsilon_Y}(x,y)>0$ for all $(x,y)^\prime\in\mathbb{R}^2$ such that $F_{\varepsilon_X,\varepsilon_Y}(x,y)\in(0,1)$.
Then, we obtain that
\begin{eqnarray*}
	\P\Big\{f_t^{X}\big(v_t(\vtheta_0^v)\big)> f_1\Big\} &\overset{\eqref{eq:(4.11)}}{=} & \P\bigg\{f_{\varepsilon_X}\Big(\frac{v_t(\vtheta_0^v)}{\sigma_{X,t}(\vtheta_0^X)}\Big)> \sigma_{X,t}(\vtheta_0^X)f_1\bigg\}\\
	&\overset{\eqref{eq:(3.11)}}{=}&  \P\Big\{f_{\varepsilon_X}\big(\VaR_{\beta}(F_{\varepsilon_{X}})\big)> \sigma_{X,t}(\vtheta_0^X)f_1\Big\}\\
	&=& \P\Big\{\sigma_{X,t}(\vtheta_0^X)<\VaR_{\beta_{\sigma}}\big(\sigma_{X,t}(\vtheta_0^X)\big)\Big\}\\
	&=& \beta_{\sigma}>0.
\end{eqnarray*}

Similarly, set
\[
f_2=\frac{\int_{\VaR_{\beta}(F_{\varepsilon_{X}})}^{\infty} f_{\varepsilon_X,\varepsilon_{Y}}\big(x, \CoVaR_{\alpha\mid\beta}(F_{\varepsilon_{X}, \varepsilon_{Y}})\big)\D x}{\VaR_{\beta_{\sigma}}\big(\sigma_{Y,t}(\vtheta_0^Y)\big)}>0,
\]
which is positive by CCC--GARCH Assumption~\ref{CGass:error}.
Then, we obtain that
\begin{equation*}
	\P\bigg\{\int_{v_t(\vtheta_0^v)}^{\infty}f_t\big(x, c_t(\vtheta_0^c)\big)\D x> f_2\bigg\}= \beta_{\sigma}>0,
\end{equation*}
since
\begin{align*}
	\int_{v_t(\vtheta_0^v)}^{\infty}f_t\big(x, c_t(\vtheta_0^c)\big)\D x &=\int_{v_t(\vtheta_0^v)}^{\infty}\frac{1}{\sigma_{X,t}(\vtheta_0^X)\sigma_{Y,t}(\vtheta_0^Y)}f_{\varepsilon_X, \varepsilon_Y}\big(x/\sigma_{X,t}(\vtheta_0^X), c_t(\vtheta_0^c)/\sigma_{Y,t}(\vtheta_0^Y)\big)\D x \\
	&= \int_{v_t(\vtheta_0^v)/\sigma_{X,t}(\vtheta_0^X)}^{\infty}\frac{1}{\sigma_{Y,t}(\vtheta_0^Y)}f_{\varepsilon_X, \varepsilon_Y}\big(x, c_t(\vtheta_0^c)/\sigma_{Y,t}(\vtheta_0^Y)\big)\D x\\
	&= \frac{1}{\sigma_{Y,t}(\vtheta_0^Y)} \int_{\VaR_{\beta}(F_{\varepsilon_{X}})}^{\infty}f_{\varepsilon_X, \varepsilon_Y}\big(x, \CoVaR_{\alpha\mid\beta}(F_{\varepsilon_{X}, \varepsilon_{Y}})\big)\D x,
\end{align*}
where the first equality uses \eqref{eq:(14.1)}, the second exploits an obvious substitution, and the last one \eqref{eq:(3.11)} and \eqref{eq:(3.2)}.

\textbf{Assumption~\ref{ass:cons}~\ref{it:compact}:} The compactness of the parameter space directly follows from CCC--GARCH Assumption~\ref{CGass:param} in combination with \eqref{eq:(D.6p)}.

\textbf{Assumption~\ref{ass:cons}~\ref{it:ULLN}:} By Theorem~21.9 in \citet{Dav94} it suffices to show that
\begin{itemize}
	\item[(a)] $\mS\Big(\big(v_t(\vtheta^v), c_t(\vtheta^c)\big)^\prime, (X_t, Y_t)^\prime\Big)$ obeys a pointwise law of large numbers;
	\item[(b)] $\mS\Big(\big(v_t(\vtheta^v), c_t(\vtheta^c)\big)^\prime, (X_t, Y_t)^\prime\Big)$ is stochastically equicontinuous (s.e.).
\end{itemize}

We first prove (a). By Lemma~\ref{lem:sd} and standard mixing inequalities, $\mS\Big(\big(v_t(\vtheta^v), c_t(\vtheta^c)\big)^\prime, (X_t, Y_t)^\prime\Big)$ is strictly stationary and $\alpha$-mixing of any size and, hence, ergodic by \citet[Proposition~3.44]{Whi01}. The ergodic theorem \citep[e.g.,][Theorem~3.34]{Whi01} then implies (a) if the (componentwise) moments of $\mS\Big(\big(v_t(\vtheta^v), c_t(\vtheta^c)\big)^\prime, (X_t, Y_t)^\prime\Big)$ exist. To show this, use Lemma~\ref{lem:9} to obtain that
\begin{align*}
	\E\big|S^{\VaR}\big(v_t(\vtheta^v),X_t\big)\big|&\leq \E|X_t| + \E|v_t(\vtheta^v)|\leq \E|X_t| + \E\big[V(\mathcal{F}_{t-1})\big]\leq C<\infty,\\
	\E\Big|S^{\CoVaR}\Big(\big(v_t(\vtheta^v), c_t(\vtheta^c)\big)^\prime, (X_t, Y_t)^\prime\Big)\Big|&\leq \E|Y_t| + \E|c_t(\vtheta^c)|\leq \E|Y_t| + \E\big[C(\mathcal{F}_{t-1})\big]\leq C<\infty,
\end{align*}
where we used that $\E\big[V(\mathcal{F}_{t-1})\big]\leq K$ and $\E\big[C(\mathcal{F}_{t-1})\big]\leq K$ from the proof of Assumption~\ref{ass:cons}~\ref{it:mom bounds cons} below, and the fact that $\E|X_t|$ and $\E|Y_t|$ are finite by CCC--GARCH Assumption~\ref{CGass:mom}. Thus, the (componentwise) moments of $\mS\Big(\big(v_t(\vtheta^v), c_t(\vtheta^c)\big)^\prime, (X_t, Y_t)^\prime\Big)$ exist and the pointwise law of large numbers follows.

To establish (b) for the first component of $\mS$, it suffices to show that
\[
\bigg|\frac{1}{n}\sum_{t=1}^{n}S^{\VaR}\big(v_t(\vtheta^v),X_t\big) - S^{\VaR}\big(v_t(\widetilde{\vtheta}^v),X_t\big)\bigg|\leq B_n\big\Vert\vtheta^v-\widetilde{\vtheta}^v\big\Vert\qquad\text{for all }\vtheta^v,\widetilde{\vtheta}^v\in\mTheta^v,
\]
where $B_n>0$ is a stochastic sequence not depending on $\vtheta^v$ and $\widetilde{\vtheta}^v$, with $B_n=O_{\P}(1)$ \citep[Theorem~21.10]{Dav94}.
Write
\begin{align}
	& \big|S^{\VaR}\big(v_t(\vtheta^v), X_t\big) - S^{\VaR}\big(v_t(\widetilde{\vtheta}^v), X_t\big)\big|\notag\\
	& =\Big|\big[\1_{\{X_t\leq v_t(\vtheta^v)\}}-\beta\big]\big[v_t(\vtheta^v) - X_t\big] - \big[\1_{\{X_t\leq v_t(\widetilde{\vtheta}^v)\}}-\beta\big]\big[v_t(\widetilde{\vtheta}^v) - X_t\big]\Big|\notag\\
	& = \Big|\1_{\{X_t\leq v_t(\vtheta^v)\}}\big[v_t(\vtheta^v) - X_t\big] - \1_{\{X_t\leq v_t(\widetilde{\vtheta}^v)\}}\big[v_t(\widetilde{\vtheta}^v) - X_t\big] + \beta\big[v_t(\widetilde{\vtheta}^v) - v_t(\vtheta^v)\big]\Big|\notag\\
	& \leq \Big|\1_{\{X_t\leq v_t(\vtheta^v)\}}\big[v_t(\vtheta^v) - X_t\big] - \1_{\{X_t\leq v_t(\widetilde{\vtheta}^v)\}}\big[v_t(\widetilde{\vtheta}^v) - X_t\big]\Big| + \beta\big|v_t(\widetilde{\vtheta}^v) - v_t(\vtheta^v)\big|\notag\\
	&= A_{2t} + B_{2t}.\label{eq:(5.11)}
\end{align}
Consider the two terms separately. First, by Lemma~\ref{lem:9},
\[
B_{2t} \leq \beta\sup_{\vtheta^v\in\mTheta^v}\big\Vert\nabla v_t(\vtheta^v)\big\Vert\cdot \big\Vert\vtheta^v-\widetilde{\vtheta}^v\big\Vert\leq \beta V_1(\mathcal{F}_{t-1}) \big\Vert\vtheta^v-\widetilde{\vtheta}^v\big\Vert.
\]
Second,
\begin{align*}
	A_{2t} &= \frac{1}{2}\Big|v_t(\vtheta^v) - X_t + \big|v_t(\vtheta^v) - X_t\big| - \big\{v_t(\widetilde{\vtheta}^v) - X_t + \big|v_t(\widetilde{\vtheta}^v) - X_t\big|\big\}\Big|\\
	& \leq \frac{1}{2}\Big|v_t(\vtheta^v) - X_t - \big\{v_t(\widetilde{\vtheta}^v) - X_t\big\}\Big| + \frac{1}{2}\Big| \big|v_t(\vtheta^v) - X_t\big| - \big|v_t(\widetilde{\vtheta}^v) - X_t\big|\Big|\\
	& \leq \big|v_t(\vtheta^v) - v_t(\widetilde{\vtheta}^v)\big|\\
	&\leq \sup_{\vtheta^v\in\mTheta^v}\big\Vert\nabla v_t(\vtheta^v)\big\Vert\cdot \big\Vert\vtheta^v-\widetilde{\vtheta}^v\big\Vert\\
	&\leq V_1(\mathcal{F}_{t-1})\big\Vert\vtheta^v-\widetilde{\vtheta}^v\big\Vert,
\end{align*}
where we used Lemma~\ref{lem:9} again in the last step.

Combining the bounds for $A_{2t}$ and $B_{2t}$, we obtain from \eqref{eq:(5.11)} that
\begin{align*}
	\bigg|\frac{1}{n}\sum_{t=1}^{n} S^{\VaR}\big(v_t(\vtheta^v), X_t\big) - S^{\VaR}\big(v_t(\widetilde{\vtheta}^v), X_t\big)\bigg| & \leq\frac{1+\beta}{n}\sum_{t=1}^{n}V_1(\mathcal{F}_{t-1}) \big\Vert\vtheta^v-\widetilde{\vtheta}^v\big\Vert\\
	&=:B_n\big\Vert\vtheta^v-\widetilde{\vtheta}^v\big\Vert.
\end{align*}
From Assumption~\ref{ass:cons}~\ref{it:mom bounds cons} (which we prove below),
\[
\sup_{n\in\mathbb{N}}\E[B_n]= \frac{1+\beta}{n}\sum_{t=1}^{n}\E\big[V_1(\mathcal{F}_{t-1})\big]\leq (1+\beta)K<\infty,
\]
such that $B_n=O_{\P}(1)$ \citep[Theorem~12.11]{Dav94}. It follows that the first component is s.e.

It remains to prove that the second component of $\mS\Big(\big(v_t(\vtheta^v), c_t(\vtheta^c)\big)^\prime, (X_t, Y_t)^\prime\Big)$ is s.e., i.e., for all $\varepsilon>0$ there exists some $\delta>0$, such that
\begin{multline*}
	\limsup_{n\to\infty} \P\bigg\{\sup_{\vtheta\in\mTheta}\sup_{\Vert\widetilde{\vtheta}-\vtheta\Vert\leq\delta} \frac{1}{n}\sum_{t=1}^{n}\Big|S^{\CoVaR}\Big(\big(v_t(\vtheta^v), c_t(\vtheta^c)\big)^\prime, (X_t, Y_t)^\prime\Big) \\
	- S^{\CoVaR}\Big(\big(v_t(\widetilde{\vtheta}^v), c_t(\widetilde{\vtheta}^c)\big)^\prime, (X_t, Y_t)^\prime\Big)\Big|\geq\varepsilon\bigg\}<\varepsilon.
\end{multline*}
Write
\begin{align*}
	&\Big|S^{\CoVaR}\Big(\big(v_t(\vtheta^v), c_t(\vtheta^c)\big)^\prime, (X_t, Y_t)^\prime\Big) - S^{\CoVaR}\Big(\big(v_t(\widetilde{\vtheta}^v), c_t(\widetilde{\vtheta}^c)\big)^\prime, (X_t, Y_t)^\prime\Big)\Big|\\
	&= \Big|\1_{\{X_t> v_t(\vtheta^v)\}} \big(\1_{\{Y_t\leq c_t(\vtheta^c)\}}-\alpha\big)\big(c_t(\vtheta^c) - Y_t\big) - \1_{\{X_t> v_t(\widetilde{\vtheta}^v)\}} \big(\1_{\{Y_t\leq c_t(\widetilde{\vtheta}^c)\}}-\alpha\big)\big(c_t(\widetilde{\vtheta}^c) - Y_t\big)\Big|\\
	&\leq \Big|\1_{\{X_t> v_t(\vtheta^v)\}} \Big[\big(\1_{\{Y_t\leq c_t(\vtheta^c)\}}-\alpha\big)\big(c_t(\vtheta^c) - Y_t\big) - \big(\1_{\{Y_t\leq c_t(\widetilde{\vtheta}^c)\}}-\alpha\big)\big(c_t(\widetilde{\vtheta}^c) - Y_t\big)\Big]\Big|\\
	&\hspace{2cm} + \Big|\big(\1_{\{X_t> v_t(\vtheta^v)\}}- \1_{\{X_t> v_t(\widetilde{\vtheta}^v)\}}\big)  \big(\1_{\{Y_t\leq c_t(\widetilde{\vtheta}^c)\}}-\alpha\big)\big(c_t(\widetilde{\vtheta}^c) - Y_t\big)\Big|\\
	&=:A_{3t} + B_{3t}.
\end{align*} 
We show that both terms are s.e.~individually. For the first, we have that
\begin{align*}
	A_{3t}&\leq \Big|\big(\1_{\{Y_t\leq c_t(\vtheta^c)\}}-\alpha\big)\big(c_t(\vtheta^c) - Y_t\big) - \big(\1_{\{Y_t\leq c_t(\widetilde{\vtheta}^c)\}}-\alpha\big)\big(c_t(\widetilde{\vtheta}^c) - Y_t\big)\Big|\\
	&= \big|S^{\VaR}\big(c_t(\vtheta^c), Y_t\big) - S^{\VaR}\big(c_t(\widetilde{\vtheta}^c), Y_t\big)\big|,
\end{align*}
such that stochastic equicontinuity follows as above. 

For the second term, we obtain using Markov's inequality that
\begin{align*}
	&\P\bigg\{\sup_{\vtheta\in\mTheta}\sup_{\Vert\widetilde{\vtheta}-\vtheta\Vert\leq\delta}\frac{1}{n}\sum_{t=1}^{n}B_{3t}\geq\varepsilon\bigg\} \\
	&\leq \frac{1}{\varepsilon}\E\bigg[\sup_{\vtheta\in\mTheta}\sup_{\Vert\widetilde{\vtheta}-\vtheta\Vert\leq\delta}\frac{1}{n}\sum_{t=1}^{n}B_{3t}\bigg]\\
	&= \frac{1}{\varepsilon}\E\bigg[\sup_{\vtheta\in\mTheta}\sup_{\Vert\widetilde{\vtheta}-\vtheta\Vert\leq\delta}\frac{1}{n}\sum_{t=1}^{n}\Big|\big(\1_{\{X_t> v_t(\vtheta^v)\}}- \1_{\{X_t> v_t(\widetilde{\vtheta}^v)\}}\big)\big(\1_{\{Y_t\leq c_t(\widetilde{\vtheta}^c)\}}-\alpha\big)\big(c_t(\widetilde{\vtheta}^c) - Y_t\big)\Big|\bigg]\\
	&\underset{(\delta\downarrow0)}{\longrightarrow}\frac{1}{\varepsilon}\E\bigg[\sup_{\vtheta\in\mTheta}\frac{1}{n}\sum_{t=1}^{n}\Big|\1_{\{X_t= v_t(\vtheta^v)\}}\big(\1_{\{Y_t\leq c_t(\vtheta^c)\}}-\alpha\big)\big(c_t(\vtheta^c) - Y_t\big)\Big|\bigg]\\
	&=O(n^{-1})
\end{align*}
from Assumption~\ref{ass:an}~\ref{it:eq bound} (which we prove below). Therefore, the second term is also s.e., such that (b) follows.

\textbf{Assumption~\ref{ass:cons}~\ref{it:unique id}:} We first show that
\begin{equation}\label{eq:vt equiv}
	v_t(\vtheta^v)\overset{\text{a.s.}}{=}v_t(\vtheta_0^v)\text{ for all }t\in\mathbb{N}\quad\Longleftrightarrow\quad \vtheta^v=\vtheta_0^v.
\end{equation}
Since the direction ``$\Longleftarrow$'' is immediate, we only show the other direction.
Hence, suppose that $\P\big\{v_t(\vtheta^v)=v_t(\vtheta_0^v)\big\}=1$ for all $t\in\mathbb{N}$. 
This implies
\begin{align}
	& \P\big\{v_t(\vtheta^v)=v_t(\vtheta_0^v)\quad \forall t\in\mathbb{N}\big\} = 1\notag\\
	\overset{\eqref{eq:(3.11)}}{\Longrightarrow}\quad & \P\big\{\sigma_{X,t}(\vtheta^X)=\sigma_{X,t}(\vtheta_0^X)\quad \forall t\in\mathbb{N}\big\} = 1\label{eq:vola eq}\\
	\overset{\eqref{eq:CG}}{\Longrightarrow}\quad& \P\big\{\omega_{X} + \alpha_{X}|X_{t-1}| + \beta_{X}\sigma_{X,t-1}(\vtheta^X)=\omega_{X,0} + \alpha_{X,0}|X_{t-1}| + \beta_{X,0}\sigma_{X,t-1}(\vtheta_0^X)\quad \forall t\in\mathbb{N}\big\} = 1\notag\\
	\overset{\eqref{eq:vola eq}}{\Longrightarrow}\quad& \P\big\{\omega_{X} + \alpha_{X}|X_{t-1}| + \beta_{X}\sigma_{X,t-1}(\vtheta_0^X)=\omega_{X,0} + \alpha_{X,0}|X_{t-1}| + \beta_{X,0}\sigma_{X,t-1}(\vtheta_0^X)\quad \forall t\in\mathbb{N}\big\} = 1\notag\\
	\Longrightarrow\quad&\P\big\{\alpha_X = \alpha_{X,0}\ \wedge\ \omega_{X} + \beta_{X}\sigma_{X,t-1}(\vtheta_0^X)=\omega_{X,0} + \beta_{X,0}\sigma_{X,t-1}(\vtheta_0^X)\quad \forall t\in\mathbb{N}\big\} = 1\notag\\
	\Longrightarrow\quad& \alpha_{X}=\alpha_{X,0}\ \wedge\ \ \P\big\{\omega_{X} + \beta_{X}\sigma_{X,t-1}(\vtheta_0^X)=\omega_{X,0} + \beta_{X,0}\sigma_{X,t-1}(\vtheta_0^X)\quad \forall t\in\mathbb{N}\big\} = 1,\label{eq:vola eq2}
\end{align}
where the penultimate line follows from the fact that $X_{t-1}\mid \sigma_{X,t-1}(\vtheta_0^X)\sim F_{\varepsilon_X}\big(\cdot / \sigma_{X,t-1}(\vtheta_0^X)\big)$.

We now show that $\P\big\{\omega_{X} + \beta_{X}\sigma_{X,t-1}(\vtheta_0^X)=\omega_{X,0} + \beta_{X,0}\sigma_{X,t-1}(\vtheta_0^X)\quad \forall t\in\mathbb{N}\big\} = 1$ implies $\beta_X=\beta_{X,0}$. Proceed by contradiction and assume $\beta_{X}\neq\beta_{X,0}$. Then,
\begin{align*}
	1 &= \P\bigg\{\sigma_{X,t-1}(\vtheta_0^X) = \frac{\omega_{X,0}-\omega_X}{\beta_X-\beta_{X,0}}\quad \forall t\in\mathbb{N}\bigg\}\\
	&=\P\bigg\{\omega_{X,0} + \alpha_{X,0}|X_{t-2}| + \beta_{X,0}\sigma_{X,t-2}(\vtheta_0^X) = \frac{\omega_{X,0}-\omega_X}{\beta_X-\beta_{X,0}}\quad \forall t\in\mathbb{N}\bigg\}.
\end{align*}
This, however, contradicts the fact that $X_{t-2}\mid\sigma_{X,t-2}(\vtheta_0^X)\sim F_{\varepsilon_X}\big(\cdot / \sigma_{X,t-2}(\vtheta_0^X)\big)$. We conclude that $\beta_X=\beta_{X,0}$.

It remains to show that $\omega_{X}=\omega_{X,0}$. From $\beta_X=\beta_{X,0}$ and \eqref{eq:vola eq2}, it follows that
\[
\P\big\{\omega_{X} + \beta_{X,0}\sigma_{X,t-1}(\vtheta_0^X)=\omega_{X,0} + \beta_{X,0}\sigma_{X,t-1}(\vtheta_0^X)\quad \forall t\in\mathbb{N}\big\} = 1
\]
and, therefore, $\omega_X=\omega_{X,0}$.

Overall, we have obtained that $\vtheta^X=\vtheta_0^X$ and, hence, $\vtheta^v=\vtheta^{X}\odot(v,v,1)^\prime=\vtheta_0^{X}\odot(v,v,1)^\prime=\vtheta_0^v$, as desired.

It may be shown similarly that 
\begin{equation}\label{eq:ct equiv}
	c_t(\vtheta^c)\overset{\text{a.s.}}{=}c_t(\vtheta_0^c)\text{ for all }t\in\mathbb{N}\quad\Longleftrightarrow\quad \vtheta^c=\vtheta_0^c.
\end{equation}
The only difference is that one uses \eqref{eq:(3.2)} instead of \eqref{eq:(3.11)} to obtain the analog of \eqref{eq:vola eq}, i.e., $\P\big\{\sigma_{Y,t}(\vtheta^Y)=\sigma_{Y,t}(\vtheta_0^Y)\quad \forall t\in\mathbb{N}\big\} = 1$. The remainder of the proof follows along similar lines.

With the preliminary results in \eqref{eq:vt equiv} and \eqref{eq:ct equiv} at hand, we now proceed to establish (a) and (b).
In fact, we only prove (a), as the proof of (b) is analogous.
We proceed by contradiction and assume that there exists some $\xi>0$ such that for all $\tau>0$ it follows that
\[
\liminf_{n\to\infty}\frac{1}{n}\sum_{t=1}^{n}\P\big\{|v_t(\vtheta^v) - v_t(\vtheta_0^v)|>\tau\mid f_t^X\big(v_t(\vtheta_0^v)\big)>f_1\big\}=0
\]
whenever $\big\Vert\vtheta^v - \vtheta_0^v\big\Vert\geq\xi$.
Since all quantities involved in the probability are stationary by Lemma~\ref{lem:sd}, it necessarily follows that $\P\big\{|v_t(\vtheta^v) - v_t(\vtheta_0^v)|>\tau\mid f_t^X\big(v_t(\vtheta_0^v)\big)>f_1\big\}=0$ for all $t\in\mathbb{N}$ and all $\tau>0$.
This implies that $\P\big\{v_t(\vtheta^v) = v_t(\vtheta_0^v)\mid f_t^X\big(v_t(\vtheta_0^v)\big)>f_1\big\}=1$ for all $t\in\mathbb{N}$.
However, we know from \eqref{eq:vt equiv} that $v_t(\vtheta^v) = v_t(\vtheta_0^v)$ for $t\in\mathbb{N}$ is equivalent to $\vtheta^v=\vtheta_0^v$, whence $\P\big\{\vtheta^v = \vtheta_0^v\mid f_t^X\big(v_t(\vtheta_0^v)\big)>f_1\big\}=1$ for all $t\in\mathbb{N}$.
For this to be true, it necessarily has to hold that $\vtheta^v=\vtheta_0^v$, contradicting the fact that $\big\Vert\vtheta^v - \vtheta_0^v\big\Vert\geq\xi$.

\textbf{Assumption~\ref{ass:cons}~\ref{it:smooth}:} Combining \eqref{eq:(3.11)} and \eqref{eq:(3.2)} with \eqref{eq:lem51} gives
\begin{align}
	v_t(\vtheta^v) &= v\bigg[\frac{\omega_X}{1-\beta_X} + \alpha_X X_1(t,\beta_X)\bigg]\overset{\eqref{eq:para trafo 1}}{=}\frac{\omega_v}{1-\beta_v} + \alpha_v X_1(t,\beta_v),\label{eq:nice rep1}\\
	c_t(\vtheta^c) &= c\bigg[\frac{\omega_Y}{1-\beta_Y} + \alpha_Y Y_1(t,\beta_Y)\bigg]\overset{\eqref{eq:para trafo 2}}{=}\frac{\omega_c}{1-\beta_c} + \alpha_c Y_1(t,\beta_c).\label{eq:nice rep2}
\end{align}
From these representations it is obvious that $v_t(\vtheta^v)$ and $c_t(\vtheta^{c})$ are $\mathcal{F}_{t-1}$-measurable and a.s.\ continuous in $\vtheta^v$ and $\vtheta^{c}$, respectively.

\textbf{Assumption~\ref{ass:cons}~\ref{it:diff c}:} The continuous differentiability of $v_t(\vtheta^v)$ follows directly from \eqref{eq:nice rep1}.

\textbf{Assumption~\ref{ass:cons}~\ref{it:bound}:} The bounds can be obtained easily from Lemma~\ref{lem:9}.

\textbf{Assumption~\ref{ass:cons}~\ref{it:mom bounds cons}:} The conditions $\E|X_t|\leq K$ and $\E|Y_t|^{1+\iota}\leq K$ are immediate from CCC--GARCH Assumption~\ref{CGass:mom}.

By the Cauchy--Schwarz inequality, the remaining moment conditions hold if we can show that $\E\big[V(\mathcal{F}_{t-1})\big]\leq K$, $\E\big[V_1^2(\mathcal{F}_{t-1})\big]\leq K$ and $\E\big[C^2(\mathcal{F}_{t-1})\big]\leq K$. From Lemmas~\ref{lem:9} and~\ref{lem:10}, these conditions are satisfied when $\E|X_t|^2\leq K$ and $\E|Y_t|^2\leq K$. This is true by CCC--GARCH Assumption~\ref{CGass:mom}.

\subsection{Verification of Assumption~\ref{ass:an}}\label{Ass2}

We verify items~\ref{it:int}--\ref{it:mixing} of Assumption~\ref{ass:an} in this subsection. 

\textbf{Assumption~\ref{ass:an}~\ref{it:int}:} Follows directly from CCC--GARCH Assumption~\ref{CGass:param} and \eqref{eq:(D.6p)}.

\textbf{Assumption~\ref{ass:an}~\ref{it:diff}:} That $v_t(\cdot)$ and $c_t(\cdot)$ are a.s.\ twice continuously differentiable can be seen from \eqref{eq:nice rep1}--\eqref{eq:nice rep2}. 

Also, the gradients are different from $\vzero$ because from \eqref{eq:para trafo 1}--\eqref{eq:para trafo 2}, \eqref{eq:1.1}--\eqref{eq:1.2} and \eqref{eq:(8.2)}
\[
\nabla v_t(\vtheta^v)=\begin{pmatrix}\frac{1}{1-\beta_v}\\ X_1(t,\beta_v)\\ \frac{\omega_v}{(1-\beta_v)^2}+\alpha_v X_2(t,\beta_v)\end{pmatrix}\qquad\text{and}\qquad \nabla c_t(\vtheta^c)=\begin{pmatrix}\frac{1}{1-\beta_c}\\ Y_1(t,\beta_c)\\ \frac{\omega_c}{(1-\beta_c)^2}+\alpha_c Y_2(t,\beta_c)\end{pmatrix}.
\]

\textbf{Assumption~\ref{ass:an}~\ref{it:bound2.1}:} That $\norm{\nabla^2v_t(\vtheta^v)}\leq V_2(\mathcal{F}_{t-1})$ is immediate from Lemma~\ref{lem:9}. 

To prove the remaining Lipschitz-type condition, we first show that
\begin{equation}\label{eq:(10.1)}
	\big\Vert\nabla^2 \sigma_{X,t}(\vtheta^X) - \nabla^2 \sigma_{X,t}(\widetilde{\vtheta}^X)\big\Vert\leq \Sigma_X(\mathcal{F}_{t-1})\big\Vert\vtheta^X-\widetilde{\vtheta}^X\big\Vert
\end{equation}
for some $\Sigma_X(\mathcal{F}_{t-1})$.
Since the Frobenius norm is less than the sum of the absolute values of the matrix entries, we obtain from \eqref{eq:(p.48)} that
\begin{align*}
	\big\Vert\nabla^2 \sigma_{X,t}(\vtheta^X) - \nabla^2 \sigma_{X,t}(\widetilde{\vtheta}^X)\big\Vert &\leq 2\bigg|\frac{1}{(1-\beta_X)^2} - \frac{1}{(1-\widetilde{\beta}_X)^2}\bigg| + 2\big|X_2(t,\beta_X)- X_2(t,\widetilde{\beta}_X)\big| \\
	&\hspace{1cm} + \big|\alpha_X X_3(t,\beta_X) - \widetilde{\alpha}_X X_3(t,\widetilde{\beta}_X)\big| + 2\bigg|\frac{\omega_X}{(1-\beta_X)^3} - \frac{\widetilde{\omega}_X}{(1-\widetilde{\beta}_X)^3}\bigg|\\
	&=:A_{4t} + B_{4t} + C_{4t} + D_{4t}.
\end{align*}
In the following, we use extensively that
\[
\big|f(x)-f(y)\big|\leq\sup_{x\in[a,b]}\big|f^\prime(x)\big|\cdot\big|x-y\big|,\qquad a\leq x,y\leq b,
\]
for some differentiable (real-valued) function $f(\cdot)$.

For the first term, we get that
\[
A_{4t}\leq 2\sup_{\delta_1\leq\beta_X\leq 1-\delta_1}\bigg|\frac{2}{(1-\beta_X)^3}\bigg|\cdot\big|\beta_X-\widetilde{\beta}_X\big|=\frac{4}{\delta_1^3}\big|\beta_X-\widetilde{\beta}_X\big|.
\]
The second term may be bounded by
\[
B_{4t}\leq 2\sup_{\delta_1\leq\beta_X\leq 1-\delta_1}\big|X_3(t,\beta_X)\big|\cdot\big|\beta_X-\widetilde{\beta}_X\big|=2X_3(t,1-\delta_1)\big|\beta_X-\widetilde{\beta}_X\big|
\]
and the third by
\begin{align*}
	C_{4t} &\leq\big|\alpha_X[X_3(t,\beta_X) - X_3(t,\widetilde{\beta}_X)]\big| + \big|(\alpha_X-\widetilde{\alpha}_X)X_3(t,\widetilde{\beta}_X)\big|\\
	&\leq (1-\delta_1)\sup_{\delta_1\leq\beta_X\leq 1-\delta_1}\big|X_4(t,\beta_X)\big|\cdot \big|\beta_X-\widetilde{\beta}_X\big| + \sup_{\delta_1\leq\beta_X\leq 1-\delta_1}\big|X_3(t,\beta_X)\big|\cdot\big|\alpha_X-\widetilde{\alpha}_X\big|\\
	&=(1-\delta_1)X_4(t,1-\delta_1) \big|\beta_X-\widetilde{\beta}_X\big| + X_3(t,1-\delta_1)\big|\alpha_X-\widetilde{\alpha}_X\big|,
\end{align*}
where 
\begin{equation}\label{eq:(11.1)}
	X_4(t,\beta_X) = \sum_{i=4}^{\infty}(i-1)(i-2)(i-3)\beta_X^{i-4}|X_{t-i}|.
\end{equation}
Finally, 
\begin{align*}
	D_{4t} &\leq2\bigg|\frac{\omega_X-\widetilde{\omega}_X}{(1-\beta_X)^3} - \widetilde{\omega}_X\Big[\frac{1}{(1-\widetilde{\beta}_X)^3}-\frac{1}{(1-\beta_X)^3}\Big]\bigg|\\
	&\leq 2\sup_{\delta_1\leq\beta_X\leq 1-\delta_1}\bigg|\frac{1}{(1-\beta_X)^3}\bigg|\cdot\big|\omega_X-\widetilde{\omega}_X\big| + 2K_{\omega}\sup_{\delta_1\leq\beta_X\leq 1-\delta_1}\bigg|\frac{3}{(1-\beta_X)^4}\bigg|\cdot\big|\beta_X-\widetilde{\beta}_X\big|\\
	& =\frac{2}{\delta_1^3}\big|\omega_X-\widetilde{\omega}_X\big| + 6\frac{K_{\omega}}{\delta_1^4}\big|\beta_X-\widetilde{\beta}_X\big|
\end{align*}
Combining the inequalities for $A_{4t}$, $B_{4t}$, $C_{4t}$ and $D_{4t}$, the inequality in \eqref {eq:(10.1)} follows.

From \eqref{eq:(3.11)}, \eqref{eq:(10.1)} and norm equivalence, we obtain that
\begin{align}
	\big\Vert\nabla^2 v_t(\vtheta^v) - \nabla^2 v_t(\widetilde{\vtheta}^v)\big\Vert&\leq C\big[1+X_3(t,1-\delta_1) + X_4(t,1-\delta_1)\big]\big\Vert\vtheta^v-\widetilde{\vtheta}^v\big\Vert\notag\\
	&=:V_3(\mathcal{F}_{t-1}) \big\Vert\vtheta^v-\widetilde{\vtheta}^v\big\Vert,\label{eq:(12.1)}
\end{align}
which is the desired result.

\textbf{Assumption~\ref{ass:an}~\ref{it:bound2.2}:} The bounds for $\big\Vert\nabla c_t(\vtheta^c)\big\Vert$ and $\big\Vert\nabla^2 c_t(\vtheta^c)\big\Vert$ directly follow from Lemma~\ref{lem:9}. The Lipschitz-type condition
\[
\big\Vert\nabla^2 c_t(\vtheta^c) - \nabla^2 c_t(\widetilde{\vtheta}^c)\big\Vert\leq C_3(\mathcal{F}_{t-1})\big\Vert\vtheta^c-\widetilde{\vtheta}^c\big\Vert
\]
with $C_3(\mathcal{F}_{t-1})=C\big[1+Y_3(t,1-\delta_1) + Y_4(t,1-\delta_1)\big]$ (where $Y_4(t,\beta_Y)$ is defined in analogy to \eqref{eq:(11.1)}) follows similarly as \eqref{eq:(12.1)}:
The Lipschitz-type condition
\[
\big\Vert\nabla^2 \sigma_{Y,t}(\vtheta^Y) - \nabla^2 \sigma_{Y,t}(\widetilde{\vtheta}^Y)\big\Vert\leq \Sigma_Y(\mathcal{F}_{t-1})\big\Vert\vtheta^Y-\widetilde{\vtheta}^Y\big\Vert
\]
can be established similarly as \eqref{eq:(10.1)} and the conclusion then follows from \eqref{eq:(3.2)}.

\textbf{Assumption~\ref{ass:an}~\ref{it:mom bounds cons2}:} From CCC--GARCH Assumption~\ref{CGass:mom}, $\E|X_t|^{3+\iota}\leq K$ and $\E|Y_t|^{3+\iota}\leq K$. Then, Lemma~\ref{lem:10} applied to Lemma~\ref{lem:9} yields that $\E[V_i^{3+\iota}(\mathcal{F}_{t-1})]\leq C$ and $\E[C_i^{3+\iota}(\mathcal{F}_{t-1})]\leq C$ ($i=1,2$). 

In view of \eqref{eq:(11.1)} and \eqref{eq:(12.1)} it also follows from Lemma~\ref{lem:10} and $\E|X_t|^{3+\iota}\leq K$ that $\E[V_3^{3+\iota}(\mathcal{F}_{t-1})]\leq C$. 

By similar arguments, we also obtain that $\E[C_3^{3+\iota}(\mathcal{F}_{t-1})]\leq C$.

\textbf{Assumption~\ref{ass:an}~\ref{it:Lipschitz an}:} The Lipschitz continuity of $x\mapsto\partial_1 F_t(x,y)$ follows similarly as the Lipschitz continuity of $y\mapsto\partial_2 F_t(x,y)$ under Assumption~\ref{ass:cons}~\ref{it:Lipschitz cons}.

\textbf{Assumption~\ref{ass:an}~\ref{it:bound cdf}:} The bound $f_t^Y(\cdot)\leq K$ follows similarly as under \eqref{eq:(4.11)}.
Furthermore, from \eqref{eq:(14.1)},
\begin{align*}
	f_t(x,y) &= \frac{1}{\sigma_{X,t}\sigma_{Y,t}}f_{\varepsilon_X,\varepsilon_Y}\Big(\frac{x}{\sigma_{X,t}},\frac{y}{\sigma_{Y,t}}\Big)\\
	&\leq \frac{K}{\omega_{X,0}\omega_{Y,0}}\leq \frac{K}{\delta_1^2}.
\end{align*}
Also from \eqref{eq:(14.1)},
\begin{align*}
	\big|\partial_1F_t(x,y)\big| &= \frac{1}{\sigma_{X,t}}\bigg|\partial_1F_t\Big(\frac{x}{\sigma_{X,t}},\frac{y}{\sigma_{Y,t}}\Big)\bigg|\\
	&\leq \frac{K}{\omega_{X,0}}\leq \frac{K}{\delta_1},
\end{align*}
with a similar bound for $\partial_2F_t(x,y)$.

\textbf{Assumption~\ref{ass:an}~\ref{it:pd}:} Recall from Section~\ref{PR} that the matrices appearing in Assumption~\ref{ass:an}~\ref{it:pd} do not depend on $n$, such that we may drop the subindex $n$; see in particular Lemma~\ref{lem:sd}, which establishes stationarity of $v_t(\vtheta_0^v)$, $\nabla v_t(\vtheta_0^v)$, $c_t(\vtheta_0^c)$ and $\nabla c_t(\vtheta_0^c)$.
We only show that $\mLambda=\mLambda_n$ is positive definite. The remaining claims follow along similar lines. 
We proceed by contradiction and assume that $\mLambda$ is not positive definite. 
Since $\mLambda$ is clearly positive semi-definite, this implies the existence of some $\vx=(x_1,x_2,x_3)^\prime\in\mathbb{R}^3$ with $\vx\neq\vzeros$, such that $\vx^\prime\mLambda\vx=0$. By definition, $v_t(\vtheta_0^v)$ lies in the support of $X_t\mid\mathcal{F}_{t-1}$, such that $f_t^X\big(v_t(\vtheta_0^v)\big)>0$ by CCC--GARCH Assumption~\ref{CGass:error}.
Therefore, $\vx^\prime\mLambda\vx=0$ implies that a.s.
\begin{equation}\label{eq:(75)}
	0=\vx^\prime \nabla v_t(\vtheta_0^v)\overset{\eqref{eq:1.1}}{=}\bigg(\vx\odot\begin{pmatrix}1\\1\\v\end{pmatrix}\bigg)^\prime\nabla\sigma_{X,t}(\vtheta_0^X).
\end{equation}
By stationarity (see Lemma~\ref{lem:sd}), we also have a.s.
\begin{equation}\label{eq:(76)}
	0=\bigg(\vx\odot\begin{pmatrix}1\\1\\v\end{pmatrix}\bigg)^\prime\nabla\sigma_{X,t-1}(\vtheta_0^X).
\end{equation}
Computing \eqref{eq:(75)}$-\beta_{X,0}\cdot$\eqref{eq:(76)}, we obtain
\begin{eqnarray}
	0 &=& \bigg(\vx\odot\begin{pmatrix}1\\1\\v\end{pmatrix}\bigg)^\prime\nabla\sigma_{X,t}(\vtheta_0^X)-\beta_{X,0}\bigg(\vx\odot\begin{pmatrix}1\\1\\v\end{pmatrix}\bigg)^\prime\nabla\sigma_{X,t-1}(\vtheta_0^X)\notag\\
	&\overset{\eqref{eq:(8.2)}}{=}& x_1\bigg[\frac{1}{1-\beta_{X,0}}-\frac{\beta_{X,0}}{1-\beta_{X,0}}\bigg] + x_2\Big[X_1(t,\beta_{X,0}) - \beta_{X,0}X_1(t-1,\beta_{X,0})\Big]\notag\\
	&&\hspace{0.5cm} + vx_3\bigg[\frac{\omega_{X,0}}{(1-\beta_{X,0})^2} + \alpha_{X,0}X_2(t,\beta_{X,0})-\beta_{X,0}\bigg\{\frac{\omega_{X,0}}{(1-\beta_{X,0})^2} + \alpha_{X,0}X_2(t-1,\beta_{X,0})\bigg\}\bigg]\notag\\
	&=&x_1+x_2|X_{t-1}| + vx_3 \bigg[\omega_{X,0}\frac{1-\beta_{X,0}}{(1-\beta_{X,0})^2} + \alpha_{X,0}\big\{X_2(t,\beta_{X,0})-\beta_{X,0}X_2(t-1,\beta_{X,0})\big\}\bigg]\notag\\
	&=&x_1+x_2|X_{t-1}| + vx_3 \bigg[\frac{\omega_{X,0}}{1-\beta_{X,0}} + \alpha_{X,0}X_1(t-1,\beta_{X,0})\bigg]\notag\\
	&\overset{\eqref{eq:(8.1)}}{=}&x_1+x_2|X_{t-1}| + vx_3 \sigma_{X,t-1}(\vtheta_0^X).\notag
\end{eqnarray}
From this and the fact that $X_{t-1}\mid\sigma_{X,t-1}(\vtheta_0^X)\sim F_{\varepsilon_X}\big(\cdot/\sigma_{X,t-1}(\vtheta_0^X)\big)$, we can conclude that $x_1=x_2=x_3=0$.
This, however, contradicts the assumption $\vx\neq\vzeros$. Thus, positive definiteness of $\mLambda$ follows.

As mentioned above, the other claims follow similarly. 
We only mention that for the verification of the positive definiteness of $\mLambda_{(1)}$, one needs to notice that
\begin{align*}
	f_t^{Y}\big(c_t(\vtheta_0^c)\big) - \partial_2 F_{t}\big(v_t(\vtheta_0^v), c_t(\vtheta_0^c)\big) &= \int_{-\infty}^{\infty} f_t\big(x,c_t(\vtheta_0^c)\big)\D x - \int_{-\infty}^{v_t(\vtheta_0^v)}f_{t}\big(x, c_t(\vtheta_0^c)\big)\D x\\
	&= \int_{v_t(\vtheta_0^v)}^{\infty}f_{t}\big(x, c_t(\vtheta_0^c)\big)\D x>0
\end{align*}
is the analog condition of $f_t^X\big(v_t(\vtheta_0^v)\big)>0$ used to show positive definiteness of $\mLambda$ above.

\textbf{Assumption~\ref{ass:an}~\ref{it:lambda func}:} By Lemma~\ref{lem:sd}, $\Cc_t(\vtheta^c,\vtheta^v)$ is stationary, whence $\vlambda_n(\vtheta^c,\vtheta^v)=\frac{1}{n}\sum_{t=1}^{n}\E\big[\Cc_t(\vtheta^c,\vtheta^v)\big]$ is independent of $n$, and we may write $\vlambda(\vtheta^c,\vtheta^v)=\vlambda_n(\vtheta^c,\vtheta^v)$.

For the same reasons given in the proof of Lemma~C.\ref{lem:1 tilde}, we may apply the implicit function theorem \citep[e.g.,][Theorem~9.2]{Mun91} to $\vlambda(\cdot,\cdot)$, which satisfies $\vlambda(\vtheta_0^c,\vtheta_0^v)=\vzero$.
It yields that there exist a unique continuously differentiable function $\vtheta^c(\cdot)$ and some $\varepsilon>0$, such that for all $\vtheta^v$ with $\big\Vert\vtheta^v-\vtheta_0^v\big\Vert<\varepsilon$ it holds that $\vlambda\big(\vtheta^c(\vtheta^v), \vtheta^v\big)=\vzero$.
Additionally, we obtain by the MVT,
\begin{align*}
	\big\Vert\vtheta^c-\vtheta_0^c\big\Vert &= \big\Vert\vtheta^c(\vtheta^v)-\vtheta^c(\vtheta_0^v)\big\Vert\\
	&=\Big\Vert \frac{\partial\vtheta^c}{\partial\vtheta}(\vtheta^{v\ast})(\vtheta^v - \vtheta_0^v)\Big\Vert\\
	&= \Big\Vert \frac{\partial\vtheta^c}{\partial\vtheta}(\vtheta^{v\ast})\Big\Vert\cdot \big\Vert\vtheta^v - \vtheta_0^v\big\Vert\\
	&\leq K \big\Vert\vtheta^v - \vtheta_0^v\big\Vert,
\end{align*}	
where we used that $\sup_{\Vert\vtheta^v-\vtheta_0^v\Vert<\varepsilon}\big\Vert\partial\vtheta^c(\vtheta^v)/\partial\vtheta\big\Vert\leq K$ because $\partial\vtheta^c(\cdot)/\partial\vtheta$ is continuous.

\textbf{Assumption~\ref{ass:an}~\ref{it:eq bound}:} The outline of the proof is roughly similar to that given in Appendix~SA.2.7 of \citet{PZC19}. By definition of $v_t(\cdot)$ in \eqref{eq:(1.1)}, $X_t=v_t(\vtheta^v)$ is equivalent to
\begin{equation}\label{eq:hh11}
	X_t = \omega_v + \alpha_v|X_{t-1}| + \beta_v v_{t-1}(\vtheta^v),\qquad t=1,2,\ldots.
\end{equation}
Recall from \eqref{eq:(3.11)} that $v_{t}(\vtheta^v)$ has the same sign as $v$, because $\sigma_{X,t}(\vtheta^X)>0$. Assume without loss of generality that $v>0$; the case $v<0$ can be treated similarly. When $v>0$, $X_t=v_t(\vtheta^v)$ also implies that $X_t>0$. Using this and iterating \eqref{eq:hh11} for $t=1,\ldots,4$ gives
\begin{align}
	X_1 &= \omega_v + \alpha_v|X_{0}| + \beta_v v_{0}(\vtheta^v),\notag\\
	X_2 &= \omega_v + \alpha_v|X_{1}| + \beta_v v_{1}(\vtheta^v)\notag\\
	&= \omega_v + \alpha_v X_{1}  + \beta_v X_1\notag\\
	&= \omega_v + ( \alpha_v + \beta_v ) X_1,\label{eq:h2}\\
	X_3 &= \omega_v + ( \alpha_v + \beta_v ) X_2,\label{eq:h3}\\		
	X_4 &= \omega_v + ( \alpha_v + \beta_v ) X_3,\label{eq:h4}
\end{align}
From \eqref{eq:h2}--\eqref{eq:h3},
\begin{align*}
	\frac{X_2^2 - X_1 X_3}{X_2-X_1} &= \frac{X_2^2-\frac{X_2-\omega_v}{\alpha_v+\beta_v}(\omega_v+\alpha_vX_2+\beta_vX_2)}{X_2 - \frac{X_2-\omega_v}{\alpha_v+\beta_v}}\\
	&= \frac{(\alpha_v+\beta_v)X_2^2 - (X_2-\omega_v)(\omega_v+\alpha_vX_2 + \beta_v X_2)}{(\alpha_v+\beta_v)X_2-(X_2-\omega_v)}\\
	&= \frac{\alpha_vX_2^2+\beta_vX_2^2 - \omega_vX_2-\alpha_vX_2^2 - \beta_vX_2^2 + \omega_v^2 + \alpha_v\omega_vX_2+\beta_v\omega_vX_2}{\omega_v+\alpha_vX_2+\beta_vX_2-X_2}\\
	&= \omega_v\frac{\omega_v+\alpha_vX_2+\beta_vX_2-X_2}{\omega_v+\alpha_vX_2+\beta_vX_2-X_2}\\
	&=\omega_v.
\end{align*}
It follows similarly from \eqref{eq:h3}--\eqref{eq:h4} that $\omega_v=\frac{X_3^2 - X_2 X_4}{X_3-X_2}$, such that 
\[
\omega_v=\frac{X_2^2 - X_1 X_3}{X_2-X_1}=\frac{X_3^2 - X_2 X_4}{X_3-X_2}.
\]
Hence, a necessary condition for $X_t=v_t(\vtheta^v)$, $t=1,\ldots,n$ for $\vtheta^v\in\mTheta^v$ is that $(X_1,\ldots,X_n)^\prime$ lies in the set
\[
p^{-1}(0):=\big\{(X_1,\ldots,X_n)^\prime\in\mathbb{R}^{n}:\ p(X_1,\ldots,X_n)=0\big\},
\]
where $p(X_1,\ldots,X_n)=(X_2^2-X_1X_3)(X_3-X_2) - (X_3^2-X_2X_4)(X_2-X_1)$. The Hausdorff dimension of $p^{-1}(0)$ is less than $n$. Since, additionally, the distribution of $(X_1,\ldots,X_n)^\prime$ is absolutely continuous (CCC--GARCH Assumption~\ref{CGass:error}), it follows that a.s.
\[
\sup_{\vtheta^v\in\mTheta^v}\sum_{t=1}^{n}\1_{\{X_t=v_t(\vtheta^v)\}}\leq 4.
\]

Starting from the definition of $c_t(\cdot)$ in \eqref{eq:(1.2)}, $Y_t=c_t(\vtheta^c)$ is equivalent to
\[
Y_t=\omega_c+\alpha_c|Y_{t-1}|+ \beta_c c_{t-1}(\vtheta^c),
\]
which is the analog of \eqref{eq:hh11}. It then follows similarly as before that
\[
\sup_{\vtheta^c\in\mTheta^c}\sum_{t=1}^{n}\1_{\{Y_t=c_t(\vtheta^v)\}}\leq 4.
\]

\textbf{Assumption~\ref{ass:an}~\ref{it:mixing}:} By standard mixing inequalities \citep[e.g.,][]{Bra07}, $\beta$-mixing with exponential decay implies $\alpha$-mixing with exponential decay. The conclusion therefore follows from Lemma~\ref{lem:sd}.

\section{Verifying Assumptions~\ref{ass:cons} and \ref{ass:an} for VAR Models}
\label{sec:verificationVAR}

In this section, we verify Assumptions~\ref{ass:cons} and \ref{ass:an} for a VAR(1) model, i.e., a vector-autoregressive process of order one. Extensions to higher order VAR models are possible, though at the expense of more complicated notation. Subsection~\ref{VAR} presents the model and Subsection~\ref{PR VAR} collects some properties of the model that will be helpful in verifying Assumptions~\ref{ass:cons} and \ref{ass:an} in Subsections~\ref{Ass1 VAR} and \ref{Ass2 VAR}, respectively.

\subsection{The VAR Model}\label{VAR}

The VAR(1) model we consider throughout this section is given by the recursion
\begin{equation}\label{eq:VAR}
	\begin{pmatrix}
		X_t\\ Y_t
	\end{pmatrix} =
	\vphi_0 + \mPhi_1\begin{pmatrix}
		X_{t-1}\\ Y_{t-1}
	\end{pmatrix}
	+\vvarepsilon_t,\qquad 
	t\in\mathbb{Z},
\end{equation}
where 
\begin{equation*}
	\vvarepsilon_t=\begin{pmatrix} \varepsilon_{X,t}\\ \varepsilon_{Y,t} \end{pmatrix},\qquad \vphi_0=\begin{pmatrix}
		\phi_{X,0}\\ \phi_{Y,0}
	\end{pmatrix},\qquad
	\mPhi_1=\begin{pmatrix}
		\phi_{11} & \phi_{12}\\
		\phi_{21} & \phi_{22}
	\end{pmatrix}.
\end{equation*}
We make the following assumptions:

\begin{Vassumption}\label{Vass:error}
	The innovations $(\varepsilon_{X,t}, \varepsilon_{Y,t})^\prime$ are i.i.d., independent of 
	\[
	\mathcal{F}_{t-1}=\sigma\big\{(X_{t-1}, Y_{t-1})^\prime,(X_{t-2}, Y_{t-2})^\prime,\ldots \big\},
	\]
	and have mean zero. The Lebesgue density $f_{\varepsilon_X,\varepsilon_Y}(\cdot,\cdot)$ is positive for all $(x,y)^\prime\in\mathbb{R}^2$ such that $F_{\varepsilon_X,\varepsilon_Y}(x,y)\in(0,1)$. 
	Finally, 
	\begin{align*}
		f_{\varepsilon_X}(\cdot)&\leq K,& f_{\varepsilon_Y}(\cdot)&\leq K,\\
		\big|\partial_i F_{\varepsilon_X,\varepsilon_Y}(\cdot,\cdot)\big|&\leq K\ (i=1,2),& f_{\varepsilon_X,\varepsilon_Y}(\cdot,\cdot)&\leq K,\\
		\big|f_{\varepsilon_X}(x)-f_{\varepsilon_X}(x^\prime)\big|&\leq K |x-x^\prime|,& \big|f_{\varepsilon_Y}(y)-f_{\varepsilon_Y}(y^\prime)\big|&\leq K |y-y^\prime|,\\
		\big|\partial_1F_{\varepsilon_X,\varepsilon_Y}(x,y)-\partial_1 F_{\varepsilon_X,\varepsilon_Y}(x^\prime,y)\big|&\leq K |x-x^\prime|,& \big|\partial_2F_{\varepsilon_X,\varepsilon_Y}(x,y)-\partial_2 F_{\varepsilon_X,\varepsilon_Y}(x,y^\prime)\big|&\leq K |y-y^\prime|.
	\end{align*}
\end{Vassumption}

VAR Assumption~\ref{Vass:error} ensures that $\vphi_0 + \mPhi_1(X_t, Y_t)^\prime$ is the conditional mean of $(X_t, Y_t)^\prime\mid \mathcal{F}_{t-1}$, i.e.,
\[
\begin{pmatrix}
	\mu_{X,t}\\ \mu_{Y,t} 
\end{pmatrix}=\begin{pmatrix}
	\E[X_t\mid \mathcal{F}_{t-1}]\\ \E[Y_t\mid \mathcal{F}_{t-1}]
\end{pmatrix}=
\begin{pmatrix}
	\phi_{X,0} + \phi_{11}X_{t-1} + \phi_{12} Y_{t-1}\\
	\phi_{Y,0} + \phi_{21}X_{t-1} + \phi_{22} Y_{t-1}
\end{pmatrix}.
\]
We often exploit the decomposition
\begin{equation}\label{eq:VAR decomp}
	\begin{pmatrix}
		X_t\\ Y_t
	\end{pmatrix}=\begin{pmatrix}
		\mu_{X,t}\\ \mu_{Y,t} 
	\end{pmatrix}
	+\begin{pmatrix}
		\varepsilon_{X,t}\\ \varepsilon_{Y,t} 
	\end{pmatrix}
\end{equation}
in the following.

\begin{Vassumption}\label{Vass:param}
	All parameters $\big(\vphi_0^\prime, \vec^\prime(\mPhi_1)\big)^\prime$ from the compact parameter space $\mTheta^{(X,Y)}$ satisfy the following: 
	The eigenvalues $\lambda_i(\mPhi_1)$ ($i=1,2$) of $\mPhi_1$ are smaller than one in absolute value, i.e., $\big|\lambda_i(\mPhi_1)\big|<1$ for $i=1,2$.
	Furthermore, the true parameter vector $\vtheta_0=(\vtheta_0^{X}, \vtheta_0^{Y})$ lies in the interior of $\mTheta^{(X,Y)}$.
\end{Vassumption}

VAR Assumption~\ref{Vass:param} is mainly required to show that the $(X_t,Y_t)^\prime$ are strictly stationary.

\begin{Vassumption}\label{Vass:mom}
	$\E|X_t|^{3+\iota}<\infty$ and $\E|Y_t|^{3+\iota}<\infty$.
\end{Vassumption}

In VAR models, finite fourth moments are often required to show asymptotic normality of the standard least squares estimator \citep[e.g.,][Thm.~15.10]{Han22}, which contrasts with the much weaker moment requirements of VAR Assumption~\ref{CGass:mom}. 
This may be seen as an additional advantage of our modeling approach.

\begin{Vassumption}\label{Vass:mixing}
	The $(X_t, Y_t)^\prime$ are $\alpha$-mixing with mixing coefficients $\alpha(\cdot)$ satisfying
	\[
	\sum_{m=1}^{\infty}\alpha^{(\widetilde{q}-2)/\widetilde{q}}(m)<\infty\qquad\text{for some }\widetilde{q}>2.
	\] 
\end{Vassumption}

Since strong mixing may not hold under easily verifiable conditions for autoregressive models \citep{And84}, we follow, e.g., \citet{Han22} and simply assume it here.

\subsection{Preliminary Results}\label{PR VAR}

Before showing that Assumptions~\ref{ass:cons} and \ref{ass:an} hold under VAR Assumptions~\ref{Vass:error}--\ref{Vass:mixing} in Subsections~\ref{Ass1 VAR} and \ref{Ass2 VAR}, respectively, we explore some properties of the model that will be used throughout.
We assume the forecaster to be interested in conditioning on the $\sigma$-field
\[
\mathcal{F}_{t-1}=\sigma\big((X_{t-1},Y_{t-1})^\prime,\ldots,(X_{1},Y_{1})^\prime,\init\big)\qquad\text{with}\ \init=(X_0,Y_0,X_{-1},Y_{-1},\ldots)^\prime.
\]
We mention that our following verification would also go through verbatim when the initial information in the conditioning set would be given by $\init=(X_0,Y_0)^\prime$.

We now establish the (VaR, CoVaR) dynamics implied by \eqref{eq:VAR} with respect to $\mathcal{F}_{t-1}$. 
Since
\begin{align*}
	F_{X_t\mid\mathcal{F}_{t-1}}(x) &= \Pr\big\{X_t\leq x\mid\mathcal{F}_{t-1}\big\}\\
	&= \Pr\big\{\mu_{X,t} + \varepsilon_{X,t}\leq x\mid\mathcal{F}_{t-1}\big\}\\
	&= \Pr\big\{\varepsilon_{X,t}\leq x - \mu_{X,t} \mid\mathcal{F}_{t-1}\big\}\\
	&= F_{\varepsilon_X}(x - \mu_{X,t}),
\end{align*}
it easily follows that
\begin{align}
	\VaR_{t,\beta} &= F_{X_t\mid\mathcal{F}_{t-1}}^{-1}(\beta)\notag\\
	&=\mu_{X,t} + \VaR_{\beta}(F_{\varepsilon_X})\notag\\
	&=\big[\phi_{X,0}  + \VaR_{\beta}(F_{\varepsilon_X})\big] + \phi_{11}X_{t-1} + \phi_{12} Y_{t-1}.\label{eq:v VAR}
\end{align}
Similarly, by \eqref{eq:VAR decomp} and \eqref{eq:v VAR},
\begin{align*}
	F_{Y_t\mid X_t\geq\VaR_{t,\beta},\mathcal{F}_{t-1}}(y) &= \Pr\big\{Y_t\leq y\mid X_t\geq \VaR_{t,\beta},\ \mathcal{F}_{t-1}\big\}\\
	&= \Pr\big\{\varepsilon_{Y,t}\leq y-\mu_{Y,t}\mid \varepsilon_{X,t} \geq \VaR_{\beta}(F_{\varepsilon_X}),\ \mathcal{F}_{t-1}\big\}\\
	&= F_{\varepsilon_Y\mid\varepsilon_X \geq\VaR_{\beta}(F_{\varepsilon_X})}(y-\mu_{Y,t}),
\end{align*}
such that it easily follows that
\begin{align}
	\CoVaR_{t,\alpha\mid\beta} &= F_{Y_t\mid X_t\geq\VaR_{t,\beta},\ \mathcal{F}_{t-1}}^{-1}(\alpha)\notag\\
	&=\mu_{Y,t} + \CoVaR_{\alpha\mid \beta}(F_{\varepsilon_X,\varepsilon_Y})\notag\\
	&=\big[\phi_{Y,0}  + \CoVaR_{\alpha\mid \beta}(F_{\varepsilon_X,\varepsilon_Y})\big] + \phi_{21}X_{t-1} + \phi_{22} Y_{t-1}.\label{eq:c VAR}
\end{align}

In view of \eqref{eq:v VAR} and \eqref{eq:c VAR}, we now consider the following (VaR, CoVaR) model:
\begin{align}
	v_t(\vtheta^v) &= \theta_1^v + \theta_2^v X_{t-1} + \theta_3^v Y_{t-1},\qquad\vtheta^v=(\theta_1^v,\theta_2^v,\theta_3^v)^\prime,\label{eq:(V2.3)}\\
	c_t(\vtheta^c) &= \theta_1^c + \theta_2^c X_{t-1} + \theta_3^c Y_{t-1},\qquad\vtheta^c=(\theta_1^c,\theta_2^c,\theta_3^c)^\prime.\label{eq:(V2.4)}
\end{align}
In particular, the true values are given by
\begin{align}
	\vtheta_0^v &= (\phi_{X,0} + v,\ \phi_{11},\ \phi_{12})^\prime,&& v=\VaR_{\beta}(F_{\varepsilon_X})\label{eq:(V3.1)}\\
	\vtheta_0^c &= (\phi_{Y,0} + c,\ \phi_{21},\ \phi_{22})^\prime,&& c=\CoVaR_{\alpha\mid \beta}(F_{\varepsilon_X,\varepsilon_Y}).\label{eq:(V3.2)}
\end{align}
In view of \eqref{eq:(V2.3)} and \eqref{eq:(V2.4)}, the parameter space of the (VaR, CoVaR) model is
\begin{equation}\label{eq:(3.3)}
	\mTheta :=\mTheta^v\times\mTheta^c = \big\{\vtheta=(\vtheta^{v\prime}, \vtheta^{c\prime})^\prime=(\theta_1,\ldots,\theta_6)\in\mathbb{R}^{6}\colon (\theta_1-v,\, \theta_2,\, \theta_3,\, \theta_4-c,\, \theta_5,\, \theta_6)^\prime\in\mTheta^{(X,Y)}\big\}.
\end{equation}

Also from \eqref{eq:(V2.3)} and \eqref{eq:(V2.4)}, the gradients and Hessians of the models are
\begin{align}
	\nabla v_t(\vtheta^v) &= \begin{pmatrix}
		\frac{\partial v_t(\vtheta^v)}{\partial \theta_1^v}\\
		\frac{\partial v_t(\vtheta^v)}{\partial \theta_2^v}\\
		\frac{\partial v_t(\vtheta^v)}{\partial \theta_3^v}
	\end{pmatrix}
	=\begin{pmatrix}
		1\\ X_{t-1}\\  Y_{t-1}
	\end{pmatrix},\qquad
	\nabla^2 v_t(\vtheta^v)=\vzero,\label{eq:(1.P.1)}\\
	\nabla c_t(\vtheta^c) &= \begin{pmatrix}
		\frac{\partial c_t(\vtheta^c)}{\partial \theta_1^c}\\
		\frac{\partial c_t(\vtheta^c)}{\partial \theta_2^c}\\
		\frac{\partial c_t(\vtheta^c)}{\partial \theta_3^c}
	\end{pmatrix}
	=\begin{pmatrix}
		1\\ X_{t-1}\\  Y_{t-1}
	\end{pmatrix},\qquad
	\nabla^2 c_t(\vtheta^c)=\vzero.\label{eq:(1.P.2)}
\end{align}

\begin{lem}\label{lem:6tilde}
	Under VAR Assumptions~\ref{Vass:error}--\ref{Vass:mom}, it holds for all $\vtheta^v\in\mTheta^v$ and $\vtheta^c\in\mTheta^c$ that
	\begin{align*}
		|v_t(\vtheta^v)| &\leq V(\mathcal{F}_{t-1}):= K(1+|X_{t-1}| + |Y_{t-1}|),\\
		|c_t(\vtheta^v)| &\leq C(\mathcal{F}_{t-1}):= K(1+|X_{t-1}| + |Y_{t-1}|),\\
		\big\Vert\nabla v_t(\vtheta^v)\big\Vert &\leq V_1(\mathcal{F}_{t-1}):=K(1+|X_{t-1}| + |Y_{t-1}|),\\
		\big\Vert\nabla c_t(\vtheta^c)\big\Vert &\leq C_1(\mathcal{F}_{t-1}):=K(1+|X_{t-1}| + |Y_{t-1}|),\\
		\big\Vert\nabla^2 v_t(\vtheta^v)\big\Vert &\leq V_2(\mathcal{F}_{t-1}):=0,\\
		\big\Vert\nabla^2 c_t(\vtheta^c)\big\Vert &\leq C_2(\mathcal{F}_{t-1}):=0.
	\end{align*}
\end{lem}

\begin{proof}
	The first two upper bounds follow from \eqref{eq:(V2.3)}--\eqref{eq:(V2.4)}.
	The remaining bounds are immediate from \eqref{eq:(1.P.1)}--\eqref{eq:(1.P.2)} in combination with compactness of the parameter space $\mTheta$ (from compactness of $\mTheta^{(X,Y)}$ by VAR Assumption~\ref{Vass:param} and \eqref{eq:(3.3)}).
\end{proof}

\subsection{Verification of Assumption~\ref{ass:cons}}\label{Ass1 VAR}

We verify items~\ref{it:id}--\ref{it:mom bounds cons} of Assumption~\ref{ass:cons} in this subsection. 

\textbf{Assumption~\ref{ass:cons}~\ref{it:id}:} Recall from Section~\ref{PR VAR} that 
\begin{equation*}
	\begin{pmatrix}
		\VaR_{t,\beta}\\
		\CoVaR_{t,\alpha\mid\beta}
	\end{pmatrix}
	= \begin{pmatrix}
		v_t(\vtheta_{0}^{v})\\
		c_t(\vtheta_{0}^{c})
	\end{pmatrix}
\end{equation*}
for $\vtheta_0^v=(\phi_{X,0} + v,\ \phi_{11},\ \phi_{12})^\prime$ and $\vtheta_0^c=(\phi_{Y,0} + c,\ \phi_{21},\ \phi_{22})^\prime$, such that correct specification of the (VaR, CoVaR) model is immediate.

\textbf{Assumption~\ref{ass:cons}~\ref{it:str stat}:} Stationarity follows directly from Theorems 14.4 and 15.6 in \citet{Han22} together with the i.i.d.ness of the $\vvarepsilon_t$ from VAR Assumption~\ref{Vass:error}.

\textbf{Assumption~\ref{ass:cons}~\ref{it:cond dist}:} Exploit \eqref{eq:VAR decomp} and the fact that $\mu_{X,t},\mu_{Y,t}\in\mathcal{F}_{t-1}$ to write 
\begin{align}
	F_t(x,y) & =\P_{t-1}\big\{X_t\leq x,\ Y_t\leq y\big\}\notag\\
	&= \P_{t-1}\big\{\varepsilon_{X,t}\leq x-\mu_{X,t},\ \varepsilon_{Y,t}\leq y-\mu_{Y,t}\big\}\notag\\
	&= F_{\varepsilon_X,\varepsilon_Y}\big(x-\mu_{X,t},\ y-\mu_{Y,t}\big).\label{eq:(V5.1)}
\end{align}
The fact that $F_t(\cdot,\cdot)$ possesses a positive Lebesgue density now follows from VAR Assumption~\ref{Vass:error}.

\textbf{Assumption~\ref{ass:cons}~\ref{it:Lipschitz cons}:} From \eqref{eq:(V5.2)} below and VAR Assumption~\ref{Vass:error} it follows that
\begin{align*}
	\big|f_t^X(x)-f_t^X(x^\prime)\big| &=\big|f_{\varepsilon_X}(x-\mu_{X,t}) - f_{\varepsilon_X}(x^\prime-\mu_{X,t})\big|\\
	&\leq K\big|(x-\mu_{X,t}) - (x^\prime-\mu_{X,t})\big|\\
	&\leq K|x-x^\prime|.
\end{align*}
The Lipschitz continuity of $f_t^{Y}(\cdot)$ follows similarly.

From \eqref{eq:(V5.1)}, we have
\[
\partial_2 F_t(x,y)=\partial_2 F_{\varepsilon_X,\varepsilon_Y}\big(x-\mu_{X,t},\, y-\mu_{Y,t}\big),
\]
such that, by VAR Assumption~\ref{Vass:error},
\begin{align*}
	\big|\partial_2 F_t(x,y) - \partial_2 F_t(x,y^\prime)\big| &=\big|\partial_2 F_{\varepsilon_X,\varepsilon_Y}(x-\mu_{X,t},\, y-\mu_{Y,t}) - \partial_2 F_{\varepsilon_X,\varepsilon_Y}(x-\mu_{X,t},\, y^\prime-\mu_{Y,t})\big|\\
	&\leq K\big|(y-\mu_{Y,t}) - (y^\prime-\mu_{Y,t})\big| \\
	&\leq K|y-y^\prime|.
\end{align*}

\textbf{Assumption~\ref{ass:cons}~\ref{it:cond dens}:} Because $F_t^{X}(x)=F_{\varepsilon_X}\big(x-\mu_{X,t}\big)$ from \eqref{eq:(V5.1)}, it holds that
\begin{equation}\label{eq:(V5.2)}
	f_t^X(x) = f_{\varepsilon_X}\big(x-\mu_{X,t}\big).
\end{equation}
VAR Assumption~\ref{Vass:error} therefore implies $\sup_{x\in\mathbb{R}}f_t^X(x)\leq K$.

Moreover,
\begin{align*}
	f_t^X\big(v_t(\vtheta_0^v)\big) &= f_{\varepsilon_X}\big(v_t(\vtheta_0^v)-\mu_{X,t}\big)\\
	&= f_{\varepsilon_X}\big(\VaR_{\beta}(F_{\varepsilon_{X}})\big)>0,
\end{align*}
where the first equality follows from \eqref{eq:(V5.2)}, the second from the definition of $\mu_{X,t}$ and \eqref{eq:(V2.3)} and \eqref{eq:(V3.1)}, and the final inequality from VAR Assumption~\ref{Vass:error}.
Therefore, the lower bound for $f_t^X\big(v_t(\vtheta_0^v)\big)$ even holds with probability one.

Similarly, from VAR Assumption~\ref{Vass:error} with probability one,
\begin{align*}
	\int_{v_t(\vtheta_0^v)}^{\infty}f_t\big(x, c_t(\vtheta_0^c)\big)\D x &= \int_{v_t(\vtheta_0^v)}^{\infty}f_{\varepsilon_X,\varepsilon_Y}\big(x-\mu_{X,t}, c_t(\vtheta_0^c)-\mu_{Y,t}\big)\D x\\
	&= \int_{\VaR_{\beta}(F_{\varepsilon_X})}^{\infty}f_{\varepsilon_X,\varepsilon_Y}\big(x, c_t(\vtheta_0^c)-\mu_{Y,t}\big)\D x\\
	&= \int_{\VaR_{\beta}(F_{\varepsilon_X})}^{\infty}f_{\varepsilon_X,\varepsilon_Y}\big(x, \CoVaR_{\alpha\mid\beta}(F_{\varepsilon_X, \varepsilon_Y})\big)\D x>0,
\end{align*}
where the first equality follows from \eqref{eq:(V5.1)}, the second equality uses an obvious substitution, and the final equality exploits \eqref{eq:(V2.4)} and \eqref{eq:(V3.2)}.

\textbf{Assumption~\ref{ass:cons}~\ref{it:compact}:} The compactness of the parameter space directly follows from VAR Assumption~\ref{Vass:param} in combination with \eqref{eq:(3.3)}.

\textbf{Assumption~\ref{ass:cons}~\ref{it:ULLN}:} By Theorem~21.9 in \citet{Dav94} it suffices to show that
\begin{itemize}
	\item[(a)] $\mS\Big(\big(v_t(\vtheta^v), c_t(\vtheta^c)\big)^\prime, (X_t, Y_t)^\prime\Big)$ obeys a pointwise law of large numbers;
	\item[(b)] $\mS\Big(\big(v_t(\vtheta^v), c_t(\vtheta^c)\big)^\prime, (X_t, Y_t)^\prime\Big)$ is s.e.
\end{itemize}

Since the s.e.~property of (b) follows as for the CCC--GARCH model in Section~\ref{Ass1} (replacing Lemma~\ref{lem:9} with Lemma~\ref{lem:6tilde} where appropriate), we only show (a). 

Recall from Assumption~\ref{ass:cons}~\ref{it:str stat} that $(X_t,Y_t)^\prime$ is strictly stationary. By VAR Assumption~\ref{Vass:mixing} and standard mixing inequalities, $\mS\Big(\big(v_t(\vtheta^v), c_t(\vtheta^c)\big)^\prime, (X_t, Y_t)^\prime\Big)$ is strictly stationary and $\alpha$-mixing and, hence, ergodic by \citet[Proposition~3.44]{Whi01}. The ergodic theorem \citep[e.g.,][Theorem~3.34]{Whi01} then implies (a) if the (componentwise) moments of $\mS\Big(\big(v_t(\vtheta^v), c_t(\vtheta^c)\big)^\prime, (X_t, Y_t)^\prime\Big)$ exist.
Use the boundedness of the moments of $X_t$ and $Y_t$ (VAR Assumption~\ref{Vass:mom}) and the compactness of the parameter space (Assumption~\ref{ass:cons}~\ref{it:compact}) to deduce that
\begin{eqnarray*}
	\E\big|S^{\VaR}\big(v_t(\vtheta^v),X_t\big)\big|&\leq& \E|X_t| + \E|v_t(\vtheta^v)|\\
	& \overset{\eqref{eq:(V2.3)}}{\leq}& \E|X_t| + |\theta_1^v| + |\theta_2^v|\E|X_{t-1}| + |\theta_3^v|\E|Y_{t-1}|\leq C<\infty,\\
	\E\Big|S^{\CoVaR}\Big(\big(v_t(\vtheta^v), c_t(\vtheta^c)\big)^\prime, (X_t, Y_t)^\prime\Big)\Big|&\leq& \E|Y_t| + \E|c_t(\vtheta^c)|\\
	&\overset{\eqref{eq:(V2.4)}}{\leq}& \E|Y_t| + |\theta_1^c| + |\theta_2^c|\E|X_{t-1}| + |\theta_3^c|\E|Y_{t-1}| \leq C<\infty.
\end{eqnarray*}
Thus, the (componentwise) moments of $\mS\Big(\big(v_t(\vtheta^v), c_t(\vtheta^c)\big)^\prime, (X_t, Y_t)^\prime\Big)$ exist and the pointwise law of large numbers follows by the ergodic theorem, yielding (a).

\textbf{Assumption~\ref{ass:cons}~\ref{it:unique id}:} We first show the preliminary result that
\begin{equation}\label{eq:vt impl VAR}
	\P\big\{v_t(\vtheta^v)=v_t(\vtheta_0^v)\big\}=1\text{ for all }t\in\mathbb{N}\quad\Longrightarrow\quad \vtheta^v=\vtheta_0^v.
\end{equation}
Hence, suppose that $\P\big\{v_t(\vtheta^v)=v_t(\vtheta_0^v)\big\}=1$ for all $t\in\mathbb{N}$. This implies
\begin{align}
	& \P\big\{v_t(\vtheta^v)=v_t(\vtheta_0^v)\quad \forall t\in\mathbb{N}\big\} = 1\notag\\
	\overset{\eqref{eq:(V2.3)}}{\Longrightarrow}\quad & \P\big\{\theta_1^v + \theta_2^v X_{t-1} + \theta_3^v Y_{t-1}=\theta_{0,1}^v + \theta_{0,2}^v X_{t-1} + \theta_{0,3}^v Y_{t-1}\quad \forall t\in\mathbb{N}\big\} = 1\notag\\
	\Longrightarrow\quad & \P\big\{\theta_1^v + \theta_2^v \mu_{X,t-1} + \theta_2^v \varepsilon_{X,t-1} + \theta_3^v \mu_{Y,t-1} + \theta_3^v \varepsilon_{Y,t-1}\notag\\
	&\hspace{1cm} =\theta_{0,1}^v + \theta_{0,2}^v \mu_{X,t-1} + \theta_{0,2}^v \varepsilon_{X,t-1} + \theta_{0,3}^v \mu_{Y,t-1} + \theta_{0,3}^v \varepsilon_{Y,t-1}\quad \forall t\in\mathbb{N}\big\} = 1.\notag
\end{align}
Since $(\varepsilon_{X,t}, \varepsilon_{Y,t})^\prime\mid(\mu_{X,t-1},\mu_{Y,t-1})^\prime\sim F_{\varepsilon_X,\varepsilon_Y}$, this implies that
\begin{align}
	& \P\big\{\theta_2^v\varepsilon_{X,t} + \theta_3^v\varepsilon_{Y,t} = \theta_{0,2}^v\varepsilon_{X,t} + \theta_{0,3}^v\varepsilon_{Y,t} \quad \forall t\in\mathbb{N}\big\} = 1\qquad\text{and}\label{eq:(V4.1)}\\
	& \P\big\{\theta_1^v + \theta_2^v \mu_{X,t} + \theta_{3}^{v} \mu_{Y,t} = 
	\theta_{0,1}^v + \theta_{0,2}^v \mu_{X,t} + \theta_{0,3}^{v} \mu_{Y,t} \quad \forall t\in\mathbb{N}\big\} = 1.\label{eq:(V4.2)}
\end{align}
From \eqref{eq:(V4.1)},
\[
\P\Big\{ (\theta_2^v-\theta_{0,2}^v,\, \theta_3^v-\theta_{0,3}^v) \begin{pmatrix}\varepsilon_{X,t}\\ \varepsilon_{Y,t}\end{pmatrix} = 0 \quad \forall t\in\mathbb{N}\Big\} = 1.
\]
Since $(\varepsilon_{X,t}, \varepsilon_{Y,t})^\prime$ possesses a Lebesgue density, it must be the case that $\theta_2^v=\theta_{0,2}^v$ and $\theta_3^v=\theta_{0,3}^v$. Plugging this into \eqref{eq:(V4.2)} yields that $\theta_1^v=\theta_{0,1}^{v}$. 
Overall, $\vtheta^v=\vtheta_0^v$ follows, establishing \eqref{eq:vt impl VAR}.

It may be shown similarly that $\P\big\{c_t(\vtheta^c)=c_t(\vtheta_0^c)\big\}=1$ for all $t\in\mathbb{N}$ implies $\vtheta^c=\vtheta_0^c$.

For part (a), we proceed by contradiction and assume that there exists some $\xi>0$ such that for all $\tau>0$ it follows that
\[
\liminf_{n\to\infty}\frac{1}{n}\sum_{t=1}^{n}\P\big\{|v_t(\vtheta^v) - v_t(\vtheta_0^v)|>\tau\mid f_t^X\big(v_t(\vtheta_0^v)\big)>f_1\big\}=0
\]
whenever $\big\Vert\vtheta^v - \vtheta_0^v\big\Vert\geq\xi$.
In particular, this implies that $\P\big\{|v_t(\vtheta^v) - v_t(\vtheta_0^v)|>\tau\mid f_t^X\big(v_t(\vtheta_0^v)\big)>f_1\big\}=0$ for all $t\in\mathbb{N}$ and all $\tau>0$.
From the verification of Assumption~\ref{ass:cons}~\ref{it:cond dens}, we know that $f_t^{X}\big(v_t(\vtheta_0^v)\big)> f_1$ with probability one, such that $\P\big\{|v_t(\vtheta^v) - v_t(\vtheta_0^v)|>\tau\big\}=0$ for all $t\in\mathbb{N}$ and all $\tau>0$.
This, in turn, yields that $\P\big\{v_t(\vtheta^v) = v_t(\vtheta_0^v)\big\}=1$ for all $t\in\mathbb{N}$.
Therefore, \eqref{eq:vt impl VAR} implies that $\vtheta^v=\vtheta_0^v$, contradicting the fact that $\big\Vert\vtheta^v - \vtheta_0^v\big\Vert\geq\xi$.
The proof of part (b) follows along similar lines and, hence, is omitted.

\textbf{Assumption~\ref{ass:cons}~\ref{it:smooth}:} The $\mathcal{F}_{t-1}$-measurability and continuity of $v_t(\cdot)$ and $c_t(\cdot)$ are obvious from \eqref{eq:(V2.3)} and \eqref{eq:(V2.4)}, respectively.

\textbf{Assumption~\ref{ass:cons}~\ref{it:diff c}:} The continuous differentiability of $v_t(\vtheta^v)$ is immediate from \eqref{eq:(V2.3)}.

\textbf{Assumption~\ref{ass:cons}~\ref{it:bound}:} The bounds can be obtained easily from Lemma~\ref{lem:6tilde}.

\textbf{Assumption~\ref{ass:cons}~\ref{it:mom bounds cons}:} The conditions $\E|X_t|\leq K$ and $\E|Y_t|^{1+\iota}\leq K$ are immediate from VAR Assumption~\ref{Vass:mom}.

By the Cauchy--Schwarz inequality, the remaining moment conditions hold if we can show that $\E\big[V(\mathcal{F}_{t-1})\big]\leq K$, $\E\big[V_1^2(\mathcal{F}_{t-1})\big]\leq K$ and $\E\big[C^2(\mathcal{F}_{t-1})\big]\leq K$. From Lemma~\ref{lem:6tilde}, these conditions are satisfied when $\E|X_t^2|\leq K$ and $\E|Y_t^2|\leq K$. This is, however, true by VAR Assumption~\ref{Vass:mom}.

\subsection{Verification of Assumption~\ref{ass:an}}\label{Ass2 VAR}

We verify items~\ref{it:int}--\ref{it:mixing} of Assumption~\ref{ass:an} in this subsection. 

\textbf{Assumption~\ref{ass:an}~\ref{it:int}:} Follows directly from VAR Assumption~\ref{Vass:param}.

\textbf{Assumption~\ref{ass:an}~\ref{it:diff}:} That $v_t(\cdot)$ and $c_t(\cdot)$ are a.s.\ twice continuously differentiable can be seen from \eqref{eq:(V2.3)}--\eqref{eq:(V2.4)}. That the gradients are different from $\vzero$ follows from \eqref{eq:(1.P.1)}--\eqref{eq:(1.P.2)}.

\textbf{Assumption~\ref{ass:an}~\ref{it:bound2.1}:} That $\big\Vert\nabla^2v_t(\vtheta^v)\big\Vert\leq V_2(\mathcal{F}_{t-1})$ is immediate from Lemma~\ref{lem:6tilde}. 

Since $\nabla^2 v_t(\vtheta^v)\equiv\vzero$, the Lipschitz-type condition follows for $V_3(\mathcal{F}_{t-1})=0$.

\textbf{Assumption~\ref{ass:an}~\ref{it:bound2.2}:} The claim follows as the previous item with $C_3(\mathcal{F}_{t-1})=0$.

\textbf{Assumption~\ref{ass:an}~\ref{it:mom bounds cons2}:} Since $V_2(\mathcal{F}_{t-1})=V_3(\mathcal{F}_{t-1})=C_2(\mathcal{F}_{t-1})=C_3(\mathcal{F}_{t-1})=0$, we only have to show that $\E[V_1^{3+\iota}(\mathcal{F}_{t-1})]\leq K$ and $\E[C_1^{3+\iota}(\mathcal{F}_{t-1})]\leq K$ for some $\iota>0$. This, however, follows from Lemma~\ref{lem:6tilde}, the $c_r$-inequality and VAR Assumption~\ref{Vass:mom}.

\textbf{Assumption~\ref{ass:an}~\ref{it:Lipschitz an}:} The Lipschitz continuity of $x\mapsto\partial_1 F_t(x,y)$ follows similarly as that of $y\mapsto\partial_2 F_t(x,y)$ under Assumption~\ref{ass:cons}~\ref{it:Lipschitz cons}.

\textbf{Assumption~\ref{ass:an}~\ref{it:bound cdf}:} The bound $f_t^Y(\cdot)\leq K$ follows similarly as $f_t^X(\cdot)\leq K$ under Assumption~\ref{ass:cons}~\ref{it:cond dens}.

From \eqref{eq:(V5.1)},
\begin{equation*}
	f_t(x,y) = f_{\varepsilon_X,\varepsilon_Y}(x-\mu_{X,t},\, y-\mu_{Y,t})\leq K.
\end{equation*}
Also from \eqref{eq:(V5.1)},
\begin{equation*}
	\big|\partial_1F_t(x,y)\big| = \big|\partial_1F_t(x-\mu_{X,t},\, y-\mu_{Y,t})\big|\leq K,
\end{equation*}
with a similar bound for $\partial_2F_t(x,y)$.

\textbf{Assumption~\ref{ass:an}~\ref{it:pd}:} Recall from the verification of Assumption~\ref{ass:cons}~\ref{it:str stat} that $(X_t, Y_t)^\prime$ is strictly stationary. 
From \eqref{eq:(V2.3)}--\eqref{eq:(V2.4)} and \eqref{eq:(1.P.1)}--\eqref{eq:(1.P.2)} it then follows that $v_t(\vtheta_0^v)$, $\nabla v_t(\vtheta_0^v)$, $c_t(\vtheta_0^c)$ and $\nabla c_t(\vtheta_0^c)$ are strictly stationary.
Therefore, the matrices in Assumption~\ref{ass:an}~\ref{it:pd} do not depend on $n$, such that $\mLambda_n=\mLambda$, $\mLambda_{n,(1)}=\mLambda_{(1)}$, $\mV_n=\mV$, \ldots.

We again only show that the first matrix is positive definite. To do so, we proceed by contradiction and assume that $\mLambda$ is not positive definite.
Since $\mLambda$ is clearly positive semi-definite, this implies the existence of some $\vx=(x_1,x_2,x_3)^\prime\in\mathbb{R}^3$ with $\vx\neq\vzeros$, such that $\vx^\prime\mLambda\vx=0$. By definition, $v_t(\vtheta_0^v)$ is in the support of $X_t\mid\mathcal{F}_{t-1}$, such that $f_t^X\big(v_t(\vtheta_0^v)\big)>0$ by VAR Assumption~\ref{Vass:error}. Therefore, for any $t\in\mathbb{N}$,
\begin{align*}
	0 &= \vx^\prime\mLambda\vx\\
	&= \E\Big[f_t^{X}\big(v_t(\vtheta_0^v)\big)\vx^\prime\nabla v_t(\vtheta_0^v)\nabla^\prime v_t(\vtheta_0^v)\vx\Big]\\
	&= \E\Big[\underbrace{f_t^{X}\big(v_t(\vtheta_0^v)\big)}_{>0}\underbrace{(x_1+x_2 X_{t-1} + x_3 Y_{t-1})^2}_{\geq0}\Big],
\end{align*}
where the final step follows from \eqref{eq:(1.P.1)}. 
We conclude that $x_1+x_2 X_{t-1} + x_3 Y_{t-1}\overset{\text{a.s.}}{=}0$ for all $t\in\mathbb{N}$ or, equivalently,
\[
\Pr\big\{x_1+x_2 X_{t-1} + x_3 Y_{t-1}=0\quad \forall\ t\in\mathbb{N}\big\}=1.
\]
Arguing similarly as under Assumption~\ref{ass:cons}~\ref{it:id} (with the $x_i$ playing the role of $\theta_i^v-\theta_{0,i}^{v}$ for $i=1,2,3$), we obtain that $x_1=x_2=x_3=0$, contradicting $\vx\neq0$. This establishes positive definiteness of $\mLambda$.

\textbf{Assumption~\ref{ass:an}~\ref{it:lambda func}:} It is evident from \eqref{eq:(V2.3)}, \eqref{eq:(V2.4)} and \eqref{eq:(1.P.2)} that $v_t(\vtheta^v)$, $c_t(\vtheta^c)$ and $\nabla c_t(\vtheta^c)$ are fixed functions of $X_{t-1}$ and $Y_{t-1}$, which are stationary by Assumption~\ref{ass:cons}~\ref{it:str stat} (verified above).
Therefore, $v_t(\vtheta^v)$, $c_t(\vtheta^c)$ and $\nabla c_t(\vtheta^c)$ are stationary, which in turn implies that $\Cc_t(\vtheta^c,\vtheta^v)$ is stationary.
Hence, $\vlambda_n(\vtheta^c,\vtheta^v)=\frac{1}{n}\sum_{t=1}^{n}\E\big[\Cc_t(\vtheta^c,\vtheta^v)\big]$ is independent of $n$, and we may write $\vlambda(\vtheta^c,\vtheta^v)=\vlambda_n(\vtheta^c,\vtheta^v)$.
The remainder of the proof follows the verification of Assumption~\ref{ass:an}~\ref{it:lambda func} for CCC--GARCH models in Section~\ref{Ass2}.

\textbf{Assumption~\ref{ass:an}~\ref{it:eq bound}:} From \eqref{eq:(V2.3)}, 
\[
X_t=v_t(\vtheta^v)\qquad\Longleftrightarrow\qquad X_t=\theta_1^v + \theta_2^v X_{t-1} + \theta_3^v Y_{t-1}.
\]
Writing this out for $t=2,\ldots,5$ gives the linear system of equations:
\begin{align}
	X_2 & =\theta_1^v + \theta_2^v X_{1} + \theta_3^v Y_{1},\label{eq:(13.m.0)}\\
	X_3 & =\theta_1^v + \theta_2^v X_{2} + \theta_3^v Y_{2},\label{eq:(13.m.1)}\\
	X_4 & =\theta_1^v + \theta_2^v X_{3} + \theta_3^v Y_{3},\label{eq:(13.m.2)}\\
	X_5 & =\theta_1^v + \theta_2^v X_{4} + \theta_3^v Y_{4}.\label{eq:(13.m.3)}
\end{align}
Equations \eqref{eq:(13.m.0)}--\eqref{eq:(13.m.2)} imply
\[
\begin{pmatrix}
	X_2 \\ X_3\\ X_4
\end{pmatrix}=
\begin{pmatrix}
	1 & X_1& Y_1\\
	1 & X_2& Y_2\\
	1 & X_3& Y_3\\
\end{pmatrix}
\begin{pmatrix}
	\theta_1^v\\ \theta_2^v \\ \theta_3^v
\end{pmatrix}.
\]
This is a linear system of equations in three ``unknowns'' $\theta_1^v$, $\theta_2^v$ and $\theta_3^v$.
Lengthy and tedious but straightforward calculations yield that
\[
\theta_1^v = \frac{X_2^2Y_3 - X_2 X_3 Y_2 + X_3^2 Y_1 - X_1 X_3 Y_3 + X_1 X_4 Y_2 - X_2 X_4 Y_1}{X_2 Y_3 - X_3 Y_2 - X_1 Y_3 + X_1 Y_2 + X_3 Y_1 - X_2 Y_1}.
\]
The same arguments applied to \eqref{eq:(13.m.1)}--\eqref{eq:(13.m.3)} yield that
\[
\theta_1^v = \frac{X_3^2Y_4 - X_3 X_4 Y_3 + X_4^2 Y_2 - X_2 X_4 Y_4 + X_2 X_5 Y_3 - X_3 X_5 Y_2}{X_3 Y_4 - X_4 Y_3 - X_2 Y_4 + X_2 Y_3 + X_4 Y_2 - X_3 Y_2}.
\]

Hence, a necessary condition for $X_t=v_t(\vtheta^v)$ to occur (at least) 5 times is that $(X_1,Y_1,\ldots,X_n,Y_n)^\prime$ lies in the set
\[
p^{-1}(0):=\big\{(X_1,Y_1,\ldots,X_n,Y_n)^\prime\in\mathbb{R}^{2n}:\ p(X_1,Y_1,\ldots,X_n,Y_n)=0\big\},
\]
where 
\begin{align*}
	p(X_1,Y_1,\ldots,X_n,Y_n)&=(X_2^2Y_3 - X_2 X_3 Y_2 + X_3^2 Y_1 - X_1 X_3 Y_3 + X_1 X_4 Y_2 - X_2 X_4 Y_1)\\
	&\hspace{3cm} \times(X_3 Y_4 - X_4 Y_3 - X_2 Y_4 + X_2 Y_3 + X_4 Y_2 - X_3 Y_2) \\
	&\hspace{0.4cm} -(X_3^2Y_4 - X_3 X_4 Y_3 + X_4^2 Y_2 - X_2 X_4 Y_4 + X_2 X_5 Y_3 - X_3 X_5 Y_2)\\
	&\hspace{3cm} \times(X_2 Y_3 - X_3 Y_2 - X_1 Y_3 + X_1 Y_2 + X_3 Y_1 - X_2 Y_1).
\end{align*}
The Hausdorff dimension of $p^{-1}(0)$ is less than $n$. Since, additionally, the distribution of $(X_1,Y_1,\ldots,X_n,Y_n)^\prime$ is absolutely continuous (VAR Assumption~\ref{Vass:error}), it follows that a.s.
\[
\sup_{\vtheta^v\in\mTheta^v}\sum_{t=1}^{n}\1_{\{X_t=v_t(\vtheta^v)\}}\leq 5.
\]

We obtain similarly that
\[
\sup_{\vtheta^c\in\mTheta^c}\sum_{t=1}^{n}\1_{\{Y_t=c_t(\vtheta^v)\}}\leq 5.
\]

\textbf{Assumption~\ref{ass:an}~\ref{it:mixing}:} By \eqref{eq:(V2.3)}--\eqref{eq:(V2.4)} and \eqref{eq:(1.P.1)}--\eqref{eq:(1.P.2)}, the vector 
\[
\big(X_t,\ Y_t,\ v_t(\vtheta_0^v),\ \nabla^\prime v_t(\vtheta_0^v),\ c_t(\vtheta_0^c),\ \nabla^\prime c_t(\vtheta_0^c)\big)^\prime
\]
is a function of $(X_t,Y_t,X_{t-1}, Y_{t-1})^\prime$.
Since $(X_t,Y_t)^\prime$ is $\alpha$-mixing (VAR Assumption~\ref{Vass:mixing}), the same then holds for $\big(X_t,\ Y_t,\ v_t(\vtheta_0^v),\ \nabla^\prime v_t(\vtheta_0^v),\ c_t(\vtheta_0^c),\ \nabla^\prime c_t(\vtheta_0^c)\big)^\prime$ by \citet[Theorem~3.49]{Whi01}, establishing the assumption.

\section{Additional Simulation Results}
\label{sec:AddSimRes}

In this section, we present simulation results assessing the bandwidth choice in the asymptotic covariance estimator in Section \ref{sec:SimBandwidthChoice} and for two different methods that can be used for initialization of the dynamic models in Section 
\ref{sec:Initializations}.

\subsection{Bandwidth Choice in Estimating the Asymptotic Covariance}
\label{sec:SimBandwidthChoice}

The estimators for the asymptotic covariance matrix discussed in Section~\ref{avar} involve bandwidth choices $\widehat{b}_{n,x}$  and $\widehat{b}_{n,y}$ that are required for estimating the (conditional) densities and distribution functions that appear in the asymptotic covariance.
Here, we refine the choices discussed in Remark~\ref{rem:asvar} that are used in the simulation study and the empirical application of the main article by generalizing the bandwidth choices to
\begin{align*}
	&\widehat{b}_{n,x} = C \cdot \text{MAD} \Big[\big\{X_t-v_t(\widehat{\vtheta}_n^v) \big\}_{t=1,\ldots,n}\Big] \left[ \Phi^{-1} \big( \beta + m(n,\beta) \big) - \Phi^{-1} \big( \beta - m(n,\beta) \big) \right], \\
	&\widehat{b}_{n,y} = C \cdot \text{MAD} \Big[\big\{Y_t-c_t(\widehat{\vtheta}_n^c) \big\}_{t=1,\ldots,n}\Big] \left[ \Phi^{-1} \big( \alpha + m((1-\beta)n, \alpha) \big) - \Phi^{-1} \big(\alpha - m((1-\beta)n,\alpha)  \big) \right], \\
	&m(n, \tau) =  n^{-1/3} \left( \Phi^{-1} (0.975)\right)^{2/3} \left( \frac{1.5 (\phi(\Phi^{-1}(\tau)))^2}{2(\Phi^{-1}(\tau))^2 +1} \right)^{1/3},
\end{align*}
for some multiplicative factor $C \in \mathbb{R}$.
In the above formulas, $\text{MAD}(\cdot)$ refers to the sample median absolute deviation, and $\phi(\cdot)$ and $\Phi^{-1}(\cdot)$ denote the density and quantile functions of the standard normal distribution.
The choices given in Remark~\ref{rem:asvar} arise for $C=1$.

\setcounter{table}{4}
\begin{table}[tb]
	\centering
	\tiny 
	\begin{tabular}{rrlrrrrrlrrrrrlrrrrr}
		\toprule
		\multicolumn{2}{l}{\textbf{VaR}} & &  \multicolumn{5}{c}{$\omega_1$} & & \multicolumn{5}{c}{$A_{11}$} & & \multicolumn{5}{c}{$B_{11}$}  \\
		\cmidrule{4-8}  	\cmidrule{10-14} 	 \cmidrule{16-20} 
		$\alpha,\beta$ & $n$ & $C = $&
		0.25 & 0.5 & 1 & 2 & 4 & & 
		0.25 & 0.5 & 1 & 2 & 4 & & 
		0.25 & 0.5 & 1 & 2 & 4 \\ 
		\midrule
		\multirow{4}{*}{0.90} 
		& 500 &  & 0.97 & 0.98 & 0.98 & 0.98 & 0.98 &  & 0.90 & 0.92 & 0.94 & 0.94 & 0.93 &  & 0.95 & 0.96 & 0.97 & 0.96 & 0.96 \\ 
		& 1000 &  & 0.98 & 0.98 & 0.99 & 0.99 & 0.98 &  & 0.91 & 0.93 & 0.95 & 0.95 & 0.94 &  & 0.96 & 0.97 & 0.98 & 0.97 & 0.97 \\ 
		& 2000 &  & 0.99 & 0.99 & 0.99 & 0.99 & 0.99 &  & 0.93 & 0.95 & 0.95 & 0.95 & 0.94 &  & 0.98 & 0.98 & 0.99 & 0.99 & 0.98 \\ 
		& 4000 &  & 1.00 & 1.00 & 1.00 & 1.00 & 1.00 &  & 0.94 & 0.95 & 0.96 & 0.96 & 0.95 &  & 0.99 & 0.99 & 0.99 & 0.99 & 0.99 \\ 
		\addlinespace
		\multirow{4}{*}{0.95} 
		& 500 &  & 0.95 & 0.97 & 0.98 & 0.97 & 0.95 &  & 0.88 & 0.91 & 0.93 & 0.92 & 0.87 &  & 0.93 & 0.95 & 0.96 & 0.95 & 0.93 \\ 
		& 1000 &  & 0.97 & 0.98 & 0.98 & 0.98 & 0.97 &  & 0.89 & 0.92 & 0.93 & 0.94 & 0.89 &  & 0.95 & 0.96 & 0.97 & 0.97 & 0.95 \\ 
		& 2000 &  & 0.98 & 0.99 & 0.99 & 0.99 & 0.99 &  & 0.91 & 0.93 & 0.94 & 0.95 & 0.91 &  & 0.97 & 0.98 & 0.98 & 0.98 & 0.97 \\ 
		& 4000 &  & 0.99 & 0.99 & 1.00 & 1.00 & 0.99 &  & 0.92 & 0.94 & 0.95 & 0.95 & 0.93 &  & 0.98 & 0.99 & 0.99 & 0.99 & 0.99 \\ 
		\midrule
		\\ 
		\multicolumn{2}{l}{\textbf{CoVaR}} & &  \multicolumn{5}{c}{$\omega_2$} & & \multicolumn{5}{c}{$A_{22}$} & & \multicolumn{5}{c}{$B_{22}$}  \\
		\cmidrule{4-8}  	\cmidrule{10-14} 	 \cmidrule{16-20} 
		$\alpha,\beta$ & $n$ & $C = $&
		0.25 & 0.5 & 1 & 2 & 4 & & 
		0.25 & 0.5 & 1 & 2 & 4 & & 
		0.25 & 0.5 & 1 & 2 & 4 \\ 
		\midrule
		\multirow{4}{*}{0.90} 
		& 500 &  & 0.84 & 0.85 & 0.85 & 0.83 & 0.90 &  & 0.84 & 0.87 & 0.90 & 0.92 & 0.98 &  & 0.81 & 0.82 & 0.81 & 0.81 & 0.87 \\ 
		& 1000 &  & 0.85 & 0.86 & 0.86 & 0.84 & 0.86 &  & 0.85 & 0.88 & 0.90 & 0.91 & 0.94 &  & 0.83 & 0.84 & 0.83 & 0.82 & 0.84 \\ 
		& 2000 &  & 0.88 & 0.89 & 0.89 & 0.88 & 0.88 &  & 0.88 & 0.90 & 0.91 & 0.91 & 0.91 &  & 0.86 & 0.87 & 0.88 & 0.87 & 0.87 \\ 
		& 4000 &  & 0.94 & 0.94 & 0.94 & 0.94 & 0.93 &  & 0.90 & 0.93 & 0.94 & 0.94 & 0.92 &  & 0.92 & 0.93 & 0.93 & 0.93 & 0.92 \\ 
		\addlinespace
		\multirow{4}{*}{0.95} 
		& 500 &  & 0.80 & 0.82 & 0.82 & 0.79 & 0.86 &  & 0.80 & 0.85 & 0.88 & 0.89 & 0.94 &  & 0.77 & 0.78 & 0.77 & 0.74 & 0.81 \\ 
		& 1000 &  & 0.81 & 0.83 & 0.82 & 0.79 & 0.78 &  & 0.79 & 0.84 & 0.86 & 0.85 & 0.83 &  & 0.78 & 0.79 & 0.78 & 0.75 & 0.74 \\ 
		& 2000 &  & 0.82 & 0.84 & 0.84 & 0.81 & 0.79 &  & 0.82 & 0.85 & 0.87 & 0.86 & 0.80 &  & 0.80 & 0.83 & 0.82 & 0.79 & 0.76 \\ 
		& 4000 &  & 0.85 & 0.87 & 0.87 & 0.86 & 0.82 &  & 0.85 & 0.88 & 0.90 & 0.90 & 0.83 &  & 0.83 & 0.85 & 0.85 & 0.84 & 0.80 \\ 
		\bottomrule
	\end{tabular}
	\caption{
		Empirical coverage rates for the $95\%$ confidence intervals for the choices $C \in \{ 0.25, 0.5, 1, 2, 4\}$ in the bandwidth formulas given in the main text.
		The simulation results are based on the CCC–GARCH model in  \eqref{eqn:ECCCmodel} and $M = 5000$ simulation replications.
	}
	\label{tab:BandwidthChoice}
\end{table}

Table \ref{tab:BandwidthChoice} shows the coverage rates of the confidence intervals, akin to the column ``CI'' in Table~\ref{tab:SimResultsCoCAViaR6p} of the main manuscript, however, for the choices $C \in \{ 0.25, 0.5, 1, 2, 4\}$.
We observe that the coverage rates of the confidence intervals are almost unchanged for the different finite-sample correction factors $C$.

\subsection{Initialization of the Dynamic CoCAViaR Models}
\label{sec:Initializations}

Here, we assess how initializing our dynamic CoCAViaR models  either correctly with the model intercept parameters $\big(v_1(\vtheta^v), c_1(\vtheta^c \big)^\prime = \vomega$ or misspecified with a constant value of ones $\big(v_1(\vtheta^v), c_1(\vtheta^c \big)^\prime = (1,1)^\prime$ changes the estimation results.
For this, Table \ref{tab:Initialization} compares the estimators' empirical bias and standard deviation (over all simulation runs) for these two initialization choices.
We do not find a meaningful difference in either the bias or the standard deviation, even for the relatively small sample size of $n=500$.
This implies that our novel and theoretically attractive initialization method with a model parameter works in practice and delivers similar results to a constant initialization.

\begin{table}[tb]
	\centering
	\scriptsize
	\begin{tabular}{rrlrrrrrrlrrrrrr}
		\toprule
		\multicolumn{2}{l}{\textbf{VaR}} & &  \multicolumn{6}{c}{Bias}  & &  \multicolumn{6}{c}{Standard Deviation}  \\
		\cmidrule{4-9}  	\cmidrule{11-16} 
		& & & \multicolumn{2}{c}{$\omega_1$} & \multicolumn{2}{c}{$A_{11}$} & \multicolumn{2}{c}{$B{11}$} && \multicolumn{2}{c}{$\omega_1$} & \multicolumn{2}{c}{$A_{11}$} & \multicolumn{2}{c}{$B{11}$} \\
		\cmidrule{4-5}  	\cmidrule{6-7}  	\cmidrule{8-9}  \cmidrule{11-12}  	\cmidrule{13-14}  	\cmidrule{15-16}  
		$\alpha,\beta$ & $n$ & &
		$\vomega$ & $\boldsymbol{1}$ & $\vomega$ & $\boldsymbol{1}$   & $\vomega$ & $\boldsymbol{1}$    & &
		$\vomega$ & $\boldsymbol{1}$  & $\vomega$ & $\boldsymbol{1}$   & $\vomega$ & $\boldsymbol{1}$    \\
		\midrule 
		\midrule
		\multirow{4}{*}{0.90} 
		& 500 &  & 0.016 & 0.021 & 0.011 & 0.014 & $-$0.052 & $-$0.068 &  & 0.055 & 0.058 & 0.110 & 0.110 & 0.193 & 0.201 \\ 
		& 1000 &  & 0.010 & 0.013 & 0.009 & 0.011 & $-$0.036 & $-$0.045 &  & 0.041 & 0.044 & 0.078 & 0.078 & 0.145 & 0.150 \\ 
		& 2000 &  & 0.006 & 0.008 & 0.005 & 0.006 & $-$0.021 & $-$0.025 &  & 0.028 & 0.029 & 0.055 & 0.055 & 0.099 & 0.101 \\ 
		& 4000 &  & 0.002 & 0.003 & 0.002 & 0.002 & $-$0.007 & $-$0.009 &  & 0.017 & 0.017 & 0.039 & 0.039 & 0.062 & 0.063 \\
		\addlinespace
		\multirow{4}{*}{0.95} 
		& 500 &  & 0.024 & 0.031 & 0.011 & 0.014 & $-$0.059 & $-$0.075 &  & 0.083 & 0.089 & 0.147 & 0.147 & 0.211 & 0.222 \\ 
		& 1000 &  & 0.015 & 0.019 & 0.012 & 0.015 & $-$0.039 & $-$0.047 &  & 0.057 & 0.059 & 0.106 & 0.105 & 0.149 & 0.151 \\ 
		& 2000 &  & 0.009 & 0.010 & 0.006 & 0.008 & $-$0.021 & $-$0.026 &  & 0.037 & 0.037 & 0.076 & 0.076 & 0.099 & 0.100 \\ 
		& 4000 &  & 0.004 & 0.005 & 0.002 & 0.003 & $-$0.010 & $-$0.012 &  & 0.024 & 0.025 & 0.053 & 0.053 & 0.067 & 0.067 \\ 
		\midrule
		\\ 
		\multicolumn{2}{l}{\textbf{CoVaR}} & &  \multicolumn{6}{c}{Bias}  & &  \multicolumn{6}{c}{Standard Deviation}  \\
		\cmidrule{4-9}  	\cmidrule{11-16} 
		& & & \multicolumn{2}{c}{$\omega_2$} & \multicolumn{2}{c}{$A_{22}$} & \multicolumn{2}{c}{$B{22}$} && \multicolumn{2}{c}{$\omega_2$} & \multicolumn{2}{c}{$A_{22}$} & \multicolumn{2}{c}{$B{22}$} \\
		\cmidrule{4-5}  	\cmidrule{6-7}  	\cmidrule{8-9}  \cmidrule{11-12}  	\cmidrule{13-14}  	\cmidrule{15-16}  
		$\alpha,\beta$ & $n$ & &
		$\vomega$ & $\boldsymbol{1}$  & $\vomega$ & $\boldsymbol{1}$   & $\vomega$ & $\boldsymbol{1}$    & &
		$\vomega$ & $\boldsymbol{1}$  & $\vomega$ & $\boldsymbol{1}$   & $\vomega$ & $\boldsymbol{1}$    \\
		\midrule 
		\multirow{4}{*}{0.90} 
		& 500 &  & 0.134 & 0.132 & 0.057 & 0.061 & $-$0.414 & $-$0.412 &  & 0.223 & 0.219 & 0.453 & 0.447 & 0.628 & 0.623 \\ 
		& 1000 &  & 0.114 & 0.110 & 0.054 & 0.052 & $-$0.347 & $-$0.336 &  & 0.208 & 0.204 & 0.300 & 0.302 & 0.588 & 0.575 \\ 
		& 2000 &  & 0.078 & 0.074 & 0.041 & 0.041 & $-$0.236 & $-$0.226 &  & 0.173 & 0.167 & 0.203 & 0.202 & 0.492 & 0.475 \\ 
		& 4000 &  & 0.035 & 0.036 & 0.029 & 0.030 & $-$0.112 & $-$0.114 &  & 0.111 & 0.112 & 0.137 & 0.137 & 0.328 & 0.330 \\ 
		\addlinespace
		\multirow{4}{*}{0.95} 
		& 500 &  & 0.232 & 0.217 & 0.101 & 0.110 & $-$0.553 & $-$0.524 &  & 0.329 & 0.316 & 0.976 & 0.975 & 0.663 & 0.648 \\ 
		& 1000 &  & 0.215 & 0.209 & 0.109 & 0.107 & $-$0.490 & $-$0.477 &  & 0.316 & 0.313 & 0.673 & 0.671 & 0.640 & 0.633 \\ 
		& 2000 &  & 0.172 & 0.165 & 0.077 & 0.077 & $-$0.382 & $-$0.369 &  & 0.295 & 0.289 & 0.449 & 0.449 & 0.596 & 0.590 \\ 
		& 4000 &  & 0.123 & 0.117 & 0.071 & 0.070 & $-$0.276 & $-$0.261 &  & 0.254 & 0.246 & 0.305 & 0.305 & 0.527 & 0.509 \\ 
		\bottomrule
	\end{tabular}
	\caption{Empirical bias and standard deviation for the six parameter CoCAViaR model based on the CCC–GARCH model in  \eqref{eqn:ECCCmodel} and $M=5000$ simulation replications based on either the initialization with the parameter $\vomega$ (correct initialization) or a constant (misspecified) initialization with the value $\boldsymbol{1} = (1,1)^\prime$.}
	\label{tab:Initialization}
\end{table}

\section{Forecast Comparison with an Alternative Scoring Function}
\label{Forecast Comparison with Alternative Loss Function}

In this section, we reconsider the CoVaR forecast comparison from Section~\ref{sec:ForecastingApplication2} of the main paper. 
We keep the setup unchanged, except for the choice of the scoring function \eqref{eq:loss} used in the comparative backtest of \citet{FH24}, which we now replace by the 0-homogeneous choice
\begin{align}
	\label{eq:loss_0hom}
	\mS_\text{0-hom} \Bigg(\begin{pmatrix}v\\ c\end{pmatrix}, \begin{pmatrix}x\\ y\end{pmatrix}\Bigg) &=
	\begin{pmatrix}
		[\1_{\{x\leq v\}} - \beta] \log(v) - \1_{\{x > v\}} \log(x) \\
		\1_{\{x>v\}} \left\{ [ \1_{\{y\leq c\}} -\alpha] \log(c)  - \1_{\{y > c\}} \log(y)  \right\} 
	\end{pmatrix};
\end{align}
see \citet{NZ17} and the Supplemental Appendix S.7.3 of \citet{FH24} for details on the choice of 0-homogeneous scoring functions.
Notice that due to the logarithm function, the scoring function in \eqref{eq:loss_0hom} is only applicable if the forecasts and observations are strictly positive, which is however innocuous for our application to (extreme) risk forecasts of financial losses.

\begin{table}[p!]
	\centering
	\scriptsize
	\resizebox{0.85\columnwidth}{!}{
		\begin{tabular}{lllrrrlrrrllr}
			\toprule
			&&& \multicolumn{3}{c}{VaR} && \multicolumn{3}{c}{CoVaR} &&\multicolumn{2}{c}{Inference} \\
			\cmidrule{4-6}  	\cmidrule{8-10} 	 \cmidrule{12-13} 
			$X_t$ & model &  & score & rank & hits &  & score & rank & hits &  & zone & $p$-value \\ 
			\midrule
			\multirow{12}{*}{BAC} & CoCAViaR-SAV-fullA &  & 0.658 & 6 & 4.3 &  & 2.953 & 1 & 7.3 &  & green & 0.00 \\ 
			& CoCAViaR-AS-mixed &  & 0.647 & 2 & 4.2 &  & 3.123 & 2 & 7.5 &  & grey & 0.00 \\ 
			& CoCAViaR-AS-signs &  & 0.647 & 1 & 4.6 &  & 3.219 & 3 & 5.2 &  & grey & 0.00 \\ 
			& CoCAViaR-AS-pos &  & 0.653 & 5 & 4.7 &  & 3.448 & 4 & 7.5 &  & green & 0.01 \\ 
			& CoCAViaR-SAV-diag &  & 0.660 & 9 & 4.5 &  & 3.499 & 5 & 8.7 &  & green & 0.02 \\ 
			& CoCAViaR-SAV-full &  & 0.658 & 6 & 4.3 &  & 3.560 & 6 & 8.3 &  & green & 0.07 \\ 
			& DCC-n-Chol &  & 0.660 & 10 & 4.5 &  & 4.489 & 7 & 19.5 &  & yellow & 0.12 \\ 
			& DCC-gjr-t-Chol &  & 0.649 & 4 & 4.7 &  & 4.623 & 8 & 15.8 &  & grey & 0.00 \\ 
			& DCC-t-Chol &  & 0.662 & 11 & 4.8 &  & 4.631 & 9 & 17.2 &  &  &  \\ 
			& DCC-n-sym &  & 0.660 & 8 & 4.5 &  & 4.835 & 10 & 19.5 &  & yellow & 0.43 \\ 
			& DCC-gjr-t-sym &  & 0.649 & 3 & 4.8 &  & 4.858 & 11 & 16.5 &  & grey & 0.00 \\ 
			& DCC-t-sym &  & 0.662 & 12 & 4.9 &  & 5.092 & 12 & 17.7 &  & yellow & 0.95 \\ 
			\midrule
			\multirow{12}{*}{C} & CoCAViaR-SAV-fullA &  & 0.667 & 10 & 4.8 &  & 3.298 & 1 & 4.1 &  & yellow & 0.10 \\ 
			& CoCAViaR-SAV-full &  & 0.667 & 10 & 4.8 &  & 3.316 & 2 & 5.0 &  & green & 0.08 \\ 
			& CoCAViaR-SAV-diag &  & 0.667 & 8 & 4.9 &  & 3.403 & 3 & 4.1 &  & yellow & 0.14 \\ 
			& CoCAViaR-AS-mixed &  & 0.659 & 4 & 4.9 &  & 3.407 & 4 & 5.6 &  & green & 0.03 \\ 
			& CoCAViaR-AS-signs &  & 0.658 & 1 & 5.3 &  & 3.463 & 5 & 5.2 &  & green & 0.02 \\ 
			& CoCAViaR-AS-pos &  & 0.666 & 5 & 5.8 &  & 3.775 & 6 & 4.8 &  & yellow & 0.44 \\ 
			& DCC-n-sym &  & 0.666 & 7 & 4.9 &  & 3.947 & 7 & 9.6 &  & green & 0.10 \\ 
			& DCC-n-Chol &  & 0.666 & 6 & 5.0 &  & 4.059 & 8 & 11.0 &  & green & 0.03 \\ 
			& DCC-gjr-t-Chol &  & 0.658 & 3 & 5.0 &  & 4.086 & 9 & 11.0 &  & grey & 0.05 \\ 
			& DCC-t-sym &  & 0.668 & 12 & 5.1 &  & 4.133 & 10 & 10.9 &  & yellow & 0.33 \\ 
			& DCC-gjr-t-sym &  & 0.658 & 2 & 5.0 &  & 4.162 & 11 & 8.6 &  & grey & 0.05 \\ 
			& DCC-t-Chol &  & 0.667 & 11 & 5.1 &  & 4.195 & 12 & 11.6 &  &  &  \\ 
			\midrule
			\multirow{12}{*}{GS} & CoCAViaR-AS-pos &  & 0.606 & 3 & 4.3 &  & 3.161 & 1 & 5.5 &  & green & 0.03 \\ 
			& CoCAViaR-AS-mixed &  & 0.605 & 2 & 4.7 &  & 3.232 & 2 & 4.2 &  & grey & 0.01 \\ 
			& CoCAViaR-SAV-fullA &  & 0.611 & 6 & 4.7 &  & 3.380 & 3 & 9.2 &  & yellow & 0.12 \\ 
			& CoCAViaR-AS-signs &  & 0.603 & 1 & 5.0 &  & 3.441 & 4 & 4.0 &  & grey & 0.00 \\ 
			& CoCAViaR-SAV-full &  & 0.611 & 6 & 4.7 &  & 3.538 & 5 & 10.1 &  & yellow & 0.18 \\ 
			& CoCAViaR-SAV-diag &  & 0.614 & 8 & 4.9 &  & 3.659 & 6 & 6.5 &  & yellow & 0.16 \\ 
			& DCC-n-sym &  & 0.618 & 10 & 4.3 &  & 3.819 & 7 & 10.0 &  & green & 0.08 \\ 
			& DCC-n-Chol &  & 0.619 & 12 & 4.3 &  & 3.908 & 8 & 13.0 &  & green & 0.00 \\ 
			& DCC-t-sym &  & 0.617 & 9 & 4.6 &  & 4.012 & 9 & 9.4 &  & grey & 0.01 \\ 
			& DCC-gjr-t-sym &  & 0.609 & 5 & 4.1 &  & 4.072 & 10 & 11.7 &  & grey & 0.04 \\ 
			& DCC-gjr-t-Chol &  & 0.608 & 4 & 4.2 &  & 4.149 & 11 & 13.1 &  & grey & 0.02 \\ 
			& DCC-t-Chol &  & 0.619 & 11 & 4.6 &  & 4.167 & 12 & 12.0 &  &  &  \\ 
			\midrule
			\multirow{12}{*}{JPM} & CoCAViaR-AS-mixed &  & 0.547 & 1 & 4.0 &  & 2.918 & 1 & 6.9 &  & grey & 0.00 \\ 
			& CoCAViaR-AS-signs &  & 0.549 & 3 & 4.1 &  & 3.040 & 2 & 6.7 &  & grey & 0.00 \\ 
			& CoCAViaR-SAV-fullA &  & 0.558 & 6 & 4.1 &  & 3.086 & 3 & 12.5 &  & green & 0.01 \\ 
			& CoCAViaR-SAV-full &  & 0.558 & 6 & 4.1 &  & 3.146 & 4 & 10.6 &  & green & 0.01 \\ 
			& CoCAViaR-AS-pos &  & 0.548 & 2 & 4.7 &  & 3.379 & 5 & 8.3 &  & grey & 0.00 \\ 
			& CoCAViaR-SAV-diag &  & 0.564 & 8 & 4.3 &  & 3.485 & 6 & 8.2 &  & green & 0.09 \\ 
			& DCC-n-sym &  & 0.569 & 11 & 3.7 &  & 3.619 & 7 & 17.2 &  & green & 0.00 \\ 
			& DCC-t-sym &  & 0.571 & 12 & 3.8 &  & 3.893 & 8 & 20.8 &  & red & 0.00 \\ 
			& DCC-n-Chol &  & 0.566 & 9 & 4.0 &  & 3.968 & 9 & 18.8 &  & green & 0.00 \\ 
			& DCC-gjr-t-sym &  & 0.555 & 5 & 3.8 &  & 4.032 & 10 & 17.7 &  & grey & 0.01 \\ 
			& DCC-t-Chol &  & 0.567 & 10 & 4.1 &  & 4.242 & 11 & 21.4 &  &  &  \\ 
			& DCC-gjr-t-Chol &  & 0.553 & 4 & 3.9 &  & 4.418 & 12 & 19.2 &  & grey & 0.00 \\ 
			\midrule
			\multirow{12}{*}{SPF} & CoCAViaR-SAV-fullA &  & 0.457 & 12 & 4.2 &  & 3.058 & 1 & 9.3 &  & green & 0.00 \\ 
			& CoCAViaR-AS-mixed &  & 0.439 & 2 & 4.5 &  & 3.185 & 2 & 5.3 &  & grey & 0.00 \\ 
			& CoCAViaR-AS-signs &  & 0.437 & 1 & 4.7 &  & 3.296 & 3 & 6.8 &  & grey & 0.00 \\ 
			& CoCAViaR-AS-pos &  & 0.446 & 5 & 4.7 &  & 3.354 & 4 & 5.9 &  & green & 0.01 \\ 
			& CoCAViaR-SAV-diag &  & 0.456 & 8 & 4.1 &  & 3.473 & 5 & 10.5 &  & green & 0.01 \\ 
			& CoCAViaR-SAV-full &  & 0.457 & 12 & 4.2 &  & 3.651 & 6 & 10.3 &  & yellow & 0.20 \\ 
			& DCC-gjr-t-sym &  & 0.441 & 4 & 4.3 &  & 4.062 & 7 & 14.5 &  & grey & 0.00 \\ 
			& DCC-n-sym &  & 0.454 & 6 & 4.5 &  & 4.223 & 8 & 16.7 &  & grey & 0.00 \\ 
			& DCC-gjr-t-Chol &  & 0.440 & 3 & 4.8 &  & 4.475 & 9 & 17.4 &  & grey & 0.00 \\ 
			& DCC-n-Chol &  & 0.454 & 7 & 4.4 &  & 4.482 & 10 & 20.5 &  & grey & 0.01 \\ 
			& DCC-t-sym &  & 0.456 & 10 & 4.5 &  & 4.544 & 11 & 18.3 &  & yellow & 0.15 \\ 
			& DCC-t-Chol &  & 0.456 & 9 & 4.5 &  & 4.704 & 12 & 21.1 &  &  &  \\ 
			\bottomrule
		\end{tabular}
	}
	\caption{VaR and CoVAR forecasting results for $Y_t$ equaling S\&P~500 losses and various choices of $X_t$ as in Table \ref{tab:ApplForecastingResults} of the main article, but here we use the 0-homogeneous scoring function of \citet{FH24} given in \eqref{eq:loss_0hom}. Details on the table columns are given in the text of the main article.}
	\label{tab:ApplForecastingResults_b0}
\end{table}

Table \ref{tab:ApplForecastingResults_b0} reports the forecast comparison results equivalent to Table \ref{tab:ApplForecastingResults} but using the 0-homogeneous scoring function in \eqref{eq:loss_0hom}.
The overall results are qualitatively unchanged with the CoCAViaR models clearly exhibiting the better forecast performance, especially in the CoVaR component.

\section{CoVaR Forecasting with Multivariate GARCH Models}
\label{CoVaR forecasting with multivariate GARCH models}

This section points out one important difficulty of multivariate volatility models in forecasting CoVaR and, hence, refines the arguments given in the first paragraph of Section~\ref{sec:CoVaRModels} in the main article.
By doing so, it sheds some light on the advantages of our modeling approach.

\begin{example}\label{ex:DCC-GARCH}
	Suppose that financial losses are generated by $(X_t,Y_t)^\prime=\mSigma_t\vvarepsilon_t$ for some positive-definite, $\mathcal{F}_{t-1}$-measurable $\mSigma_t$ and $\vvarepsilon_t\overset{\text{i.i.d.}}{\sim}F(\vzeros, \mI)$, independent of $\mathcal{F}_{t-1}$, where $F$ is some generic distribution with mean $\vzero$ and variance-covariance matrix $\mI$. Then, the conditional variance-covariance matrix is $\Var\big((X_t,Y_t)^\prime\mid\mathcal{F}_{t-1}\big)=\mSigma_t\mSigma_t^\prime=:\mH_t$. Much like univariate GARCH processes model the conditional variance, multivariate GARCH models---such as the DCC--GARCH of \citet{Eng02} and the corrected DCC--GARCH of \citet{Aie13}---directly model the conditional variance-covariance matrix $\mH_t$. While for many applications, such as portfolio construction, $\mH_t$ is the object of interest, CoVaR forecasting explicitly requires $\mSigma_t$ to be specified.
	
	However, for a given $\mH_t$, there exist infinitely many choices of $\mSigma_t$ satisfying the decomposition $\mSigma_t\mSigma_t^\prime=\mH_t$.
	Examples include the symmetric square root implied by the eigenvalue decomposition ($\mSigma_t^{s}$) or the lower triangular matrix of the Cholesky decomposition ($\mSigma_t^{l}$). 
	In fact, $\mSigma_t$ multiplied by any (constant or time-varying) orthogonal matrix $\mO_t$ yields a valid decomposition of the variance-covariance matrix, because $(\mSigma_t\mO_t)(\mSigma_t\mO_t)^\prime=\mSigma_t\mSigma_t^\prime=\mH_t$.
	
	The problem is that each possibility, while implying the \textit{same} variance-covariance dynamics $\mH_t$, implies \textit{different} values for the CoVaR.
	E.g., consider 
	\[
	\mH_t=\begin{pmatrix}10 & 6\\ 6 & 10 \end{pmatrix},\qquad \mSigma_t^{s}= \begin{pmatrix}3 & 1\\ 1 & 3 \end{pmatrix},\qquad \mSigma_t^{l}= \begin{pmatrix}3.16... & 0\\ 1.89... & 2.52... \end{pmatrix},
	\]
	and $\vvarepsilon_t=(\varepsilon_{X,t}, \varepsilon_{Y,t})^\prime$ with standardized $t_{5}$-distributed $\varepsilon_{X,t}$ and $\varepsilon_{Y,t}$ that are independent of each other. Then, for the model $\mSigma_t^{s}\vvarepsilon_t$ ($\mSigma_t^{l}\vvarepsilon_t$) based on the symmetric (lower triangular) decomposition, we obtain $\CoVaR_{t,0.95}= 9.76...$ 
	($\CoVaR_{t,0.95}=9.10...$).
	Thus, CoVaR  forecasts depend intimately on the decomposition of $\mH_t$, on which typical multivariate GARCH models stay silent. Thus, in principle, a given single GARCH-type model for $\mH_t$ may be consistent with an infinite number of CoVaR forecasts (depending on the choice of $\mSigma_t$), thus creating an unsatisfactory ambiguity when applied to CoVaR forecasting. 
\end{example}

Although the precise form of $\mSigma_t$ is not estimable from the data, there are two ways to address this problem.
The first is to assume that $\vvarepsilon_t$ has a spherical distribution (i.e., $\mO\vvarepsilon_t\overset{d}{=}\vvarepsilon_t$ for any orthogonal matrix $\mO$).
In this case, the choice of $\mSigma_t$ does not matter because the conditional distribution is fully determined by $\mH_t$ (much like for the multivariate normal distribution); see \citet{KS14}. 
Although the assumption of spherical errors is testable \citep{FJM17}, it may not apply in any given situation and even when it seems credible, it may be invalidated by structural change. 

A second way to address the problem of non-unique $\mSigma_t$ is to let subject matter considerations guide the choice of $\mSigma_t$ in practice. 
For instance, \citet{NZ20} use \citeauthor{Eng02}'s \citeyearpar{Eng02} DCC--GARCH model for CoVaR forecasting employing a lower triangular $\mSigma_t^{l}$. 
They do so for modeling the loss of the broader market $X_{t}$ and that of some individual unit $Y_{t}$, i.e., the opposite of the scenario we consider in our empirical application. \citet{NZ20} use the model
\begin{equation*}
	\begin{pmatrix}
		X_{t}\\ Y_{t}
	\end{pmatrix} = \mSigma_t^l\vvarepsilon_t
	=\begin{pmatrix}
		\sigma_{X,t} & 0\\
		\sigma_{Y,t}\rho_{XY,t} & \sigma_{Y,t}\sqrt{1-\rho_{XY,t}^2}
	\end{pmatrix}
	\begin{pmatrix}
		\varepsilon_{X,t}\\
		\varepsilon_{Y,t}
	\end{pmatrix},
\end{equation*}
where $\mSigma_t^l$ is the lower-triangular matrix from the Cholesky decomposition of the variance-covariance matrix $\mH_t=\big((\sigma_{X,t}^2,\ \sigma_{X,t}\sigma_{Y,t}\rho_{XY,t})^\prime \mid (\sigma_{X,t}\sigma_{Y,t}\rho_{XY,t},\ \sigma_{Y,t}^2)^\prime\big)$. In such a situation it may be acceptable to assume that the individual shock $\varepsilon_{Y,t}$ does not impact the market loss, since the individual institution is very small relative to the market. 
However, when modeling systemic risk across business units of comparable size (e.g., different trading desks), such an assumption may be untenable, and a different decomposition may be needed. Yet, which one exactly may be difficult to say. Our approach sidesteps these difficulties by directly modeling CoVaR (and VaR).

\section{Computation of Risk Measure Forecasts}\label{Computation of Risk Measure Forecasts}

\subsection{Computing VaR \& CoVaR Forecasts for Lower Triangular Decomposition}

Consider the standard specification of $\mSigma_t$ in DCC--GARCH models as the lower-triangular matrix from the Cholesky decomposition of the variance-covariance matrix 
\[
\mH_t=\begin{pmatrix}
	\sigma_{X,t}^2 & \sigma_{XY,t}\\
	\sigma_{XY,t}  & \sigma_{Y,t}^2
\end{pmatrix};
\]
see Section~\ref{CoVaR forecasting with multivariate GARCH models}. Then,
\begin{equation}\label{eq:model lt}
	\begin{pmatrix}X_t\\ Y_t \end{pmatrix}
	=\mSigma_t\begin{pmatrix}\varepsilon_{X,t}\\ \varepsilon_{Y,t} \end{pmatrix}=
	\begin{pmatrix}
		\sigma_{X,t} & 0\\
		\sigma_{Y,t}\rho_{XY,t} & \sigma_{Y,t}\sqrt{1-\rho_{XY,t}^2}
	\end{pmatrix}
	\begin{pmatrix}\varepsilon_{X,t}\\ \varepsilon_{Y,t} \end{pmatrix},
\end{equation}
where $\rho_{XY,t}=\sigma_{XY,t}/(\sigma_{X,t}\sigma_{Y,t})$ is the conditional correlation.

Under \eqref{eq:model lt}, we have that $\VaR_{t,\b}=\sigma_{X,t}F_{\varepsilon_X}^{-1}(\b)$, where $F_{\varepsilon_X}^{-1}(\cdot)$ denotes the quantile function of the (i.i.d.) $\varepsilon_{X,t}$. In practice, $F_{\varepsilon_X}^{-1}(\b)$ can be estimated as the empirical $\beta$-quantile of the standardized residuals to yield forecasts $\widehat{\VaR}_{t,\b}$. 


CoVaR can be forecasted from the following relation
\begin{align*}
	\alpha&= \P_{t-1}\big\{Y_t\leq \CoVaR_{t,\a|\b}\mid X_t\geq\VaR_{t,\b}\big\}\\
	&= \P_{t-1}\big\{\sigma_{Y,t}\rho_{XY,t}\varepsilon_{X,t} + \sigma_{Y,t}(1-\rho_{XY,t}^2)^{1/2} \varepsilon_{Y,t}\leq \CoVaR_{t,\a|\b}\mid X_t\geq\VaR_{t,\b}\big\}.
\end{align*}
From this relation, it is obvious that $\CoVaR_{t,\a|\b}$ can be computed as the $\a$-quantile of the distribution of $\sigma_{Y,t}\rho_{XY,t}\varepsilon_{X} + \sigma_{Y,t}(1-\rho_{XY,t}^2)^{1/2} \varepsilon_{Y}\mid X_t\geq\VaR_{t,\b}$, where $(\varepsilon_{X}, \varepsilon_{Y})^\prime$ is a generic element of the sequence $(\varepsilon_{X,t}, \varepsilon_{Y,t})^\prime$. Thus, $\CoVaR_{t,\a|\b}$ can be estimated as the empirical $\alpha$-quantile of
\[
\sigma_{Y,t}\rho_{XY,t}\varepsilon_{X,i} + \sigma_{Y,t}(1-\rho_{XY,t}^2)^{1/2} \varepsilon_{Y,i},\qquad i \in\mathcal{I}=\big\{t=1,\ldots,n:\ X_t\geq\widehat{\VaR}_{t,\b}\big\}.
\]

This describes how VaR and CoVaR forecasts are obtained from DCC--GARCH models in the empirical application.

\subsection{Computing VaR \& CoVaR Forecasts for General Decompositions}

Here, we consider a generic
\begin{equation}\label{eq:general sigma}
	\mSigma_t=\begin{pmatrix}\widetilde{\sigma}_{X,t}  & \widetilde{\sigma}_{XY,t}\\
		\widetilde{\sigma}_{XY,t} & \widetilde{\sigma}_{Y,t}\end{pmatrix}
\end{equation}
satisfying $\mSigma_t\mSigma_t^\prime=\mH_t$. The leading special case is the symmetric square root
\[
\mSigma_t=\frac{1}{\tau}\begin{pmatrix}\sigma_{X,t}^2 + s & \sigma_{XY,t}\\
	\sigma_{XY,t} & \sigma_{Y,t}^2 + s\end{pmatrix},
\]
where $s=\sigma_{X,t}\sigma_{Y,t}(1-\rho_{XY,t}^2)^{1/2}$ and $\tau=(\sigma_{X,t}^2+\sigma_{Y,t}^2+2s)^{1/2}$ with $\rho_{XY,t}=\sigma_{XY,t}/(\sigma_{X,t}\sigma_{Y,t})$ again denoting the correlation coefficient. 

Under \eqref{eq:general sigma}, it holds in the model $(X_t, Y_t)^\prime=\mSigma_t(\varepsilon_{X,t}, \varepsilon_{Y,t})^\prime$ that $X_t=\widetilde{\sigma}_{X,t}\varepsilon_{X,t} + \widetilde{\sigma}_{XY,t}\varepsilon_{Y,t}$, such that $\VaR_{t,\b}$ is the $\b$-quantile of the distribution of $\widetilde{\sigma}_{X,t}\varepsilon_{X} + \widetilde{\sigma}_{XY,t}\varepsilon_{Y}$, where $(\varepsilon_{X}, \varepsilon_{Y})^\prime$ is a generic element of the sequence $(\varepsilon_{X,t}, \varepsilon_{Y,t})^\prime$. Thus, $\VaR_{t,\b}$ may be estimated as $\widehat{\VaR}_{t,\b}$, i.e., the empirical $\b$-quantile of
\[
\widetilde{\sigma}_{X,t}\varepsilon_{X,i} + \widetilde{\sigma}_{XY,t}\varepsilon_{Y,i},\qquad i=1,\ldots,n.
\]

For general $\mSigma_t$ from \eqref{eq:general sigma},
CoVaR solves
\begin{align*}
	\alpha&= \P_{t-1}\big\{Y_t\leq \CoVaR_{t,\a|\b}\mid X_t\geq\VaR_{t,\b}\big\}\\
	&= \P_{t-1}\big\{\widetilde{\sigma}_{XY,t}\varepsilon_{X,t} + \widetilde{\sigma}_{Y,t}\varepsilon_{Y,t}\leq \CoVaR_{t,\a|\b}\mid X_t\geq\VaR_{t,\b}\big\}.
\end{align*}
Thus, $\CoVaR_{t,\a|\b}$ can be estimated as the empirical $\a$-quantile of
\[
\widetilde{\sigma}_{XY,t}\varepsilon_{X,i} + \widetilde{\sigma}_{Y,t}\varepsilon_{Y,i},\qquad i \in\mathcal{I}=\big\{t=1,\ldots,n:\ X_t\geq\widehat{\VaR}_{t,\b}\big\}.
\]

\singlespacing
\footnotesize

\bibliographystyle{apalike}
\bibliography{bibCoQR}

\end{document}